\documentclass[12pt,reqno,a4paper]{amsart}
\usepackage{amsmath,amssymb,dsfont,cite,latexsym,
epsf,cancel,epic,eepic,mathrsfs}
\usepackage{graphicx}
\usepackage[urlcolor=cyan,colorlinks=false]{hyperref}

\usepackage{tikz}
\usepackage{caption}
\usepackage{subcaption}
\usepackage{pgfplots}

\usepackage{mathtools}
\usepackage{algorithm}
\usepackage{algpseudocode}

\usepackage{physics}

\usepackage{multirow}

\graphicspath{ {Figures} }

\usetikzlibrary{arrows.meta}
\usetikzlibrary{arrows}
\setlength{\textwidth}{165.0mm}
\setlength{\textheight}{230.0mm}
\setlength{\oddsidemargin}{0mm}
\setlength{\evensidemargin}{0mm}
\setlength{\topmargin}{5mm}
\setlength{\parindent}{5.0mm}
\newtheorem{theorem}{Theorem}
\newtheorem{proposition}{Proposition}
\newtheorem{lemma}[proposition]{Lemma}

\newtheorem{rem}[proposition]{Remark}

\newtheorem{definition}[proposition]{Definition}
\newtheorem{corollary}[proposition]{Corollary}
\makeatletter
\expandafter\let\expandafter
\reset@font\csname reset@font\endcsname
\def\subeqnarray{\arraycolsep1pt
   \def\@eqnnum\stepcounter##1{\stepcounter{subequation}
       {\reset@font\rm(\theequation\alph{subequation})}}
\jot5mm     \eqnarray}

\makeatother

\newcommand{\bbR}{{\mathbb R}}

\newcommand{\calR}{{\mathcal{R}}}


\def\epsilon{\varepsilon}

\def\tilde{\widetilde}


\newcommand{\br}[1]{\left( #1 \right)}

\newcommand{\cubr}[1]{\left\{ #1 \right\} }
\newcommand{\pobr}[1]{\left< #1 \right> }

\newcommand{\restr}[2]{\ensuremath{\left.#1\right|_{#2}}}
\newcommand{\innerprod}[2]{\langle #1 , #2 \rangle}

\DeclareMathOperator*{\argmin}{arg\,min}
\DeclareMathOperator*{\re}{Re}
\DeclareMathOperator*{\im}{Im}
\DeclareMathOperator{\Lip}{Lip}
\DeclareMathOperator{\Div}{div}

\DeclareMathOperator*{\res}{res}

%

\begin{document}
\title{Dimers and M-Curves: Limit Shapes From Riemann Surfaces}

%
\author{Alexander I. Bobenko and Nikolai Bobenko}

\thanks{Affiliations: { \tt AB: Institute of Mathematics, Technische Universität Berlin, Germany \\
NB:  Department of Mathematics, University of Geneva, Switzerland}}
\thanks{E-mail: {\tt bobenko@math.tu-berlin.de, nikolai.bobenko@unige.ch}}

\begin{abstract}
    We present a general approach for the study of dimer model limit shape problems via variational and integrable systems techniques. In particular we deduce the limit shape of the Aztec diamond and the hexagon for quasi-periodic weights through purely variational techniques.

    Putting an M-curve at the center of the construction allows one to define weights and algebro-geometric structures describing the behavior of the corresponding dimer model.
    We extend the quasi-periodic setup of \cite{BBS} to include a diffeomorphism from the spectral data to the liquid region of the dimer. 

    Our novel method of proof is purely variational and exploits a duality between the dimer height function and its dual magnetic tension minimizer and applies to dimers with gas regions. We apply this to the Aztec diamond and hexagon domains to obtain explicit expressions for the complex structure of the liquid region of the dimer as well as the height function and its dual.

 We compute the weights and the limit shapes numerically using the Schottky uniformization technique. Simulations and predicted results match completely.
\end{abstract}

\maketitle

\makeatletter
\def\@tocline#1#2#3#4#5#6#7{\relax
  \ifnum #1>\c@tocdepth 
  \else
    \par \addpenalty\@secpenalty\addvspace{#2}%
    \begingroup \hyphenpenalty\@M
    \@ifempty{#4}{%
      \@tempdima\csname r@tocindent\number#1\endcsname\relax
    }{%
      \@tempdima#4\relax
    }%
    \parindent\z@ \leftskip#3\relax \advance\leftskip\@tempdima\relax
    \rightskip\@pnumwidth plus4em \parfillskip-\@pnumwidth
    #5\leavevmode\hskip-\@tempdima
      \ifcase #1
       \or\or \hskip 1em \or \hskip 2em \else \hskip 3em \fi%
      #6\nobreak\relax
    \hfill\hbox to\@pnumwidth{\@tocpagenum{#7}}\par
    \nobreak
    \endgroup
  \fi}
\makeatother
\tableofcontents

\section{Overview and Main Results}
\label{sec:01_Introduction}

\begin{figure}
    \centering
    \begin{subfigure}[p]{.22\textwidth}
        \centering
        \includegraphics[width=\linewidth]{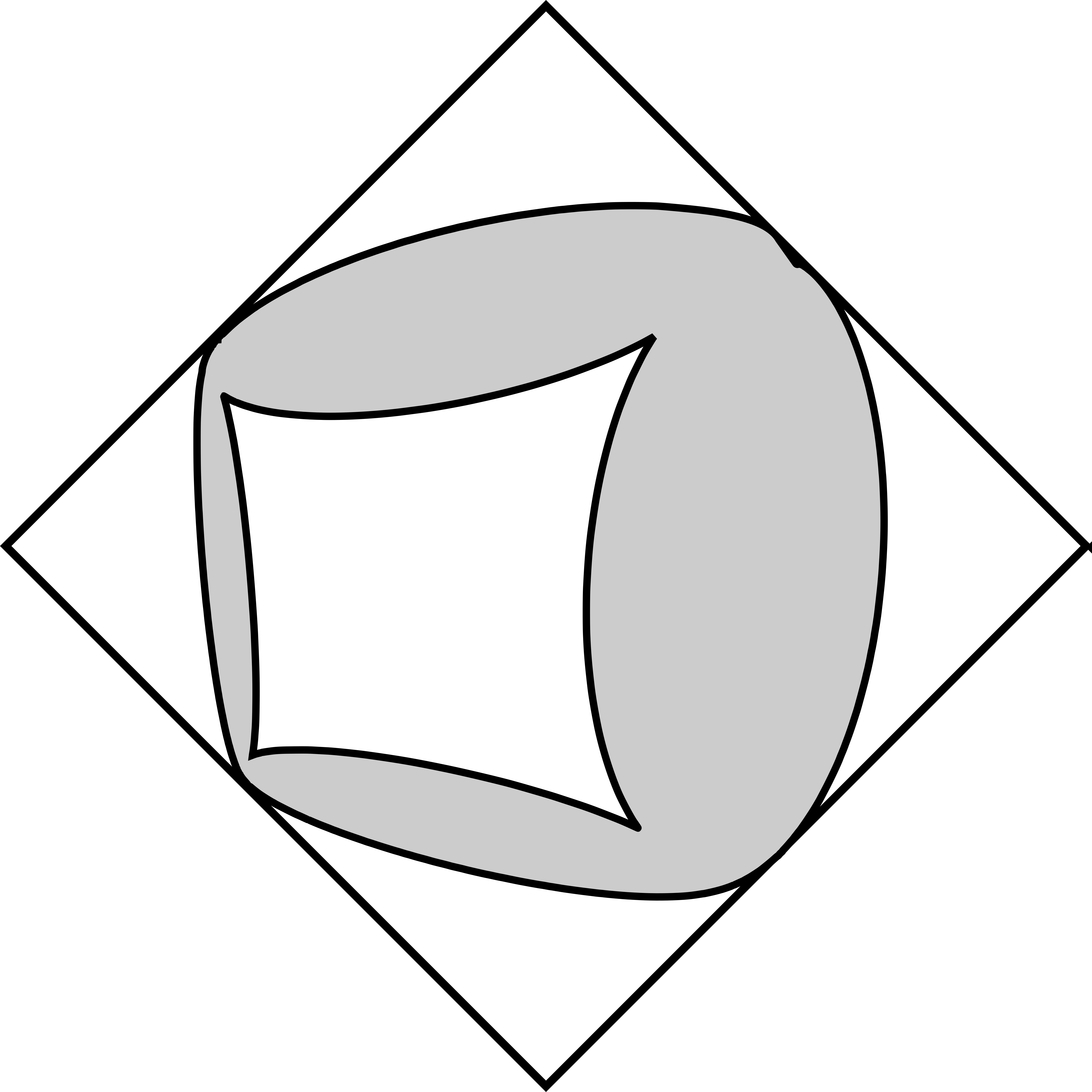}
    \end{subfigure}\hspace{0cm}%
    \begin{subfigure}[p]{.16\textwidth}
        \centering
        \includegraphics[width=\linewidth]{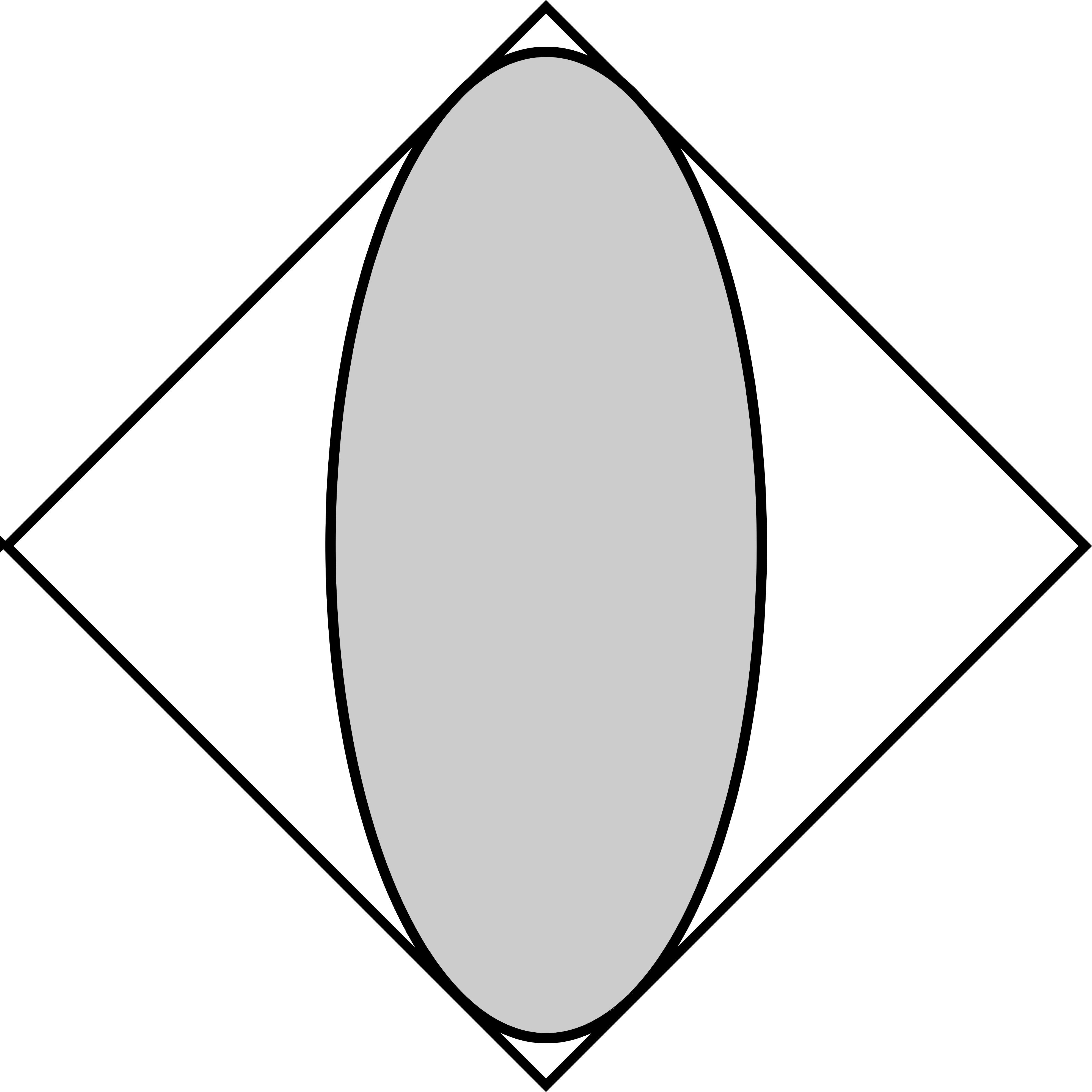}
    \end{subfigure}\hspace{0cm}%
    \begin{subfigure}[p]{.14\textwidth}
        \centering
        \includegraphics[width=\linewidth]{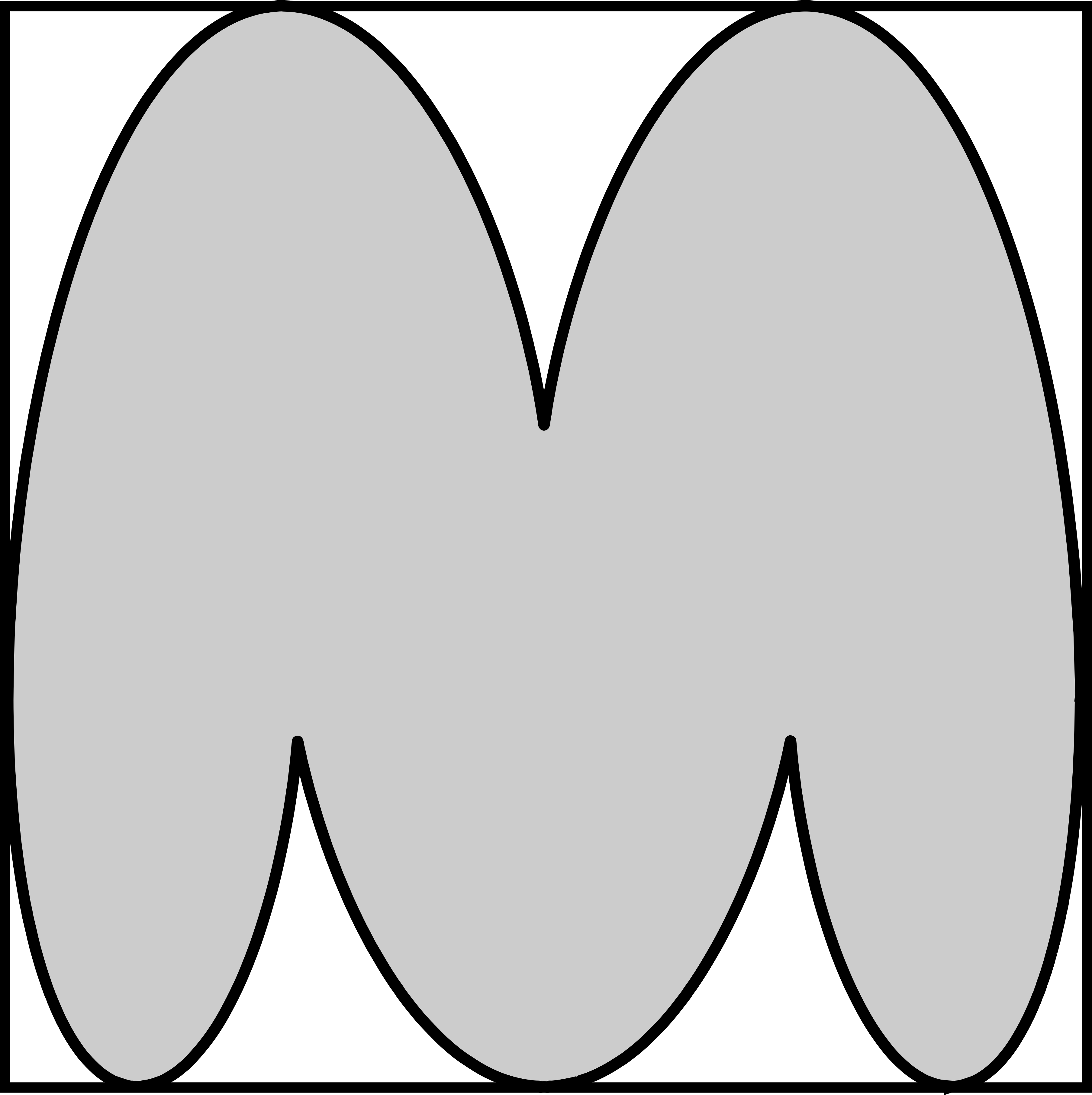}
    \end{subfigure}\hspace{0.3cm}%
    \begin{subfigure}[p]{.14\textwidth}
        \centering
        \includegraphics[width=\linewidth]{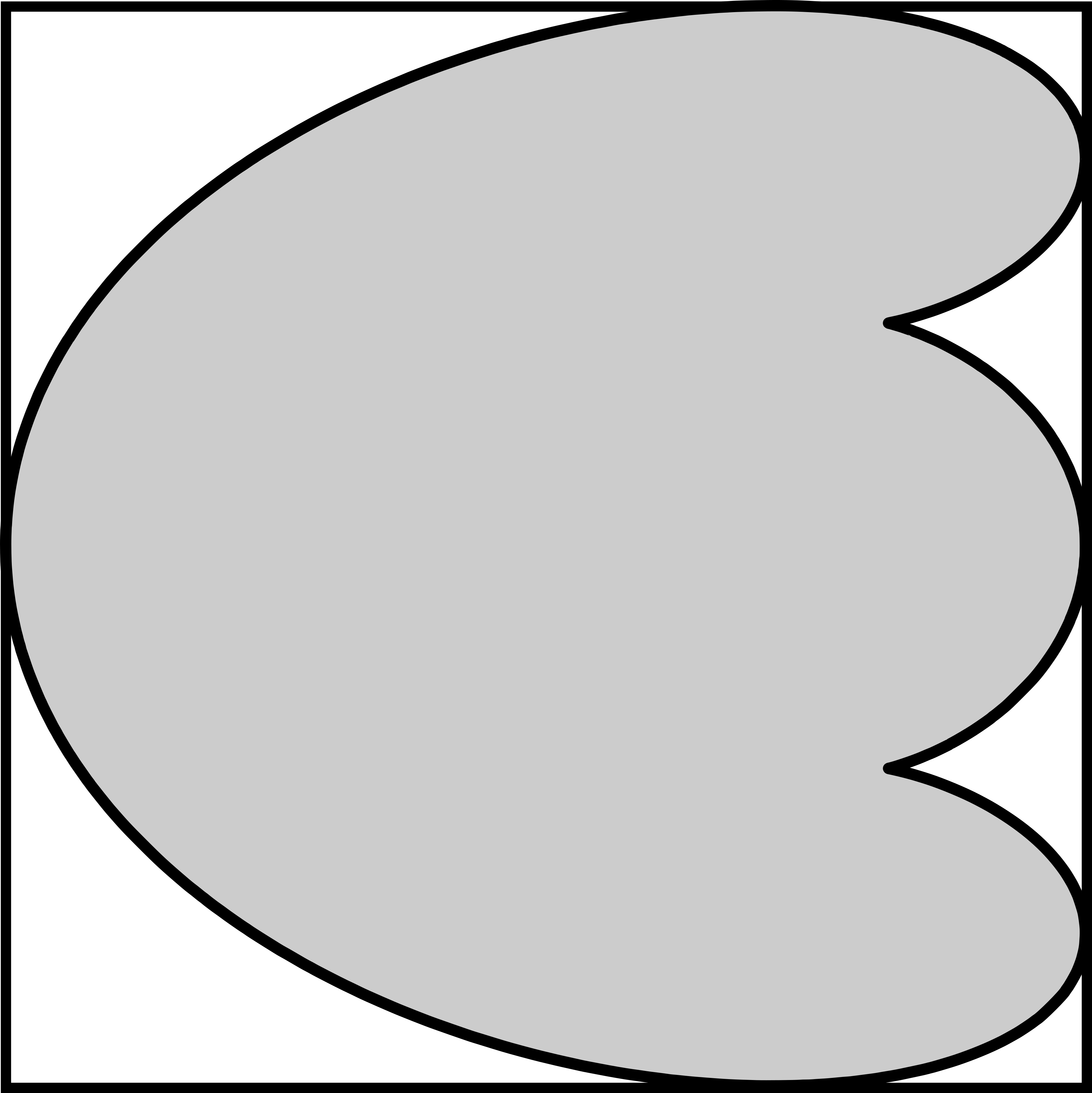}
    \end{subfigure}\hspace{0.3cm}%
    \begin{subfigure}[p]{.14\textwidth}
        \centering
        \includegraphics[width=\linewidth]{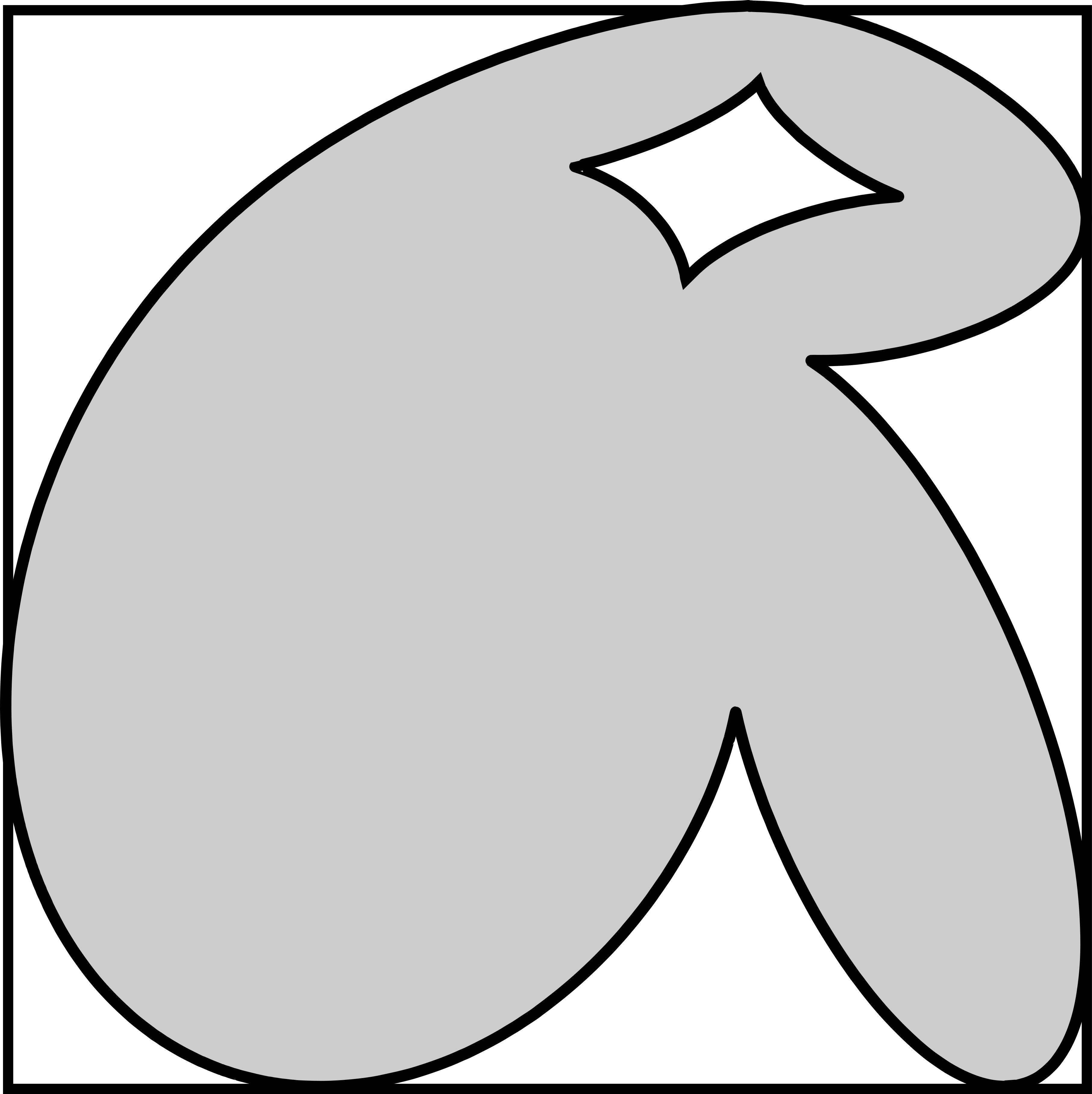}
    \end{subfigure}\hspace{0.3cm}%
    \begin{subfigure}[p]{.14\textwidth}
        \centering
        \includegraphics[width=\linewidth]{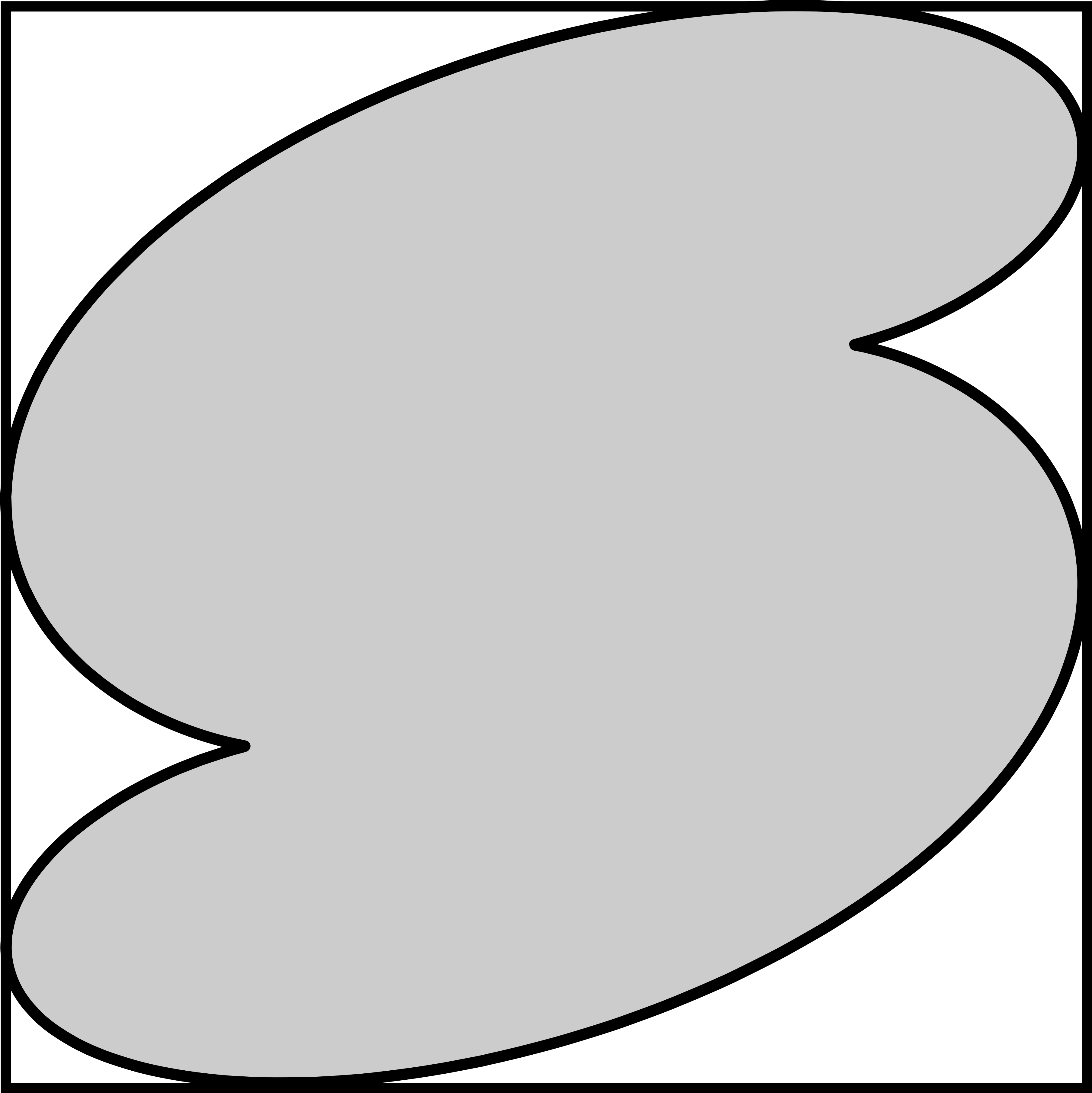}
    \end{subfigure}
    \caption{The Arctic Font: Examples of limit shapes on the Aztec diamond corresponding to different quasi-periodic weights spelling out the word Dimers. See Figure~\ref{fig:arctic_spectrum} for the spectral encoding.}
    \label{fig:Arctic_font}
\end{figure} 

The study of dimer models has established itself as an important part of statistical mechanics research over the past few decades. Connections to algebraic geometry, integrable systems and other statistical mechanics models make it a deep and diverse field \cite{goncharov_dimers_2013,Kenyon_Okounkov_Sheffield_2006, kenyon_conformal_invariance_2000, chelkak_perfect_t_embeddings_2022}. In particular the study of limit shapes exhibited by dimer models with certain boundary conditions has been of notable interest, see e.g. \cite{Kenyon_Okounkov_2007, DeSilva_minimizers_of_convex_functionals_2010, astala_dimer_2023, berggren_geometry_2023}.

In this paper we consider dimers on the classical examples of the Aztec diamond and hexagonal domains. The dimer model is a probability distribution on the set of all perfect matchings of a graph. For planar graphs on simply connected domains this is equivalent to considering a height function on the dual graph.

A striking phenomenon exhibited by dimers is that of separation of phases and limit shapes, see e.g. Figures~\ref{fig:main_aztec_config} and \ref{fig:main_hex_config}. These are figures of samples of the dimer model on large parts of the lattice with some quasi-periodic weights. We observe three types of local behavior. Along the boundary we see \emph{frozen zones} with completely deterministic behavior. There is a large \emph{liquid region} with continuously changing slope and \emph{gas regions} that form flat pieces and only exhibit small fluctuations. The curves separating the regions are known as \emph{arctic curves}.

This was first observed in the context of the Aztec diamond with uniform weights where a circle can be seen separating a liquid zone from $4$ frozen ones. A first rigorous result towards a description was given in \cite{jockusch1998randomdominotilingsarctic} establishing that in the scaling limit the curve separating the regions is indeed a circle. This was extended in \cite{Cohn_1996} to include some local statistics and some first results towards general limit shape principles that dimers exhibit. These were futher generalized in \cite{cohn_variational_2000}, where it was established that in the scaling limit for general domains filled with the square grid with uniform weights, the dimer height function concentrates around the minimizer of a surface tension functional with corresponding boundary conditions.

In \cite{Kenyon_Okounkov_Sheffield_2006, Kenyon_Okounkov_2006} a remarkable connection between dimers on general doubly periodic graphs with doubly periodic weights and algebraic geometry was established. From the Kasteleyn operator on a torus a spectral curve was defined which was shown to always be a Harnack curve, a well-studied type of real algebraic curve \cite{Mikhalkin_2000}. Many of the limiting properties of the dimer model can then be described in terms of the geometric data of the curve. In particular it was shown that there is a one-to-one correspondence between points on the Newton polygon (outside the boundary) of the curve and ergodic Gibbs measures in the full plane doubly periodic graph. The point on the Newton polygon here gives the average height change in the two fundamental directions. Further, the free energy per fundamental domain in magnetic coordinates can be given in terms of the Ronkin function $\rho$ and the surface tension $\sigma$ is given as its Legendre dual. The established variational principle then generalizes to the fact that the dimer height function $h$ on a limiting domain $\Omega$ concentrates around the deterministic minimizer of the surface tension functional 
\begin{equation}
    \label{eq:h_surface_tension_minimization}
     h = \argmin_f \int_\Omega \sigma(\nabla f)
\end{equation} 
over a class of Lipschitz functions with prescribed boundary conditions. See also \cite{kuchumov_limit_2017}.

\begin{figure}
\centering
    \includegraphics[width=.8\linewidth]{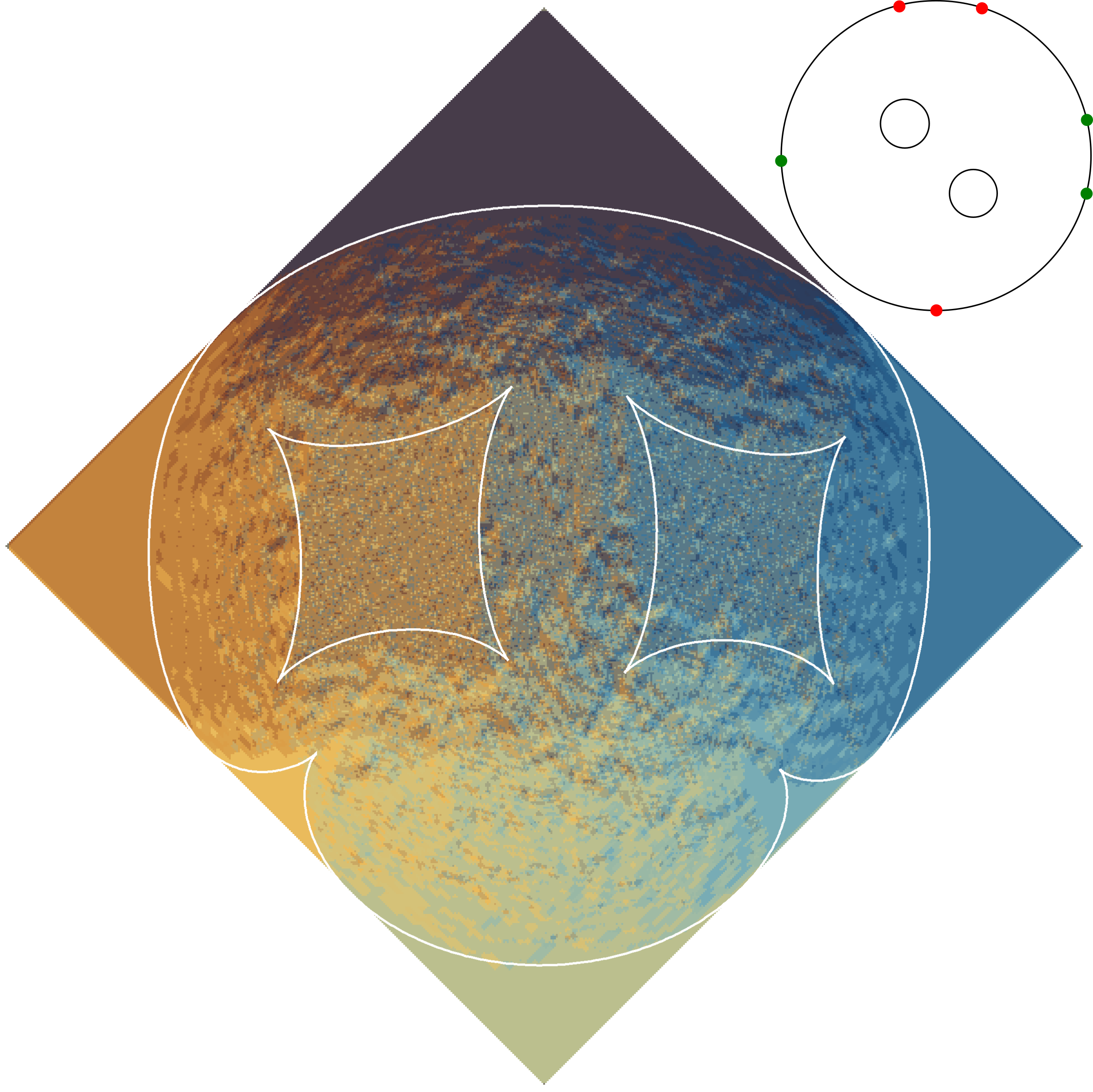}
    \caption{Dimer configuration sampled on an Aztec diamond populated with Fock weights coming from specified Harnack data $\mathcal{S} = \cubr{\calR, \mathcal{T}}$ depicted above. We overlay the Aztec diamond with predicted arctic curves calculated directly from $\mathcal{S}$. The picture of $\mathcal{S}$ represents the complex structure of the spectral curve $\calR$, that is the $z$ coordinate in the plane on this picture gives the complex structure of $\calR$. All calculations can be done numerically and we observe that predictions match the sampled result with high precision on practical scales. For all other relevant maps in this example see Figure~\ref{fig:all_maps_and_zeros}. Instead of depicting the dimer configuration we choose here to encode the local height function gradients with different colors for more visual clarity.}
    \label{fig:main_aztec_config}
\end{figure}

In \cite{Kenyon_Okounkov_2007} this was utilized to derive limit shapes for a set of polygonal boundary conditions natural to the hexagonal grid. A map from (a covering of the) spectral curve to the liquid domain of the dimer was established, defining a natural complex structure on the dimer domain. We refer to this map as the Kenyon-Okounkov map or KO map. In particular the arctic curve for uniform weights was derived to be the algebraic curve of minimal degree to touch all domain edges exactly once. The methods used here were algebraic in nature.

Having established the variational principle one can study the problem of dimer limit shapes from a purely analytical perspective. Here the surface tension $\sigma$ is known to be a function solving a Monge-Ampère equation with given singularities. These singularities are what makes the minimizer exhibit different phases and also the reason why the study of this variational problem is challenging. Uniqueness of the solution and several regularity results were established in \cite{DeSilva_minimizers_of_convex_functionals_2010} and further extended in \cite{astala_dimer_2023} with a complete classification of the regularity of the minimizers for a large class of natural domains.

The problem of finding a solution to this minimization problem for general doubly periodic dimer weights was finally resolved in \cite{berggren_geometry_2023} for the domain of the Aztec diamond. The methods employed here included asymptotic analysis of orthogonal polynomials and an algebro-geometric solution to an isospectral flow on certain quadratic matrix-polynomials. This was the culmination of previous work done in \cite{Borodin_2x2_2023, berggren_aztec_2021} and resulted in an expression of the inverse Kasteleyn operator on the doubly periodic Aztec diamond along with its asymptotic analysis, see also \cite{boutillier_focks_2024} for a generalization for Fock weights. In particular, in \cite{berggren_geometry_2023} a full description of the KO map for periodic weights as a homeomorphism from the amoeba to the liquid region as well as the limit shape in terms of the spectral data associated to the dimer was obtained.

We rely on an inverse spectral approach. That is, we start with spectral data and define a corresponding dimer model from it. To that end, complex weights for difference operators on graphs were introduced in \cite{fock_inverse_2015} and later realized to be dimer weights in \cite{boutillier_minimal_2023, boutillier_elliptic_2020} for the special case of \emph{Harnack data}. This is an M-curve with marked points on one of its real ovals satisfying an ordering property. On lattices with periodic train track parameters, this leads to \emph{quasi-periodic} weights generalizing the classical setup in a natural way \cite{BBS}. In \cite{BBS} we adapted the definitions of \cite{krichever_amoebas_2014} generalizing the amoeba map, Newton polygon, Ronkin function and surface tension to this setup while providing new formulas as well as showing that the height function for quasi-periodic weights still satisfies the variational principle \eqref{eq:h_surface_tension_minimization} in a regularized sense.

Our main contributions in this paper are as follows. First, we construct the limiting height function for the Aztec diamond with general quasi-periodic weights and provide a novel and completely variational proof of the fact that it is the surface tension minimizer. To our knowledge this is the first time that a variational proof can be applied to dimers with gas zones. See Section~\ref{sec:05_SurfaceTensionMinimization}.

To that end, we adapt the definition of the KO map from \cite{berggren_geometry_2023} to general quasi-periodic weights while formulating it as a map from the M-curve instead of the amoeba defining it through a free zero of certain differentials $d\zeta$. In this formulation the map is a diffeomorphism and we obtain explicit formulas for the liquid region as well as arctic curves and the limiting complex height function $H$ (Section~\ref{sec:03_CoordinateMap}). We study geometric properties, recovering the parallelity property from \cite{berggren_geometry_2023} and obtaining a complete geometric description of the zeros of the differentials $d\zeta$ as tangential points in the arctic curves, see Section~\ref{sec:04_GeometricProperties}.

The limiting dimer height function is given as the imaginary part of $H=g+ih$. Our proof of minimality relies on a fortunate duality of the real and imaginary parts of $H$, which extends the Legendre duality between $\rho$ and $\sigma$ to the dimer domain. The imaginary part $h$ can be extended linearly to the frozen and gas regions, while the real part $g$ has more complicated boundary conditions for extension. This duality gives as an immediate result that if one can extend $g$ in an admissible way, then $h$ is the surface tension minimizer. An admissible extension is explicitly constructed. Furthermore the real part $g$ minimizes the magnetic tension functional $\int_\Omega \rho(\nabla f)$. This is the content of Section~\ref{sec:05_SurfaceTensionMinimization}.

Second, we manage to extend this approach to the domain of the hexagon filled with the hexagonal lattice. In this case, one needs to consider a double cover of the spectral curve. Once again, we flip the script and start our construction from the double cover. We can then exploit the symmetries of the hexagon by requiring symmetric data and a similar formalism as before distinguishes a set of admissible weights for which we again construct both the KO map and the height function. This set of weights is shown to be of full dimension in the moduli space of Harnack data. Since our method of proving surface tension minimality does not rely on the heavy machinery of other models and their integrability, the proof for the hexagon is then essentially the same as for the Aztec diamond. We believe that this method has the potential to be pushed further to obtain minimizers for a larger set of boundary conditions, see Section~\ref{sec:07_hexagonal_case}. 

\begin{figure}
    \centering
    \includegraphics[width=.8\linewidth]{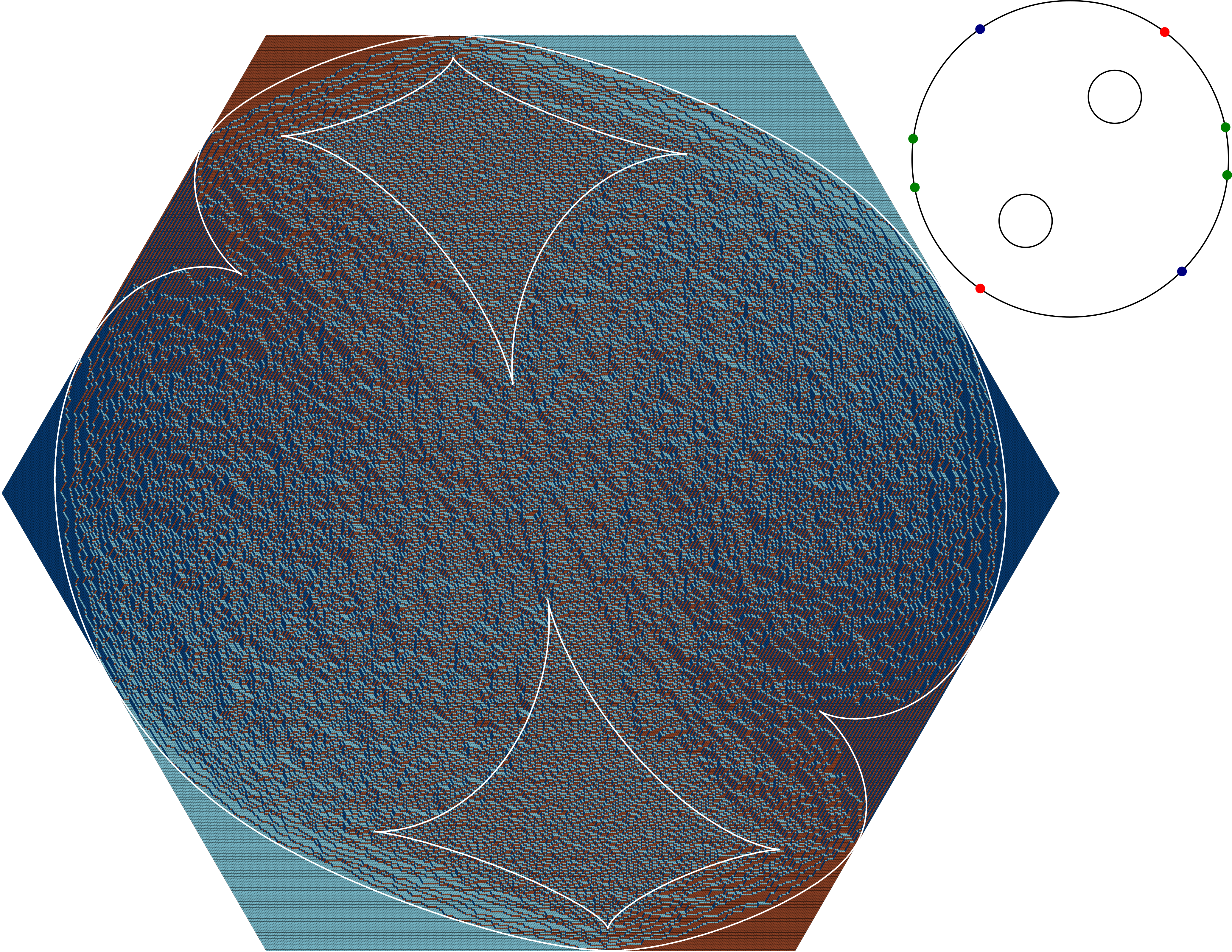}
    \caption{Dimer configuration sampled on the hexagon populated with Fock weights corresponding to the Harnack data $\hat{\mathcal{S}}$ depicted above. Note that this is a double cover of the spectral curve defining the weights allowing for a diffeomorphism to the liquid region of the dimer on the hexagon. We see that the predicted arctic curves match the observed behavior of the sampled configuration on a practical scale. For all other relevant maps in this example see Figure~\ref{fig:all_maps_and_zeros_hex}.}
\label{fig:main_hex_config}
\end{figure}

In Section~\ref{sec:06_explicit_formulas_for_aztec_diamond_G0} we consider the case of genus $0$ curves corresponding to the isoradial weights established and studied in \cite{kenyon_laplacian_2002}. In this case everything can be written in terms of rational functions. We provide explicit formulas for the surface tension and present its geometric interpretation in terms of volumes of ideal hyperbolic polyhedra. These extend Kenyon's formula for the normalized determinant of the discrete Dirac operator \cite{kenyon_laplacian_2002}. The established remarkable connection of convex variational principles for dimer models and discrete conformal maps is promising for development of both these fields.

As in \cite{BBS}, all the objects used in this paper can be computed numerically.
using the techniques of Schottky uniformizations, see Section~\ref{sec:10_SchottkyUniformization}. To demonstrate the practicality of this method, all figures in this paper, except for the diagrams illustrating proofs, are the results of computation. This includes all arctic curves, all amoebas, and all weights for all simulations. To this end we have added the Harnack data that encodes all information needed to these figures. These are the actual Riemann surfaces used along with the actual train track parameters. Furthermore we show that the parametrization used gives a natural and intuitive way to control the limit shapes, allowing us to design the shapes we want in a practical way, see e.g. Figure~\ref{fig:Arctic_font}. Note that the Harnack data in the Schottky model parametrization give a qualitatively correct intuitive picture of the corresponding dimer phases.

The suggested approach is quite general, and can be applied to further models like electrical networks, Ising model etc., which can be obtained as dimer models with symmetric data, see \cite{george_electrical_networks_2024} and the upcoming \cite{boutillier_electrical_networks_2024}.

We would like to emphasize that the Harnack data (M-curve with points on it) are in the center of our description, see Figures \ref{fig:all_maps_and_zeros},\ref{fig:all_maps_and_zeros_hex}. Traditional maps in dimer theory, going to the amoeba or the Newton polygon, are compositions of the corresponding explicit diffeomorphisms to the M-curve. This slight change of focus to the traditional description can even turn out to be useful for making progress with the proof of the Kenyon-Okounkov conjecture \cite{Kenyon_Okounkov_2007, Kenyon_honeycomb_fluctuations_2004}, considered as one of the most important and difficult problems in the theory of dimers, see \cite{Gorin_2021}. Indeed, the conjecture states that the fluctuations of the height function are closely related to the Gaussian free field in the conformal structure of the M-curves in this paper.

\vspace*{1cm}
{\bf Acknowledgements.} The first author was supported by the DFG Collaborative Research Center TRR 109 "Discretization in Geometry and Dynamics".

The second author was supported by the NCCR SwissMAP as well as by the Swiss NSF grants 200400 and 197226.

We are grateful to Kari Astala, Tomas Berggren, Alexei Borodin, Cédric Boutillier, Rick Kenyon, István Prause and Yuri Suris for helpful discussions.

\section{The Dimer Model}
\label{sec:02_TheDimerModel}
\addtocontents{toc}{\protect\setcounter{tocdepth}{1}}

For a finite graph $G = (V,E)$ we call $D \subset E$ a \emph{dimer configuration} or a perfect matching if every vertex is contained in exactly one edge in $D$. The dimer model is a probability distribution on the set $\mathcal{D}$ of all dimer configurations. Given an edge weighting $\nu: E \to \mathbb{R}_+$ one defines the weight of a dimer configuration $D$ as
\begin{equation}
    \nu(D) = \prod_{e \in D} \nu(e).
\end{equation}
The associated Boltzmann measure is then given by
\begin{equation}
    \label{eq:def_Boltzmann_measure}
    \mathbb{P}(D) = \frac{\nu(D)}{Z(G, \nu)}
\end{equation}
where $Z(G, \nu) = \sum_{D\in \mathcal{D}} \nu(D)$ is the partition function.

We will adopt the condition that $G = (V,E,F)$ is \emph{planar} and \emph{bipartite} and call the two sets of vertices $V = B \sqcup W$ the black and white vertices. An obvious necessary condition for the dimer model on $G$ to be non-trivial is $\abs{B} = \abs{W}$. Two weight functions $\nu, \nu'$ are called \emph{gauge equivalent} if they differ by factors living on the vertices. That is if there exists a function $\lambda: V \to \mathbb{R}_+$ such that for every $wb \in E$ we have $\nu'(wb) = \lambda(w)\nu(wb)\lambda(b)$. Notice that gauge equivalent weights define the same measure \eqref{eq:def_Boltzmann_measure} and thus edge weights overspecify the model.
\begin{proposition}
    \label{prop:gauge_equiv_iff_face_weights}
    Two weight functions $\nu, \nu'$ are gauge equivalent iff for each $f \in F$ with $\partial f = e_1 \cup e_2 \cup \ldots \cup e_{2n}$, there holds
    \begin{equation}
        \label{eq:face_weights_gauge_equivalence}
        W_f=
        \frac{\nu(e_1)\nu(e_3)\ldots \nu(e_{2n-1})}{\nu(e_2)\nu(e_4)\ldots\nu(e_{2n})}=
        \frac{\nu'(e_1)\nu'(e_3)\ldots \nu'(e_{2n-1})}{\nu'(e_2)\nu'(e_4)\ldots\nu'(e_{2n})}.
    \end{equation}
\end{proposition}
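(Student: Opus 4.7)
The plan is to prove both directions directly, with the forward direction being a combinatorial cancellation and the reverse being a discrete Poincaré-lemma argument for planar graphs.

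For the forward direction, suppose $\nu'(wb) = \lambda(w)\nu(wb)\lambda(b)$ for some $\lambda: V \to \mathbb{R}_+$. Bipartiteness forces each face $f$ to have even length $2n$, with vertices alternating in color around $\partial f = e_1 \cup \ldots \cup e_{2n}$. Every boundary vertex $v$ of $f$ lies between two consecutive edges $e_i, e_{i+1}$, one with odd and one with even index. Hence in the ratio $W_f(\nu')$ the factor $\lambda(v)$ appears exactly once in the numerator and once in the denominator, and cancels. After all such cancellations, $W_f(\nu')/W_f(\nu)$ reduces to an empty product and equals $1$.

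For the reverse direction, set $\mu(e) = \nu'(e)/\nu(e)$ and $\psi(e) = \log \mu(e)$. I would seek $\ell: V \to \mathbb{R}$ with $\psi(wb) = \ell(w) + \ell(b)$, from which $\lambda = e^\ell$ realizes the gauge. To construct $\ell$, orient each edge from its white to its black endpoint; for an oriented walk $\gamma$ in $G$ define $A(\gamma) = \sum_e \varepsilon_\gamma(e)\,\psi(e)$, where $\varepsilon_\gamma(e) = +1$ if $\gamma$ traverses $e$ in the oriented direction and $-1$ otherwise. The key combinatorial observation is that in a bipartite graph a closed walk visits vertices alternating in color, so the signs $\varepsilon_\gamma$ along successive edges simply alternate, and $A(\partial f)$ coincides, up to an overall sign, with the position-based alternating sum $\log W_f(\nu) - \log W_f(\nu')$. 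The hypothesis then reads $A(\partial f) = 0$ for every face.

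Since $A$ is additive under concatenation and, by planarity, the face boundaries generate the cycle space of $G$, we obtain $A(\gamma) = 0$ for every closed walk. Fixing a base vertex in each connected component, I would define $\tilde\ell(v)$ as $A$ of any walk from the base to $v$, which is well-defined and produces $\psi(wb) = \tilde\ell(b) - \tilde\ell(w)$ for every oriented edge. Setting $\ell(v) = -\tilde\ell(v)$ on white vertices and $\ell(v) = \tilde\ell(v)$ on black vertices gives $\psi(wb) = \ell(w) + \ell(b)$, completing the construction of $\lambda$. The one subtlety I anticipate is the bookkeeping that matches the ``alternating by position around a face'' convention in \eqref{eq:face_weights_gauge_equivalence} with the ``alternating by traversal direction along a walk'' convention natural to the discrete exterior calculus; bipartiteness is precisely what makes the two compatible, and the remaining ingredients are standard.
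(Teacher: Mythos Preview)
The paper states this proposition without proof, treating it as a standard fact about dimer models. Your argument is correct and is the standard one: the forward direction is the telescoping cancellation you describe, and the reverse is the discrete Poincar\'e lemma on a planar bipartite graph, using that bounded face boundaries generate the cycle space. The one minor point worth noting is that the proposition is later applied in the paper to complex weights $\nu:E\to\mathbb{C}^*$; your $\log$-based construction handles $\mathbb{R}_+$-valued weights cleanly, and the same argument goes through multiplicatively (defining the path product $\prod_e \mu(e)^{\varepsilon_\gamma(e)}$ directly rather than its logarithm) without needing a branch of the logarithm.
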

The \emph{Kasteleyn matrix} is a weighted adjacency operator with rows indexing the white vertices and columns indexing the black vertices given by its entries
\begin{align*}
    K_{w,b} = \left\{ \begin{array}{cl}
        \sigma(e) \nu(e), & \text{if } e \in E  \\
        0 & \text{otw.}
    \end{array} \right.
\end{align*}
where $\sigma:E \to S^1$ is a complex sign. If for every bounded face with $\partial f = w_1b_1 \cup b_1w_2 \cup \ldots \cup w_nb_n \cup b_nw_1$ we have
\[ \prod_{i=1}^n \frac{\sigma(w_ib_i)}{\sigma(w_ib_{i+1})} = (-1)^{n+1}, \]
then $K$ is said to satisfy the \emph{Kasteleyn condition} and Kasteleyn's celebrated formula first introduced in \cite{Kasteleyn1967} and then generalized to complex signs in \cite{Kuperberg_1998} states
\[ Z(G, \nu) = \abs{\det(K)}. \]
We can thus extend the notions of edge weighting and gauge equivalence to be complex functions $\nu: E \to \mathbb{C}^*$ and $\lambda: V \to \mathbb{C}^*$ respectively and $\nu$ defines a dimer model iff it satisfies the Kasteleyn condition. That is if for all bounded faces $f$ of degree $2n$ we have real face weights with
\begin{equation}
    \label{eq:Kasteleyn_condition}
    \text{sign}(W_f) = (-1)^{n+1}.
\end{equation}

\subsection*{Diamond graph.}
We use the diamond graph $G^\diamond$ to construct our weights. Here $G^\diamond$ has the vertex set $V \sqcup V^*$ and has the edge $vf$ if $v$ is a vertex of $f$ in $G$. That is all faces of $G^\diamond$ are quadrilaterals whose diagonals are in $E(G)$ and $E(G^*)$ respectively.

A train track in $G^\diamond$ is then defined as a sequence of faces $q_j$ of $G^\diamond$ such that the edges $q_{j-1} \cap q_j$ and $q_j \cap q_{j+1}$ are on opposite sides of $q_j$. We represent these train tracks as strands going through the midpoints of the edges $bw$ of $G$ and crossing the quadrilaterals in $F(G^\diamond)$ at opposite sides. Thus every edge $(bw)$ is crossed by two strips. We orient our strands and assume the convention that strands have white vertices to the left and black vertices to the right. This can be done consistently on minimal graphs \cite{boutillier_minimal_2023} of which the lattices we restrict our attention to are examples of, see Figure~\ref{fig:Fock_diamond_and_face}.

\subsection*{Dimers and M-curves.}
We utilize much of the same setup as \cite{BBS} which we summarize here. For further details and proofs we refer the reader to that paper. 

Let's recall the main ingredients needed for the construction of Fock weights \cite{fock_inverse_2015}. These are weights that can be defined on any planar bipartite minimal graph (See \cite{boutillier_minimal_2023}) from data living on a compact Riemann surface with the property that many of the limiting objects of the corresponding dimer model can be described through algebro-geometric constructs on the surface. We define our main objects.
\begin{itemize}
    \item Let $\mathcal{R}$ be a compact Riemann surface of genus $g$ and a fixed homology basis $a_1, \ldots, a_g$; $b_1, \ldots, b_g$.
    \item $\boldsymbol{\omega} = (\omega_1, \ldots, \omega_g)$ is the set of normalized holomorphic differentials dual to that basis. That is $\int_{a_j} \omega_k = \delta_{jk}$ and $\int_{b_j} \omega_k = B_{jk}$.
    \item $J(\mathcal{R})=\mathbb C^g/(\mathbb Z^g+B\mathbb Z^g)$ the Jacobi variety of $\mathcal{R}$;
    \item $A:\mathcal{R}\to J(\mathcal{R})$, $P\mapsto A(P)=\int_{P_0}^{P}\boldsymbol{\omega}\pmod {\mathbb Z^g+B\mathbb Z^g}$ the Abel map.
    \item The theta function with characteristic $\Delta = \left[\begin{array}{l} \Delta_1 \\ \Delta_2\end{array}\right]$ is given by
    \begin{eqnarray}
    \label{eq:def_theta_with_char}
        \theta[\Delta](z) & = & \sum_{m\in \mathbb Z^g} \exp\Big(\pi i\langle B(m+\Delta_1),m+\Delta_1\rangle+2\pi i\langle z+\Delta_2,m+\Delta_1\rangle \Big)
      \nonumber\\
        & = & \exp\Big(\pi i\langle B\Delta_1,\Delta_1\rangle+2\pi i\langle z+\Delta_2,\Delta_1\rangle\Big) \theta(z+\Delta_2+B\Delta_1).
    \end{eqnarray}
    where $\Delta_1, \Delta_2$ are of the form $\frac{1}{2}(\delta_1, \ldots, \delta_g)^T$ with $\delta_i \in \cubr{0,1}$. A characteristic is called even (odd) if $4\pobr{\Delta_1, \Delta_2} = 0 \;(\text{mod } 2)$ (resp. $=1$ (mod $2$)). The special case of $\Delta = 0$ is the theta function with period matrix $B$ and is denoted as 
    \[ \theta(z) = \theta(z|B) = \sum_{m\in\mathbb Z^g}\exp\Big(\pi i\langle Bm,m\rangle+2\pi i \langle z,m\rangle\Big). \]
    \item The Schottky-Klein prime form $E(\alpha, \beta)$ can be expressed through a theta function of odd characteristic $\Delta$ and holomorphic spinor $h_\Delta$ as
    \begin{equation}
    \label{eq:prime_form_E_through_theta}
        E(\alpha,\beta)=\frac{\theta[\Delta]\Big(\int_\alpha^\beta\boldsymbol{\omega}\Big)}{h_\Delta(\alpha)h_\Delta(\beta)}.
    \end{equation}
    The result is independent of the choice of $\Delta$.
    \item Every (directed) train track of $G^\diamond$ has an associated point $\alpha \in \mathcal{R}$. Then the \emph{discrete Abel map} $\eta:V \cup F = V(G^\diamond) \to J(\mathcal{R})$ is defined in the following way. Set $\eta(v_0) = 0 $ for some vertex $v_0 \in V(G^\diamond)$. From there extend the values to all other vertices by adding $A(\alpha)$ whenever crossing a train track with label $\alpha$ that's oriented from left to right and subtracting it if it is oriented from right to left. This is a consistent construction and results in the formulas
    \begin{equation}
        \label{eq:discrete_abel_map}
        \eta(b)=\eta(w)-A(\alpha)-A(\beta), \qquad \eta(f_2)=\eta(f_1)+A(\alpha)-A(\beta).
    \end{equation} 
    with the labeling as in Figure \ref{fig:Fock_diamond_and_face}.
\end{itemize}

\begin{figure}[h]
\centering
\begin{subfigure}{.35\textwidth}
    \centering
    \fontsize{10pt}{12pt}\selectfont
    \def\svgwidth{0.8\linewidth}
\begingroup%
  \makeatletter%
  \providecommand\color[2][]{%
    \errmessage{(Inkscape) Color is used for the text in Inkscape, but the package 'color.sty' is not loaded}%
    \renewcommand\color[2][]{}%
  }%
  \providecommand\transparent[1]{%
    \errmessage{(Inkscape) Transparency is used (non-zero) for the text in Inkscape, but the package 'transparent.sty' is not loaded}%
    \renewcommand\transparent[1]{}%
  }%
  \providecommand\rotatebox[2]{#2}%
  \newcommand*\fsize{\dimexpr\f@size pt\relax}%
  \newcommand*\lineheight[1]{\fontsize{\fsize}{#1\fsize}\selectfont}%
  \ifx\svgwidth\undefined%
    \setlength{\unitlength}{318.09269726bp}%
    \ifx\svgscale\undefined%
      \relax%
    \else%
      \setlength{\unitlength}{\unitlength * \real{\svgscale}}%
    \fi%
  \else%
    \setlength{\unitlength}{\svgwidth}%
  \fi%
  \global\let\svgwidth\undefined%
  \global\let\svgscale\undefined%
  \makeatother%
  \begin{picture}(1,1.41666335)%
    \lineheight{1}%
    \setlength\tabcolsep{0pt}%
    \put(0,0){\includegraphics[width=\unitlength,page=1]{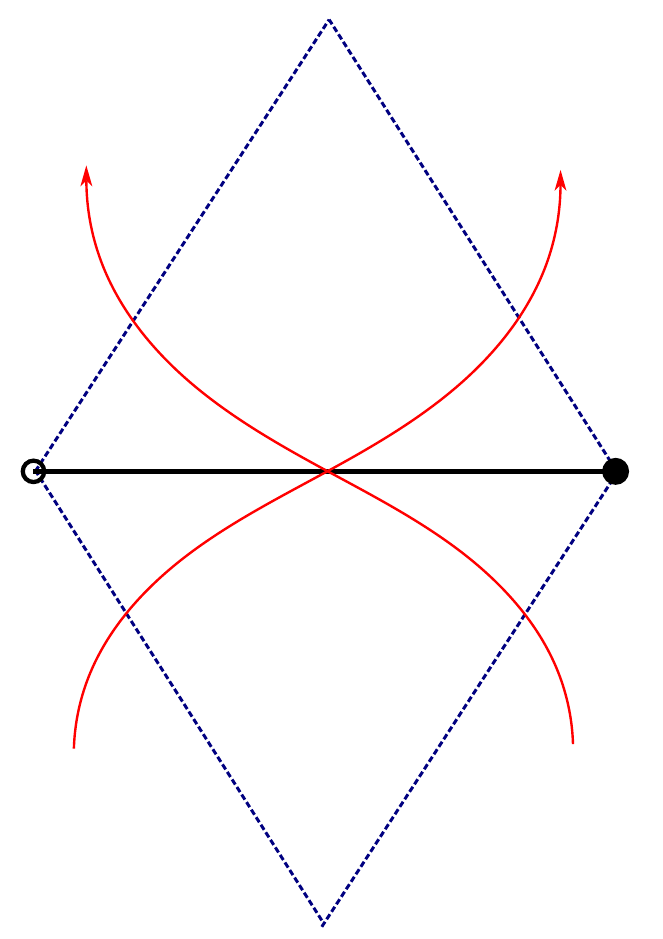}}%
    \put(-0.00214136,0.62226423){\color[rgb]{0,0,0}\makebox(0,0)[lt]{\lineheight{1.25}\smash{\begin{tabular}[t]{l}$w$\end{tabular}}}}%
    \put(0.94835858,0.62780658){\color[rgb]{0,0,0}\makebox(0,0)[lt]{\lineheight{1.25}\smash{\begin{tabular}[t]{l}$b$\end{tabular}}}}%
    \put(0.74329453,1.08504415){\color[rgb]{1,0,0}\makebox(0,0)[lt]{\lineheight{1.25}\smash{\begin{tabular}[t]{l}$\alpha$\end{tabular}}}}%
    \put(0.17798255,1.08504415){\color[rgb]{1,0,0}\makebox(0,0)[lt]{\lineheight{1.25}\smash{\begin{tabular}[t]{l}$\beta$\end{tabular}}}}%
    \put(0,0){\includegraphics[width=\unitlength,page=2]{Figures/FockWeightWithDiamond.pdf}}%
    \put(0.55762833,1.39264038){\color[rgb]{0,0,0}\makebox(0,0)[lt]{\lineheight{1.25}\smash{\begin{tabular}[t]{l}$f_2$\end{tabular}}}}%
    \put(0.55485712,0.01815644){\color[rgb]{0,0,0}\makebox(0,0)[lt]{\lineheight{1.25}\smash{\begin{tabular}[t]{l}$f_1$\end{tabular}}}}%
    \put(0,0){\includegraphics[width=\unitlength,page=3]{Figures/FockWeightWithDiamond.pdf}}%
  \end{picture}%
\endgroup%

\end{subfigure}
\begin{subfigure}{.6\textwidth}
    \centering
    \fontsize{10pt}{12pt}\selectfont
    \def\svgwidth{.9\textwidth}
    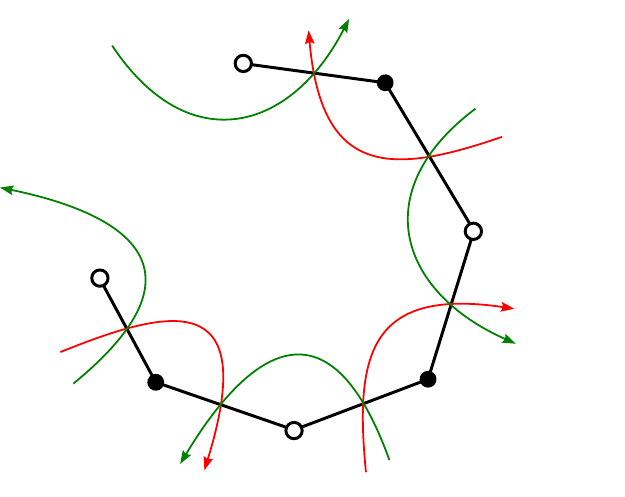
\end{subfigure}
  \caption{\textbf{Left}: A quadrilateral face of $G^\diamond$, with two directed strips through it. The discrete Abel map is defined on all $4$ vertices. The dimer model is defined on the black and white graph, the height function on its dual graph with square vertices. \textbf{Right}: A face of the graph $G$ surrounded by two different types of train tracks.}
  \label{fig:Fock_diamond_and_face}
\end{figure}

\begin{definition}
    Let $G$ be a planar bipartite graph with labels $\alpha\in \mathcal{R}$ on every train track. The \emph{Fock weight} for an edge $wb$ crossed by train track with associated labels $\alpha, \beta$ and adjacent to faces $f_1, f_2$ is then given by
    \begin{equation}
        \label{eq:def_Fock_weights}
        K_{wb}(\alpha,\beta)=\frac{E(\alpha,\beta)}{\theta(\eta(f_1)+D)\theta(\eta(f_2)+D)}.
    \end{equation}
    Here $D\in \mathbb{C}^g$ is arbitrary and $\eta$ is the discrete Abel map.
\end{definition}
In order for these complex weights to satisfy the Kasteleyn condition \eqref{eq:Kasteleyn_condition} and thus to define a dimer model, the face weights need to be real and have the correct sign. For a face $f \in F(G)$ as in Figure \ref{fig:Fock_diamond_and_face} surrounded by faces $f_1, \ldots, f_{2n}$ we have
\begin{equation}
    \label{eq:Fock_face_weight}
    W_f = \prod_{i=1}^n \frac{\theta[\Delta]\br{\int_{\alpha_i}^{\beta_i} \boldsymbol{\omega}} }{\theta[\Delta]\br{\int_{\beta_i}^{\alpha_{i+1}} \boldsymbol{\omega}} } \frac{\theta\br{\eta(f_{2i}) + D}}{\theta\br{\eta(f_{2i- 1}) + D}}.
\end{equation}
Now let $\mathcal{R}$ be an \emph{M-curve}. That is a Riemann surface of genus $g$ with an anti-holomorphic involution $\tau: \calR \to \calR$ which fixes the maximal possible number of $g+1$ real ovals $X = X_0 \sqcup \ldots \sqcup X_g$ and let all labels $\alpha_i, \beta_i$ lie on $X_0$. Then $W_f$ is real. Furthermore a cyclic condition guarantees the correct sign.
\begin{theorem}
    \label{thm:Kasteleyn_condition}
    Consider an orientation preserving homeomorphism $X_0\to S^1$, and denote by $\hat\alpha_i,\hat\beta_i\in S^1$ the images of $\alpha_i,\beta_i$ under this map. Then the Kasteleyn condition is equivalent to
    \begin{equation}
    \label{eq:Kasteleyn_clustering_general_face}
        {\rm sign}\left(\frac{(\hat\beta_1-\hat\alpha_1)(\hat\beta_2-\hat\alpha_2)\cdots(\hat\beta_n-\hat\alpha_n)}
        {(\hat\alpha_1-\hat\beta_2)(\hat\alpha_2-\hat\beta_3)\cdots (\hat\alpha_{n}-\hat\beta_1)}\right)=(-1)^{n+1}
    \end{equation}
    for all faces $f$ of $G$.
\end{theorem}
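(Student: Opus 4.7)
The plan is to compute the sign of the face weight $W_f$ from (1.11) and show it matches $(-1)^{n+1}$ exactly when the cyclic inequality in the statement holds. The expression for $W_f$ factors into a \emph{spectral} product $\Pi_S = \prod_{i=1}^n \theta[\Delta](\int_{\alpha_i}^{\beta_i}\boldsymbol{\omega})/\theta[\Delta](\int_{\beta_i}^{\alpha_{i+1}}\boldsymbol{\omega})$ and an \emph{Abel} product $\Pi_A = \prod_{i=1}^n \theta(\eta(f_{2i})+D)/\theta(\eta(f_{2i-1})+D)$, and I would handle these separately.

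For $\Pi_A$ I would argue that, since all train-track labels lie on the real oval $X_0$, the Abel map sends them into the real part of $J(\mathcal{R})$, so $\eta$ takes values in a real affine lattice. For $D$ chosen real and generic, each factor $\theta(\eta(f_j)+D)$ is therefore real and nonzero, so $\Pi_A$ is real. Positivity follows from the bipartite structure of $G$: neighboring faces differ by successive shifts $\pm (A(\alpha)-A(\beta))$ propagated along the boundary of $f$, and numerator and denominator factors pair up so that the ratio extends continuously from the given Harnack data to a degenerate configuration with manifest positivity; equivalently, $\Pi_A$ equals a product of Fock edge weights divided by prime forms on $X_0$, which is strictly positive whenever the Fock weights are. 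For $\Pi_S$ I would use (1.9): each $\theta[\Delta](\int_\alpha^\beta\boldsymbol{\omega})$ equals $E(\alpha,\beta)\,h_\Delta(\alpha)\,h_\Delta(\beta)$, and in the prescribed cyclic ratio each label $\alpha_i,\beta_i$ appears once in the numerator and once in the denominator, so the spinor factors cancel. The task reduces to evaluating the sign of $\prod_i E(\alpha_i,\beta_i)/E(\beta_i,\alpha_{i+1})$, for which I would invoke the classical real-oval property of the prime form on an M-curve: for $\alpha,\beta \in X_0$, $E(\alpha,\beta)$ is real up to a global phase depending only on $\mathcal{R}$, and its sign equals the sign of $\hat\beta-\hat\alpha$ under any orientation preserving identification $X_0 \to S^1$. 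Substitution yields precisely the ratio on the left-hand side of the claimed sign condition, and combining with $\mathrm{sign}(W_f)=(-1)^{n+1}$ concludes the proof.

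The main obstacle is the positivity step for $\Pi_A$. While its reality is essentially formal from the M-curve structure, the sign is a genuine computation that depends on how $\eta$ propagates around $f$ and on the way the consecutive shifts $A(\alpha_i)-A(\beta_i)$ and $A(\beta_i)-A(\alpha_{i+1})$ are distributed between numerator and denominator. I expect the cleanest route is induction on the face degree $2n$ using the elementary moves on minimal graphs that decrease $n$, with the base case $n=2$ providing both the initialization and a direct verification of Fock-weight positivity for a quadrilateral face; the inductive step is then purely a matter of tracking how the Abel factor transforms under a single move, which is a finite and mechanical check.
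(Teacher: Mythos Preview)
The paper does not give its own proof of this theorem; it states the result as part of the background summarised from \cite{BBS} and explicitly refers the reader there for proofs. So there is no in-paper argument to compare against, and I assess your outline on its own merits.

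Your decomposition $W_f=\Pi_S\cdot\Pi_A$ is the natural one, and the treatment of $\Pi_S$ is essentially correct: the spinor factors $h_\Delta$ cancel in the cyclic ratio, leaving $\prod_i E(\alpha_i,\beta_i)/E(\beta_i,\alpha_{i+1})$, and on the real oval $X_0$ the prime form is real (in any real trivialisation) with sign governed by the relative position of its arguments, matching the ratio in the statement. One presentational point: $E$ is a section of a $(-\tfrac12,-\tfrac12)$-bundle, not a function, so the phrase ``sign of $E(\alpha,\beta)$'' is only meaningful after the spinors have been cancelled in the full cyclic product; you do cancel them, so this is cosmetic rather than substantive.

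The genuine gap is, as you say, the positivity of $\Pi_A$. Both proposed routes are vague as written. The continuity-to-degenerate-data argument presupposes that no factor $\theta(\eta(f_j)+D)$ vanishes along the deformation path, which is precisely what needs to be controlled; without an a priori nonvanishing statement you could cross the theta divisor and flip a sign. The induction on face degree via minimal-graph moves is not a ``finite mechanical check'': a spider move rearranges the discrete Abel map on several faces simultaneously and introduces new theta factors, so the bookkeeping is nontrivial and you would still need the base case, which is the same positivity question for a quadrilateral. A more direct route, and the one you should aim for, uses the real structure of the Jacobian of an M-curve: with the $a$-cycles chosen on the ovals $X_1,\dots,X_g$, the period matrix $B$ is purely imaginary up to a half-integer real part, the Abel images of points on $X_0$ lie in a fixed real translate of $\mathbb{R}^g\subset J(\mathcal{R})$, and for $D$ in the appropriate real component $\theta(\cdot+D)$ is strictly positive on that translate. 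This is a single global argument that avoids any induction and is the standard mechanism behind such results.
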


\subsection*{Quasi-periodic weights.}
We make this condition more concrete as we apply this setup to two examples of regular grids. For the \emph{regular square lattice} we have have vertical and horizontal train tracks of two different orientations respectively. We will consider periodically repeating labels $\alpha_i^\pm, \beta_j^\pm, \; i\leq m,\; j\leq n$. The Kasteleyn condition \eqref{eq:Kasteleyn_clustering_general_face} is then satisfied for every face iff the cyclic order
\begin{equation}
    \label{eq:clustering_sq}
    \alpha_i^+<\beta_j^+<\alpha_k^-<\beta_l^-<\alpha_i^+ \quad {\rm for\;\;all}\quad i,j,k,l,
\end{equation}
holds on $X_0$. See Figure \ref{fig:sq_lattice} for notation.

\begin{figure}[h]
\centering
\begin{subfigure}{.7\textwidth}
  \centering
    \fontsize{10pt}{12pt}\selectfont
    \def\svgwidth{0.9\linewidth}
    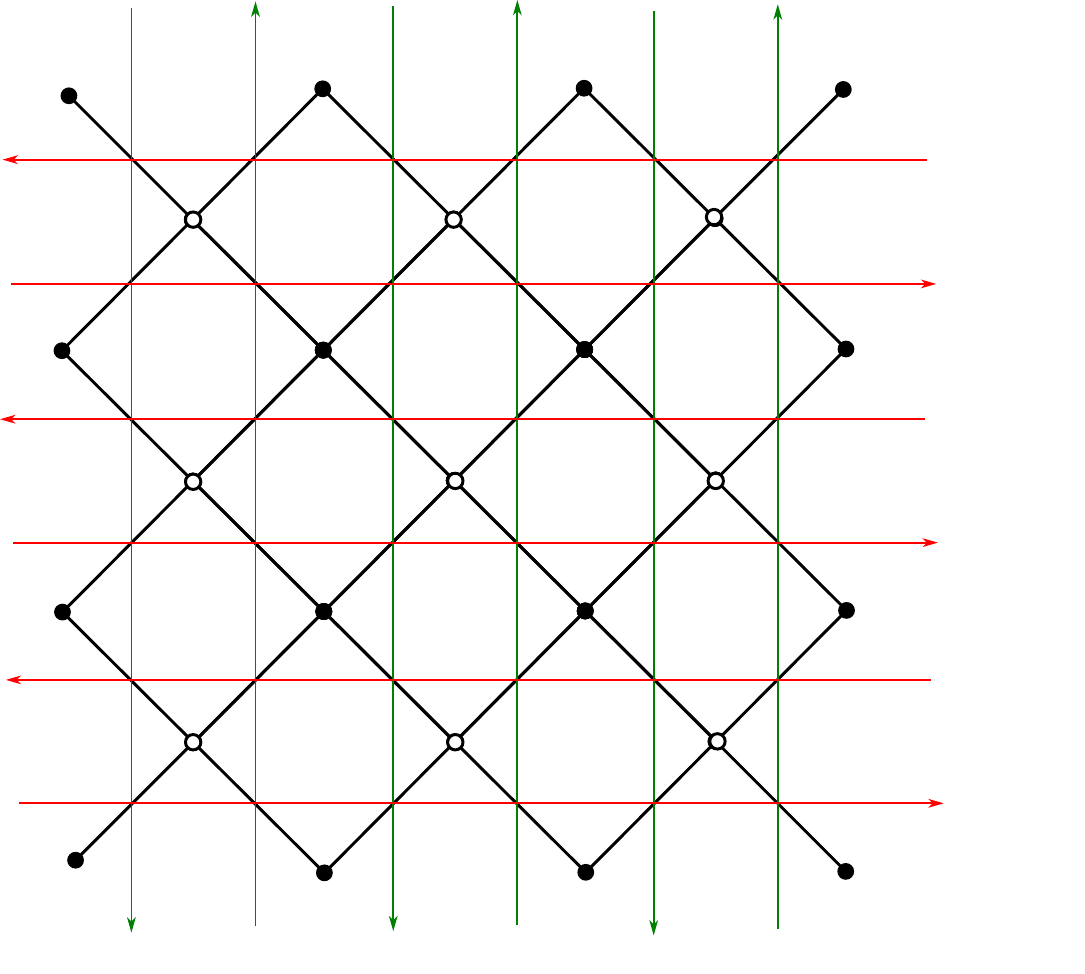
\end{subfigure}
\begin{subfigure}{.28\textwidth}
  \begin{subfigure}[t]{\textwidth}
    \centering
      \fontsize{10pt}{12pt}\selectfont
      \def\svgwidth{0.9\linewidth}
\begingroup%
  \makeatletter%
  \providecommand\color[2][]{%
    \errmessage{(Inkscape) Color is used for the text in Inkscape, but the package 'color.sty' is not loaded}%
    \renewcommand\color[2][]{}%
  }%
  \providecommand\transparent[1]{%
    \errmessage{(Inkscape) Transparency is used (non-zero) for the text in Inkscape, but the package 'transparent.sty' is not loaded}%
    \renewcommand\transparent[1]{}%
  }%
  \providecommand\rotatebox[2]{#2}%
  \newcommand*\fsize{\dimexpr\f@size pt\relax}%
  \newcommand*\lineheight[1]{\fontsize{\fsize}{#1\fsize}\selectfont}%
  \ifx\svgwidth\undefined%
    \setlength{\unitlength}{322.25247532bp}%
    \ifx\svgscale\undefined%
      \relax%
    \else%
      \setlength{\unitlength}{\unitlength * \real{\svgscale}}%
    \fi%
  \else%
    \setlength{\unitlength}{\svgwidth}%
  \fi%
  \global\let\svgwidth\undefined%
  \global\let\svgscale\undefined%
  \makeatother%
  \begin{picture}(1,0.85773971)%
    \lineheight{1}%
    \setlength\tabcolsep{0pt}%
    \put(0,0){\includegraphics[width=\unitlength,page=1]{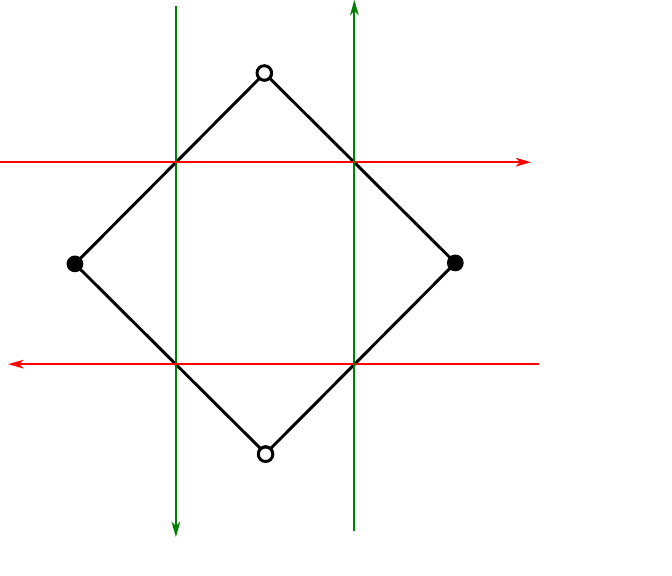}}%
    \put(0.75295355,0.6568396){\color[rgb]{1,0,0}\makebox(0,0)[lt]{\lineheight{1.25}\smash{\begin{tabular}[t]{l}$\beta_{j+1}^-$\end{tabular}}}}%
    \put(0.75676921,0.35412353){\color[rgb]{1,0,0}\makebox(0,0)[lt]{\lineheight{1.25}\smash{\begin{tabular}[t]{l}$\beta_j^+$\end{tabular}}}}%
    \put(0.23273982,0.0068903){\color[rgb]{0,0.50196078,0}\makebox(0,0)[lt]{\lineheight{1.25}\smash{\begin{tabular}[t]{l}$\alpha_i^-$\end{tabular}}}}%
    \put(0.51256123,0.00561841){\color[rgb]{0,0.50196078,0}\makebox(0,0)[lt]{\lineheight{1.25}\smash{\begin{tabular}[t]{l}$\alpha_i^+$\end{tabular}}}}%
  \end{picture}%
\endgroup%

  \end{subfigure}
  \begin{subfigure}[b]{\textwidth}
    \centering
      \fontsize{10pt}{12pt}\selectfont
      \def\svgwidth{0.9\linewidth}
\begingroup%
  \makeatletter%
  \providecommand\color[2][]{%
    \errmessage{(Inkscape) Color is used for the text in Inkscape, but the package 'color.sty' is not loaded}%
    \renewcommand\color[2][]{}%
  }%
  \providecommand\transparent[1]{%
    \errmessage{(Inkscape) Transparency is used (non-zero) for the text in Inkscape, but the package 'transparent.sty' is not loaded}%
    \renewcommand\transparent[1]{}%
  }%
  \providecommand\rotatebox[2]{#2}%
  \newcommand*\fsize{\dimexpr\f@size pt\relax}%
  \newcommand*\lineheight[1]{\fontsize{\fsize}{#1\fsize}\selectfont}%
  \ifx\svgwidth\undefined%
    \setlength{\unitlength}{301.01809837bp}%
    \ifx\svgscale\undefined%
      \relax%
    \else%
      \setlength{\unitlength}{\unitlength * \real{\svgscale}}%
    \fi%
  \else%
    \setlength{\unitlength}{\svgwidth}%
  \fi%
  \global\let\svgwidth\undefined%
  \global\let\svgscale\undefined%
  \makeatother%
  \begin{picture}(1,0.91824627)%
    \lineheight{1}%
    \setlength\tabcolsep{0pt}%
    \put(0,0){\includegraphics[width=\unitlength,page=1]{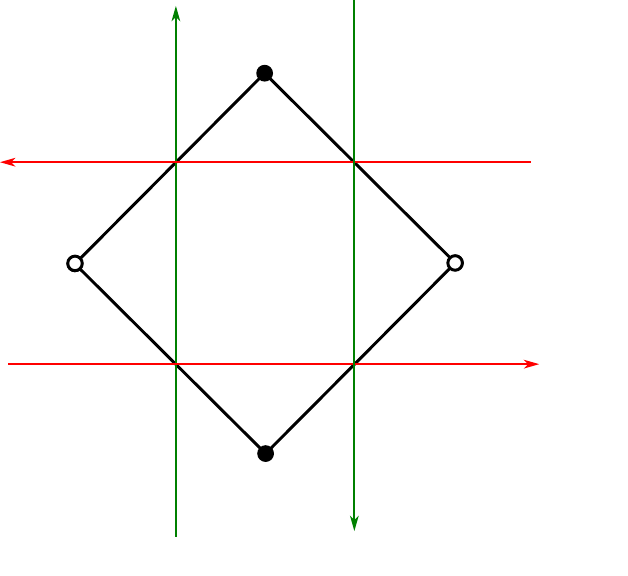}}%
    \put(0.8060683,0.70317428){\color[rgb]{1,0,0}\makebox(0,0)[lt]{\lineheight{1.25}\smash{\begin{tabular}[t]{l}$\beta_j^+$\end{tabular}}}}%
    \put(0.81015312,0.37910406){\color[rgb]{1,0,0}\makebox(0,0)[lt]{\lineheight{1.25}\smash{\begin{tabular}[t]{l}$\beta_j^-$\end{tabular}}}}%
    \put(0.24915772,0.00737635){\color[rgb]{0,0.50196078,0}\makebox(0,0)[lt]{\lineheight{1.25}\smash{\begin{tabular}[t]{l}$\alpha_i^+$\end{tabular}}}}%
    \put(0.54871825,0.00601474){\color[rgb]{0,0.50196078,0}\makebox(0,0)[lt]{\lineheight{1.25}\smash{\begin{tabular}[t]{l}$\alpha_{i+1}^-$\end{tabular}}}}%
  \end{picture}%
\endgroup%

\end{subfigure}
\end{subfigure}
\caption{\textbf{Left}: Regular square lattice on a $3 \times 3$ fundamental domain along with the train track parameters. We have $4$ different kinds of trans tracks and attach the parameters $\cubr{\alpha_i^-, \beta_j^-, \alpha_k^+, \beta_l^+}$ to them. \textbf{Right}: The strips for two types of square faces.}
\label{fig:sq_lattice}
\end{figure}

For the \emph{regular hexagonal lattice} there are three kinds of train tracks. Again we consider periodically repeating labels $\alpha_i, \beta_j, \gamma_k,\; i,j,k\leq n $ on them. See Figure \ref{fig:hex_lattice} for notation. The Kasteleyn condition then reads
\begin{equation}
    \label{eq:clustering_hex}
    \alpha_i<\beta_j<\gamma_k<\alpha_i \quad {\rm for\;\;all}\quad i,j,k.
\end{equation}

\begin{figure}[h]
\centering
\begin{subfigure}{.7\textwidth}
  \centering
    \fontsize{10pt}{12pt}\selectfont
    \def\svgwidth{0.9\linewidth}
    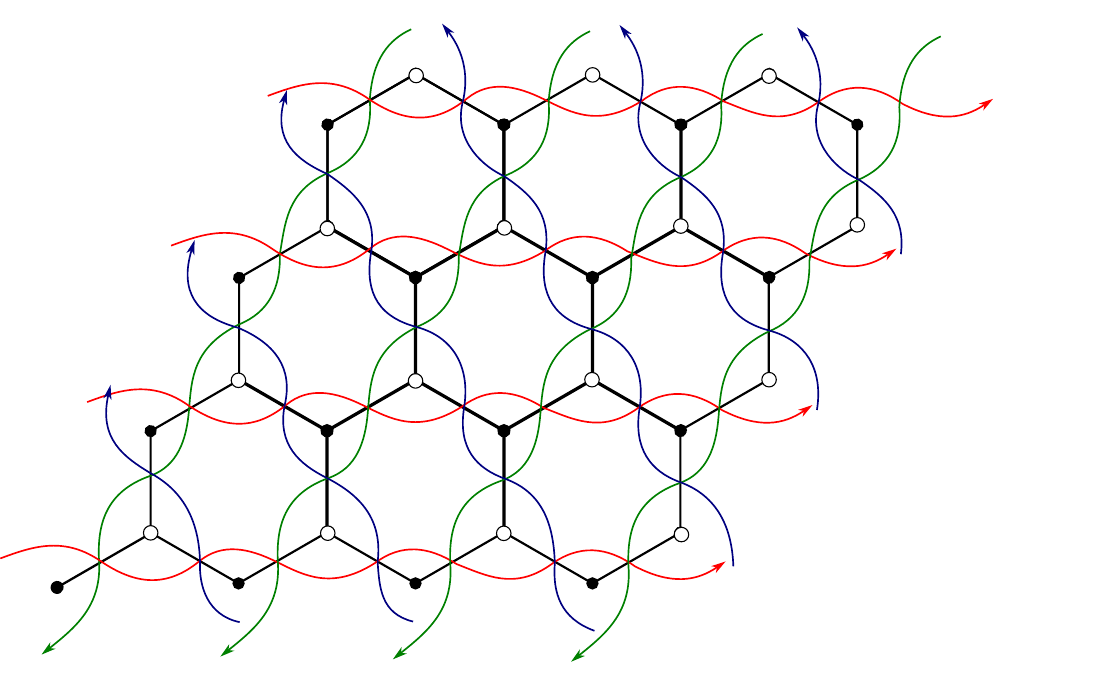
\end{subfigure}
\begin{subfigure}{.28\textwidth}
  \centering
    \fontsize{10pt}{12pt}\selectfont
    \def\svgwidth{0.9\linewidth}
\begingroup%
  \makeatletter%
  \providecommand\color[2][]{%
    \errmessage{(Inkscape) Color is used for the text in Inkscape, but the package 'color.sty' is not loaded}%
    \renewcommand\color[2][]{}%
  }%
  \providecommand\transparent[1]{%
    \errmessage{(Inkscape) Transparency is used (non-zero) for the text in Inkscape, but the package 'transparent.sty' is not loaded}%
    \renewcommand\transparent[1]{}%
  }%
  \providecommand\rotatebox[2]{#2}%
  \newcommand*\fsize{\dimexpr\f@size pt\relax}%
  \newcommand*\lineheight[1]{\fontsize{\fsize}{#1\fsize}\selectfont}%
  \ifx\svgwidth\undefined%
    \setlength{\unitlength}{298.53233119bp}%
    \ifx\svgscale\undefined%
      \relax%
    \else%
      \setlength{\unitlength}{\unitlength * \real{\svgscale}}%
    \fi%
  \else%
    \setlength{\unitlength}{\svgwidth}%
  \fi%
  \global\let\svgwidth\undefined%
  \global\let\svgscale\undefined%
  \makeatother%
  \begin{picture}(1,0.98848374)%
    \lineheight{1}%
    \setlength\tabcolsep{0pt}%
    \put(0,0){\includegraphics[width=\unitlength,page=1]{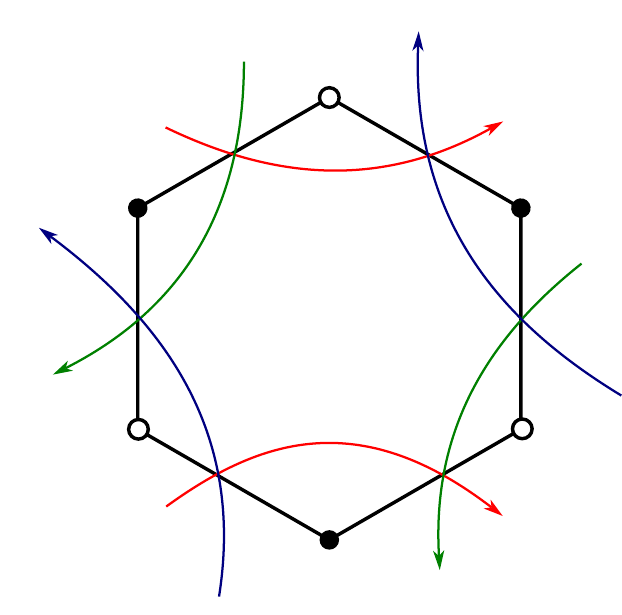}}%
    \put(-0.00046807,0.65043186){\color[rgb]{0,0,0.50196078}\makebox(0,0)[lt]{\lineheight{1.25}\smash{\begin{tabular}[t]{l}$\gamma_1$\end{tabular}}}}%
    \put(0.63933933,0.96347071){\color[rgb]{0,0,0.50196078}\makebox(0,0)[lt]{\lineheight{1.25}\smash{\begin{tabular}[t]{l}$\gamma_2$\end{tabular}}}}%
    \put(0.82469128,0.12046251){\color[rgb]{1,0,0}\makebox(0,0)[lt]{\lineheight{1.25}\smash{\begin{tabular}[t]{l}$\beta_1$\end{tabular}}}}%
    \put(0.83155622,0.77674581){\color[rgb]{1,0,0}\makebox(0,0)[lt]{\lineheight{1.25}\smash{\begin{tabular}[t]{l}$\beta_2$\end{tabular}}}}%
    \put(0.03111038,0.32778213){\color[rgb]{0,0.50196078,0}\makebox(0,0)[lt]{\lineheight{1.25}\smash{\begin{tabular}[t]{l}$\alpha_1$\end{tabular}}}}%
    \put(0.68464776,0.00787842){\color[rgb]{0,0.50196078,0}\makebox(0,0)[lt]{\lineheight{1.25}\smash{\begin{tabular}[t]{l}$\alpha_2$\end{tabular}}}}%
  \end{picture}%
\endgroup%

\end{subfigure}
\caption{\textbf{Left}: Regular hexagonal lattice on a $3\times 3$ fundamental domain. \textbf{Right}: The strips for a typical hexagonal face.}
\label{fig:hex_lattice}
\end{figure}
\begin{rem}
    Note that periodically repeating train track labels do not imply periodic weights as the discrete Abel map $\eta$ does not have to be periodic. In general this construction gives \textbf{quasi-periodic} weights. Periodicity is however guaranteed under a Abel-Jacobi condition on the choices of train track labels which then yields the special case of doubly periodic weights and a Harnack curve representation for $\calR$.This setup covers \textbf{all} doubly periodic weights and extends their class in a non-trivial way while still allowing for spectral methods of study. In fact in some ways it makes the study simpler as the Abel-Jacobi condition that is implicitly assumed with periodicity can be relaxed allowing e.g. for continuous deformations of the spectral data. See \cite[Section~7]{BBS}. 
\end{rem}

\subsection*{Amoeba, polygon, surface tension and Ronkin function.}
Let $\calR$ be an M-curve with antiholomorphic involution $\tau$ fixing $X = \bigcup_{i=0}^g X_i$ the set of real ovals. Then $X$ separates $\calR$ into two components $\calR_+, \calR_-$ with $\tau \calR_+ = \calR_-$ and $\calR_+ = \calR / \tau$. The corresponding open components are denoted by $\calR^\circ_\pm := \calR_\pm \setminus X$. For concreteness we lay out the following definition for the square and hexagonal grid. A general definition is done equivalently.

Let $\cubr{\alpha, \beta} = \cubr{\alpha^\pm_1, \ldots, \alpha^\pm_n, \beta^\pm_1, \ldots, \beta^\pm_n}$ lie on $X_0$ adhering to the clustering condition \eqref{eq:clustering_sq}. Then we call $\mathcal{S} = \cubr{\calR, \cubr{\alpha, \beta}}$ \emph{Harnack data for the square grid} with an $(n \times n)$ fundamental domain.Let us introduce   the Abelian differential of the third kind $d\zeta^{\alpha_i}$ with simple poles at $\alpha_i^\pm$, the residues
\begin{equation}
    \label{eq:residues_of_dZeta_i_sq}
    \res_{\alpha_i^-} d\zeta^{\alpha_i} = - \res_{\alpha_i^+} d\zeta^{\alpha_i} = 1,
\end{equation}
and purely imaginary periods. Define $d\zeta^{\beta_j}$ similary and introduce the two differentials
\begin{equation}
    \label{eq:dzeta_1,2}
    d\zeta_1 = \sum_{i = 1}^n d\zeta^{\alpha_i}, \quad d\zeta_2 = \sum_{i = 1}^n d\zeta^{\beta_i}.
\end{equation}

For the hexagonal grid let $\cubr{\alpha, \beta, \gamma} = \cubr{\alpha_1,\ldots,\alpha_n, \beta_1, \ldots, \beta_n, \gamma_1, \ldots, \gamma_n}$ all lie on $X_0$ and satisfy the clustering condition \eqref{eq:clustering_hex}. We then call $\mathcal{S} = \cubr{\calR,\cubr{\alpha,\beta,\gamma}}$ \emph{Harnack data for the hexagonal grid} with an $(n\times n)$ fundamental domain. Analogously, let us introduce the Abelian differentials of third kind $d\zeta^{\alpha_i}, d\zeta^{\beta_i}$ with residues
\begin{equation}
    \label{eq:residues_of_dZeta_i_hex}
    \res_{\alpha_i} d\zeta^{\alpha_i} = - \res_{\gamma_i} d\zeta^{\alpha_i} = 1 = \res_{\beta_i} d\zeta^{\beta_i} = - \res_{\gamma_i} d\zeta^{\beta_i},
\end{equation}
and purely imaginary periods. In this case the two differentials $d\zeta_i$ are also given by formula \eqref{eq:dzeta_1,2}.

Let $\mathcal{S} = \cubr{\calR, \mathcal{T}}$ be one of the two types of Harnack data where $\mathcal{T}$ is a collection of points on $X_0$. The $d\zeta_k$ are real on $X$ and $d\zeta_k(\tau P) = \overline{d\zeta_k(P)}$. This yields well defined integrals
\[ \zeta_k(P) = \int_Q^P d\zeta_k \]
on $\mathcal{R}_+ \setminus \mathcal{T}$ with $Q \in X_0$ fixed. Their real and imaginary parts are denoted as $\zeta_k(P) = x_k + i y_k$.
\begin{definition}
\label{def:amoeba_map_polygon_map}
    The map
    \[ \mathcal{A}(P) := (x_1, x_2) = (\re \zeta_1(P), \re \zeta_2(P)) \]
    is called the \emph{amoeba map}, and its image 
    \[ \mathcal{A}(\mathcal{R}_+ \setminus \mathcal{T}) =: \mathcal{A}_\mathcal{S} \subset \mathbb{R}^2 \]
    is called the \emph{amoeba}.
    
    Similarly the map
    \[ \Delta(P) = (s_1, s_2) = R(y_1, y_2) := \frac{1}{\pi}(-y_2, y_1) = \frac{1}{\pi}(- \im \zeta_2(P), \im \zeta_1(P)) \]
    we call the \emph{polygon map}, and its image
    \[ \Delta(\mathcal{R}_+ \setminus \mathcal{T}) =: \Delta_\mathcal{S} \subset \mathbb{R}^2 \]
    is called the \emph{Newton polygon}.

    We denote the images of $\calR^\circ_+$ under $\mathcal{A}, \Delta$ as $\mathcal{A}_\mathcal{S}^\circ, \Delta_\mathcal{S}^\circ$ respectively.
\end{definition}
See Figure \ref{fig:all_maps_and_zeros} for an example of these two maps. In the periodic case these two notions coincide with the algebraic definitions of the amoeba map and the map to the Newton polygon, see Table~\ref{tab:translation_aztec}.

We summarize basic facts about the introduced maps, and also about (generalized) Ronkin and surface tension functions, referring to \cite{BBS} for details and proofs.

Every bounded component of $\mathbb{R}^2\setminus \mathcal{A}_\mathcal{S}$ corresponds to a real oval $X_i, i\neq 0$ and we will denote it as $\mathcal{A}_i$. Every labeled point $\alpha \in \mathcal{T}$ gets mapped to infinity and has an associated tentacle with asymptotic direction $-(\res_\alpha d\zeta_1, \res_\alpha d\zeta_2)$. Let $\ell_{\alpha, \beta} \subset X_0$ be the arc between $\alpha, \beta \in \mathcal{T}$ not containing any other points in $\mathcal{T}$. The unbounded region of $\mathbb{R}^2 \setminus \mathcal{A}_\mathcal{S}$ with boundary $\mathcal{A}(\ell_{\alpha, \beta})$ will be denoted as $\mathcal{A}_{\ell_{\alpha, \beta}}$. Every component of $\mathbb{R}^2 \setminus \mathcal{A}_\mathcal{S}$ is convex.

The polygon map $\Delta$ maps any such arc $\ell_{\alpha, \beta}$ to an integer lattice point in $\mathbb{Z}^2$. The image $\Delta_\mathcal{S}$ is then the convex hull of these points. The other real ovals $X_i$ get mapped to individual points inside $\Delta_\mathcal{S}$. Note that these do not have to be integer points in the quasi-periodic setup.

The two maps
\[ \mathcal{A} : \calR^\circ_+ \to \mathcal{A}_\mathcal{S}^\circ,\quad \Delta: \calR^\circ_+ \to \Delta_\mathcal{S}^\circ \]
are diffeomorphisms. Thus  $(x_1, x_2)$ and $(s_1, s_2)$ are global coordinates of $\calR^\circ_+$. The functions
\begin{align}
    \label{eq:def_rho_sigma}
    \rho(P) = \rho(x_1, x_2) &= -\frac{1}{\pi} \im \int_\ell \zeta_2 d\zeta_1 + x_2 s_2 \\
    \sigma(P) = \sigma(s_1, s_2) &= \frac{1}{\pi} \im \int_\ell \zeta_2 d\zeta_1 - x_1 s_1
\end{align}
are called the \emph{(generalized) Ronkin function and surface tension}, see \cite{BBS}. 
Here $\ell$ is a path from some fixed $Q\in X_0$ to $P \in \calR^\circ_+$. 
The functions $\rho$ and $\sigma$ are convex on $\mathcal{A}^\circ_\mathcal{S}$ and $\Delta^\circ_\mathcal{S}$ respectively and are Legendre dual
\begin{equation}
    \label{eq:rho_sigma_Legendre_duality_s_coords}
    \sigma(s_1, s_2) = -\rho(x_1, x_2) + x_1 s_1 + x_2 s_2,
\end{equation}
or equivalently
\begin{equation}
    \label{eq:rho_sigma_Legendre_via_gradients}
    \nabla\sigma(s_1, s_2) = (x_1, x_2),\quad \nabla\rho(x_1, x_2) = (s_1, s_2). 
\end{equation}
From \eqref{eq:rho_sigma_Legendre_via_gradients} and the fact that $\Delta$ maps real ovals to points it can be seen that $\rho$ can be affinely extended to each connected component of $\mathbb{R}^2\setminus\mathcal{A}_\mathcal{S}$. We will henceforth denote this extension by $\rho : \mathbb{R}^2 \to \mathbb{R}$. The surface tension $\sigma$ has singularities at all marked points.

\begin{rem}
    \label{rem:uniformization_calR}
Due to the Koebe uniformization theorem every finitely connected planar domain is conformally equivalent to a circle domain.   So,  we can map $\calR_+$ conformally to the unit disk with $g$ round holes such that $X_0$ gets mapped to $S^1$. This along with the marked train track parameters on $S^1$ defines the Harnack data that we included in figures throughout this paper, e.g. Figure~\ref{fig:main_aztec_config}. In these figures the canonical coordinate $z$ encodes the complex structure of $\calR$ and we encourage the reader to think of this as the actual Riemann surface.
\end{rem}

\begin{table}
    \begin{center}
        \begin{tabular}{ |c|c|c| } 
            \hline
            Object & doubly periodic & quasi periodic \\
            \hline \hline
            Harnack data & $\det(P(z,w)) = 0$ & $\cubr{\mathcal{R}, \mathcal{T}}$ \\ \hline
            Monodromies & $(z, w)$ & $(\frac{\psi_{(1,0)}}{\psi_{(0,0)}}, \frac{\psi_{(0,1)}}{\psi_{(0,0)}})$ \\ \hline
            Main differentials & $(\frac{dz}{z}, \frac{dw}{w})$ & $(d\zeta_1, d\zeta_2)$ \\ \hline
            Amoeba map & $(\log|z|, \log|w|)$ & $(\re \zeta_1, \re \zeta_2)$ \\ \hline
            Ronkin function & $\frac{1}{(2 \pi i)^2} \int\int_{\mathbb{T}^2} \log |P(e^{x_1} z, e^{x_2} w)| \frac{dz}{z} \frac{dw}{w}$ & $-\frac{1}{\pi} \im \int_\ell \zeta_2 d\zeta_1 + x_2 s_2$\\
            \hline
        \end{tabular}
    \end{center}
\caption{For readers familiar with the doubly periodic setup we include this dictionary translating some of the main objects.}
\label{tab:translation_aztec}
\end{table}

\subsection*{The height function.}
The face weights $W_f$ have a natural interpretation in terms of the \emph{height function}. Any dimer configuration $D$ induces a discrete $1$-form $\omega$ via
\[ \omega_D(bw) = -\omega_D(wb) = \left\{ \begin{array}{l}
    1, \text{ if } (bw) \in D,\\
    0, \text{ otw.}
\end{array} \right. \]
Observe that $\omega$ has divergence $d^*\omega_D(v) = \sum_{v'\sim v} \omega_D(vv')$ of $1$ at black vertices and $-1$ at white vertices. Such a discrete $1$-form is called a unit form.

Take a fixed unit flow $\omega_0$ and consider $\omega := \omega_D - \omega_0$. Then $\omega$ is divergence free at all vertices and thus the dual form $\omega^*$ is exact, i.e. $\omega^* = dh$ for some function $h_D = h: F \to \mathbb{R}$. This $h$ is called the height function and is well defined up to a global constant. We will use the natural reference flows $\omega_0 = \frac{1}{4}$ and $\omega_0 = \frac{1}{3}$ for each edge oriented from black to white and on the square and hexagonal grid respectively. These are standard choices \cite{thurston1989groups} but note that on a finite planar graph $G$ that is a subset of the corresponding grid the vertices on the boundary do not have divergence $\pm 1$. Consequently the dual form $\omega^*$ is not divergence free around the boundary. We therefore consider a modified version of $G^*$ where for every boundary edge $e$ its dual edge $e^*$ is defined to go to a distinct vertex of degree $1$ instead of one vertex corresponding to the unbounded face of $G$. Then $h$ is well defined on this modification of $G^*$. It is independent of the choice of the dimer configuration $D$ and thus is deterministic around the boundary of $G$. Thus there is a bijection between the dimer configurations on any planar bipartite graph and height functions on its modified dual. The Boltzmann measure \eqref{eq:def_Boltzmann_measure} expressed in terms of $h$ then takes the form
\begin{equation}
    \label{eq:Boltzmann_measure_height_function}
    \mathbb{P}(h) = \frac{W(h)}{Z(G, W)} = \frac{\prod_{f\in F} W_f^{h(f)} }{\sum_{D\in \mathcal{D}} \prod_{f\in F} W_f^{h_D(f)} }.
\end{equation}

For a bounded Lipschitz domain $\Omega \in \mathbb{R}^2$, a set $E \subset \mathbb{R}^2$ and a function $f_b: \partial\Omega \to \mathbb{R}$ one defines the function space
\[ \Lip_E(\Omega, f_b) = \cubr{ f \in \Lip(\Omega, \mathbb{R}) \;|\; \nabla f(z) \in E \text{ a.e.}, \; \restr{f}{\partial\Omega} = f_b }. \]
A boundary condition $f_b$ is called \emph{feasible} if this space is not empty.

It was shown in \cite{cohn_variational_2000} for uniform weights and then extended to doubly periodic weights in \cite{kuchumov_limit_2017,Kenyon_Okounkov_Sheffield_2006} that the average height function of a dimer model on a graph approximating the domain $\Omega$ and the height function on the boundary approximating $f_b$ converges to the unique solution of the surface tension minimization problem
\begin{equation}
    \label{eq:def_surface_tension_functional}
    \argmin_{f \in \Lip_{\Delta_\mathcal{S}}(\Omega, f_b)} \int_\Omega \sigma(\nabla f).
\end{equation}
This was generalized to quasi-periodic weights in \cite{BBS} in a regularized sense. We will denote the surface tension functional as $I_\sigma(f) = \int_\Omega \sigma(\nabla f)$.

The Newton polygon $\Delta_\mathcal{S}$ is only well defined up to translations and the surface tension $\sigma$ and height function $h$ are only well defined up to affine transformations. We will henceforth assume that our normalization yields that the height function has gradients in the Newton polygon. Also affine transformations of $\sigma$ do not alter the minimization problem \eqref{eq:def_surface_tension_functional}. See \cite[Remark~38]{BBS}.

\subsection*{Boundary conditions.}
In this paper we will study the special cases of the Aztec diamond and the hexagon. Classically the Aztec diamond $A_N$ is a subset of the square grid defined by those vertices $(x,y)$ for which $|x| + |y| \leq N$. In the limit $\frac{1}{N} A_N$ converges to the diamond domain $\cubr{(x,y): |x| + |y| \leq 1}$. 

For weights defined on a fundamental domain of size $n\times n$ we will consider the embedding of $A_{nN}$ under the normalization such that the square grid is embedded diagonally as in Figure~\ref{fig:sq_lattice} and so that $A_{nN}$ converges to the domain $[-1, 1]^2$. Under this normalization the height function $h$ on the boundary is a piecewise linear function defined by 
\[ \partial_x \restr{h}{y=-1} = 0,\; \partial_y \restr{h}{x=1} = n,\; \partial_x \restr{h}{y=1} = -n,\; \partial_y \restr{h}{x=-1} = 0. \]
We note that the Newton polygon here is
\[ \Delta = [0, n]^2 = \text{conv}((0,0), (n,0), (n,n), (0,n)) =: \text{conv}(a,b,c,d) \]
and a function $f: [-1, 1]^2\to \mathbb{R}$ satisfies these boundary conditions if the gradient of $f$ lies on the appropriate parts of $\partial \Delta$. More specifically if
\begin{equation}
    \label{eq:aztec_boundary_height}
    \restr{\nabla f}{y=-1} \in \overline{ad},\; \restr{\nabla f}{x=1} \in \overline{dc},\; \restr{\nabla f}{y=1} \in \overline{cb},\; \restr{\nabla f}{x=-1} \in \overline{ba}.
\end{equation}

Similarly we will consider the regular hexagon $G_N$ as a subset of the hexagonal grid. For a fundamental domain of size $n \times n$  the embedding of $G_{nN}$ converges to 
\[ \mathcal{H} := \cubr{(u,v) \in [-1, 1]^2 \;|\; -1 \leq u + v \leq 1}. \]
This means that the hexagons in our embedding are not regular but of the same shape as $\mathcal{H}$. This is just a global linear transform and of course does not change any statistical properties. This embedding, however, gives convenient coordinates. Indeed, the Newton polygon here is the triangle $\Delta = \text{conv}((0,0), (0,n), (n,0)) =: \text{conv}(a,b,c)$ and in analogy to \eqref{eq:aztec_boundary_height} we get that a function $f: \mathcal{H} \to \mathbb{R}$ satisfies the hexagon boundary conditions if its gradients live on the appropriate part of the boundary of the Newton polygon
\begin{equation}
    \label{eq:hex_boundary_height}
    \restr{\nabla f}{x = \pm 1} \in \overline{ac} ,\; \restr{\nabla f}{x + y = \pm 1} \in \overline{cb},\; \restr{\nabla f}{y = \pm 1} \in \overline{ba}.
\end{equation}

\section{Aztec Diamond Height Function}
\label{sec:03_CoordinateMap}

\addtocontents{toc}{\protect\setcounter{tocdepth}{2}}

We now study the square grid with Aztec diamond boundary conditions. To that end let $\mathcal{R}$ be an M-curve with involution $\tau$, real ovals $X_0, \ldots, X_g$ and 
\[\cubr{\alpha, \beta} = \cubr{\alpha^\pm_1, \ldots, \alpha^\pm_n, \beta^\pm_1, \ldots, \beta^\pm_n} \in X_0\]
the corresponding train track parameters satisfying the cyclic order property \eqref{eq:clustering_sq} and thus defining a dimer model on the square grid as in Figure \ref{fig:sq_lattice}. We will consider the sequence of graphs $A_{nN}$ embedded as described in Section~\ref{sec:02_TheDimerModel} so that they converge in the Caratheodory sense to the domain $[-1, 1]^2$ with boundary conditions as in \eqref{eq:aztec_boundary_height}.

\begin{table}
    \begin{center}
        \begin{tabular}{ |c|c|c|c|c| } 
            \hline
            res & $\alpha^-_i$ & $\alpha^+_i$ & $\beta^-_i$ & $\beta^+_i$ \\
            \hline
            $d\zeta_1$ & 1 & -1 & 0 & 0 \\ 
            $d\zeta_2$ & 0 & 0 & 1 & -1 \\ 
            $d\zeta_3$ & 1 & 1 & -1 & -1 \\ 
            \hline
        \end{tabular}
    \end{center}
\caption{Residues of the differentials $d\zeta_i$ for the Aztec diamond.}
\label{tab:xi_residues_aztec}
\end{table}

Let us introduce $d\zeta_3$ as the meromorphic differential on $\mathcal{R}$ with poles in $\cubr{\alpha^\pm_i, \beta^\pm_j}$ with residues as in Table \ref{tab:xi_residues_aztec} and imaginary periods. We note that $d\zeta_3$ is linearly independent of $d\zeta_1, d\zeta_2$ as defined in \eqref{eq:dzeta_1,2} and for any $(u,v) \in \mathbb{R}^2$ we define the differential
\begin{equation}
    \label{eq:def_dZeta}
    d\zeta_{(u,v)} = d\zeta = -u d\zeta_2 + v d\zeta_1 + d\zeta_3.
\end{equation}

When convenient we will omit the explicit dependence on $(u,v)$ and just talk about the differential $d\zeta$. This differential is the central piece for our construction of the map between the spectral curve and the Aztec diamond coordinates so let us study its properties.

\begin{rem}
    The family of differentials $d\zeta$ was introduced and studied in \cite{berggren_geometry_2023} in the doubly periodic setup. We note that in their formulation $d\zeta = (1-u) d\zeta_2 - (1-v) d\zeta_1 - d\tilde{\zeta}_3$ where $d\tilde{\zeta}_3$ has $\res_{\alpha^-_i} \;d\tilde{\zeta}_3 = -1$, $\res_{\beta^-_i} \;d\tilde{\zeta}_3 = 1$ and $0$ otherwise. This is equivalent to our setup with $d\zeta_3 = d\zeta_2 - d\zeta_1 - 2d\tilde{\zeta}_3$ giving it poles in all the train track parameter points thus symmetrizing the formula.

    A similar object though only for uniform weights was also considered in \cite{kenyon_gradient_2021}.
\end{rem}
\begin{rem}
    Note that by enlarging the fundamental domain, one can always reduce a setup with periodically repeating $\alpha^\pm_i, i \leq n$, $\beta^\pm_j, j\leq m$ to the square setup $n=m$ that we consider here.
\end{rem}

\subsection{The KO Map and Basic Properties}

In \cite{Kenyon_Okounkov_Sheffield_2006} a connection between doubly periodic dimers and spectral curves was established. It was then shown in \cite{Kenyon_Okounkov_2007} that for some doubly periodic weights and certain maximal boundary conditions there is a natural complex structure on the limiting liquid domain of the dimer model. In the case of the Aztec diamond this complex structure is given by a diffeomorphism between $\calR^\circ_+$ and the liquid region. This was described for $2 \times 2$ periodic weights in \cite{Borodin_2x2_2023} and then extended to general periodic weights in \cite{berggren_geometry_2023}. We will introduce this map in our terms in full generality of quasi-periodic weights.

First note that $\tau^* d\zeta = \overline{d\zeta}$ and so all periods of $d\zeta$ are purely imaginary. The following trivial Lemma lets us control the number and location of its zeros.

\begin{lemma}
    \label{lem:zeros_of_dzeta}
    Let $d\xi$ be a meromorphic differential satisfying $\tau^* d\xi = \overline{d\xi}$ with its $4n$ poles at $\cubr{\alpha^\pm_i, \beta^\pm_j} \subset X_0$ and its residues satisfying the alternating sign structure
    \begin{equation}
        \label{eq:dZeta_alternating_residue_sign_structure}
        \res_{\alpha_i^-} d\xi > 0, \; \res_{\beta_j^+} d\xi < 0, \; \res_{\alpha_i^+} d\xi < 0, \; \res_{\beta_j^-} d\xi > 0.
    \end{equation}
    Then it has $2g + 4n - 2$ zeros. Of these, $2g$ lie in pairs on the inner ovals and $4(n - 1)$ lie in between consecutive train track angles of the same type e.g. $\alpha^+_i$ and $\alpha^+_{i+1}$. The last $2$ zeros can be either on a real oval or be a conjugate pair $P, \tau P$, $P \in \calR_+^\circ$.

    If $d\xi$ has the same sign of residue in each of the clusters of train tracks $\cubr{\alpha^-, \beta^-, \alpha^+, \beta^+}$ but does not have the alternating sign structure \eqref{eq:dZeta_alternating_residue_sign_structure}, then all zeros of $df$ are on the real ovals $X$.
\end{lemma}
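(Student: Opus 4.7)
The plan combines a Riemann--Roch degree count with a Morse-theoretic analysis of the real primitive $\xi = \int d\xi$ along each real oval. Since the canonical divisor on $\calR$ has degree $2g - 2$ and $d\xi$ has $4n$ simple poles, the total number of zeros counted with multiplicity equals $2g + 4n - 2$. The reality relation $\tau^* d\xi = \overline{d\xi}$ makes $d\xi$ real on every real oval and makes the divisor of zeros $\tau$-invariant, so any zero of $d\xi$ that lies off $X$ automatically pairs with its conjugate to form $P, \tau P$ with $P \in \calR_+^\circ$. This already accounts for the only way that non-real zeros can appear.

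For each inner oval $X_i$ with $i \geq 1$, the identity $\tau_* X_i = -X_i$ combined with $\tau^* d\xi = \overline{d\xi}$ yields $\oint_{X_i} d\xi = -\overline{\oint_{X_i} d\xi}$, and since this same period is real it must vanish. Hence $\xi$ descends to a smooth single-valued real function on the pole-free circle $X_i$, which has at least two critical points; summing over inner ovals gives at least $2g$ zeros of $d\xi$ arranged in $\tau$-symmetric pairs. On $X_0$, near a pole of residue $r$ the primitive behaves as $\xi \sim r \log|z - p|$, so $\xi \to -\infty$ at poles of positive residue and $\xi \to +\infty$ at poles of negative residue. In any gap on $X_0$ between two consecutive poles whose residue signs agree, $\xi$ runs from $\pm \infty$ to the same $\pm \infty$ and must therefore have an odd number of critical points, in particular at least one; in a gap with opposite-sign endpoints the number is even, possibly zero. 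Under the alternating hypothesis \eqref{eq:dZeta_alternating_residue_sign_structure} each of the four clusters has a common residue sign, so every intra-cluster gap between consecutive same-type angles is of the same-sign type, producing at least $4(n-1)$ zeros on $X_0$. Combined with the $2g$ inner-oval zeros, this accounts for $2g + 4n - 4$ of the zeros, leaving exactly $2$ whose location is not forced; these last two must form a $\tau$-symmetric pair, either as additional critical points sitting on a real oval or as a conjugate pair $P, \tau P$ in $\calR_+^\circ$.

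For the second statement, suppose residues are constant within each cluster but the sign pattern is not the alternating one. Since residues sum to zero, two cluster signs are positive and two are negative, arranged non-alternately in cyclic order on $X_0$, so at least two of the four inter-cluster gaps have matching-sign endpoints. The same parity argument in each such gap produces at least one additional forced critical point, contributing at least $2$ extra zeros on $X_0$ beyond the $4(n-1)$ intra-cluster ones. The tally of zeros located on real ovals thus reaches at least $2g + 4(n-1) + 2 = 2g + 4n - 2$, matching the total degree; every inequality must then be an equality, forcing every zero of $d\xi$ to lie on $X$. The main delicate step I foresee is the parity accounting at each pole, namely justifying the odd/even count of critical points in each gap from the signs of the leading coefficient of $\xi'$ at the two endpoints; once that is in hand the argument is a clean combination of degree counting, reality, and Morse theory on real ovals.
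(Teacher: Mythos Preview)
Your argument follows exactly the paper's strategy: a degree count, the forced real-oval zeros coming from vanishing periods on the inner ovals and from same-sign neighbouring poles on $X_0$, and then a tally that leaves two free zeros in the alternating case and none in the non-alternating case. The paper's own proof is a two-sentence sketch of precisely this.

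One slip to fix: your justification of the vanishing $X_i$-periods via $\tau_* X_i = -X_i$ is backwards. Since $\tau$ fixes each inner oval pointwise, $\tau_* X_i = X_i$ as an oriented $1$-cycle, and the change-of-variables computation
\[
\overline{\oint_{X_i} d\xi} \;=\; \oint_{X_i}\overline{d\xi} \;=\; \oint_{X_i}\tau^*d\xi \;=\; \oint_{\tau_* X_i} d\xi \;=\; \oint_{X_i} d\xi
\]
shows the period is \emph{real}, not purely imaginary. What actually forces it to vanish is the normalization (implicit in the surrounding context but not written into the lemma) that $d\xi$ has purely imaginary periods; the lemma is applied only to $d\zeta_{(u,v)}$, built from the $d\zeta_k$ which are defined this way. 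With that hypothesis in hand your Rolle argument on each $X_i$ goes through. A second, harmless inaccuracy: in the non-alternating case the cluster signs need not split two-and-two (one could have three clusters of one sign and one of the other), but the cyclic sign-change parity still forces exactly two same-sign inter-cluster gaps, so your counting conclusion is unaffected.
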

\begin{proof}
    The existence of zeros on the real ovals follows simply from $d\xi$ being real with vanishing periods. If $d\xi$ does not satisfy \eqref{eq:dZeta_alternating_residue_sign_structure} then there must be at least two neighboring clusters that share a sign. Between each such pair there must a zero of $d\xi$ on $X_0$ thus identifying all zeros on $X$.
\end{proof}

See Proposition~\ref{prop:zeros_of_dZeta_tangential} and Figure~\ref{fig:tangential_zeros} for a geometric intepretation of these zeros applied to $d\zeta$.

Due to Lemma \ref{lem:zeros_of_dzeta} there can be some $(u,v)$ for which $d\zeta$ has a unique pair $(P, \tau P)$ of conjugated zeros in $\calR$. Let $P$ be the zero from that pair in $\calR_+^\circ$. Following \cite{berggren_geometry_2023} we denote the region of such $(u,v)$ as $\mathcal{F}_\mathcal{S}$ and will call this open set the \emph{liquid region}, and also define the map $\Omega: (u,v) \in \mathcal{F}_\mathcal{S} \mapsto P \in \calR_+^\circ$. Note that for $(u,v) \not\in [-1, 1]^2$ the residues of $d\zeta_{(u,v)}$ do not have the alternating sign structure of \eqref{eq:dZeta_alternating_residue_sign_structure} and therefore all its zeros lie on $X$. Therefore $\mathcal{F}_\mathcal{S} \subset [-1, 1]^2$. The map $\Omega$ defined this way is in fact a diffeomorphism. In \cite[Theorem 4.9]{berggren_geometry_2023} this map was introduced on the amoeba as $\tilde{\Omega} : \mathcal{A}_\mathcal{S} \to \mathcal{F}_\mathcal{S}$. In that formulation $\tilde{\Omega}$ is not a diffeomorphism due to mapping an unbounded region to a bounded one.

\begin{proposition}
\label{prop:calF_is_diffeomorphism}
    The map $\Omega: \mathcal{F}_\mathcal{S} \to \calR_+^\circ$ (and hence also its inverse $\mathcal{F} := \Omega^{-1}$) is a diffeomorphism.
\end{proposition}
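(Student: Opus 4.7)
The plan is to construct the inverse map $\mathcal{F} = \Omega^{-1} : \calR_+^\circ \to \mathcal{F}_\mathcal{S}$ explicitly by solving a two-real-variable linear system coming from the vanishing of $d\zeta_{(u,v)}$ at $P$, and then to verify $\Omega \circ \mathcal{F} = \mathrm{id}$ and $\mathcal{F} \circ \Omega = \mathrm{id}$.

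Fix $P \in \calR_+^\circ$ and a local holomorphic coordinate $z$ at $P$, so that $d\zeta_k = f_k(z)\,dz$ near $P$. The condition $d\zeta_{(u,v)}(P) = 0$ becomes
\[
-u f_2(P) + v f_1(P) + f_3(P) = 0,
\]
one complex equation, i.e.\ two real linear equations for the two real unknowns $u,v$. This system is uniquely solvable iff the complex numbers $f_1(P), f_2(P)$ are $\mathbb{R}$-linearly independent. By a short Cauchy--Riemann computation, this linear independence is equivalent, up to sign, to the non-vanishing of the Jacobian of the amoeba map $\mathcal{A} = (\re \zeta_1, \re \zeta_2)$ at $P$; since $\mathcal{A}: \calR_+^\circ \to \mathcal{A}_\mathcal{S}^\circ$ is already known to be a diffeomorphism, this Jacobian never vanishes. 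Cramer's rule then produces a smooth map $\mathcal{F}(P) = (u(P), v(P))$ on $\calR_+^\circ$.

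I next verify that $\mathcal{F}(\calR_+^\circ) \subset \mathcal{F}_\mathcal{S}$. Using the definition \eqref{eq:def_dZeta} together with Table~\ref{tab:xi_residues_aztec}, the residues of $d\zeta_{(u,v)}$ at the four clusters are $v+1, 1-v, -u-1, u-1$; for $(u,v)\notin [-1,1]^2$ they fail the cyclically alternating sign pattern on $X_0$, and by the second part of Lemma~\ref{lem:zeros_of_dzeta} all zeros of $d\zeta_{(u,v)}$ then lie on $X$, contradicting $d\zeta_{(u,v)}(P) = 0$ with $P \in \calR_+^\circ$. The boundary $\partial [-1,1]^2$ corresponds to degenerate pole configurations, which I would exclude either directly or by a continuity argument. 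Thus $\mathcal{F}(P) \in (-1,1)^2$, and combined with $\tau$-equivariance (which forces $d\zeta_{(u,v)}$ to also vanish at $\tau P$) and the zero count of Lemma~\ref{lem:zeros_of_dzeta}, the pair $(P, \tau P)$ is identified as the unique conjugate pair of off-real zeros, so $\mathcal{F}(P) \in \mathcal{F}_\mathcal{S}$ and $\Omega(\mathcal{F}(P)) = P$.

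For smoothness of $\Omega$ and $\mathcal{F} \circ \Omega = \mathrm{id}$, I note that the zero $P = \Omega(u,v)$ is necessarily simple: a double zero at $P$ would by $\tau$-symmetry double at $\tau P$, producing $4$ zeros from $\{P,\tau P\}$ on top of the $2g + 4(n-1)$ zeros forced by Lemma~\ref{lem:zeros_of_dzeta}, exceeding the total degree $2g + 4n - 2$ of $d\zeta_{(u,v)}$. The implicit function theorem applied to $(u,v,P) \mapsto d\zeta_{(u,v)}(P)/dz$ then gives smoothness of $\Omega$ and invertibility of its differential, while uniqueness of the solution of the linear system above yields $\mathcal{F}(\Omega(u,v)) = (u,v)$. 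The main obstacle I expect is the careful treatment of the boundary $\partial [-1,1]^2$: when some residue of $d\zeta_{(u,v)}$ vanishes, the pole structure degenerates and the Lemma~\ref{lem:zeros_of_dzeta} zero count shifts, so one must genuinely rule out $\mathcal{F}(P) \in \partial [-1,1]^2$. I expect this to be handled either by a properness/degree argument or by analyzing the behavior of $\mathcal{F}$ as $P$ approaches the real ovals.
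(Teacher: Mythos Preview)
Your proposal is correct and follows essentially the same approach as the paper: both set up the real $2\times 2$ linear system coming from $-u f_2(P) + v f_1(P) + f_3(P) = 0$ and argue it is nondegenerate, which simultaneously gives surjectivity and injectivity of $\Omega$. The one noteworthy difference is in how nondegeneracy (i.e.\ $\im(\bar f_1 f_2)(P)\neq 0$) is established. The paper argues directly: if $f_2/f_1(P)\in\mathbb{R}$ then the real combination $d\zeta_2 - (f_2/f_1)(P)\,d\zeta_1$ would vanish at $P\in\calR_+^\circ$, yet any such combination has non-alternating residue signs and by Lemma~\ref{lem:zeros_of_dzeta} has all zeros on $X$. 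You instead recognize $\im(\bar f_1 f_2)$ as (minus) the Jacobian of the amoeba map $\mathcal{A}$ and invoke the already-quoted fact that $\mathcal{A}:\calR_+^\circ\to\mathcal{A}_\mathcal{S}^\circ$ is a diffeomorphism. Both routes are valid; the paper's is self-contained at this spot, while yours efficiently recycles an earlier result from \cite{BBS}.

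Your concern about the boundary $\partial[-1,1]^2$ is unnecessary. By definition, $\mathcal{F}_\mathcal{S}$ is the set of $(u,v)$ for which $d\zeta_{(u,v)}$ possesses a conjugate pair of zeros in $\calR_+^\circ\times\calR_-^\circ$; once you have exhibited $P\in\calR_+^\circ$ with $d\zeta_{(u,v)}(P)=0$ (hence also $d\zeta_{(u,v)}(\tau P)=0$), membership $(u,v)\in\mathcal{F}_\mathcal{S}$ is immediate, and $\Omega(u,v)=P$ by construction. No separate exclusion of $\partial[-1,1]^2$ is required.
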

\begin{proof}
    First, assume there exists $P\in \calR_+^\circ$ such that

    \[ (-u_1 d\zeta_2 + v_1 d\zeta_1 + d\zeta_3)(P) = 0 = (-u_2 d\zeta_2 + v_2 d\zeta_1 + d\zeta_3)(P).\]
Then their difference gives $-u d\zeta_2(P) + v d\zeta_1(P) = 0$. This differential, however, has non-alternating signs of residues at the $\alpha_i^\pm, \beta_i^\pm$ and thus there are two pairs of neighboring clusters of angles with the same sign. There must be a zero between those and therefore this differential has all of its zeros on the real ovals. Hence $\Omega$ is injective.

Now let $P \in \calR_+^\circ$ and let $z$ be a local coordinate at $P$ with $d\zeta_i(z) = f_i(z) dz$, $i \leq 3$. Then $d\zeta(P) = 0$ is equivalent to
    \begin{align*}
        M\begin{pmatrix}
                u \\ v
             \end{pmatrix} := 
         \begin{pmatrix}
                \re f_2 & -\re f_1 \\
                \im f_2 & -\im f_1
             \end{pmatrix}
         \begin{pmatrix}
            u \\ v
         \end{pmatrix}
         = \begin{pmatrix}
             \re f_3 \\ \im f_3
         \end{pmatrix}.
    \end{align*}
    Here all functions are evaluated at $P$. Surjectivity follows from $\det M = \im\br{\overline{f_1} f_2} \neq 0$. 
    Indeed, assume the contrary, i.e. $\frac{f_2}{f_1}(P) \in \mathbb{R}$. The differential
    \[ dF = d\zeta_2 - \frac{f_2}{f_1}(P) d\zeta_1 \]
    clearly vanished at $P$. But we have already seen above that it must have all of its zeros on the real ovals $X$ leading to a contradiction.
\end{proof}

Shifting the focus so that all our fundamental objects are maps from $\calR_+$ we henceforth work with $\mathcal{F} := \Omega^{-1}$. Let us now deduce explicit formulas for $\mathcal{F}$. First note that all singularities of the $d\zeta_i$ lie on the real oval $X_0$. At $\alpha_i^-$ we have $\res_{\alpha^-_i} d\zeta_1 = 1 = \res_{\alpha^-_i} d\zeta_3$ and therefore to compensate the pole we must have $v = -1$. Similar considerations at the other singularities give
\begin{equation}
    \label{eq:uv_at_train_tracks}
    \mathcal{F}(\alpha^-_i) = (u, -1), \; \mathcal{F}(\alpha^+_i) = (u, 1), \; \mathcal{F}(\beta^-_i) = (-1, v), \; \mathcal{F}(\beta^+_i) = (1, v).
\end{equation}

At the singularities our map therefore goes to the boundary of $[-1,1]^2$. In Proposition~\ref{prop:parallelity_prop} we will show that these are in fact tangential points.

On the liquid region $\mathcal{F}_\mathcal{S}$ we defined $\mathcal{F}^{-1}$ as the map to the unique zero of $d\zeta$ in $\calR_+^\circ$. Introducing $R_i = \frac{d\zeta_i}{d\zeta_3}$ and splitting $d\zeta = 0$ into real and imaginary parts we obtain

\begin{align*}
    d\zeta = 0 & \iff v R_1 - u R_2 + 1 = 0 \\
           & \iff \begin{pmatrix}
                    \re R_2 & -\re R_1 \\
                    \im R_2 & -\im R_1
                 \end{pmatrix}
             \begin{pmatrix}
                u \\ v
             \end{pmatrix}
             = \begin{pmatrix}
                 1 \\ 0
             \end{pmatrix}.
\end{align*}

Solving this system we obtain the following

\begin{proposition}
\label{prop:KO-map}
The diffeomorphism $\mathcal{F}: \calR_+^\circ \to \mathcal{F}_\mathcal{S}$ is given by
\begin{equation}
\label{eq:uv_throughR_i}
    u = \frac{\im(R_1)}{\re(R_2) \im(R_1) - \re(R_1) \im(R_2)}, \quad v = \frac{-\im(R_2)}{\re(R_2) \im(R_1) - \re(R_1) \im(R_2)}.
\end{equation}
\end{proposition}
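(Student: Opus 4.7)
The plan is to solve the defining equation for $\mathcal{F}^{-1}$ directly and read off $(u,v)$ by Cramer's rule. By construction, $P = \mathcal{F}^{-1}(u,v) \in \calR_+^\circ$ is the unique zero of $d\zeta_{(u,v)} = -u\,d\zeta_2 + v\,d\zeta_1 + d\zeta_3$ in $\calR_+^\circ$. I would recast this vanishing condition in scalar form by dividing through by $d\zeta_3(P)$, turning it into the complex equation
\[
vR_1(P) - uR_2(P) + 1 = 0,
\]
whose real and imaginary parts form the $2\times 2$ linear system already displayed in the excerpt preceding the proposition. Inverting this system yields the claimed closed-form expressions.

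First I would verify that this division is legitimate, i.e.\ $d\zeta_3(P) \neq 0$. If both $d\zeta_3$ and $d\zeta_{(u,v)}$ vanished at $P$, then so would their combination $-u\,d\zeta_2 + v\,d\zeta_1$. But this last differential has the same sign of residue within each cluster $\{\alpha_i^-\},\{\alpha_i^+\},\{\beta_j^-\},\{\beta_j^+\}$ and, as checked in the injectivity half of Proposition~\ref{prop:calF_is_diffeomorphism} via Lemma~\ref{lem:zeros_of_dzeta}, does not meet the alternating sign structure \eqref{eq:dZeta_alternating_residue_sign_structure}; hence all of its zeros lie on the real ovals $X$, contradicting $P \in \calR_+^\circ$. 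With $d\zeta_3(P) \neq 0$, the ratios $R_i(P) = d\zeta_i(P)/d\zeta_3(P)$ are well defined and Cramer's rule applies.

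The only remaining ingredient is nonvanishing of the coefficient determinant $\re R_2(P)\,\im R_1(P) - \re R_1(P)\,\im R_2(P)$, which up to a positive factor of $|d\zeta_3(P)|^{-2}$ equals $-\im\bigl(\overline{f_1(P)}\,f_2(P)\bigr)$ in the local-coordinate notation of Proposition~\ref{prop:calF_is_diffeomorphism}; this is precisely the quantity shown to be nonzero in the surjectivity half of that proposition, by the same sign-pattern argument applied to $d\zeta_2 - \tfrac{f_2}{f_1}(P)\,d\zeta_1$.

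No genuine obstacle arises: the proposition is a direct unwinding of the defining condition $d\zeta_{(u,v)}(P)=0$ together with $2\times 2$ linear algebra, and both the admissibility of the division by $d\zeta_3$ and the nondegeneracy of the resulting system are already contained in the proof of Proposition~\ref{prop:calF_is_diffeomorphism}.
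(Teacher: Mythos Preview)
Your proposal is correct and follows essentially the same approach as the paper: the derivation immediately preceding the proposition divides $d\zeta=0$ by $d\zeta_3$, splits $vR_1 - uR_2 + 1 = 0$ into real and imaginary parts, and solves the resulting $2\times 2$ system. You add the explicit checks that $d\zeta_3(P)\neq 0$ and that the coefficient determinant is nonzero, both of which the paper leaves implicit (the latter being contained in the proof of Proposition~\ref{prop:calF_is_diffeomorphism}); these are welcome clarifications but do not change the argument.
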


\medskip
\textbf {Expression on the Real Ovals.} On the real ovals both $R_1$ and $R_2$ are real-valued and thus their imaginary parts vanish.
We now compute the limit of \eqref{eq:uv_throughR_i} in this case.

To that end let $z = u + i v$ be a local coordinate arounda real oval such that $\tau z = \bar{z}$, i.e. $v=0$ on the real oval. In this coordinate we have $d\zeta_i = f_i \, dz$ and that $R_i = \frac{d\zeta_i}{d\zeta_3} = \frac{f_i(z)}{f_3(z)}$ are meromorphic functions with real coefficients.

On the holomorphicity domain of $R_i$ on the real ovals we have power series representations with real coefficients 

\[ R_1 = a_0 + a_1 z + \ldots , \quad R_2 = b_0 + b_1 z + \ldots \quad ,a_i, b_i \in \bbR. \]

Therefore $\im(R_1) = a_1 v + \ldots, \; \im(R_2) = b_1 v + \ldots$ and hence $\frac{\im(R_1)}{\im(R_2)} = \frac{a_1}{b_1} = \frac{R_1'}{R_2'}(z)$. 
Since both $R_i$ are real-valued, equations \eqref{eq:uv_throughR_i} imply
\[ u = \frac{R_1'}{R_1' R_2 - R_1 R_2'} ,\quad v = \frac{R_2'}{R_1' R_2 - R_1 R_2'}. \]
The images of the real ovals under $\mathcal{F}$ are called \emph{arctic curves}.
Now plugging in $R_i = \frac{f_i}{f_3}$ and $R_i' = \frac{f_i' f_3 - f_i f_3'}{f_3^2}$ we obtain

\begin{proposition}
\label{prop:arctic}
An arctic curve is parametrized by 
\begin{equation}
    \label{eq:xy_on_real_ovals}
    u = \frac{f_1' f_3 - f_3' f_1}{f_1' f_2 - f_2' f_1} = \frac{\begin{vmatrix}f_1 & f_3 \\ f_1' & f_3' \end{vmatrix}}{\begin{vmatrix}f_1 & f_2 \\ f_1' & f_2' \end{vmatrix}} = \frac{W(f_1, f_3)}{W(f_1, f_2)} , \quad v = \frac{f_2' f_3 - f_3' f_2}{f_1' f_2 - f_2' f_1} = \frac{\begin{vmatrix}f_2 & f_3 \\ f_2' & f_3' \end{vmatrix}}{\begin{vmatrix}f_1 & f_2 \\ f_1' & f_2' \end{vmatrix}} = \frac{W(f_2, f_3)}{W(f_1, f_2)},
\end{equation}
where $d\zeta_i=f_i (z)dz$, and $W$ are the corresponding  Wronskians.
\end{proposition}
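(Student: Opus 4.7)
The plan is to treat the proposition as a direct symbolic simplification of the expressions for $u$ and $v$ on the real ovals that are already derived in the paragraph immediately preceding the statement. On each real oval, writing $z$ for the real local coordinate adapted to $\tau$, that paragraph reduces the KO formula \eqref{eq:uv_throughR_i} to
\[
u = \frac{R_1'}{R_1' R_2 - R_1 R_2'}, \qquad v = \frac{R_2'}{R_1' R_2 - R_1 R_2'},
\]
with $R_i = f_i/f_3$ real-analytic. So no new analytic input is needed; what remains is a purely algebraic substitution.

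The first step is to apply the quotient rule to express $R_i'$ in terms of $f_i, f_i', f_3, f_3'$, so that each of $R_1'$ and $R_2'$ acquires a denominator $f_3^2$. The key observation in the second step is that when one expands the combination $R_1' R_2 - R_1 R_2'$ appearing in the denominator, the two cross-terms involving $f_3'$ cancel, leaving the clean expression $(f_1' f_2 - f_1 f_2')/f_3^{2}$. This cancellation is the only non-mechanical point of the proof and is what makes the final formula depend only on $f_1, f_2, f_1', f_2'$ in the denominator.

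Once this cancellation is performed, a common factor of $1/f_3^2$ appears in both the numerator and the denominator of the expressions for $u$ and $v$, and cancelling it one is left with ratios of $2 \times 2$ antisymmetric bilinear forms in $f_i, f_i'$, which are precisely the Wronskians $W(f_i, f_j)$. Collecting them in the order dictated by the substitution yields exactly the identities in \eqref{eq:xy_on_real_ovals}. The only genuine obstacle is bookkeeping of signs, ensuring that the determinant convention stated in the proposition matches the one produced by the substitution; this is routine and does not affect the structure of the argument.
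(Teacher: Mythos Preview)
Your proposal is correct and follows exactly the same route as the paper. The paper's own argument is even terser: it simply says ``plugging in $R_i = \frac{f_i}{f_3}$ and $R_i' = \frac{f_i' f_3 - f_i f_3'}{f_3^2}$'' and then states the proposition, so the cancellation of the $f_3'$ cross-terms and the common factor $1/f_3^2$ that you spell out are precisely the implicit steps behind that one sentence.
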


The following Lemma shows that this expression is indeed non-singular.

\begin{lemma}
    \label{lem:W(f_1,f_2)_not_zero_on_real_ovals}
    $W(f_1, f_2)(P) \neq 0$ for all $P \in X$.
\end{lemma}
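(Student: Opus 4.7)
The plan is a contradiction argument based on residue sign counting applied to the two-parameter family of real differentials $\omega_{\lambda,\mu} := \lambda\,d\zeta_1 + \mu\,d\zeta_2$ with $(\lambda,\mu)\in\bbR^2\setminus\{0\}$. Assume $W(f_1,f_2)(P)=0$ at some $P\in X$. In the real local coordinate $z$ around $P$, this is precisely the condition for the linear system $\lambda f_1(P)+\mu f_2(P)=0$ and $\lambda f_1'(P)+\mu f_2'(P)=0$ to admit a nontrivial real solution, and any such solution produces a differential $\omega_{\lambda,\mu}$ with a zero of multiplicity at least two at $P$. The degenerate subcase $f_1(P)=f_2(P)=0$ forces $P$ to be a zero of multiplicity at least two of, say, $d\zeta_1$, which will be independently excluded by the same analysis below.

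The heart of the argument is then to show that no such $\omega_{\lambda,\mu}$ can exist. I first read off the residues $\res_{\alpha_i^\pm}\omega_{\lambda,\mu}=\mp\lambda$ and $\res_{\beta_j^\pm}\omega_{\lambda,\mu}=\mp\mu$. The decisive structural feature, absent from the family $d\zeta_{(u,v)}$, is that within each type family ($\alpha$ or $\beta$) the residues at the $+$ and $-$ angles always carry opposite signs. I then run the sign-counting argument underlying Lemma~\ref{lem:zeros_of_dzeta}: same-sign consecutive poles on $X_0$ force a zero in between. Within each of the four clusters of $n$ same-type angles this yields $4(n-1)$ forced zeros, and on each of the $g$ inner ovals the restriction of $\omega_{\lambda,\mu}$ is a smooth real $1$-form with vanishing period, forcing at least two zeros per oval and $2g$ in total. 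A direct case split on the sign pattern of $(\lambda,\mu)$ shows that among the four cyclic between-cluster transitions in the order \eqref{eq:clustering_sq}, exactly two always carry equal-sign flanking residues (the $\alpha^+/\beta^+$ and $\alpha^-/\beta^-$ ones when $\lambda,\mu$ share a sign, and the $\beta^+/\alpha^-$ and $\beta^-/\alpha^+$ ones when they have opposite signs), contributing two further forced zeros.

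The total forced simple zero count is therefore $4n-2+2g$, which by the degree formula exactly equals the total number of zeros of $\omega_{\lambda,\mu}$. Hence every zero of $\omega_{\lambda,\mu}$ is simple and lies on $X$, contradicting the existence of a double zero at $P$. The degenerate subcases $\lambda=0$ or $\mu=0$ reduce the pole count to $2n$ and the forced simple zero count to $2n-2+2g$, again matching the total degree and ruling out any multiple zero. The main obstacle is the combinatorial bookkeeping of residue signs across the cyclic order; once the invariant "exactly two between-cluster transitions always force a zero" is verified uniformly across all sign patterns of $(\lambda,\mu)$, the contradiction is immediate.
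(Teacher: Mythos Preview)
Your proposal is correct and follows essentially the same approach as the paper: assume $W(f_1,f_2)(P)=0$, extract a real combination $\lambda\,d\zeta_1+\mu\,d\zeta_2$ with a zero of order $\ge 2$ at $P$, and reach a contradiction by counting forced real zeros ($4(n-1)$ within clusters, $2g$ on the inner ovals, and $2$ between clusters from the non-alternating residue pattern) against the degree $4n+2g-2$. The paper states this tersely and leans on its earlier Lemma~\ref{lem:zeros_of_dzeta}; you spell out the between-cluster sign dichotomy and the $\lambda=0$ or $\mu=0$ subcases explicitly, which is fine.

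One small wording issue: your ``degenerate subcase $f_1(P)=f_2(P)=0$'' does \emph{not} by itself force $d\zeta_1$ to have a double zero at $P$; it only makes the first row of the $2\times 2$ matrix vanish. But this case needs no separate treatment at all---whenever $W(f_1,f_2)(P)=0$ the matrix is singular, so some nontrivial $(\lambda,\mu)$ lies in its kernel, and the resulting $\omega_{\lambda,\mu}$ has a zero of order $\ge 2$ at $P$, which is exactly what your main counting argument then excludes. You can simply delete that sentence.
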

\begin{proof}
    Let $W(f_1, f_2)(P) = 0$ for some $P \in X$. Then there exists $(a,b) \in \mathbb{R}^2$ such that
    \[ \begin{pmatrix} f_1 & f_2 \\ f_1' & f_2' \end{pmatrix}(P)
    \begin{pmatrix} a \\ b \end{pmatrix} = 0 \] 
    and thus $d\tilde{\zeta} = a d\zeta_1 + b d\zeta_2$ has a double zero at $P$. For $i\geq 1$, since $\int_{X_i} d\tilde{\zeta} = 0$, $d\tilde{\zeta}$ must have at least $2$ distinct zeros on $X_i$.

    Furthermore $d\tilde{\zeta}$ has at least $4 (n - 1)$ zeros lying between angle points of the same type as well as at least $2$ zeros between two different angle types since the residue signs are not alternating. All of these zeros are distinct and hence in total $d\tilde{\zeta}$ has at least
    \[ 2 + 2 g + 4 (n - 1) + 2 = 2g + 4n \]
    zeros which is a contradiction to it having $4n$ poles.
\end{proof}

This extends the bijection $\mathcal{F}$ to all of $\calR_+$.

\subsection{The Height Function}

We use the map $\mathcal{F}$ to define what will turn out to be the limiting height function on $[-1, 1]^2$. Let $P \in \calR_+$ and $\mathcal{F}(P) = (u,v) \in \overline{\mathcal{F}_\mathcal{S}}$. Define the complex height function as
\begin{equation}
    \label{eq:def_of_complex_height_function}
    H(P) = H(u,v) := \frac{1}{\pi} \int_{\ell} d\zeta_{(u,v)}.
\end{equation}
Here $\ell \subset \calR_+$ is a path going from some fixed $Q \in X_0$ to $P$. Different choices of $Q$ alter $H$ by a constant. 
Let's denote the real and imaginary parts as
\begin{equation}
    \label{eq:def_h_g}
    H = \frac{1}{\pi} g + ih, \quad  g =  \re\int_{\ell} d\zeta_{(u,v)}, \quad h = \im\frac{1}{\pi } \int_{\ell} d\zeta_{(u,v)}.
\end{equation}

Thus, due to the definition of $P$ through $d\zeta_{(u,v)}(P)=0$ we get for the partial derivatives:
\begin{equation}
\label{eq:g_partial_derivatives}
\begin{split}
    \pdv{g}{u} &= -\re \int_{\ell} d\zeta_2 + \re (d\zeta_{(u,v)}(P)\pdv{P}{u})= -x_2, \\
    \pdv{g}{v} &= x_1
\end{split}
\end{equation}
and
\begin{equation}
\label{eq:h_partial_derivatives}
\pdv{h}{u} = \im \int_{\ell} d\zeta_1 + \re (d\zeta_{(u,v)}(P)\pdv{P}{v})= -\frac{1}{\pi} y_2, \quad
    \pdv{h}{v} = \frac{1}{\pi} y_1.
\end{equation}
Finally, we have 
\begin{equation}
    \label{eq:grads_of_f_and_g}
    \nabla h = (s_1, s_2), \quad J \nabla g = (x_1, x_2),
\end{equation}
where $J := \begin{pmatrix}0 & 1 \\ -1 & 0\end{pmatrix}$ is the rotation  by $\frac{\pi}{2}$. 

We have constructed $3$ diffeomorphisms in $\Delta, \mathcal{A}$ and $\mathcal{F}$ providing us with the coordinates $(s_1, s_2)$ on $\calR_+^\circ$ and $(x_1, x_2), (u,v)$ on $\calR_+$. See Figure~\ref{fig:all_maps_and_zeros} for a summary of their connections.

\begin{lemma}
\label{lem:extension_of_h}
The function $h$ can be extended to a continuously differentiable function $\hat{h}: [-1, 1]^2 \to \mathbb{R}$ by appropriate affine functions on every connected component of $[-1,1]^2 \setminus \mathcal{F}_\mathcal{S}$.
\end{lemma}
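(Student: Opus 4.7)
The plan is to define $\hat h$ piece by piece on the connected components of $[-1,1]^2\setminus \mathcal{F}_\mathcal{S}$, extending $h$ on each component by the unique affine function whose gradient matches the limiting value of $\nabla h$ along the shared boundary. The key input is the identity $\nabla h = (s_1,s_2) = \Delta \circ \mathcal{F}^{-1}$ from \eqref{eq:grads_of_f_and_g}, which by continuity extends to $\overline{\mathcal{F}_\mathcal{S}}$.

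First I would analyze the structure of $\partial \mathcal{F}_\mathcal{S}$. Since $\mathcal{F}:\calR_+\to\overline{\mathcal{F}_\mathcal{S}}$ is a bijection (Proposition~\ref{prop:calF_is_diffeomorphism} extended by Lemma~\ref{lem:W(f_1,f_2)_not_zero_on_real_ovals}), the liquid-side boundary $\partial\mathcal{F}_\mathcal{S}\setminus \partial[-1,1]^2$ is the $\mathcal{F}$-image of the inner real ovals $X_i$, $i\geq 1$, together with the $\mathcal{F}$-images of the arcs $\ell_{\alpha,\beta}\subset X_0$ between consecutive train track parameters. From the facts about the polygon map recalled in Section~\ref{sec:02_TheDimerModel}, each inner oval $X_i$ is collapsed by $\Delta$ to a single point of $\Delta_\mathcal{S}^\circ$, and each arc $\ell_{\alpha,\beta}$ is sent to an integer lattice point of $\partial\Delta_\mathcal{S}$. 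Thus $\nabla h$ takes a single constant value $(s_1^U, s_2^U)$ along the liquid-side portion $\gamma$ of the boundary of each component $U$.

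Given this, integrating $dh = s_1^U\,du + s_2^U\,dv$ along $\gamma$ yields $h|_\gamma(u,v) = s_1^U u + s_2^U v + c_U$ for some constant $c_U$. I would then define
\[ \hat h(u,v) := s_1^U u + s_2^U v + c_U \quad\text{on } U. \]
By construction $\hat h$ agrees with $h$ along $\gamma$ and has constant gradient $(s_1^U, s_2^U)$, which matches the boundary trace of $\nabla h$ from the liquid side, giving $C^1$-gluing across $\gamma$. Performing this on every component produces the extension.

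The main obstacle is the topological step of showing that the liquid-side boundary of each component consists of a single arctic arc, so that one affine piece suffices. For a bounded (gas) component this follows from Proposition~\ref{prop:calF_is_diffeomorphism}, since different inner ovals have disjoint arctic images and cannot simultaneously bound the same component of the complement without contradicting connectedness of $\mathcal{F}_\mathcal{S}$. For a frozen component one also needs compatibility with the prescribed boundary data along $\partial[-1,1]^2$: the explicit train track images \eqref{eq:uv_at_train_tracks}, the residue structure in Table~\ref{tab:xi_residues_aztec}, and the Aztec boundary condition \eqref{eq:aztec_boundary_height} together force $(s_1^U, s_2^U)$ to lie on the appropriate edge of $\partial\Delta_\mathcal{S}$ corresponding to the free side of $U$, which is exactly the gradient prescribed by the piecewise-linear boundary height function there. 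Once this compatibility is checked, the extension is unique and lies in $C^1([-1,1]^2,\bbR)$.
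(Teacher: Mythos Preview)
Your approach is essentially the same as the paper's: both use $\nabla h = \Delta\circ\mathcal{F}^{-1}$ together with the fact that $\Delta$ collapses each inner oval $X_i$ and each arc $\ell_{\alpha,\beta}\subset X_0$ to a single point, so $\nabla h$ is constant along each arctic arc and the unique affine extension with that gradient gives a $C^1$ gluing. The paper's proof is terser and does not spell out the topological step (one arctic arc per component) that you flag; that structure is established separately in Section~\ref{sec:regions}, so your extra care there is justified but not strictly needed at this point.
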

\begin{proof}
    We know that $\Delta$ maps the real ovals $X_1, \ldots, X_g$ to distinct points inside $\Delta_\mathcal{S}$. The function $h$ then has a constant gradient on $\mathcal{F}(X_i)$. Take some $(u_0, v_0) \in \mathcal{F}(X_i)$ and consider the affine function $f$ with $\nabla f = \nabla h(u_0, v_0)$ and $f(u_0, v_0) = h(u_0, v_0)$. Since the gradients of $f$ and $h$ match we have $f = h$ on $\mathcal{F}(X_i)$ and the extension is $C^1$.

    Every segment $l$ of $X_0$ that lies between two train track angles is mapped by $\Delta$ to distinct points in $\mathbb{Z}^2 \cap \partial\Delta$. As we have seen in \eqref{eq:uv_at_train_tracks} the endpoints of $\mathcal{F}(l)$ both lie on the boundary of $[-1, 1]^2$ and thus distinguish a separate region $F_l \subset [-1, 1]^2 \setminus \mathcal{F}_\mathcal{S}$ along the boundary of which we have constant $\nabla h$. Thus there is an affine extension again.
\end{proof}

Note that so defined $\hat{h}: [-1, 1]^2 \to \mathbb{R}$ satisfies the Aztec diamond boundary conditions as it satisfies \eqref{eq:aztec_boundary_height}. 

\begin{theorem}
\label{thm:h_is_height_function}
    The function $\hat{h} : [-1, 1]^2 \to \mathbb{R}$ is the limiting dimer height function on the Aztec diamond with the corresponding Fock weights.
\end{theorem}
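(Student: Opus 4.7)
The plan is to invoke the variational principle and reduce the theorem to showing that $\hat{h}$ is the minimizer of the surface tension functional $I_\sigma$ over $\Lip_{\Delta_\mathcal{S}}([-1,1]^2, f_b)$, where $f_b$ encodes the Aztec boundary data \eqref{eq:aztec_boundary_height}. By \cite{cohn_variational_2000, Kenyon_Okounkov_Sheffield_2006, BBS}, the limiting height function is the unique such minimizer, so convergence follows once minimality is established. Lemma~\ref{lem:extension_of_h} together with \eqref{eq:uv_at_train_tracks} already places $\hat{h}$ in the admissible class with $\nabla\hat{h}$ taking values in $\Delta_\mathcal{S}$ via the composition $\Delta\circ\mathcal{F}^{-1}$.

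The heart of the argument exploits the Legendre duality \eqref{eq:rho_sigma_Legendre_duality_s_coords}--\eqref{eq:rho_sigma_Legendre_via_gradients} between $\rho$ and $\sigma$. Since $\sigma$ is convex on $\Delta_\mathcal{S}$, for every admissible $f$ and every selection $(x_1,x_2)\in\partial\sigma(\nabla\hat{h})$ of a subgradient one has the pointwise inequality
\[ \sigma(\nabla f) \geq \sigma(\nabla\hat{h}) + (x_1, x_2)\cdot(\nabla f - \nabla\hat{h}) \]
almost everywhere. On the liquid region the subgradient is single-valued and \eqref{eq:grads_of_f_and_g} identifies $(x_1, x_2) = J\nabla g$. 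Setting $\phi = f - \hat{h}$ (which vanishes on $\partial[-1,1]^2$) and integrating yields
\[ I_\sigma(f) - I_\sigma(\hat{h}) \geq \int_{[-1,1]^2} J\nabla g\cdot\nabla\phi\,dA = -\int_{[-1,1]^2} dg\wedge d\phi = -\int_{\partial[-1,1]^2} g\,d\phi = 0, \]
using the identity $(J\nabla g\cdot\nabla\phi)\,du\wedge dv = -dg\wedge d\phi$, Stokes' theorem applied to $d(g\,d\phi) = dg\wedge d\phi$, and the vanishing of the pullback of $d\phi$ to $\partial[-1,1]^2$ (since $\phi$ is constantly zero there).

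The main obstacle is that $g$ is a priori only defined on the liquid region $\mathcal{F}_\mathcal{S}$, whereas the integration by parts above requires $g$ (and the accompanying subgradient selection) to be globally defined and sufficiently regular on all of $[-1,1]^2$. Unlike $h$, which by Lemma~\ref{lem:extension_of_h} extends affinely into the frozen regions $F_l$ with $J\nabla g$ extending as a constant field, the extension into gas regions is nontrivial: the amoeba map sends each interior oval $X_i$ to a bounded closed curve rather than a single point, so the target subgradient $(x_1,x_2)$ genuinely varies along the boundary $\mathcal{F}(X_i)$ of the gas region even though $\nabla\hat{h}$ is locally constant there. One must therefore explicitly construct an \emph{admissible extension} of $g$ to each connected component of $[-1,1]^2\setminus\mathcal{F}_\mathcal{S}$ whose gradient selects, at each point, a valid subgradient of $\sigma$ at the corresponding value of $\nabla\hat{h}$, and which is regular enough for Stokes' theorem to apply globally. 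This construction is the substantive technical content deferred to Section~\ref{sec:05_SurfaceTensionMinimization}, and its existence is exactly what distinguishes our argument from earlier approaches that could not treat gas regions.

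Once the admissible extension is in hand, the displayed inequality yields $I_\sigma(f) \geq I_\sigma(\hat{h})$ for every admissible $f$, proving that $\hat{h}$ minimizes $I_\sigma$. Uniqueness of the minimizer in the liquid region (via strict convexity of $\sigma$ on $\Delta_\mathcal{S}^\circ$ and the regularity theory of \cite{DeSilva_minimizers_of_convex_functionals_2010, astala_dimer_2023}) then identifies $\hat{h}$ with the limiting dimer height function, completing the proof.
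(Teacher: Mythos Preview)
Your overall strategy matches the paper's: reduce to surface-tension minimization, use the subgradient inequality coming from Legendre duality, and identify the missing ingredient as an admissible extension of $g$ to the non-liquid regions (the construction in Section~\ref{sec:05_SurfaceTensionMinimization}). This is exactly the route the paper takes.

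There is, however, one technical point where your write-up diverges from the paper and creates a gap. You dispatch the cross term via a direct application of Stokes,
\[
\int_{[-1,1]^2} J\nabla g\cdot\nabla\phi\,dA \;=\; -\int_{[-1,1]^2} dg\wedge d\phi \;=\; -\int_{\partial[-1,1]^2} g\,d\phi \;=\; 0,
\]
with the last step justified by $\phi|_{\partial}=0$. The problem is that the extended $\hat g$ is not globally smooth, and more seriously $J\nabla\hat g=(x_1,x_2)$ blows up logarithmically as one approaches the tangency points of the arctic curve with $\partial[-1,1]^2$ (these correspond to the train-track parameters $\alpha_i^\pm,\beta_j^\pm$, where the amoeba map diverges). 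So a naked Stokes on the full square is not justified; one cannot simply pull $g$ onto the boundary. The paper handles this by recasting the argument through Gâteaux derivatives (Lemma~\ref{lem:characterization_of_minimizer_wrt_Gateaux_der} and Proposition~\ref{prop:generalized_Euler_Lagrange_suff_cond}): one first tests against $\varphi\in C_c^\infty$, where the integration by parts is unproblematic and yields zero because mixed partials of $\varphi$ agree, and then passes to general $u-\hat h\in W^{1,2}_0$ via H\"older and density, which is where the hypothesis $\nabla\hat g\in L^2$ enters. Verifying this $L^2$ bound near the singular tangency points is the content of Lemma~\ref{lem:g_extension_on_frozen} and is a genuine (if short) computation. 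Your phrase ``regular enough for Stokes' theorem to apply globally'' is precisely where this work hides; without the $L^2$ verification the argument is incomplete.

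A minor remark: uniqueness of the minimizer is a global statement (following from strict convexity of $\sigma$ on $\Delta_\mathcal{S}^\circ$, cf.\ \cite{DeSilva_minimizers_of_convex_functionals_2010}), not something restricted to the liquid region.
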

We postpone the proof to Section~\ref{sec:05_SurfaceTensionMinimization} and utilize the next section to examine some geometric properties of $H$ and $\mathcal{F}$.

\begin{figure}[h]
\centering
\fontsize{10pt}{12pt}\selectfont
\def\svgwidth{\linewidth}
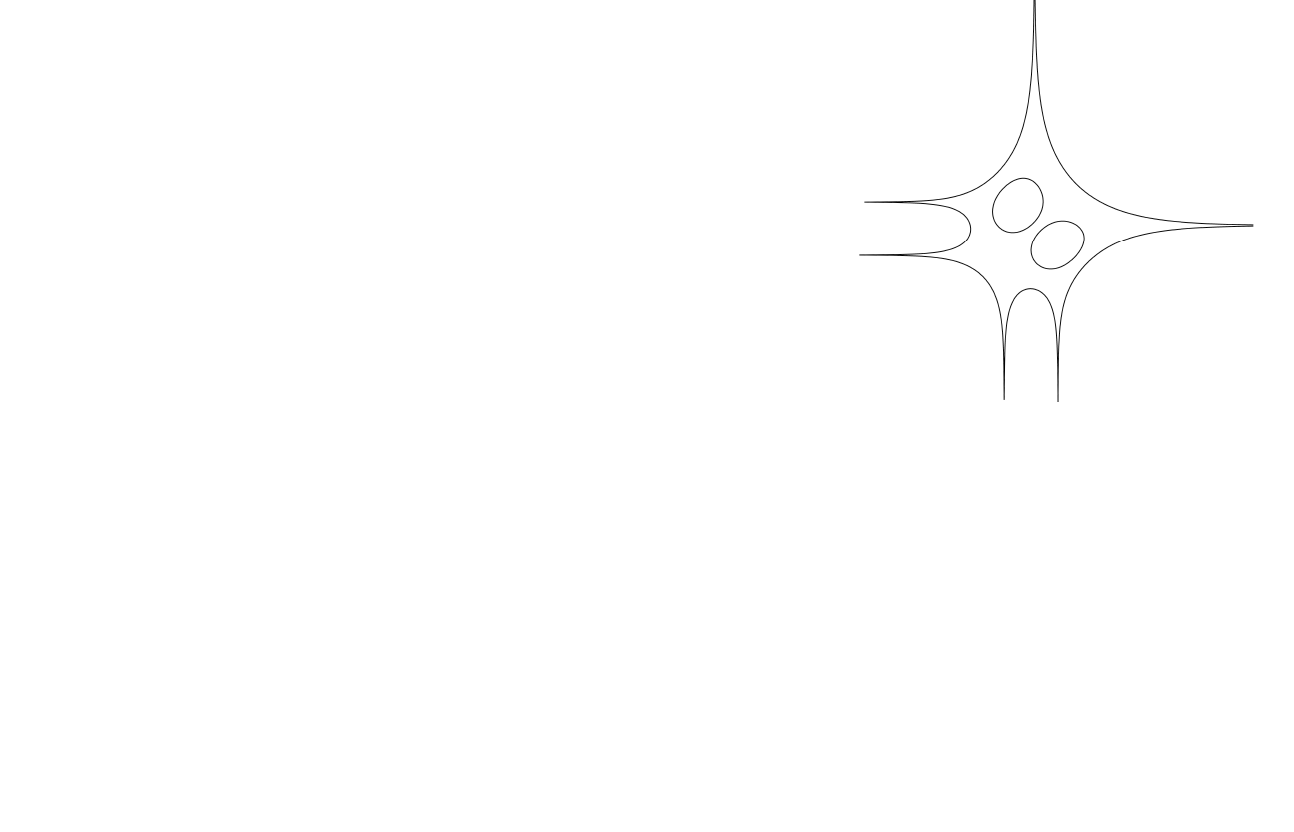
\caption{Harnack data $\mathcal{S} = \cubr{\calR, \cubr{\alpha^\pm, \beta^\pm}}$ along with the three diffeomorphisms $\mathcal{F}, \mathcal{A}, \Delta$ and their compositions given by gradients of the functions $g, h, \rho, \sigma$. Angles $\alpha^\pm$ in green and $\beta^\pm$ in red. This is an example with $m = n = 2$ and some angles being repeated thus leading to poles of order 2. We stress that the data $\mathcal{S}$ depicted here is the actual Riemann surface used, that is all pictures here are the result of computation and not just abstract representations. See Section~\ref{sec:10_SchottkyUniformization} for methods of computation. }
\label{fig:all_maps_and_zeros}
\end{figure}

\section{Geometric Properties}
\label{sec:04_GeometricProperties}


\subsection{Parallelity and Regions}
\label{sec:regions}

We examine geometric properties of the liquid region and arctic curves. We begin with the parallelity property which was already observed in \cite{berggren_geometry_2023}.

\begin{proposition}[Parallelity property]
    \label{prop:parallelity_prop}
    For any $P$ on a real oval the line through $\mathcal{A}(P)$ tangent to $\partial\mathcal{A}_\mathcal{S}$ is parallel to the line through $\mathcal{F}(P)$ tangent to $\partial\mathcal{F}_\mathcal{S}$. 
\end{proposition}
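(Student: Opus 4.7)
The plan is to work in a local real coordinate $z$ on the real oval containing $P$, chosen so that the anti-holomorphic involution acts as $\tau z = \bar{z}$ and the oval corresponds to $\im z = 0$. Writing $d\zeta_k = f_k(z)\,dz$, the reality of $d\zeta_k$ on $X$ makes each $f_k$ real-valued along the oval. On the amoeba side this gives the tangent for free: since $x_k = \re \zeta_k = \int f_k\,dz$ on $X$, the boundary $\partial\mathcal{A}_\mathcal{S}$ has tangent vector $(f_1(P), f_2(P))$ at $\mathcal{A}(P)$.

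On the arctic side I would exploit that the Wronskian formulas of Proposition~\ref{prop:arctic} satisfy, at $z=P$, the two simultaneous relations
\begin{equation*}
v f_1 - u f_2 + f_3 = 0, \qquad v f_1' - u f_2' + f_3' = 0.
\end{equation*}
The first is simply the vanishing of $d\zeta_{(u,v)}$ at $P$. The second expresses that the zero is a double zero, which follows naturally as the limit of the liquid-region characterization when the conjugate pair $(P,\tau P)$ collides onto $X$; it can equivalently be checked algebraically from the identities $W(f_2,f_3)f_1 - W(f_1,f_3)f_2 + W(f_1,f_2)f_3 = 0$ and its derived analogue, both of which are immediate from expanding the Wronskians.

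Differentiating the first relation with respect to the real parameter $z$ gives
\begin{equation*}
v_z f_1 + v f_1' - u_z f_2 - u f_2' + f_3' = 0,
\end{equation*}
and subtracting the double-zero relation collapses this to $v_z f_1 - u_z f_2 = 0$. Hence the arctic tangent $(u_z, v_z)$ is proportional to $(f_1, f_2)$, which is exactly the amoeba tangent computed above, proving parallelity.

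The main obstacle, and the only nontrivial input, is justifying that the double-zero condition holds for every real oval (including the inner ones), not just the outer oval $X_0$ where a geometric picture of colliding conjugate zeros is obvious. I would resolve this by noting that the Wronskian parametrization from Proposition~\ref{prop:arctic} satisfies both relations as a purely algebraic identity valid wherever $W(f_1,f_2)\neq 0$, which Lemma~\ref{lem:W(f_1,f_2)_not_zero_on_real_ovals} guarantees on all of $X$. This avoids any case analysis between outer and inner arctic curves and makes the argument uniform.
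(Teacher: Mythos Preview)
Your proof is correct and takes a genuinely different route from the paper. The paper differentiates the explicit Wronskian parametrization of Proposition~\ref{prop:arctic} directly: it computes $u' = (W_{1,3}'W_{1,2} - W_{1,2}'W_{1,3})/W_{1,2}^2$, expands using $W_{i,j}' = \begin{vmatrix} f_i & f_j \\ f_i'' & f_j'' \end{vmatrix}$, and recognizes the result as $f_1 \cdot W(f_1,f_2,f_3)/W(f_1,f_2)^2$, with the analogous computation for $v'$ yielding the common factor $A = W(f_1,f_2,f_3)/W(f_1,f_2)^2$. Your argument instead characterizes the arctic curve by the pair of equations $vf_1 - uf_2 + f_3 = 0$ and $vf_1' - uf_2' + f_3' = 0$ (the double-zero condition), differentiates the first implicitly, and subtracts the second to read off $(u_z,v_z) \parallel (f_1,f_2)$ in one line. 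Your route is cleaner and more conceptual, and it makes transparent why the result holds; the paper's explicit computation, on the other hand, produces the factor $A$ as a byproduct, which is used later (e.g.\ in \eqref{eq:second_der_as_Wronskians} and to locate cusps via $W(f_1,f_2,f_3)=0$). Your verification that both defining relations are algebraic identities in the Wronskian parametrization, valid on all of $X$ by Lemma~\ref{lem:W(f_1,f_2)_not_zero_on_real_ovals}, is exactly the right way to avoid case analysis between inner and outer ovals.
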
 
\begin{proof}
    Henceforth all functions are evaluated at $P$. Recall the formulas derived in \eqref{eq:xy_on_real_ovals} and note that for 
    \[ W_{i,j} := W(f_i, f_j) = \begin{vmatrix}f_i & f_j \\ f_i' & f_j' \end{vmatrix} \]
    we have
    \[ W_{i,j}' = \begin{vmatrix}f_i & f_j \\ f_i'' & f_j'' \end{vmatrix}. \]
 By direct computation we obtain
    \begin{align*}
        u' &= \frac{1}{W_{1,2}^2} \br{W_{1,3}' W_{1,2} - W_{1,2}' W_{1,3}} \\
       &= \frac{1}{W_{1,2}^2} f_1 \br{f_3''\br{f_1 f_2' - f_1' f_2} - f_3' \br{f_1 f_2'' - f_1'' f_2} + f_3 \br{f_1' f_2'' - f_1'' f_2'}} \\
        &= f_1 \frac{\begin{vmatrix}
            f_1 & f_2 & f_3 \\
            f_1' & f_2' & f_3' \\
            f_1'' & f_2'' & f_3''
        \end{vmatrix}}{\begin{vmatrix}
            f_1 & f_2 \\
            f_1' & f_2'
        \end{vmatrix}^2} = f_1 \frac{W(f_1, f_2, f_3)}{W(f_1, f_2)^2} =: f_1 A,
    \end{align*}
    An equivalent calculation for $v$ yields $v' = f_2 A$.  Thus the tangent line has the direction $(f_1, f_2)$ which is also the direction of the tangent of the amoeba at the corresponding point.
\end{proof}

Remarkably, the zeros of $d\zeta$ are also characterized by tangent lines.

\begin{proposition}
\label{prop:zeros_of_dZeta_tangential}
    Let $(u,v) \in [-1,1]^2$ and $(u_1, v_1)$ on the arctic curve such that the tangent curve through $(u_1, v_1)$ goes through $(u,v)$. Then $P_1 = \mathcal{F}^{-1}(u_1, v_1)$ is a zero of $d\zeta_{(u,v)}$. We call such zeros tangential, see Figure~\ref{fig:tangential_zeros}.
\end{proposition}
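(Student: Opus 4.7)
The plan is to translate the tangency condition into an algebraic identity involving the $f_i$ and then recognize that identity as the vanishing of $d\zeta_{(u,v)}$ at $P_1$. All computations should be done in the local coordinate $z$ near $P_1$ used in Proposition~\ref{prop:parallelity_prop}, with $d\zeta_i = f_i\, dz$ and all functions evaluated at $P_1$.

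First I would recall from the proof of Proposition~\ref{prop:parallelity_prop} that $(u'(P_1), v'(P_1)) = A\,(f_1, f_2)$ with $A = W(f_1,f_2,f_3)/W(f_1,f_2)^2$, so that the tangent line to the arctic curve at $(u_1, v_1) = \mathcal{F}(P_1)$ has direction $(f_1, f_2)$. Hence the hypothesis that this tangent line passes through $(u,v)$ is equivalent to the scalar condition
\begin{equation*}
    (u - u_1)\, f_2 \;-\; (v - v_1)\, f_1 \;=\; 0,
\end{equation*}
that is, $u f_2 - v f_1 = u_1 f_2 - v_1 f_1$.

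The key step is to compute the right-hand side using the explicit arctic curve formula \eqref{eq:xy_on_real_ovals}. Substituting $u_1 = W(f_1,f_3)/W(f_1,f_2)$ and $v_1 = W(f_2,f_3)/W(f_1,f_2)$ and expanding the numerator
\begin{equation*}
   W(f_1, f_3)\, f_2 - W(f_2, f_3)\, f_1 = (f_1 f_3' - f_1' f_3) f_2 - (f_2 f_3' - f_2' f_3) f_1,
\end{equation*}
the $f_3'$ terms cancel and what remains factors as $f_3\, W(f_1, f_2)$. Thus $u_1 f_2 - v_1 f_1 = f_3$, and the tangency condition becomes $u f_2 - v f_1 = f_3$, i.e.\ $-u f_2 + v f_1 + f_3 = 0$. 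Recognizing the left-hand side as the local expression of $d\zeta_{(u,v)} = -u\,d\zeta_2 + v\,d\zeta_1 + d\zeta_3$ at $P_1$ concludes the argument.

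I do not anticipate a real obstacle here: once the parallelity computation gives the tangent direction $(f_1, f_2)$, the rest is a two-line determinant identity. The only point to be mildly careful about is that Lemma~\ref{lem:W(f_1,f_2)_not_zero_on_real_ovals} guarantees $W(f_1,f_2)(P_1)\neq 0$, so the explicit expressions for $u_1, v_1$ are well defined and the tangent direction $(f_1, f_2)$ at $P_1$ is genuinely non-degenerate.
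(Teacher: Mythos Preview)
Your proposal is correct and follows essentially the same approach as the paper: both translate the tangency condition (via the parallelity direction $(f_1,f_2)$) into $u f_2 - v f_1 = u_1 f_2 - v_1 f_1$, substitute the Wronskian formulas \eqref{eq:xy_on_real_ovals}, and reduce to the determinant identity $W(f_1,f_3)f_2 - W(f_2,f_3)f_1 = f_3\, W(f_1,f_2)$. Your explicit invocation of Lemma~\ref{lem:W(f_1,f_2)_not_zero_on_real_ovals} to justify the non-degeneracy of the tangent direction is a nice addition not made explicit in the paper's version.
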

\begin{proof}
    We will evaluate all functions at $P_1$. Due to Proposition \ref{prop:parallelity_prop} the condition that $(u, v)$ lies on the tangent line through $(u_1, v_1)$ is
    \[ \frac{v - v_1}{u - u_1} = \frac{f_2}{f_1}. \]
    This is equivalent to
    \begin{align*}
        f_1 \br{W_{1,2} v - W_{2,3}} = f_2 \br{W_{1,2} u - W_{1,3}} 
        \iff W_{1,2} \br{f_1 v - f_2 u} = f_1 W_{2,3} - f_2 W_{1,3}.
    \end{align*}
    But the right hand side is equal to 
    \begin{align*}
        \begin{vmatrix} f_1 f_2 & f_3 \\ f_1 f_2' & f_3' \end{vmatrix} - \begin{vmatrix} f_1 f_2 & f_3 \\ f_1' f_2 & f_3' \end{vmatrix} = 
        \begin{vmatrix} 0 & f_3 \\ f_1 f_2' - f_1' f_2  & f_3' \end{vmatrix} =
       - f_3 W_{1,2}.
    \end{align*}
    Putting this together we get 
    \[ - f_2 u + f_1 v + f_3 = 0. \]
\end{proof}

\begin{figure}[h]
\centering
\begin{subfigure}{.4\textwidth}
    \centering
    \fontsize{10pt}{12pt}\selectfont
    \def\svgwidth{\linewidth}
    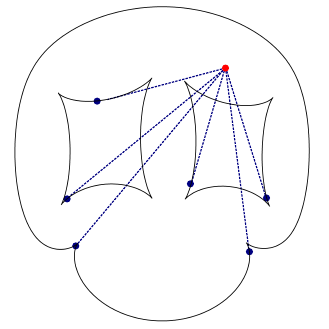
\end{subfigure}
\begin{subfigure}{.4\textwidth}
    \centering
    \fontsize{10pt}{12pt}\selectfont
    \def\svgwidth{\linewidth}
    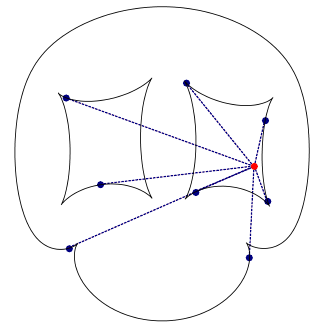
\end{subfigure}
\caption{All the zeros of the differential $d\zeta_{(u,v)}$. \textbf{Left}: For $(u,v) \in \mathcal{F}_\mathcal{S}$ we have pair of conjugated zeros corresponding to $(u,v)$ (red) and all other tangential zeros (blue). \textbf{Right}: For $(u,v) \not\in \mathcal{F}_\mathcal{S}$ all zeros are tangential with two extra zeros on the real oval corresponding to the zone $(u,v)$ is in.} 
\label{fig:tangential_zeros}
\end{figure}

This together with Proposition~\ref{prop:parallelity_prop} gives us a full characterization of the zeros of $d\zeta$.
From \eqref{eq:uv_at_train_tracks} we see that the orientation of $X_0$ is reversed between $\mathcal{A}$ and $\mathcal{F}$. Since both maps are diffeomorphisms, it follows that also the orientations of the $X_i$ must be reversed. All components of $\mathbb{R}^2 \setminus \mathcal{A}_\mathcal{S}$ are convex. Due to the reversed orientation it follows that $\mathcal{F}_\mathcal{S}$ must be locally convex except at cusps in the curves $\mathcal{F}(X_i)$ that correspond to zeros of $A$. Let us now consider the components of $[-1, 1]^2 \setminus \overline{\mathcal{F}_\mathcal{S}}$.

First, let $\ell_{(\alpha_i^\pm, \alpha_{i+1}^\pm)}$ be the arc of $X_0$ between $\alpha_i^\pm$ and $\alpha_{i+1}^\pm$ not containing any other labeled points. Only the endpoints of $\mathcal{F}(\ell_{(\alpha_i^\pm, \alpha_{i+1}^\pm)})$ satisfy $v = \pm 1$. This arc therefore separates a connected region $Q \subset [-1,1]^2\setminus \mathcal{F}_\mathcal{S}$. Since the curve is horizontal at both $\alpha_i^\pm$ and $\alpha_{i+1}^\pm$ and it goes through a $\pi$ rotation as seen from the amoeba, there must be at least one cusp along it. The number of tangential points in a region like this with $k$ cusps is $k+2$. Furthermore for any $(u,v) \in Q$ there are exactly $3$ zeros of $d\zeta_{(u,v)}$ on $\ell_{(\alpha_i^\pm, \alpha_{i+1}^\pm)}$. Thus according to Proposition \ref{prop:zeros_of_dZeta_tangential} there must be exactly one cusp and all zeros are tangential. The same argument holds for any arc $\ell_{(\beta_i^\pm, \beta_{i+1}^\pm)}$. Following \cite{astala_dimer_2023} we call regions of this type \emph{quasi-frozen} and denote the collection of them as $\mathscr{Q}$.

For $\ell_{(\alpha_m^\pm, \beta_1^\mp)}$ or $\ell_{(\beta_n^\pm, \alpha_1^\pm)}$ the two endpoints get mapped by $\mathcal{F}$ to a horizontal and a vertical tangential boundary point respectively. In this case for any $(u,v)$ in the region $F$ that is separated by this arc $d\zeta$ has exactly $2$ zeros on $\ell_{(\alpha_m^\pm, \beta_1^\mp)}$ and thus no cusps are present. We call regions like this \emph{frozen} and denote the set of all frozen regions as $\mathscr{F}$.

The image of $X_i$ under $\mathcal{F}$ for $i \neq 0$ bounds a topological disk $G_i$ in $[-1, 1]^2 \setminus \mathcal{F}_\mathcal{S}$. Under reversal of orientation and following Proposition \ref{prop:parallelity_prop} there must be at least $4$ cusps in $\mathcal{F}(X_i)$. For any $(u,v) \in G_i$ there are $4$ zeros of $d\zeta$ on $X_i$. Therefore there are exactly $4$ cusps and again all zeros are tangential. We call the $G_i$ \emph{gas regions} and denote their collection as $\mathscr{G}$.

For a frozen region $F_{(\alpha, \beta)} = F \in \mathscr{F}$ or quasi-frozen region $Q_{(\alpha, \beta)} = Q \in \mathscr{Q}$ corresponding to $\ell_{(\alpha, \beta)}$ for some neighboring points $\alpha, \beta$ on $X_0$ we will denote the corresponding unbounded component of $\mathcal{A}_\mathcal{S}^c$ as $\mathcal{A}_{(\alpha, \beta)}$ or $\mathcal{A}_Q$ and the corresponding point on the boundary of the Newton polygon as $\Delta_{(\alpha, \beta)}$ or $\Delta_Q$. Similarly for the gas bubble $G_i$ we denote the corresponding bubble in the amoeba as $\mathcal{A}_i$ and the point in the inside of the Newton polygon as $\Delta_i$.

We obtain a decomposition of the Aztec diamond into its liquid, frozen, quasi-frozen and gas regions
\begin{equation}
    \label{eq:decomposition_into_frozen_and_gas_regions}
    [-1, 1]^2 = \overline{\mathcal{F}}_\mathcal{S} \cup \bigcup_{F \in \mathscr{F}} \overline{F} \cup \bigcup_{Q \in \mathscr{Q}} \overline{Q} \cup \bigcup_{G \in \mathscr{G}} \overline{G}.
\end{equation}
Note that  all regions are open sets, and intersections of their closures are the arctic curves, see Figure~\ref{fig:all_zones_notation} for an illustration of the introduced notations.

\begin{figure}[h]
\centering
\begin{subfigure}{.49\textwidth}
    \centering
    \fontsize{10pt}{12pt}\selectfont
    \def\svgwidth{\linewidth}
    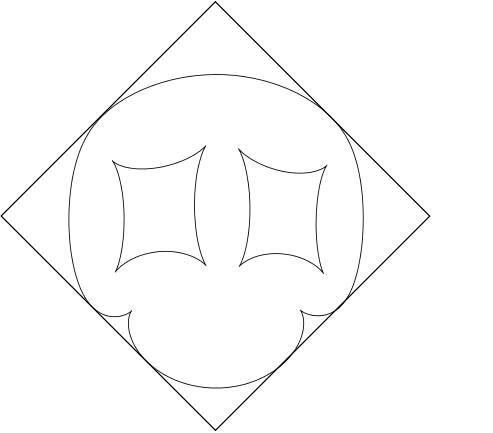
\end{subfigure}
\begin{subfigure}{.49\textwidth}
    \centering
    \fontsize{10pt}{12pt}\selectfont
    \def\svgwidth{\linewidth}
    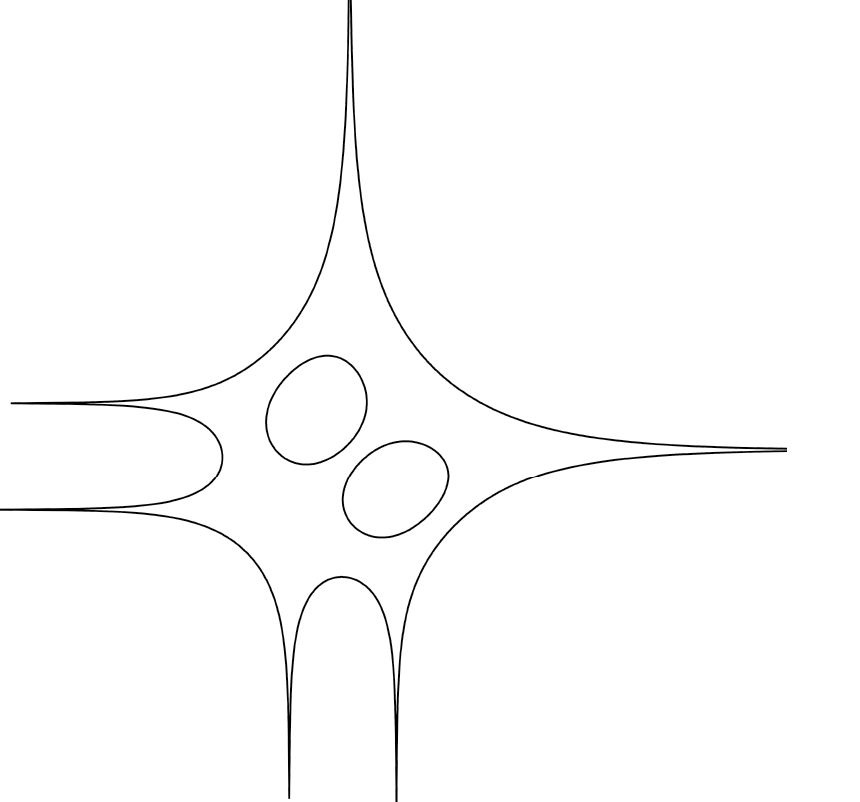
\end{subfigure}
\caption{All zones of the dimer (left) and the amoeba (right).} 
\label{fig:all_zones_notation}
\end{figure}

The following corollary gives a full geometric description of the zeros of $d\zeta$. 
\begin{corollary}
\label{cor:classification_of_all_dZeta_zeros}
    For all $(u,v) \in [-1, 1]^2$ the differential $d\zeta_{(u,v)}$ has two tangential zeros (see Proposition~\ref{prop:zeros_of_dZeta_tangential}) on every $X_i, i=1,\ldots ,g$ and one tangential zero on every arc corresponding to a quasi-frozen region. Additionally to this 
    \begin{itemize}
        \item for $(u,v) \in \mathcal{F}_\mathcal{S}$ the differential $d\zeta$ has a zero at $P:=\mathcal{F}^{-1}(u,v)\in \mathcal{R}^\circ_+$ and its conjugate $\tau P$.
        \item for $(u,v) \in [-1, 1]^2 \setminus \overline{\mathcal{F}_\mathcal{S}}$ the differential $d\zeta$ has two distinct tangential zeros on the boundary of the corresponding connected component of $[-1, 1]^2 \setminus \overline{\mathcal{F}_\mathcal{S}}$.
        \item For $(u,v)$ on one of the arctic curves $\partial \mathcal{F}_\mathcal{S}$ the differential $d\zeta$ has a double zero on the component of $X$ corresponding to $(u,v)$.
    \end{itemize}
\end{corollary}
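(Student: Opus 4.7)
The plan is to assemble the corollary from three ingredients already in hand: the global zero count of Lemma~\ref{lem:zeros_of_dzeta}, the tangential characterization of Proposition~\ref{prop:zeros_of_dZeta_tangential}, and the region-by-region cusp counts performed in Section~\ref{sec:regions}. Fix $(u,v)\in[-1,1]^2$; then the residues of $d\zeta_{(u,v)}$ satisfy the alternating sign pattern \eqref{eq:dZeta_alternating_residue_sign_structure}, and Lemma~\ref{lem:zeros_of_dzeta} supplies a total of $2g + 4n - 2$ zeros, with at least two forced on each inner oval $X_i$ and at least one forced on each of the $4(n-1)$ arcs of $X_0$ lying between consecutive same-type train-track angles. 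This accounts for $2g + 4(n-1)$ zeros and leaves exactly two more to be located.

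First I would verify that every zero of $d\zeta_{(u,v)}$ on $X$ is tangential in the sense of Proposition~\ref{prop:zeros_of_dZeta_tangential}. The chain of identities in the proof of that proposition is reversible: at any $P_1\in X$ with $d\zeta_{(u,v)}(P_1)=0$, the direction $(f_1(P_1),f_2(P_1))$ is the tangent direction to the arctic curve at $\mathcal{F}(P_1)$ by Proposition~\ref{prop:parallelity_prop}, and the equation $-u f_2(P_1)+v f_1(P_1)+f_3(P_1)=0$ is exactly the statement that $(u,v)$ lies on that tangent line. Applying this to the $2g+4(n-1)$ forced zeros establishes the opening sentence of the corollary.

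Next I would locate the two remaining zeros by case analysis on which connected component of $[-1,1]^2\setminus\partial\mathcal{F}_\mathcal{S}$ contains $(u,v)$. For $(u,v)\in\mathcal{F}_\mathcal{S}$ the defining property of $\mathcal{F}$ delivers the conjugate pair $\{P,\tau P\}$ with $P=\mathcal{F}^{-1}(u,v)\in\calR_+^\circ$. For $(u,v)$ lying in a gas region $G_i$, a quasi-frozen region $Q_{(\alpha,\beta)}$, or a frozen region $F_{(\alpha,\beta)}$, Section~\ref{sec:regions} has already counted $4$, $3$, and $2$ tangential zeros on $X_i$ or on $\ell_{(\alpha,\beta)}$ respectively; subtracting the forced contribution on that same component ($2$, $1$, or $0$) places both remaining zeros on the boundary of the component of $[-1,1]^2\setminus\overline{\mathcal{F}_\mathcal{S}}$ containing $(u,v)$. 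They are distinct since they correspond to two different tangent lines drawn from an interior point to a smooth piece of the arctic curve. The arctic-curve case follows by continuity: as $(u,v)\to(u_0,v_0)\in\partial\mathcal{F}_\mathcal{S}$ from inside the liquid region, the conjugate pair $\{P,\tau P\}$ collides at $\mathcal{F}^{-1}(u_0,v_0)\in X$, and since the total count $2g+4n-2$ is preserved the limiting zero must have multiplicity two.

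The only slightly delicate point is the converse direction of Proposition~\ref{prop:zeros_of_dZeta_tangential} invoked in the first step, but its proof is a chain of equivalences so no new computation is required; everything else is bookkeeping of the fixed global count against the tangent-line counts already carried out in Section~\ref{sec:regions}.
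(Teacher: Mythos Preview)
Your proposal is correct and follows essentially the same route as the paper: the corollary is stated there without an explicit proof, as a direct summary of the region-by-region cusp and tangential-zero counts carried out in the paragraphs of Section~\ref{sec:regions} immediately preceding it, combined with Lemma~\ref{lem:zeros_of_dzeta} and Proposition~\ref{prop:zeros_of_dZeta_tangential}. Your write-up simply makes that implicit argument explicit, including the useful observation that the chain of identities in the proof of Proposition~\ref{prop:zeros_of_dZeta_tangential} is reversible.
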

See Figure \ref{fig:frozen_regions_tangential_zeros} for an illustration of all $3$ types of zones.

\subsection{Divergence and Burgers equations}

The main reason for defining $d\zeta$ the way we did in \eqref{eq:def_dZeta} is that then $(x_1, x_2)$ and $(y_1, y_2)$ define divergence free fields with respect to $(u,v)$.

\begin{proposition}
    \label{prop:x_y_divergence_free}
    Let $(u,v) \in \mathcal{F}_\mathcal{S}$ and $(x_1, x_2) = \mathcal{A} \circ \mathcal{F}^{-1} (u,v)$ its coordinates in the amoeba. Then
    \begin{equation}
    \label{eq:x_divergence_free}
        \Div_{(u,v)} \br{ x_1, x_2 }:=\pdv{x_1}{u}+\pdv{x_2}{v} = 0.
    \end{equation}
    Similarly for the coordinates $(y_1, y_2)$ we have
    \begin{equation}
    \label{eq:s_divergence_free}
        \Div_{(u,v)} \br{ y_1, y_2 }:= \pdv{y_1}{u}+\pdv{y_2}{v} = 0.
    \end{equation}
\end{proposition}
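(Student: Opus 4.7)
The plan is to deduce both divergence-free statements directly from the gradient identities for the real and imaginary parts of the complex height function $H$ that have already been computed in \eqref{eq:g_partial_derivatives}--\eqref{eq:grads_of_f_and_g}. Once those are in place, the divergence-free property is just the equality of mixed partials, so essentially no new calculation is needed.

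More precisely, from \eqref{eq:g_partial_derivatives} I have $\partial_u g = -x_2$ and $\partial_v g = x_1$ on the liquid region, which I can rewrite as $(x_1,x_2) = J\nabla g$ with $J$ the rotation by $\pi/2$ (this is exactly the identity $J\nabla g = (x_1,x_2)$ from \eqref{eq:grads_of_f_and_g}). Then, assuming the smoothness of $g$ on $\mathcal{F}_\mathcal{S}$ so that Clairaut applies,
\[
\operatorname{Div}_{(u,v)}(x_1,x_2) = \partial_u(\partial_v g) + \partial_v(-\partial_u g) = \partial_u\partial_v g - \partial_v\partial_u g = 0,
\]
which is \eqref{eq:x_divergence_free}. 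In exactly the same way, \eqref{eq:h_partial_derivatives} gives $\partial_u h = -y_2/\pi$ and $\partial_v h = y_1/\pi$, i.e.\ $(y_1,y_2) = \pi\, J\nabla h$, and the divergence-free identity \eqref{eq:s_divergence_free} follows from the symmetry of the Hessian of $h$.

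The only non-trivial point is ensuring that $g$ and $h$ are actually twice continuously differentiable on $\mathcal{F}_\mathcal{S}$, so that Schwarz's theorem on mixed partials applies. This is where I would pause briefly: the formulas \eqref{eq:g_partial_derivatives}--\eqref{eq:h_partial_derivatives} used the vanishing boundary term $d\zeta_{(u,v)}(P)\,\partial_u P = 0$ coming from $d\zeta$ having its zero at $P = \mathcal{F}^{-1}(u,v)$. Since Proposition~\ref{prop:calF_is_diffeomorphism} tells us $\mathcal{F}$ is a diffeomorphism on $\mathcal{F}_\mathcal{S}$, the dependence $P(u,v)$ is $C^\infty$, and hence so are $g$ and $h$ as composites of $(u,v) \mapsto P(u,v)$ with the real/imaginary parts of the abelian integral $\zeta_{(u,v)}$. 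So $g,h \in C^\infty(\mathcal{F}_\mathcal{S})$ and Clairaut applies unconditionally.

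I expect no real obstacle: this is purely a consequence of the clever choice of $d\zeta$ in \eqref{eq:def_dZeta} whose coefficients in front of $d\zeta_1$ and $d\zeta_2$ are $v$ and $-u$ respectively, which is precisely what makes the primitive $H$ a potential for both vector fields. One could alternatively give a direct proof by implicitly differentiating $-u f_2(P) + v f_1(P) + f_3(P) = 0$ to obtain $\partial_u P = f_2(P)/F'(P)$ and $\partial_v P = -f_1(P)/F'(P)$ (where $F' = -u f_2' + v f_1' + f_3'$) and then computing
\[
\partial_u \zeta_1(P) + \partial_v \zeta_2(P) = f_1(P)\partial_u P + f_2(P)\partial_v P = \frac{f_1 f_2 - f_2 f_1}{F'}(P) = 0,
\]
whose real and imaginary parts give \eqref{eq:x_divergence_free} and \eqref{eq:s_divergence_free}; but the mixed-partials argument above is cleaner.
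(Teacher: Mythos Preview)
Your proposal is correct. Your primary route---invoking the already-established gradient identities \eqref{eq:g_partial_derivatives}--\eqref{eq:h_partial_derivatives} and then applying Clairaut's theorem---is a cleaner repackaging than the paper's own proof, which instead differentiates the defining relation $d\zeta_{(u,v)}(P(u,v))=0$ directly to obtain $d\zeta_1(P)\,\partial_u P + d\zeta_2(P)\,\partial_v P = 0$ and then takes real and imaginary parts. The two arguments are of course closely related (the vanishing boundary term you note in deriving \eqref{eq:g_partial_derivatives} is exactly the same input the paper uses), but the paper's direct computation has the side benefit of isolating the intermediate identity \eqref{eq:equivalent_to_burgers}, which it then remarks is equivalent to the complex Burgers equation of Kenyon--Okounkov. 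Your alternative proof at the end is precisely the paper's argument written in a local coordinate.
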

\begin{proof}
    Let $P(u,v) = \mathcal{F}^{-1}(u,v)$. Differentiating $d\zeta(P(u,v))=0$ we obtain
    \begin{equation}
    \label{eq:diff_dZeta_wrt_u_v}
        d\br{d\zeta(P)} \pdv{P}{u} - d\zeta_2(P) = 0, \quad d\br{d\zeta(P)} \pdv{P}{v} + d\zeta_1(P) = 0.
    \end{equation}
    We have $d\br{d\zeta(P)}\neq 0$ since $P$ is a simple zero. From \eqref{eq:diff_dZeta_wrt_u_v} we get
    \begin{equation}
        \label{eq:equivalent_to_burgers}
       d\zeta_1(P) \pdv{P}{u} + d\zeta_2(P) \pdv{P}{v} = 0.
    \end{equation}
    Finally this implies
    \begin{align*}
        \pdv{x_1}{u} + \pdv{x_2}{v} 
                                    = \re \br{d\zeta_1(P)\pdv{P}{u} + d\zeta_2(P)\pdv{P}{v} } = 0.
    \end{align*}
    The argument for $(y_1, y_2)$ is the same.
\end{proof}
\begin{rem}
    Equation \eqref{eq:equivalent_to_burgers} is equivalent to the Burgers equation given by
    \begin{equation}
        \label{eq:Burgers_equation}
        \frac{\frac{\partial P}{\partial u}}{\frac{\partial P}{\partial v}} = -\frac{d\zeta_2(P)}{d\zeta_1(P)}.
    \end{equation}
    This was introduced and studied in \cite{Kenyon_Okounkov_2007}.
\end{rem}

\begin{corollary}
    \label{cor:Euler_lagrange_h_and_g}
    Let $(u,v) \in \mathcal{F}_\mathcal{S}$. Then $h$ satisfies
    \begin{equation}
    \label{eq:h_divergence_free}
        \Div \br{ \nabla\sigma (\nabla h(u,v)) } = 0.
    \end{equation} 
    Similarly, $g$ satisfies
    \begin{equation}
    \label{eq:g_divergence_free}
        \Div \br{ J\nabla\rho (J\nabla g(u,v)) } = 0.
    \end{equation}
\end{corollary}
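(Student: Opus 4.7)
The plan is to observe that the two statements reduce almost immediately to Proposition~\ref{prop:x_y_divergence_free} once the relevant gradients are identified via the Legendre duality \eqref{eq:rho_sigma_Legendre_via_gradients} together with the coordinate identifications in \eqref{eq:grads_of_f_and_g}. Concretely, nothing new needs to be computed: the difficulty, if any, is purely bookkeeping with the factor $\frac{1}{\pi}$ and the matrix $J$ relating $(s_1,s_2)$ to $(y_1,y_2)$.

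For the first identity, I would start from $\nabla h(u,v) = (s_1,s_2)$ (see \eqref{eq:grads_of_f_and_g}) and apply the Legendre duality \eqref{eq:rho_sigma_Legendre_via_gradients}, which gives $\nabla\sigma(s_1,s_2) = (x_1,x_2)$. Hence as a function of $(u,v)$, $\nabla\sigma(\nabla h(u,v)) = (x_1(u,v), x_2(u,v))$, and \eqref{eq:h_divergence_free} is exactly \eqref{eq:x_divergence_free} of Proposition~\ref{prop:x_y_divergence_free}.

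For the second identity, I would proceed similarly, starting with $J\nabla g(u,v) = (x_1,x_2)$ from \eqref{eq:grads_of_f_and_g}, so that by \eqref{eq:rho_sigma_Legendre_via_gradients}
\begin{equation*}
\nabla\rho(J\nabla g(u,v)) = \nabla\rho(x_1,x_2) = (s_1,s_2) = \tfrac{1}{\pi}(-y_2,y_1).
\end{equation*}
Applying $J$ then yields $J\nabla\rho(J\nabla g(u,v)) = \tfrac{1}{\pi}(y_1,y_2)$, and \eqref{eq:g_divergence_free} is precisely \eqref{eq:s_divergence_free} of Proposition~\ref{prop:x_y_divergence_free}.

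Thus both identities are immediate consequences of the divergence-free property of the coordinate fields on $\mathcal{F}_\mathcal{S}$ combined with the Legendre duality between $\rho$ and $\sigma$; there is no real obstacle beyond tracking the rotation $J$ correctly in the second case.
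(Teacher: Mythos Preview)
Your argument is correct and essentially identical to the paper's proof: both reduce the corollary to Proposition~\ref{prop:x_y_divergence_free} via the gradient identifications \eqref{eq:grads_of_f_and_g} (equivalently \eqref{eq:g_partial_derivatives}, \eqref{eq:h_partial_derivatives}) and the Legendre duality \eqref{eq:rho_sigma_Legendre_via_gradients}. Your version is slightly more explicit in tracking the $J$ and the $\tfrac{1}{\pi}$ factor in the second identity, but the substance is the same.
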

\begin{proof}
The derivatives are taken in the variables $\Div_{(u,v)}\br{\nabla\sigma_{(s_1, s_2)} (\nabla_{(u,v)} h)}$.
Formulas \eqref{eq:h_divergence_free} and \eqref{eq:g_divergence_free} follow directly from Proposition~\ref{prop:x_y_divergence_free} and \eqref{eq:g_partial_derivatives}, \eqref{eq:h_partial_derivatives}. 
\end{proof}

Remarkably the divergence free property Proposition~\ref{eq:x_divergence_free} extends to the real ovals via the tangential zeros from Proposition \ref{prop:zeros_of_dZeta_tangential}.

\begin{proposition}
    \label{prop:tangential_zeros_divergence_free}
    For $z = (u,v) \in [-1,1]^2$ let $z_1 = (u_1, v_1) \in \partial \mathcal{F}_\mathcal{S}$ be on the arctic curve such that the tangent curve through $(u_1, v_1)$ goes through $(u,v)$. Let $P_1 = P_1(u,v)$ be the corresponding point on one of the real ovals continuously depending on $(u,v)$. Then
    \[ \Div_{(u,v)} \mathcal{A}(P_1) = 0. \]
\end{proposition}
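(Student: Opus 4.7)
The plan is to mirror the proof of Proposition~\ref{prop:x_y_divergence_free}, with the free zero $P\in\mathcal{R}_+^\circ$ replaced by the tangential zero $P_1\in X$. By Proposition~\ref{prop:zeros_of_dZeta_tangential} we still have the defining equation $d\zeta_{(u,v)}(P_1(u,v)) = 0$, so the argument structure is the same; the only substantive change is that the zero now lives on a real oval.

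First, I would pick a local real coordinate $z$ at $P_1$ adapted to the real oval (so $\tau z = \bar z$ and $P_1$ corresponds to real $z$). In this coordinate $d\zeta_i = f_i(z)\,dz$ where the $f_i$ are real-analytic and take real values on $X$, so the defining equation reduces to the scalar equation
\[
\Phi(u,v,z) := -u f_2(z) + v f_1(z) + f_3(z) = 0.
\]
Away from the arctic curve, Corollary~\ref{cor:classification_of_all_dZeta_zeros} guarantees that $P_1$ is a simple zero of $d\zeta_{(u,v)}$, so $\partial_z\Phi(u,v,P_1) \neq 0$ and the implicit function theorem yields
\[
\partial_z\Phi(P_1)\,\partial_u z(u,v) = f_2(P_1),\qquad \partial_z\Phi(P_1)\,\partial_v z(u,v) = -f_1(P_1).
\]
Multiplying the first equation by $f_1(P_1)$, the second by $f_2(P_1)$, and adding gives the real-oval analog of the Burgers-type identity \eqref{eq:equivalent_to_burgers},
\[
f_1(P_1)\,\partial_u z + f_2(P_1)\,\partial_v z = 0.
\]

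To finish, I would translate this into a statement about $\mathcal{A}(P_1)$. Since $d\zeta_i$ is real on $X$, the integral $\zeta_i(P_1)$ is real and coincides with the amoeba coordinate $x_i(P_1) = \re\zeta_i(P_1)$. Therefore
\[
\partial_u x_1(P_1) = f_1(P_1)\,\partial_u z, \qquad \partial_v x_2(P_1) = f_2(P_1)\,\partial_v z,
\]
and summing the two yields $\Div_{(u,v)}\mathcal{A}(P_1) = 0$ on the open set where $(u,v)\notin\partial\mathcal{F}_\mathcal{S}$.

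The main obstacle is the behavior on the arctic curve itself: there the continuously selected $P_1$ coincides with $\mathcal{F}^{-1}(u,v)$ and becomes a double zero of $d\zeta_{(u,v)}$, so $\partial_z\Phi$ vanishes and the implicit function theorem fails. I would handle this by continuity, extending the identity from its dense complement; concretely, one expresses the partial derivatives of $\mathcal{A}(P_1)$ intrinsically along the branch $P_1(u,v)$ (which remains on $X$) and passes to the limit, noting that the potential blow-up of $\partial_u z$ and $\partial_v z$ from $\partial_z\Phi\to 0$ cancels in the combination $f_1\partial_u z + f_2\partial_v z$ already shown to vanish off the curve.
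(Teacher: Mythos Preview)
Your proof is correct and takes a genuinely different route from the paper. The paper argues geometrically: it picks an orthonormal frame $(e_1,e_2)$ with $e_1$ along the tangent line $z_1-z$, and observes that (i) moving $z$ in the $e_1$ direction leaves the tangency point $z_1$ fixed, so the $e_1$-derivative of both components of $\mathcal{A}(P_1)$ vanishes, and (ii) by the parallelity property (Proposition~\ref{prop:parallelity_prop}) the amoeba image $\mathcal{A}(P_1)$ varies infinitesimally only in the $e_1$ direction, so the $e_2$-derivative of its $e_2$-component vanishes; hence $\Div = \partial_{e_1}t + \partial_{e_2}n = 0$. Your approach instead reruns the analytic argument of Proposition~\ref{prop:x_y_divergence_free} verbatim, differentiating $d\zeta_{(u,v)}(P_1)=0$ implicitly in the real coordinate on the oval. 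This is more uniform with the earlier proof and makes clear that the same Burgers-type identity \eqref{eq:equivalent_to_burgers} drives both the interior and the tangential case; the paper's version, on the other hand, makes the geometric content (tangency plus parallelity) visible without any implicit differentiation. Your handling of the arctic-curve boundary by continuity is adequate, and in fact the proposition is only ever applied on open regions away from $\partial\mathcal{F}_\mathcal{S}$, so that edge case is not load-bearing.
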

\begin{proof}
Let $\tilde{z}$ be a variation of $z$ and $z_2$ the tangency point of the line through $\tilde{z}$, see Fig.~\ref{fig:tangential_points_divergence_free}. Denote by $w_i = \mathcal{A}\circ\mathcal{F}^{-1}(z_i)$ the points on the amoebas boundary corresponding to $z_i$.  In the limit $\tilde{z} \to z$ we have $z_2 \to z_1$ and $w_2 \to w_1$ with $w_2 - w_1$ tangent to the amoeba at $w_1$. 

Choose the orthonormal basis $(e_1,e_2)$ such that $e_1 || (z_1 - z)$ and parametrize $w_2-w_1 = t e_1 + n e_2$. Due to the parallelity property Proposition~\ref{prop:parallelity_prop} $w_2-w_1$ infinitesimally varies in the direction of $e_1$ only, thus both directional derivatives of $n$ vanish $\partial_{e_1}n=\partial_{e_2}n=0$. On the other hand, variation of $\tilde{z}$ in the direction $e_1$ does not change the tangency point $z_1$, and thus $\partial_{e_1}t=\partial_{e_1}n=0$. Finally, this yields
    \[ \Div_{(u,v)}\mathcal{A}(P_1) = \partial_{e_1} t + \partial_{e_2} n = 0. \]

%
%
\end{proof}

\begin{figure}[h]
\centering
\begin{subfigure}{.4\textwidth}
    \centering
    \fontsize{10pt}{12pt}\selectfont
    \def\svgwidth{\linewidth}
\begingroup%
  \makeatletter%
  \providecommand\color[2][]{%
    \errmessage{(Inkscape) Color is used for the text in Inkscape, but the package 'color.sty' is not loaded}%
    \renewcommand\color[2][]{}%
  }%
  \providecommand\transparent[1]{%
    \errmessage{(Inkscape) Transparency is used (non-zero) for the text in Inkscape, but the package 'transparent.sty' is not loaded}%
    \renewcommand\transparent[1]{}%
  }%
  \providecommand\rotatebox[2]{#2}%
  \newcommand*\fsize{\dimexpr\f@size pt\relax}%
  \newcommand*\lineheight[1]{\fontsize{\fsize}{#1\fsize}\selectfont}%
  \ifx\svgwidth\undefined%
    \setlength{\unitlength}{168.08984711bp}%
    \ifx\svgscale\undefined%
      \relax%
    \else%
      \setlength{\unitlength}{\unitlength * \real{\svgscale}}%
    \fi%
  \else%
    \setlength{\unitlength}{\svgwidth}%
  \fi%
  \global\let\svgwidth\undefined%
  \global\let\svgscale\undefined%
  \makeatother%
  \begin{picture}(1,0.92320661)%
    \lineheight{1}%
    \setlength\tabcolsep{0pt}%
    \put(0,0){\includegraphics[width=\unitlength,page=1]{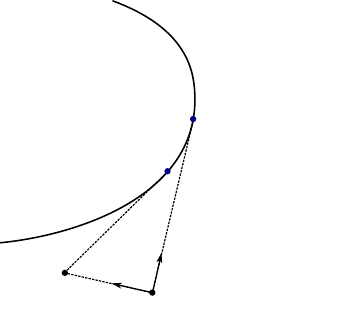}}%
    \put(0.42477118,0.00943805){\color[rgb]{0,0,0}\makebox(0,0)[lt]{\lineheight{1.25}\smash{\begin{tabular}[t]{l}$z$\end{tabular}}}}%
    \put(0.13693287,0.06504874){\color[rgb]{0,0,0}\makebox(0,0)[lt]{\lineheight{1.25}\smash{\begin{tabular}[t]{l}$\tilde{z}$\end{tabular}}}}%
    \put(0.45861278,0.5878233){\color[rgb]{0,0,0.50196078}\makebox(0,0)[lt]{\lineheight{1.25}\smash{\begin{tabular}[t]{l}$z_1$\end{tabular}}}}%
    \put(0.39092944,0.45040549){\color[rgb]{0,0,0.50196078}\makebox(0,0)[lt]{\lineheight{1.25}\smash{\begin{tabular}[t]{l}$z_2$\end{tabular}}}}%
    \put(0.39503141,0.17351895){\color[rgb]{0,0,0}\makebox(0,0)[lt]{\lineheight{1.25}\smash{\begin{tabular}[t]{l}$e_1$\end{tabular}}}}%
    \put(0.34478178,0.12737116){\color[rgb]{0,0,0}\makebox(0,0)[lt]{\lineheight{1.25}\smash{\begin{tabular}[t]{l}$e_2$\end{tabular}}}}%
    \put(0.22172101,0.71498598){\color[rgb]{0,0,0}\makebox(0,0)[lt]{\lineheight{1.25}\smash{\begin{tabular}[t]{l}$\mathcal{F}_\mathcal{S}$\end{tabular}}}}%
  \end{picture}%
\endgroup%

\end{subfigure}
\begin{subfigure}{.4\textwidth}
    \centering
    \fontsize{10pt}{12pt}\selectfont
    \def\svgwidth{\linewidth}
\begingroup%
  \makeatletter%
  \providecommand\color[2][]{%
    \errmessage{(Inkscape) Color is used for the text in Inkscape, but the package 'color.sty' is not loaded}%
    \renewcommand\color[2][]{}%
  }%
  \providecommand\transparent[1]{%
    \errmessage{(Inkscape) Transparency is used (non-zero) for the text in Inkscape, but the package 'transparent.sty' is not loaded}%
    \renewcommand\transparent[1]{}%
  }%
  \providecommand\rotatebox[2]{#2}%
  \newcommand*\fsize{\dimexpr\f@size pt\relax}%
  \newcommand*\lineheight[1]{\fontsize{\fsize}{#1\fsize}\selectfont}%
  \ifx\svgwidth\undefined%
    \setlength{\unitlength}{131.11959863bp}%
    \ifx\svgscale\undefined%
      \relax%
    \else%
      \setlength{\unitlength}{\unitlength * \real{\svgscale}}%
    \fi%
  \else%
    \setlength{\unitlength}{\svgwidth}%
  \fi%
  \global\let\svgwidth\undefined%
  \global\let\svgscale\undefined%
  \makeatother%
  \begin{picture}(1,0.7014819)%
    \lineheight{1}%
    \setlength\tabcolsep{0pt}%
    \put(0,0){\includegraphics[width=\unitlength,page=1]{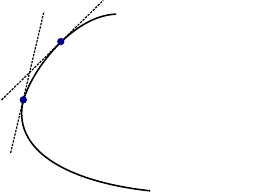}}%
    \put(-0.00519486,0.52606068){\color[rgb]{0,0,0}\makebox(0,0)[lt]{\lineheight{1.25}\smash{\begin{tabular}[t]{l}$\mathcal{A}_\mathcal{S}$\end{tabular}}}}%
    \put(0.13318235,0.27978421){\color[rgb]{0,0,0.50196078}\makebox(0,0)[lt]{\lineheight{1.25}\smash{\begin{tabular}[t]{l}$w_1$\end{tabular}}}}%
    \put(0.27254853,0.50762736){\color[rgb]{0,0,0.50196078}\makebox(0,0)[lt]{\lineheight{1.25}\smash{\begin{tabular}[t]{l}$w_2$\end{tabular}}}}%
  \end{picture}%
\endgroup%

\end{subfigure}
\caption{Point $z$ and its perturbation $\tilde{z}$ along with corresponding tangential zeros $z_1, z_2$ and their amoeba maps $w_1, w_2$.} 
\label{fig:tangential_points_divergence_free}
\end{figure}

Integrating this vector field then defines a function with gradients living on the real ovals of the amoeba. This is a linear extension along any tangential line.
\begin{corollary}
    \label{cor:tangential_extension_gradient}
    Let $z = (u,v) \in [-1, 1]^2$ and $z_1 \in \partial \mathcal{F}_\mathcal{S}$ such that the tangential line to the arctic curve through $z_1$ goes through $z$ with $P_1 = \mathcal{F}^{-1}(z_1)$. Here $z_1$ changes continuously with $z$. The function 
    \[ \tilde{g}(z) := g(z_1) + \pobr{\nabla g(z_1), z - z_1} \] 
    then satisfies
    \[ J\nabla \tilde{g}(z) = J \nabla g(z_1) = \mathcal{A}(P_1). \]
\end{corollary}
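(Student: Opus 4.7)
The approach is to differentiate $\tilde g(z)$ directly via the chain rule, exploiting that $z_1 = z_1(z)$ traces a one-dimensional curve. Writing $\tilde g(z) = g(z_1(z)) + \pobr{\nabla g(z_1(z)),\, z - z_1(z)}$ and differentiating with respect to $z^i$, the contributions $\nabla g(z_1)\cdot \partial_i z_1$ from $\partial_i g(z_1)$ and $-\nabla g(z_1)\cdot \partial_i z_1$ from the $-\partial_i z_1$ in the second slot cancel, leaving
\[
\partial_i \tilde g(z) = [\nabla g(z_1)]_i + \pobr{\partial_i \br{\nabla g(z_1(z))},\, z - z_1}.
\]
Thus the claim reduces to showing that the correction term vanishes for each $i$: then $\nabla \tilde g(z) = \nabla g(z_1)$, and the identity $J\nabla g = (x_1, x_2) = \mathcal{A}(P)$ from \eqref{eq:grads_of_f_and_g} applied at $z_1$ gives $J\nabla \tilde g(z) = J\nabla g(z_1) = \mathcal{A}(P_1)$.

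The key geometric observation is that both vectors $z - z_1$ and $\partial_i z_1$ are parallel to the tangent $\tau_1$ of the arctic curve at $z_1$. The first because $z_1$ is by construction the tangency point of the line joining $z$ to $\partial \mathcal{F}_\mathcal{S}$; the second because $z_1(z)$ moves along the one-dimensional curve $\partial \mathcal{F}_\mathcal{S}$ as $z$ varies. Parametrizing this curve as $z_1 = \phi(s)$ with $\phi(0) = z_1$ and $\phi'(0) = \tau_1$, the chain rule gives $\partial_i \br{\nabla g(z_1(z))} = \frac{d}{ds}\nabla g(\phi(s))|_{s=0}\cdot \partial_i s$, so the correction term becomes a scalar multiple of $\pobr{\frac{d}{ds}\nabla g(\phi(s))|_{s=0},\, \tau_1}$, and it suffices to show this inner product is zero.

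This last orthogonality follows from the parallelity property. By \eqref{eq:grads_of_f_and_g} and $J^{-1} = -J$ we have $\nabla g(\phi(s)) = -J\,\mathcal{A}(P_1(\phi(s)))$. Proposition~\ref{prop:parallelity_prop} ensures that $\frac{d}{ds}\mathcal{A}(P_1(\phi(s)))$ is parallel to $\tau_1$, hence $\frac{d}{ds}\nabla g(\phi(s))$ is parallel to $J\tau_1$, which is orthogonal to $\tau_1$. This closes the computation.

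The main technical obstacle is ensuring that the derivatives involved are well-defined at the arctic curve itself. Since $\mathcal{F}$ is a diffeomorphism of $\calR_+$ onto $\overline{\mathcal{F}_\mathcal{S}}$ and $\mathcal{A}$ is smooth on $X \setminus \mathcal{T}$, the vector field $\nabla g = -J\,\mathcal{A}\circ\mathcal{F}^{-1}$ and the tangent direction $\tau_1$ extend smoothly along the arctic curve away from the tangential contact points with $\partial[-1,1]^2$, where the argument is extended by continuity. The map $z \mapsto z_1$ is smooth by the implicit function theorem wherever the tangency point is uniquely determined and varies continuously with $z$ as assumed.
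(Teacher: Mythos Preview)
Your proof is correct. The paper presents this corollary as a consequence of Proposition~\ref{prop:tangential_zeros_divergence_free}, which establishes that the vector field $z\mapsto\mathcal{A}(P_1(z))$ is divergence-free; integrating then yields a potential with $J\nabla\tilde g=\mathcal{A}(P_1)$, and the paper identifies this potential with the linear extension formula without a further computation. You instead differentiate the formula directly and show the correction term $\pobr{\partial_i(\nabla g(z_1)),\,z-z_1}$ vanishes because $z-z_1\parallel\tau_1$ while, by the parallelity property (Proposition~\ref{prop:parallelity_prop}), $\partial_i(\nabla g(z_1))\parallel J\tau_1\perp\tau_1$. The two arguments rest on the same geometric fact (parallelity), and indeed the paper's proof of Proposition~\ref{prop:tangential_zeros_divergence_free} is essentially your computation carried out in the rotated basis $(e_1,e_2)$ with $e_1\parallel z_1-z$; your route simply bypasses the intermediate divergence statement and verifies the gradient identity for the explicit $\tilde g$ in one step.
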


\section{Proof of Surface Tension Minimization}
\label{sec:05_SurfaceTensionMinimization}

This section is dedicated to the proof of Theorem~\ref{thm:h_is_height_function}. That is we show that the function $\hat{h}$ constructed in Section~\ref{sec:03_CoordinateMap} is the global surface tension minimizer for the Aztec diamond boundary conditions. 

In \cite{Kenyon_Okounkov_2007} solutions to this minimization problem for the hexagonal lattice were constructed. The proof utilizes algebraic methods and is unfortunately incomplete. It works only for the spectral curve $P(z,w) = z + w + 1$ which corresponds to uniform weights, see \cite{astala_dimer_2023} for a discussion of the limitations. Limit shapes for the Aztec diamond with doubly periodic weights were fully described in \cite{berggren_geometry_2023}. The method of proof here relies on a bijection of the dimer model to the non-intersecting paths model. It is unclear how to extend this approach to other graphs. In \cite{astala_dimer_2023} a purely variational approach is taken and sufficient conditions are derived for a function to be the minimizer for a general set of boundary conditions. The argument here works only for the genus 0 case.

We work in full generality of quasi-periodic weights and utilize similar techniques to \cite{astala_dimer_2023} extending the construction to arbitrary genus. The key ingredient for our argument is an extension of $g$ in \eqref{eq:def_h_g} to the gas bubbles and frozen zones utilizing Corollary \ref{prop:tangential_zeros_divergence_free}. We stress that our proof is purely variational so does not utilize any special knowledge about the underlying graph. In particular no translations to other models are needed and no information about the inverse Kasteleyn matrix is required. This allows for flexibility as our construction can then be directly extended to the hexagon, see Section~\ref{sec:07_hexagonal_case}, and possibly other cases.

We start the proof of Theorem~\ref{thm:h_is_height_function} by recalling that uniqueness of the minimizer of \eqref{eq:def_surface_tension_functional} follows from strict convexity of $\sigma$ on $\Delta^\circ_\mathcal{S}$, see \cite[Proposition~4.5]{DeSilva_minimizers_of_convex_functionals_2010}. The Euler-Lagrange equation for \eqref{eq:def_surface_tension_functional} is
\begin{equation}
    \label{eq:surface_tension_Euler-lagrange_smooth}
    \Div(\nabla\sigma(\nabla f)) = 0.
\end{equation}
Here the differentiation in \eqref{eq:surface_tension_Euler-lagrange_smooth} is to be understood with respect to the coordinates 
\[\Div_{(u,v)}(\nabla_{(s_1, s_2)}\sigma(\nabla_{(u,v)} f)) = 0.\]

Since the surface tension $\sigma$ in dimer problems is not differentiable at singularities corresponding to gas and frozen points, this does not give a general condition for a function to be the unique minimizer. The minimizer can be characterized in terms of Gâteaux derivatives. 

Recall that the Gâteaux derivative of the functional $I_\sigma$ in the direction $u-f$ is given by
\[dI_\sigma[f; \; u-f] := \int_\Omega d\sigma\br{ \nabla f ;\; \nabla u - \nabla f }, \]
where for $p, p_0 \in \Delta$,
\[ d\sigma(p_0;\; p-p_0) := \lim_{t \to 0^+} \frac{\sigma\br{(1-t)p_0 + tp} - \sigma(p_0)}{t} \]
is a one-sided directional derivative of $\sigma$ at $p_0$ in the direction $p-p_0$.

\begin{lemma}
    \label{lem:characterization_of_minimizer_wrt_Gateaux_der}
    A function $f \in \Lip_{\Delta_\mathcal{S}}(\Omega, f_b)$ is the minimizer of $I_\sigma$ if and only if
    \[ dI_\sigma[f; \; u-f]  \geq 0  \quad \forall u \in \Lip_{\Delta_\mathcal{S}}(\Omega, f_b). \]
\end{lemma}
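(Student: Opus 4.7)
The plan is to deduce this characterization as a standard consequence of convex analysis, using only that $\sigma$ is convex on $\Delta_\mathcal{S}$ (with the one-sided directional derivative well defined everywhere by that convexity) and that the admissible set $\Lip_{\Delta_\mathcal{S}}(\Omega, f_b)$ is convex, which is immediate from convexity of $\Delta_\mathcal{S}$ and affineness of the boundary constraint.

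For the forward direction (``only if''), suppose $f$ minimizes $I_\sigma$ over $\Lip_{\Delta_\mathcal{S}}(\Omega, f_b)$. For any $u$ in the admissible set and any $t\in(0,1)$, the convex combination $f_t := f + t(u-f) = (1-t)f + tu$ lies in $\Lip_{\Delta_\mathcal{S}}(\Omega, f_b)$ since its gradient is a convex combination of gradients in $\Delta_\mathcal{S}$ and it satisfies the boundary condition. Thus $I_\sigma(f_t)\ge I_\sigma(f)$. Writing
\[
\frac{I_\sigma(f_t)-I_\sigma(f)}{t} \;=\; \int_\Omega \frac{\sigma(\nabla f + t(\nabla u - \nabla f))-\sigma(\nabla f)}{t},
\]
and noting that by convexity of $\sigma$ the integrand is monotone nonincreasing as $t\downarrow 0$ and bounded below by the linear lower bound coming from any subgradient of $\sigma$ at $\nabla f$ (finite a.e.\ since $\nabla f\in\Delta_\mathcal S$), dominated/monotone convergence lets me pass to the limit to obtain $dI_\sigma[f;\,u-f]\ge 0$.

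For the reverse direction, I would use the pointwise convexity inequality
\[
\sigma(p) \;\ge\; \sigma(p_0) + d\sigma(p_0;\,p-p_0)\qquad \forall p,p_0\in \Delta_\mathcal{S},
\]
which is the standard supporting-line inequality: by convexity the difference quotient $[\sigma((1-t)p_0+tp)-\sigma(p_0)]/t$ is nondecreasing in $t\in(0,1]$, so its limit at $t\to 0^+$ lies below its value at $t=1$, which is exactly $\sigma(p)-\sigma(p_0)$. Applying this with $p_0=\nabla f(z)$ and $p=\nabla u(z)$ and integrating over $\Omega$ gives
\[
I_\sigma(u)-I_\sigma(f) \;\ge\; \int_\Omega d\sigma(\nabla f;\,\nabla u-\nabla f) \;=\; dI_\sigma[f;\,u-f] \;\ge\; 0,
\]
whence $f$ is a minimizer. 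The only point requiring a little care is the justification that the integral defining $dI_\sigma[f;\,u-f]$ makes sense and equals the pointwise limit; I expect this to be the mildest technical obstacle and it is handled by the monotone convergence argument above together with $\nabla f,\nabla u\in\Delta_\mathcal{S}$ a.e., so that all relevant quantities are bounded.
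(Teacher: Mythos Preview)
Your proof is correct and follows essentially the same approach as the paper. The paper's argument is slightly more compressed: it works directly at the level of the functional, using convexity of $I_\sigma$ to write $I_\sigma(u) \geq I_\sigma(f) + \frac{I_\sigma(f + t(u-f)) - I_\sigma(f)}{t}$ and then passes to the limit, leaving both the ``only if'' direction and the limit--integral interchange implicit, whereas you spell out both directions and the monotone convergence justification explicitly.
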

\begin{proof}
    By convexity of $\Lip_{\Delta_\mathcal{S}}(\Omega, f_b)$ and $\sigma$ we have that for any $t \in [0, 1]$ the function $f + t(u - f)$ is in $\Lip_{\Delta_\mathcal{S}}(\Omega, f_b)$ and
    \[ I_\sigma(u) \geq I_\sigma(f) + \frac{I_\sigma(f + t(u-f)) - I_\sigma(f)}{t}. \]
    Taking the limit $t \to 0^+$ then yields
    \[ I_\sigma(u) \geq I_\sigma(f) + dI_\sigma[f; u-f], \]
    and the claim follows.
\end{proof}

From convexity of $\sigma$ it is clear that
\begin{equation}
    \label{eq:Gateaux_to_subgradient}
    d\sigma(p_0;\; p-p_0) \geq \innerprod{p^*}{p-p_0} \quad \forall p^* \in \partial\sigma(p_0)
\end{equation}
where $\partial\sigma$ is the subgradient of $\sigma$. Due to the Legendre duality \eqref{eq:rho_sigma_Legendre_via_gradients} the subgradient of $\sigma$ is well understood and is given by
\begin{equation}
\label{eq:sigma_subgradients_in_amoeba}
    \partial \sigma(p_0) = \left\{ \begin{array}{lr}
        (x_1, x_2) & \text{for } p_0 = (s_1, s_2) \in \Delta^\circ_\mathcal{S} \\
        \mathcal{A}_{(\alpha, \beta)} & \text{for } p_0 = \Delta_{(\alpha, \beta)} \\
        \mathcal{A}_i & \text{for } p_0 = \Delta_i .
    \end{array} \right.
\end{equation}
That is, for the conical singularities at the points corresponding to the frozen and gas phases, the subgradient is given by the corresponding connected component of $\mathbb{R}^2 \setminus \mathcal{A}_\mathcal{S}$. In $\Delta^\circ_\mathcal{S}$ the subgradient agrees with the gradient.

We apply Lemma~\ref{lem:characterization_of_minimizer_wrt_Gateaux_der} to $\hat{h}$. The following Proposition gives a sufficient condition for a function to be the global surface tension minimizer, see \cite[Proposition 8.1]{astala_dimer_2023}. This is a generalization of the Euler-Lagrange equation \eqref{eq:surface_tension_Euler-lagrange_smooth} in terms of subgradients.

\begin{proposition}
    \label{prop:generalized_Euler_Lagrange_suff_cond}
    The function $\hat{h} \in \Lip_{\Delta_\mathcal{S}}(\Omega, h_b)$ is the surface tension minimizer if there exists a continuous $\hat{g} : \Omega \to \mathbb{R}$ for which $\nabla \hat{g}$ exists almost everywhere with
    \begin{itemize}
        \item $J\nabla \hat{g}(z) \in \partial \sigma(\nabla \hat{h}(z))$ for almost all $z \in \Omega.$
        \item $\nabla \hat{g} \in L^2(\Omega)$.
    \end{itemize}
\end{proposition}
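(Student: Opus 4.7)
The plan is to combine the Gâteaux-derivative characterization of the minimizer from Lemma~\ref{lem:characterization_of_minimizer_wrt_Gateaux_der} with the classical fact that the Jacobian determinant is a null Lagrangian. By that lemma it suffices to show $dI_\sigma[\hat h;\, u-\hat h]\ge 0$ for every admissible $u$. The first hypothesis on $\hat g$ is precisely what lets me choose a pointwise element of $\partial\sigma(\nabla\hat h(z))$: setting $p^*=J\nabla\hat g(z)$ in \eqref{eq:Gateaux_to_subgradient} yields, for almost every $z$,
\[
d\sigma\bigl(\nabla\hat h(z);\,\nabla u(z)-\nabla\hat h(z)\bigr)\;\ge\;\bigl\langle J\nabla\hat g(z),\,\nabla u(z)-\nabla\hat h(z)\bigr\rangle,
\]
so integration reduces the proposition to proving that
\[
\int_\Omega \bigl\langle J\nabla\hat g,\,\nabla v\bigr\rangle\,du\,dv=0,\qquad v:=u-\hat h.
\]

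Now the role of the factor $J$ becomes apparent. Using $J(a,b)=(b,-a)$ the integrand rewrites as the Jacobian
\[
\bigl\langle J\nabla\hat g,\,\nabla v\bigr\rangle \;=\; \hat g_v v_u-\hat g_u v_v \;=\; \det(\nabla v,\nabla\hat g),
\]
that is the pullback of the area form by the map $(v,\hat g)\colon\Omega\to\mathbb R^2$. I would then invoke the null-Lagrangian property of the Jacobian: for functions in an appropriate Sobolev class with $v=0$ on $\partial\Omega$, this integral vanishes. In our setting $v=u-\hat h$ is Lipschitz and vanishes on $\partial\Omega$ since $u$ and $\hat h$ share the boundary data $h_b$; combined with the hypothesis $\nabla\hat g\in L^2(\Omega)$ the integrand lies in $L^2\subset L^1(\Omega)$, so the integral is well defined.

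The main technical point, and the principal obstacle, is justifying the null-Lagrangian identity under just $W^{1,2}$ regularity of $\hat g$. For smooth $\hat g$ one has
\[
\det(\nabla v,\nabla\hat g)\;=\;\partial_u(v\hat g_v)-\partial_v(v\hat g_u),
\]
and Stokes' theorem gives zero since $v|_{\partial\Omega}=0$. To extend this to $\hat g\in W^{1,2}(\Omega)$ I would approximate by smooth $\hat g_\varepsilon$ in $W^{1,2}$ (e.g.\ by mollification together with a cutoff near $\partial\Omega$), apply the smooth identity to $(v,\hat g_\varepsilon)$, and pass to the limit: because $\nabla v\in L^\infty$ is fixed and $\nabla\hat g_\varepsilon\to\nabla\hat g$ in $L^2$, we have $0=\int_\Omega\det(\nabla v,\nabla\hat g_\varepsilon)\,du\,dv\to\int_\Omega\det(\nabla v,\nabla\hat g)\,du\,dv$. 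Combining the null-Lagrangian identity with the pointwise subgradient bound yields $dI_\sigma[\hat h;\,u-\hat h]\ge 0$, and Lemma~\ref{lem:characterization_of_minimizer_wrt_Gateaux_der} then finishes the argument.
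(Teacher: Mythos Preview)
Your argument is correct and follows essentially the same route as the paper: reduce via the subgradient inequality \eqref{eq:Gateaux_to_subgradient} to showing $\int_\Omega\langle J\nabla\hat g,\nabla v\rangle=0$ for $v=u-\hat h$, and then use that this integrand is a null Lagrangian (equivalently, $\operatorname{Div}J\nabla\varphi=0$ by equality of mixed partials). The only difference is which function you approximate. The paper approximates $v\in W^{1,2}_0(\Omega)$ by $\varphi\in C^\infty_c(\Omega)$, so the integration by parts places all derivatives on $\varphi$ and no boundary considerations for $\hat g$ arise; the remainder is then controlled by H\"older and $\nabla\hat g\in L^2$. You instead approximate $\hat g$ in $W^{1,2}(\Omega)$ and keep $v$ fixed. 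That also works, but your parenthetical ``cutoff near $\partial\Omega$'' is misleading: cutting $\hat g$ to zero near the boundary would force $\hat g_\varepsilon\to\hat g$ only in $W^{1,2}_0$, which $\hat g$ need not lie in. What you actually want is density of $C^\infty(\overline\Omega)$ in $W^{1,2}(\Omega)$ (valid on Lipschitz domains); then the boundary term in Stokes vanishes because $v|_{\partial\Omega}=0$, not because of any cutoff of $\hat g$. The paper's choice avoids this subtlety entirely.
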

\begin{proof}
    Let $u \in \Lip_\Delta(\Omega, h_b)$. Using \eqref{eq:Gateaux_to_subgradient} we then have for $\varphi \in C^\infty_c(\Omega)$ that
    \begin{align*}
        dI_\sigma[\hat{h}; \; u-\hat{h}] &\geq \int_\Omega \pobr{J\nabla \hat{g}, \nabla u - \nabla \hat{h}} \\
                                  & = \int_\Omega \Big< J\nabla \hat{g}, \nabla \varphi \Big> + \int_\Omega \pobr{J\nabla \hat{g}, \nabla u - \nabla \hat{h} - \nabla \varphi}.
    \end{align*}
    Since $\varphi$ is smooth, its mixed second derivaties agree and we obtain for the first summand 
    \[\int_\Omega \pobr{J\nabla \hat{g}, \nabla \varphi} = \int_\Omega \pobr{\hat{g}, \Div J^{-1} \nabla \varphi} = 0.\]
    For the second summand we apply the Hölder inequality
    \[ \abs{\int_\Omega \pobr{J\nabla \hat{g}, \nabla u - \nabla \hat{h} - \nabla \varphi}} \leq \Big\Vert J \nabla \hat{g} \Big\Vert_2 \norm{\nabla u - \nabla \hat{h} - \nabla \varphi}_2. \]
    The first factor is finite by assumption. Note that $u - \hat{h}$ is Lipschitz and has zero boundary conditions and therefore is in the Sobolev space $W^{1,2}_0(\Omega) = \overline{C^\infty_c(\Omega)}$. We can thus choose $\varphi$ to be smooth with compact support to approximate $u-f$ in $W^{1,2}$. In particular the second factor goes to zero. Hence
    \[ dI_\sigma[\hat{h}; \; u-\hat{h}] \geq 0 \quad \forall \; u \in \Lip_\Delta(\Omega, h_b), \]
    and Lemma~\ref{lem:characterization_of_minimizer_wrt_Gateaux_der} finishes the proof.
\end{proof}

The strategy to show Theorem~\ref{thm:h_is_height_function} is thus to construct a function $\hat{g}$ satisfying the properties in Proposition~\ref{prop:generalized_Euler_Lagrange_suff_cond}. That is a function $\hat{g} : [-1,1]^2 \to \mathbb{R}$ such that
\begin{itemize}
    \item for $(u,v) \in \mathcal{F}_\mathcal{S}$ we have $J\nabla\hat{g}(u,v) = (x_1, x_2)$.
    \item for $(u,v)$ in a frozen or quasi-frozen region $F$ we have $J\nabla\hat{g}(u,v)$ lies in the corresponding unbounded component of the complement of the amoeba $\mathcal{A}_F$.
    \item for $(u,v)$ in a gas region $G_i$ we have $J\nabla \hat{g}(u,v)$ lies in the corresponding amoeba gas bubble $\mathcal{A}_i$.
\end{itemize}
Due to \eqref{eq:grads_of_f_and_g}, $g$ satisfies the conditions of Proposition~\ref{prop:generalized_Euler_Lagrange_suff_cond} in the liquid region. This motivates the following definition.
\begin{definition}
    \label{def:admissible_extension_and_boundary_condition}
    We call an extension $\hat{g}:[-1,1]^2 \to \mathbb{R}$ of $g:\mathcal{F}_\mathcal{S} \to \mathbb{R}$ \emph{admissible} if it satisfies the conditions of Proposition~\ref{prop:generalized_Euler_Lagrange_suff_cond}.
\end{definition}

We now give a construction that provides an admissible extension of $g$.

\subsection{Extension to Frozen Regions.}
As we have seen in 
Section~\ref{sec:regions}
there are two kinds of frozen regions. Frozen regions in $\mathscr{F}$ correspond to segments $\ell_{(\alpha^\pm_m, \beta^\pm_1)}$ or $\ell_{(\beta^\pm_n, \alpha^\mp_1)}$ on $X_0$ and contain one of the corners of $[-1, 1]^2$. Quasi-frozen regions in $\mathscr{Q}$ correspond to segments $\ell_{(\alpha^\pm_i, \alpha^\pm_{i+1})}$ or $\ell_{(\beta^\pm_j, \beta^\pm_{j+1})}$ on $X_0$ and contain a cusp.

\begin{figure}[h]
\centering
\begin{subfigure}{.32\textwidth}
    \centering
    \fontsize{10pt}{12pt}\selectfont
    \def\svgwidth{\linewidth}
\begingroup%
  \makeatletter%
  \providecommand\color[2][]{%
    \errmessage{(Inkscape) Color is used for the text in Inkscape, but the package 'color.sty' is not loaded}%
    \renewcommand\color[2][]{}%
  }%
  \providecommand\transparent[1]{%
    \errmessage{(Inkscape) Transparency is used (non-zero) for the text in Inkscape, but the package 'transparent.sty' is not loaded}%
    \renewcommand\transparent[1]{}%
  }%
  \providecommand\rotatebox[2]{#2}%
  \newcommand*\fsize{\dimexpr\f@size pt\relax}%
  \newcommand*\lineheight[1]{\fontsize{\fsize}{#1\fsize}\selectfont}%
  \ifx\svgwidth\undefined%
    \setlength{\unitlength}{318.43865582bp}%
    \ifx\svgscale\undefined%
      \relax%
    \else%
      \setlength{\unitlength}{\unitlength * \real{\svgscale}}%
    \fi%
  \else%
    \setlength{\unitlength}{\svgwidth}%
  \fi%
  \global\let\svgwidth\undefined%
  \global\let\svgscale\undefined%
  \makeatother%
  \begin{picture}(1,0.55181944)%
    \lineheight{1}%
    \setlength\tabcolsep{0pt}%
    \put(0,0){\includegraphics[width=\unitlength,page=1]{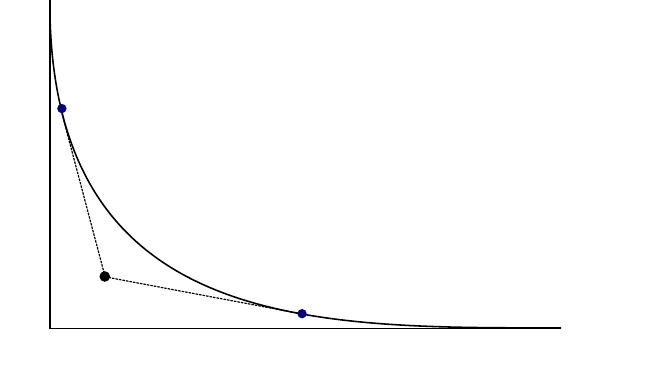}}%
    \put(0.44537085,0.10420585){\color[rgb]{0,0,0.50196078}\makebox(0,0)[lt]{\lineheight{1.25}\smash{\begin{tabular}[t]{l}$z_0$\end{tabular}}}}%
    \put(0.11624854,0.38027892){\color[rgb]{0,0,0.50196078}\makebox(0,0)[lt]{\lineheight{1.25}\smash{\begin{tabular}[t]{l}$z_1$\end{tabular}}}}%
    \put(0.14656242,0.09013157){\color[rgb]{0,0,0}\makebox(0,0)[lt]{\lineheight{1.25}\smash{\begin{tabular}[t]{l}$z$\end{tabular}}}}%
    \put(0.7945696,0.00568569){\color[rgb]{0,0.50196078,0}\makebox(0,0)[lt]{\lineheight{1.25}\smash{\begin{tabular}[t]{l}$\alpha^-_m$\end{tabular}}}}%
    \put(-0.00213905,0.51988218){\color[rgb]{1,0,0}\makebox(0,0)[lt]{\lineheight{1.25}\smash{\begin{tabular}[t]{l}$\beta^-_1$\end{tabular}}}}%
  \end{picture}%
\endgroup%

\end{subfigure}
\begin{subfigure}{.32\textwidth}
    \centering
    \fontsize{10pt}{12pt}\selectfont
    \def\svgwidth{\linewidth}
\begingroup%
  \makeatletter%
  \providecommand\color[2][]{%
    \errmessage{(Inkscape) Color is used for the text in Inkscape, but the package 'color.sty' is not loaded}%
    \renewcommand\color[2][]{}%
  }%
  \providecommand\transparent[1]{%
    \errmessage{(Inkscape) Transparency is used (non-zero) for the text in Inkscape, but the package 'transparent.sty' is not loaded}%
    \renewcommand\transparent[1]{}%
  }%
  \providecommand\rotatebox[2]{#2}%
  \newcommand*\fsize{\dimexpr\f@size pt\relax}%
  \newcommand*\lineheight[1]{\fontsize{\fsize}{#1\fsize}\selectfont}%
  \ifx\svgwidth\undefined%
    \setlength{\unitlength}{301.0596862bp}%
    \ifx\svgscale\undefined%
      \relax%
    \else%
      \setlength{\unitlength}{\unitlength * \real{\svgscale}}%
    \fi%
  \else%
    \setlength{\unitlength}{\svgwidth}%
  \fi%
  \global\let\svgwidth\undefined%
  \global\let\svgscale\undefined%
  \makeatother%
  \begin{picture}(1,0.50415053)%
    \lineheight{1}%
    \setlength\tabcolsep{0pt}%
    \put(0,0){\includegraphics[width=\unitlength,page=1]{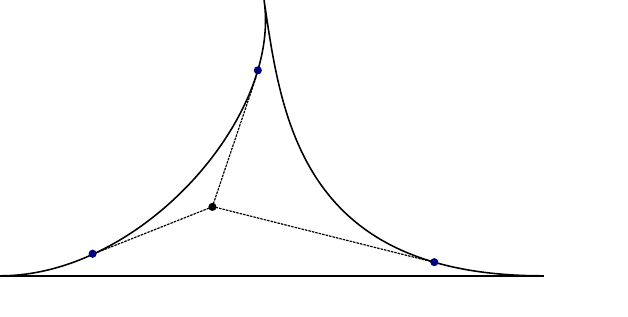}}%
    \put(0.68039029,0.1109646){\color[rgb]{0,0,0.50196078}\makebox(0,0)[lt]{\lineheight{1.25}\smash{\begin{tabular}[t]{l}$z_0$\end{tabular}}}}%
    \put(0.33723768,0.40038921){\color[rgb]{0,0,0.50196078}\makebox(0,0)[lt]{\lineheight{1.25}\smash{\begin{tabular}[t]{l}$z_1$\end{tabular}}}}%
    \put(0.08981893,0.12541766){\color[rgb]{0,0,0.50196078}\makebox(0,0)[lt]{\lineheight{1.25}\smash{\begin{tabular}[t]{l}$z_2$\end{tabular}}}}%
    \put(0.3352707,0.13312204){\color[rgb]{0,0,0}\makebox(0,0)[lt]{\lineheight{1.25}\smash{\begin{tabular}[t]{l}$z$\end{tabular}}}}%
    \put(0.80358442,0.00601391){\color[rgb]{0,0.50196078,0}\makebox(0,0)[lt]{\lineheight{1.25}\smash{\begin{tabular}[t]{l}$\alpha^-_i$\end{tabular}}}}%
    \put(0.00350232,0.00601391){\color[rgb]{0,0.50196078,0}\makebox(0,0)[lt]{\lineheight{1.25}\smash{\begin{tabular}[t]{l}$\alpha^-_{i+1}$\end{tabular}}}}%
  \end{picture}%
\endgroup%

\end{subfigure}
\begin{subfigure}{.32\textwidth}
    \centering
    \fontsize{10pt}{12pt}\selectfont
    \def\svgwidth{\linewidth}
\begingroup%
  \makeatletter%
  \providecommand\color[2][]{%
    \errmessage{(Inkscape) Color is used for the text in Inkscape, but the package 'color.sty' is not loaded}%
    \renewcommand\color[2][]{}%
  }%
  \providecommand\transparent[1]{%
    \errmessage{(Inkscape) Transparency is used (non-zero) for the text in Inkscape, but the package 'transparent.sty' is not loaded}%
    \renewcommand\transparent[1]{}%
  }%
  \providecommand\rotatebox[2]{#2}%
  \newcommand*\fsize{\dimexpr\f@size pt\relax}%
  \newcommand*\lineheight[1]{\fontsize{\fsize}{#1\fsize}\selectfont}%
  \ifx\svgwidth\undefined%
    \setlength{\unitlength}{260.45873593bp}%
    \ifx\svgscale\undefined%
      \relax%
    \else%
      \setlength{\unitlength}{\unitlength * \real{\svgscale}}%
    \fi%
  \else%
    \setlength{\unitlength}{\svgwidth}%
  \fi%
  \global\let\svgwidth\undefined%
  \global\let\svgscale\undefined%
  \makeatother%
  \begin{picture}(1,0.92400445)%
    \lineheight{1}%
    \setlength\tabcolsep{0pt}%
    \put(0,0){\includegraphics[width=\unitlength,page=1]{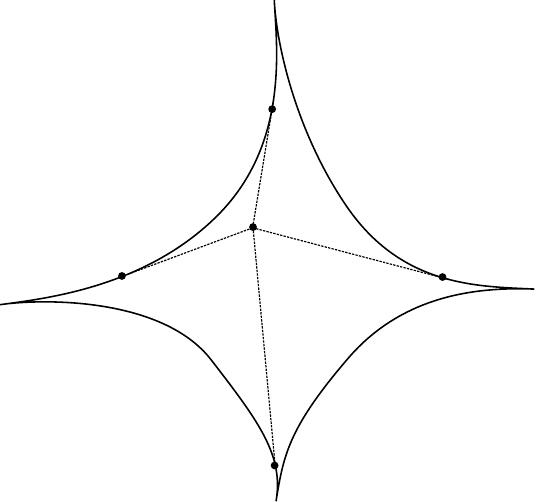}}%
    \put(0.48647394,0.51579916){\color[rgb]{0,0,0}\makebox(0,0)[lt]{\lineheight{1.25}\smash{\begin{tabular}[t]{l}$z$\end{tabular}}}}%
    \put(0.82805106,0.44379108){\color[rgb]{0,0,0.50196078}\makebox(0,0)[lt]{\lineheight{1.25}\smash{\begin{tabular}[t]{l}$z_0$\end{tabular}}}}%
    \put(0.42000482,0.74844079){\color[rgb]{0,0,0.50196078}\makebox(0,0)[lt]{\lineheight{1.25}\smash{\begin{tabular}[t]{l}$z_1$\end{tabular}}}}%
    \put(0.17443883,0.44748372){\color[rgb]{0,0,0.50196078}\makebox(0,0)[lt]{\lineheight{1.25}\smash{\begin{tabular}[t]{l}$z_2$\end{tabular}}}}%
    \put(0.41723534,0.03851448){\color[rgb]{0,0,0.50196078}\makebox(0,0)[lt]{\lineheight{1.25}\smash{\begin{tabular}[t]{l}$z_3$\end{tabular}}}}%
  \end{picture}%
\endgroup%

\end{subfigure}
\caption{Tangential zeros at the three types of components of $[-1,1]^2 \setminus \mathcal{F}_\mathcal{S}$. From left to right 
frozen, quasi-frozen and gas regions are shown.}
\label{fig:frozen_regions_tangential_zeros}
\end{figure}
%
For concreteness let $F\in \mathscr{F}$ be the frozen zone corresponding to $\ell_{(\alpha^-_m, \beta^-_1)}$. By Corollary~\ref{cor:classification_of_all_dZeta_zeros} for a point $(u,v) = z \in F$ the two tangential points $z_0, z_1$ correspond to zeros of $d\zeta_{(u,v)}$, see Figure~\ref{fig:frozen_regions_tangential_zeros}. We define
\begin{equation}
    \label{eq:g_extension_frozen_zone_F}
   g_1(z) := g(z_0) + \pobr{\nabla g(z_0), z - z_0}.
\end{equation}
This is a linear extension along each oriented tangential line and due to Corollary~\ref{cor:tangential_extension_gradient} this defines a function with $J\nabla g_1(z) = J\nabla g(z_0) \in \mathcal{A}_F$.

For a quasi-frozen zone $Q\in\mathscr{Q}$ there are three tangential points $z_0, z_1, z_2$ which are ordered on $X_0$. We take the middle one:
\begin{equation}
\label{eq:g_extension_frozen_zone_Q}
    g_1(z) := g(z_1) + \pobr{\nabla g(z_1), z - z_1}.
\end{equation}
\begin{lemma}
\label{lem:g_extension_on_frozen}
    The function $\hat{g}:=g_1(z)$ satisfies the properties of Proposition~\ref{prop:generalized_Euler_Lagrange_suff_cond} on any frozen or quasi-frozen region.
\end{lemma}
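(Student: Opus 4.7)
The plan is to verify, on any single frozen or quasi-frozen region, the three pointwise properties required by Proposition~\ref{prop:generalized_Euler_Lagrange_suff_cond} via a direct application of Corollary~\ref{cor:tangential_extension_gradient}. Write $z_\bullet(z)$ for the tangential point used in \eqref{eq:g_extension_frozen_zone_F} or \eqref{eq:g_extension_frozen_zone_Q}, and set $P_\bullet(z) := \mathcal{F}^{-1}(z_\bullet(z))$. Away from the exceptional set where $z_\bullet$ coalesces with another tangential point (the arctic curve itself, or an internal caustic), the implicit function theorem applied to the equation $d\zeta_{(u,v)}|_{X_0}=0$, combined with the structural description of tangential zeros in Proposition~\ref{prop:zeros_of_dZeta_tangential} and Corollary~\ref{cor:classification_of_all_dZeta_zeros}, gives a smooth dependence $z \mapsto z_\bullet(z)$. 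Hence $g_1$ is continuous and a.e.\ differentiable on the region, and Corollary~\ref{cor:tangential_extension_gradient} yields
\[ J\nabla g_1(z) \;=\; \mathcal{A}\bigl(P_\bullet(z)\bigr). \]

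Next I would verify the subgradient condition. By Corollary~\ref{cor:classification_of_all_dZeta_zeros}, on a frozen region $F_{(\alpha,\beta)}$ both tangential points lie on the arc $\ell_{(\alpha,\beta)} \subset X_0$, so $\mathcal{A}(P_0(z)) \in \mathcal{A}(\ell_{(\alpha,\beta)}) \subset \overline{\mathcal{A}_F}$; on a quasi-frozen region $Q_{(\alpha,\alpha')}$ the middle tangential point lies on $\ell_{(\alpha,\alpha')}$, giving $\mathcal{A}(P_1(z)) \in \overline{\mathcal{A}_Q}$. On the other hand, Lemma~\ref{lem:extension_of_h} makes $\hat{h}$ affine on each such region with $\nabla\hat{h}$ equal to the associated boundary lattice point $\Delta_F$ (resp.\ $\Delta_Q$), and formula \eqref{eq:sigma_subgradients_in_amoeba} then identifies $\partial\sigma(\nabla\hat{h})$ there with exactly $\overline{\mathcal{A}_F}$ (resp.\ $\overline{\mathcal{A}_Q}$). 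The required inclusion $J\nabla g_1 \in \partial\sigma(\nabla\hat{h})$ follows.

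The main obstacle I expect is the $L^2$ bound on $\nabla g_1$. In the quasi-frozen case the middle tangential point $P_1$ is separated from the endpoints of $\ell_{(\alpha,\alpha')}$ by the cusp of the arctic curve, so it remains in a compact subarc of $X_0$ bounded away from the train-track parameters; thus $\mathcal{A}(P_1)$ is uniformly bounded on $Q$ and integrability is immediate. The frozen case is more delicate: $P_0(z)$ can approach a train-track parameter $\alpha$ or $\beta$ as $z$ tends to the portion of $\partial F$ lying on $\partial[-1,1]^2$. Since $d\zeta_1$ and $d\zeta_2$ have simple poles there, the amoeba map diverges only logarithmically, $|\mathcal{A}(P)| \sim |\log|P-\alpha||$. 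I would then perform a local analysis of the tangent-line map near the image point $\mathcal{F}(\alpha) \in \partial[-1,1]^2$ from \eqref{eq:uv_at_train_tracks}, showing that $|P_0(z)-\alpha|$ is bounded below by a positive power of $d(z, \mathcal{F}(\alpha))$. This yields $|\nabla g_1(z)|^2 \leq C\,\log^2 d(z,\partial[-1,1]^2)$, an estimate integrable on the bounded region $F$, which completes the $L^2$ verification.
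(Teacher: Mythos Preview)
Your subgradient verification and your plan for the frozen $L^2$ bound match the paper's argument closely; the paper parametrizes the frozen region directly by the tangential rays and shows the area element satisfies $dA=dr\cdot\mathcal{O}(1)$ with $r=|P_\bullet-\alpha|$, so that $\int\log^2 r\,dr<\infty$, which is the same mechanism you describe in inverted form.

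There is, however, a genuine gap in your treatment of the quasi-frozen case. Your claim that the middle tangential point $P_1$ stays in a compact subarc bounded away from the train-track endpoints is false. For $z$ on the arctic branch between $\mathcal{F}(\alpha)$ and the cusp $c$, Corollary~\ref{cor:classification_of_all_dZeta_zeros} gives a \emph{double} zero of $d\zeta_z$ at $\mathcal{F}^{-1}(z)$ and one further simple zero on the other branch; in the ordering $P_0\le P_1\le P_2$ this forces $P_0=P_1=\mathcal{F}^{-1}(z)$. Hence as $z\to\mathcal{F}(\alpha)$ along the arctic curve one has $P_1\to\alpha$ and $|\mathcal{A}(P_1)|\to\infty$. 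The cusp does not separate $P_1$ from the endpoint $\alpha$; on that branch it merges with $P_0$, not with $P_2$. Consequently $\nabla g_1$ is unbounded on $Q$ and your ``immediate'' integrability fails.

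The fix is exactly what the paper does: run the same logarithmic $L^2$ computation for $Q$ as for $F$. Near $\alpha$ the tangent angle satisfies $d\alpha=dr\cdot\mathcal{O}(1)$, so the portion of $Q$ swept by rays tangent at parameter $r$ has area $\mathcal{O}(dr)$, and $\int_0^{r_0}\log^2 r\,dr<\infty$ again gives $J\nabla g_1\in L^2(Q)$.
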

\begin{proof}
    For concreteness we consider the frozen region $F$ corresponding to $\ell_{(\alpha^-_1, \beta^+_n)}$ and the extension $g_1$ of $g$ defined by \eqref{eq:g_extension_frozen_zone_F} where the tangential zero $z_0$ is the one closer to $\alpha^-_1$.

    According to Corollary~\ref{cor:tangential_extension_gradient} we have $J \nabla g_1(z) \in \mathcal{A}_F = \partial\sigma(h(z))$. Thus it remains to show only that $J\nabla \hat{g}(z) \in L^2(F)$. This is a question about the integrability of singularities of $d\zeta_k$.

    Let $z$ be a local coordinate on $\mathcal{R}$ at $\alpha^-_1$ with $z(\alpha^-_1) = 0$ and real on the oval. In this coordinate the amoeba map behaves as follows: 
    $$
x_1=\re \zeta_1(z)=\log z +\mathcal{O}(1), \ x_2= \re \zeta_2(z)=\mathcal{O}(1),\ z\to 0.     
    $$
   Consider the arctic curve at $\alpha^-_1$ given as the graph of a function $u \mapsto v(u)$, see Fig.~\ref{fig:frozen_area_slice}. Then, due to Proposition~\ref{prop:parallelity_prop}, the angle $\alpha$ is
    \begin{equation}
    \label{eq:alpha}
    \tan \alpha := \frac{dv}{du} = \frac{d\zeta_2}{d\zeta_1}. 
    \end{equation}
    The region $F$ is swept by tangential rays starting on $\partial F$ oriented away from $\alpha^-_1$.  Let us compute the leading term of $\int_F (\re (\zeta_1))^2 dA$ at $\alpha^-_1$ i.e. at $z=0$ where $A$ is the area measure on $F$. Differentiating \eqref{eq:alpha} and using $d\zeta_2/d\zeta_1(z)=z \cdot \mathcal{O}(1)$, 
  we get  $d\alpha = dr \cdot \mathcal{O}(1)$ where $r=|z|$.
  We denote by $\Delta$ an infinitesimal area element swept by two tangential rays, see Figure~\ref{fig:frozen_area_slice}. Since all the edges in $\Delta$ are finite we have for its area:
    \[ |\Delta| = dA =  d\alpha \cdot \mathcal{O}(1). \]
    Further we have $\re \zeta_1 = \log r + \mathcal{O}(1)$ thus yielding
    \begin{align*}
        \int (\re \zeta_1)^2 dA = \int_0^{r_0} \log^2 (r) dr \cdot \mathcal{O}(1)
       = r_0 \br{\log^2(r_0) - 2 \log(r_0) + 2 }\cdot \mathcal{O}(1),
    \end{align*}
    which is bounded. Hence $\re \zeta_1$ at the tangential point is in $L^2$ on $F$ and by an equivalent argument the same holds true for $\re \zeta_2$. Thus $J\nabla \hat{g}(z)$ 
    as in \eqref{eq:g_extension_frozen_zone_F} is in $L^2$.

 The same calculation yields this result for quasi-frozen regions $Q \in \mathscr{Q}$.
\end{proof}

\begin{figure}[h]
    \centering
    \fontsize{10pt}{12pt}\selectfont
    \def\svgwidth{.5\linewidth}
\begingroup%
  \makeatletter%
  \providecommand\color[2][]{%
    \errmessage{(Inkscape) Color is used for the text in Inkscape, but the package 'color.sty' is not loaded}%
    \renewcommand\color[2][]{}%
  }%
  \providecommand\transparent[1]{%
    \errmessage{(Inkscape) Transparency is used (non-zero) for the text in Inkscape, but the package 'transparent.sty' is not loaded}%
    \renewcommand\transparent[1]{}%
  }%
  \providecommand\rotatebox[2]{#2}%
  \newcommand*\fsize{\dimexpr\f@size pt\relax}%
  \newcommand*\lineheight[1]{\fontsize{\fsize}{#1\fsize}\selectfont}%
  \ifx\svgwidth\undefined%
    \setlength{\unitlength}{173.41223545bp}%
    \ifx\svgscale\undefined%
      \relax%
    \else%
      \setlength{\unitlength}{\unitlength * \real{\svgscale}}%
    \fi%
  \else%
    \setlength{\unitlength}{\svgwidth}%
  \fi%
  \global\let\svgwidth\undefined%
  \global\let\svgscale\undefined%
  \makeatother%
  \begin{picture}(1,0.6185449)%
    \lineheight{1}%
    \setlength\tabcolsep{0pt}%
    \put(0,0){\includegraphics[width=\unitlength,page=1]{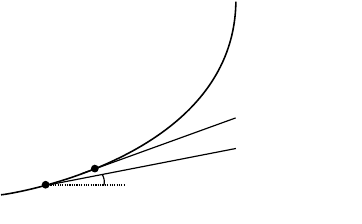}}%
    \put(0.30440919,0.12005841){\color[rgb]{0,0,0}\makebox(0,0)[lt]{\lineheight{1.25}\smash{\begin{tabular}[t]{l}$\alpha$\end{tabular}}}}%
    \put(0,0){\includegraphics[width=\unitlength,page=2]{Figures/FrozenRegionAreaL2.pdf}}%
    \put(0.59667185,0.22570827){\color[rgb]{0,0,0}\makebox(0,0)[lt]{\lineheight{1.25}\smash{\begin{tabular}[t]{l}$d\alpha$\end{tabular}}}}%
    \put(0.47904525,0.19408686){\color[rgb]{0.50196078,0,0}\makebox(0,0)[lt]{\lineheight{1.25}\smash{\begin{tabular}[t]{l}$\Delta$\end{tabular}}}}%
    \put(0,0){\includegraphics[width=\unitlength,page=3]{Figures/FrozenRegionAreaL2.pdf}}%
    \put(-0.00219709,0.01044071){\color[rgb]{0,0.50196078,0}\makebox(0,0)[lt]{\lineheight{1.25}\smash{\begin{tabular}[t]{l}$\alpha^-_1$\end{tabular}}}}%
    \put(0.66729107,0.57860665){\color[rgb]{1,0,0}\makebox(0,0)[lt]{\lineheight{1.25}\smash{\begin{tabular}[t]{l}$\beta^-_n$\end{tabular}}}}%
  \end{picture}%
\endgroup%

\caption{Infinitesimal area element $\Delta$ defined by two closeby tangential rays with angles differing by $d\alpha$.}
\label{fig:frozen_area_slice}
\end{figure}

\begin{rem}
    A similar extension to frozen regions was first introduced in \cite{astala_dimer_2023} in terms of an extension of a divergence free field. We realize this as an admissible extension of the function $g$ here which makes the construction slightly simpler, gives a nice geometric interpretation and allows us to extend this method to gas regions.
\end{rem}

\subsection{Extension to Gas Regions.}
Now let $G_i \in \mathscr{G}$ be the gas bubble corresponding to $X_i, i \neq 0$ with boundary $\partial G_i = \mathcal{F}(X_i)$. The function $g$ is defined on $\partial G_i$ with rotated gradients $J\nabla g(z) = \mathcal{A}\circ \mathcal{F}^{-1} (z) \in \partial\mathcal{A}_i$ for all $z \in \partial G_i$.

We will make use of some positivity properties of $d\zeta$. Let us fix the positive orientation on $\mathcal{R}^+$ resulting in positive orientation on $X_0$ and negative orientation on $X_i, i \neq 0$ and the same orientation of the real ovals in the amoeba. Since the map $\mathcal{F}$ is orientation-reversing, the gas region $G_i$ is oriented counter-clockwise. For two points $P,Q \in X_i$ with $p = \mathcal{F}(P), q = \mathcal{F}(Q)$ we will denote by $[P, Q]$ the path on $X_i$ from $P$ to $Q$ following our orientation as well as $[p,q] = \mathcal{F}([P,Q]) \subset \partial G_i$.

We fix a coordinate $dz$ such that it is real and positive along $X_i$ in the chosen orientation. In this coordinate we have
\[ d\zeta_{(u,v)} = \br{-u f_2(z) + v f_1(z) + f_3(z)} dz, \]
and the $f_j$ are real on $X_i$. Let $q = (u,v)$ be a point on $\partial G_i$. Then
\[ f_q(z) = f_{(u,v)}(z) := -u f_2(z) + v f_1(z) + f_3(z)\]
has a double zero at $Q \in X_i$ corresponding to $q$ and two simple zeros on $X_i$ at the points $P_1, P_2$ that correspond to the tangency points $p_1, p_2$ on $\partial G_i$, see Figure \ref{fig:gas_MCurve_orientations}. 

\begin{figure}[h]
\centering
\begin{subfigure}{.3\textwidth}
    \centering
    \fontsize{10pt}{12pt}\selectfont
    \def\svgwidth{\linewidth}
\begingroup%
  \makeatletter%
  \providecommand\color[2][]{%
    \errmessage{(Inkscape) Color is used for the text in Inkscape, but the package 'color.sty' is not loaded}%
    \renewcommand\color[2][]{}%
  }%
  \providecommand\transparent[1]{%
    \errmessage{(Inkscape) Transparency is used (non-zero) for the text in Inkscape, but the package 'transparent.sty' is not loaded}%
    \renewcommand\transparent[1]{}%
  }%
  \providecommand\rotatebox[2]{#2}%
  \newcommand*\fsize{\dimexpr\f@size pt\relax}%
  \newcommand*\lineheight[1]{\fontsize{\fsize}{#1\fsize}\selectfont}%
  \ifx\svgwidth\undefined%
    \setlength{\unitlength}{318.06577329bp}%
    \ifx\svgscale\undefined%
      \relax%
    \else%
      \setlength{\unitlength}{\unitlength * \real{\svgscale}}%
    \fi%
  \else%
    \setlength{\unitlength}{\svgwidth}%
  \fi%
  \global\let\svgwidth\undefined%
  \global\let\svgscale\undefined%
  \makeatother%
  \begin{picture}(1,1.22555425)%
    \lineheight{1}%
    \setlength\tabcolsep{0pt}%
    \put(0,0){\includegraphics[width=\unitlength,page=1]{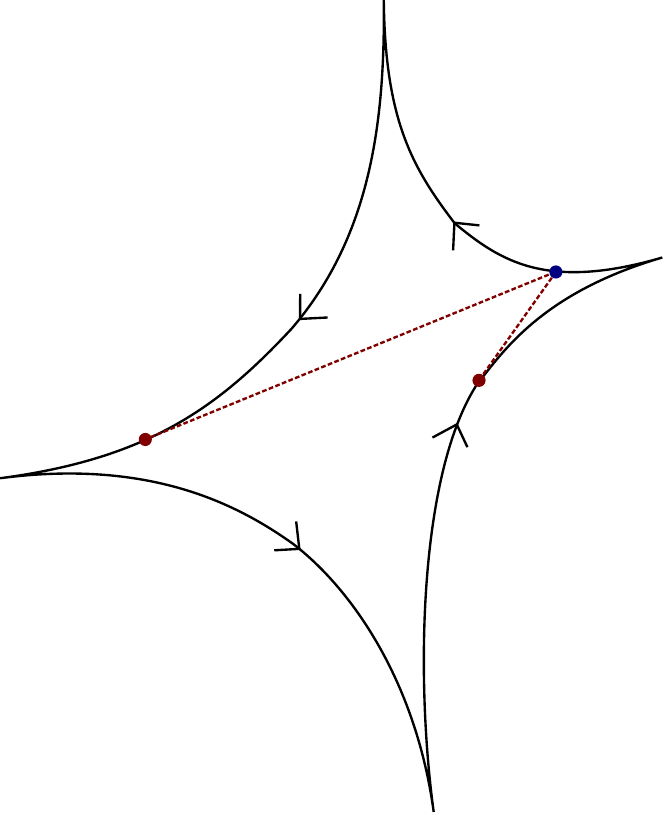}}%
    \put(0.82899022,0.84684417){\color[rgb]{0,0,0.50196078}\makebox(0,0)[lt]{\lineheight{1.25}\smash{\begin{tabular}[t]{l}$q$\end{tabular}}}}%
    \put(0.75463991,0.62478468){\color[rgb]{0.50196078,0,0}\makebox(0,0)[lt]{\lineheight{1.25}\smash{\begin{tabular}[t]{l}$p_2$\end{tabular}}}}%
    \put(0.17768183,0.60198391){\color[rgb]{0.50196078,0,0}\makebox(0,0)[lt]{\lineheight{1.25}\smash{\begin{tabular}[t]{l}$p_1$\end{tabular}}}}%
  \end{picture}%
\endgroup%

\end{subfigure}
\begin{subfigure}{.15\textwidth}
    
\end{subfigure}
\begin{subfigure}{.3\textwidth}
    \centering
    \fontsize{10pt}{12pt}\selectfont
    \def\svgwidth{\linewidth}
\begingroup%
  \makeatletter%
  \providecommand\color[2][]{%
    \errmessage{(Inkscape) Color is used for the text in Inkscape, but the package 'color.sty' is not loaded}%
    \renewcommand\color[2][]{}%
  }%
  \providecommand\transparent[1]{%
    \errmessage{(Inkscape) Transparency is used (non-zero) for the text in Inkscape, but the package 'transparent.sty' is not loaded}%
    \renewcommand\transparent[1]{}%
  }%
  \providecommand\rotatebox[2]{#2}%
  \newcommand*\fsize{\dimexpr\f@size pt\relax}%
  \newcommand*\lineheight[1]{\fontsize{\fsize}{#1\fsize}\selectfont}%
  \ifx\svgwidth\undefined%
    \setlength{\unitlength}{121.68593064bp}%
    \ifx\svgscale\undefined%
      \relax%
    \else%
      \setlength{\unitlength}{\unitlength * \real{\svgscale}}%
    \fi%
  \else%
    \setlength{\unitlength}{\svgwidth}%
  \fi%
  \global\let\svgwidth\undefined%
  \global\let\svgscale\undefined%
  \makeatother%
  \begin{picture}(1,1)%
    \lineheight{1}%
    \setlength\tabcolsep{0pt}%
    \put(0,0){\includegraphics[width=\unitlength,page=1]{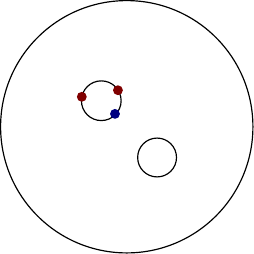}}%
    \put(0.4561399,0.69074696){\color[rgb]{0.50196078,0,0}\makebox(0,0)[lt]{\lineheight{1.25}\smash{\begin{tabular}[t]{l}$P_2$\end{tabular}}}}%
    \put(0.21648362,0.65585155){\color[rgb]{0.50196078,0,0}\makebox(0,0)[lt]{\lineheight{1.25}\smash{\begin{tabular}[t]{l}$P_1$\end{tabular}}}}%
    \put(0.42896077,0.45244406){\color[rgb]{0,0,0.50196078}\makebox(0,0)[lt]{\lineheight{1.25}\smash{\begin{tabular}[t]{l}$Q$\end{tabular}}}}%
    \put(0,0){\includegraphics[width=\unitlength,page=2]{Figures/MCurveOrientation.pdf}}%
  \end{picture}%
\endgroup%

\end{subfigure}
\caption{\textbf{Left}: Gas bubble $G_i$ with orientation and tangential zeros $p_1, p_2$ of $d\zeta_q$ following orientation. \textbf{Right}: Uniformization of $\mathcal{R}^+$ along with chosen orientation and points $P_1, P_2, Q$ on $X_i$ corresponding to $p_1, p_2, q$.}
\label{fig:gas_MCurve_orientations}
\end{figure}

Let us compute the behaviour of $f_q$ at its double zero $Q$. We have $f_q(Q)=f'_q(Q)=0$.
Using  \eqref{eq:xy_on_real_ovals} we get for the second derivative
\begin{equation}
\label{eq:second_der_as_Wronskians}
 f''(Q) = - \frac{W(f_1, f_3)}{W(f_1, f_2)}(Q) f_2''(Q) + \frac{W(f_2, f_3)}{W(f_1, f_2)}(Q) f_1''(Q) + f_3''(Q) 
           = \frac{W(f_1, f_2, f_3)}{W(f_1, f_2)}(Q).
\end{equation}

%

Since the cusps of $\partial G_i$ are characterized by $W(f_1, f_2, f_3) = 0$ as seen in the proof of Proposition \ref{prop:parallelity_prop}, and $W(f_1, f_2) \neq 0$ by Lemma~\ref{lem:W(f_1,f_2)_not_zero_on_real_ovals} we obtain 
the following
\begin{lemma}
\label{lemma:signs_of_fq(Q)}
On a real oval of a gas region the second derivative $f_q''(Q)$ changes its sign exactly at the cusps.
\end{lemma}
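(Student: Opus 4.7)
The plan is to read the lemma directly off the Wronskian formula \eqref{eq:second_der_as_Wronskians}, namely
\[ f_q''(Q) = \frac{W(f_1,f_2,f_3)}{W(f_1,f_2)}(Q). \]
First I would invoke Lemma~\ref{lem:W(f_1,f_2)_not_zero_on_real_ovals}: since $W(f_1,f_2)$ is continuous and nowhere vanishing on $X_i$, it has constant sign along the entire oval. Thus, as $Q$ varies along $X_i$, the sign of $f_q''(Q)$ changes exactly when $W(f_1,f_2,f_3)(Q)$ changes sign, and the problem reduces to locating and analyzing the real zeros of $W(f_1,f_2,f_3)$ on $X_i$.

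Next I would identify those zeros with the cusps of $\partial G_i$. This identification is essentially contained in the proof of Proposition~\ref{prop:parallelity_prop}, where the tangent vector of the arctic curve was computed as $(u',v') = A\cdot(f_1,f_2)$ with $A := W(f_1,f_2,f_3)/W(f_1,f_2)^2$. Because $(f_1,f_2)$ cannot vanish jointly on $X_i$ (otherwise $W(f_1,f_2)$ would vanish, contradicting Lemma~\ref{lem:W(f_1,f_2)_not_zero_on_real_ovals}), the tangent $(u',v')$ vanishes precisely where $A$ vanishes, i.e.\ where $W(f_1,f_2,f_3)$ vanishes. In particular, at non-cusp points of $\partial G_i$ the value $f_q''(Q)$ is nonzero and of definite sign, and no sign change can occur there.

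The main obstacle will be verifying that at each cusp $W(f_1,f_2,f_3)$ actually changes sign, i.e.\ has a zero of odd order. Geometrically this expresses the reversal of tangent direction at a cusp: writing $(u'(t),v'(t)) = A(t)\,(f_1(t),f_2(t))$ with $(f_1,f_2)$ continuous and non-vanishing, an even-order zero of $A$ at $t_0$ would leave the unit tangent unchanged through $t_0$ and, after a local reparametrization by a cube-root type change of variable, would produce a smooth point of the image curve rather than a cusp. Hence at a genuine cusp $A$ must change sign, and so does $W(f_1,f_2,f_3)$. As a cross-check, the enumeration in Section~\ref{sec:regions} of exactly four cusps on $\partial G_i$, matching the four tangential zeros of $d\zeta_{(u,v)}$ on $X_i$ for a generic $(u,v) \in G_i$, pins each cusp to a simple zero of $A$ and confirms the sign change, completing the proof.
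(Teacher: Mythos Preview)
Your approach is essentially the same as the paper's: both read the result off the formula $f_q''(Q)=W(f_1,f_2,f_3)/W(f_1,f_2)$, invoke Lemma~\ref{lem:W(f_1,f_2)_not_zero_on_real_ovals} for the denominator, and identify the zeros of the numerator with the cusps via the computation in the proof of Proposition~\ref{prop:parallelity_prop}. You are in fact more careful than the paper, which simply states the conclusion: your geometric argument that $A$ must change sign at a genuine cusp (since an even-order zero would yield a smooth image point) fills in a step the paper leaves implicit.
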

Thus the sign of $f_q''(Q)$ on an arc bounded by two consecutive cusps is constant. We will call such arcs \emph{positive/negative} depending on the sign of $f_q''(Q)$.

%

\begin{lemma}
    \label{lemma:signs_of_dZetaQ_Integral}
    Let $Q \in X_i$ such that $f_q''(Q) > 0$ and $P_1, P_2 \in X_i$ the two other zeros of $f_q$ such that the path $(P_2, Q, P_1)$ follows the defined orientation as in Figure \ref{fig:gas_MCurve_orientations}. Then
    \[ \int_Q^{P_2} d\zeta_q \leq 0, \int_Q^{P_1} d\zeta_q \geq 0. \]
    Similarly if $f_q''(Q) < 0$, then
    \[ \int_Q^{P_2} d\zeta_q \geq 0, \int_Q^{P_1} d\zeta_q \leq 0. \]
\end{lemma}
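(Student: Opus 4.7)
The plan is to reduce the statement to a sign analysis of the real-valued function $f_q$ restricted to the oval $X_i$, integrated against the real positive form $dz$. Along $X_i$ the form $dz$ is real and positive, the coefficients $f_1,f_2,f_3$ are real, so $f_q = -uf_2+vf_1+f_3$ is a real-valued function on $X_i$ whose only zeros are the double zero at $Q$ and the simple zeros at $P_1,P_2$, as recorded in Corollary~\ref{cor:classification_of_all_dZeta_zeros}. Fix an orientation-compatible real parameter $s$ with $dz=h(s)\,ds$ and $h>0$, so that $P_2,Q,P_1$ occur in increasing order of $s$ along the chosen orientation.

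The central observation is that the sign of $f_q$ on the two open arcs $(P_2,Q)$ and $(Q,P_1)$ is constant (since $f_q$ has no zeros there) and is determined by the local Taylor expansion at the double zero. Since $f_q(Q)=f_q'(Q)=0$, we have
\begin{equation*}
f_q(z) = \tfrac{1}{2}f_q''(Q)\,(s-s(Q))^2 + O((s-s(Q))^3)
\end{equation*}
near $Q$, so $f_q$ is strictly of the sign of $f_q''(Q)$ on a punctured neighborhood of $Q$. By the no-zero argument, this sign extends to the whole of each arc $(P_2,Q)$ and $(Q,P_1)$.

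With this in hand the four inequalities are immediate from $h>0$. In the case $f_q''(Q)>0$, we get $f_q>0$ on $(P_2,Q)\cup(Q,P_1)$, so
\begin{equation*}
\int_Q^{P_1} d\zeta_q = \int_{s(Q)}^{s(P_1)} f_q\,h\,ds \geq 0,
\end{equation*}
while integrating from $Q$ to $P_2$ reverses the orientation and yields
\begin{equation*}
\int_Q^{P_2} d\zeta_q = -\int_{s(P_2)}^{s(Q)} f_q\,h\,ds \leq 0.
\end{equation*}
The case $f_q''(Q)<0$ is obtained by flipping all signs.

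I do not anticipate a real obstacle: the only subtlety is to make sure the orientation of the path of integration matches the convention $(P_2,Q,P_1)$ specified in the statement, and to justify that $(P_2,Q)$ and $(Q,P_1)$ contain no further zeros of $f_q$, both of which follow directly from the classification of zeros in Corollary~\ref{cor:classification_of_all_dZeta_zeros} together with Lemma~\ref{lem:W(f_1,f_2)_not_zero_on_real_ovals} ensuring that the formula \eqref{eq:second_der_as_Wronskians} for $f_q''(Q)$ is well-defined.
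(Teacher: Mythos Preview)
Your proof is correct and follows essentially the same approach as the paper: both determine the sign of $f_q$ on the arc $[P_2,P_1]$ containing $Q$ from the second-order Taylor expansion at the double zero, then read off the integral signs from the orientation. Your introduction of an auxiliary parameter $s$ with $dz=h(s)\,ds$ is harmless but redundant, since the paper already fixed $dz$ to be real and positive along $X_i$; otherwise the arguments coincide.
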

\begin{proof}
    We know that $f_q$ switches signs at the $P_i$. It follows from $f_q''(Q) > 0$ that $f_q \geq 0$ on the arc from $P_1$ to $P_2$ containing $Q$. That is the arc from $[P_2,P_1]$ following our defined orientation. The claim follows. 
\end{proof}

We will now use these positivity properties to construct an extension of $g$ to $G_i$ from its values on the boundary $\partial G_i$. To this end we denote the cusps around $G_i$ as $c_1, \ldots, c_4$ in order of our fixed orientation and $C_1, \ldots, C_4$ the corresponding points on $X_i$. Without loss of generality we assume that $f''_q(Q)$ is positive on $[C_4, C_1]$. By Lemma~\ref{lemma:signs_of_fq(Q)} this implies that $f''_q(Q)$ is also positive on $[C_2, C_3]$ and negative on other two arcs.

Let $p$ be on the positive arc $[c_4, c_1]$ and consider the ray $r$ starting at $p$ that is tangential to $\partial G_i$ and oriented away from $C_1$. Let $q = \mathcal{F}(Q)$ be an intersection point of $r$ and $\partial G_i$, see Fig.~\ref{fig:gas_bubble_with_p_and_qs}. As before, for any $z \in r$  the function $g_1$ is defined as
\begin{equation}
    \label{eq:def_g1}
    g_1(z) = g(P) + \pobr{\nabla g(P), z - p}.
\end{equation}
We have
\begin{equation}
\label{eq:g-g_1_as_integral}
    \begin{split}
        g(Q) - g_1(Q) &= g(Q) - g(P) - \pobr{\nabla g(P), q - p}  \\
        &= \re \br{\int_P^Q d\zeta_q + \int^P d\zeta_q - d\zeta_p} - \pobr{\nabla g(P), q - p}  \\
         &= \int_P^Q d\zeta_q + \pobr{J(q-p), \mathcal{A}(P)} - \pobr{\nabla g(P), q - p} 
         = \int_P^Q d\zeta_q.
    \end{split}
\end{equation}
Here we used \eqref{eq:grads_of_f_and_g}, and $\mathcal{A}(P)$ is the amoeba map.

We assume for now that $q$ lies on the neighboring arc $[c_3, c_4]$. Then $d\zeta_q$ has a double zero at $Q$, a zero at $P$ due to the tangential construction and a fourth zero somewhere on $[C_1, C_3]$, see Fig:~\ref{fig:gas_bubble_with_p_and_qs} left. Thus by Lemma \ref{lemma:signs_of_dZetaQ_Integral} and since $[C_3, C_4]$ is a negative arc, we have
\[ g(Q) - g_1(Q) = \int_P^Q d\zeta_q \geq 0. \]
Similarly, consider a point $\tilde{p}$ corresponding to $\tilde{P} \in [C_4, C_1]$ with its tangential ray intersecting $\partial G_i$ at $\tilde{q}$ corresponding to $\tilde{Q}$ on the positive arc $[C_2, C_3]$, see Fig.~\ref{fig:gas_bubble_with_p_and_qs}. Then $d\zeta_{\tilde{q}}$ has a double zero at ${\tilde{Q}}$, a zero at $\tilde{P}$, and the fourth one must lie on $[C_3, C_1]$ and thus again by Lemma \ref{lemma:signs_of_dZetaQ_Integral} we obtain
\[ g(\tilde{Q}) - g_1(\tilde{Q}) = \int_{\tilde{P}}^{\tilde{Q}} d\zeta_{\tilde{q}} \geq 0. \]
An analogous argument gives the same inequalities for $p$ on $[c_1, c_2]$. Similarly we define a function $g_3$ as in \eqref{eq:def_g1} as an extension along tangential rays starting at $[c_2, c_4]$ and oriented away from $c_3$. An equivalent argument yields the same inequalities for $g_3$.

\begin{figure}[h]
\centering
\begin{subfigure}{.4\textwidth}
    \centering
    \fontsize{10pt}{12pt}\selectfont
    \def\svgwidth{\linewidth}
    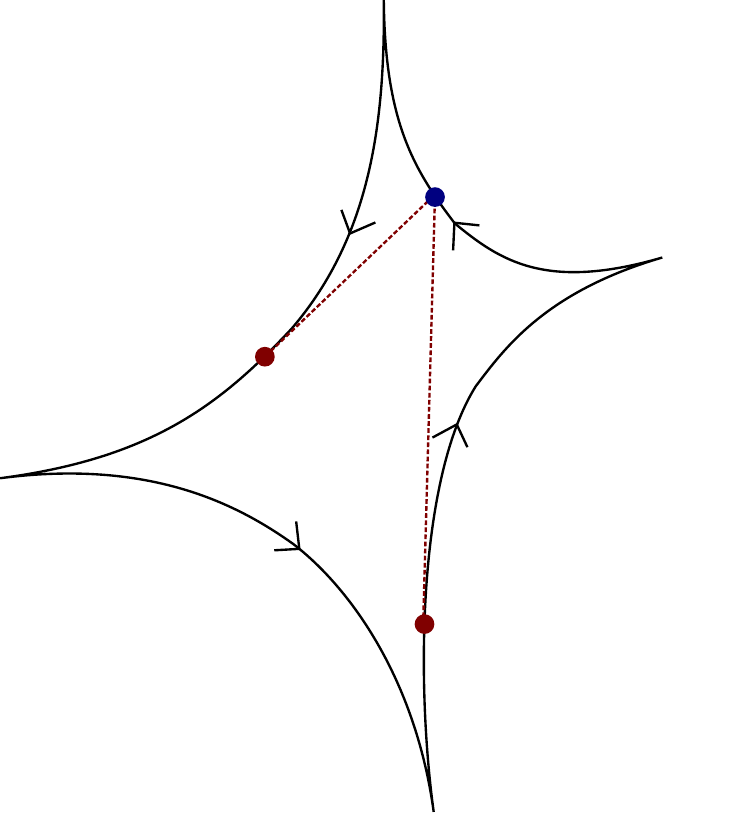
\end{subfigure}
\begin{subfigure}{.15\textwidth}
    
\end{subfigure}
\begin{subfigure}{.4\textwidth}
    \centering
    \fontsize{10pt}{12pt}\selectfont
    \def\svgwidth{\linewidth}
    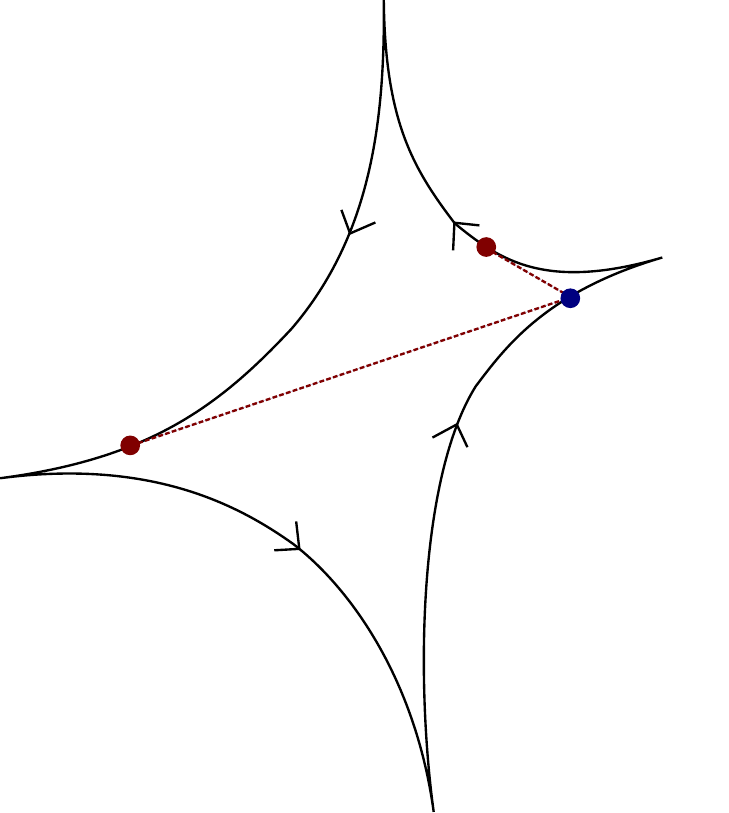
\end{subfigure}
\caption{Gas bubble $G_i$ with the two possible configurations of points on $[c_4, c_1]$ with signs of $f_q, f_{\tilde{q}}$. \textbf{Left}: The oriented ray starting at $p \in [c_4, c_1]$ intersects $\partial G_i$ at the point $q \in [c_3, c_4]$. The function $f_q$ is negative on $[P_1, P]$ thus yielding $\int_P^Q d\zeta_q \geq 0$. \textbf{Right}: The oriented ray starting at $\tilde{p} \in [c_4, c_1]$ intersects $\partial G_i$ at the point $\tilde{q} \in [c_2, c_3]$. The function $f_{\tilde{q}}$ is then positive on $[\tilde{P}, \tilde{P}_1]$ thus again yielding $\int_{\tilde{P}}^{\tilde{Q}} d\zeta_{\tilde{q}} \geq 0$. }
\label{fig:gas_bubble_with_p_and_qs}
\end{figure}

We have thus constructed two functions $g_1, g_3$ on $G_i$ with the following properties:
\begin{equation}
    \left\{ 
   \begin{array}{lr}
        J\nabla g_j(z) \in \mathcal{A}_i \; \forall z \in G_i \\
        \restr{g_1-g}{[C_4, C_2]} = 0, \; \restr{g_1-g}{[C_2, C_4]} \geq 0 \\
        \restr{g_3-g}{[C_2, C_4]} = 0, \; \restr{g_3-g}{[C_4, C_2]} \geq 0.
    \end{array} 
    \right.
\end{equation}

Taking
\[ \hat{g}:= \min(g_1, g_3) \]
on $G_i$ then yields a Lipschitz function with boundary values $\restr{\hat{g}-g}{\partial G_i} = 0$, and rotated gradients existing almost everywhere and lying in $\mathcal{A}_i$. These gradients are bounded.

By setting $\hat{g}$ to be the described extension of $g$ to both gas regions and frozen regions as in Lemma~\ref{lem:g_extension_on_frozen} we have constructed a function satisfying all conditions of Proposition \ref{prop:generalized_Euler_Lagrange_suff_cond} and Theorem \ref{thm:h_is_height_function} follows. 

\begin{rem}
\label{rem:g_extension_non_unique}
    Admissible extensions $\hat{g}$ of $g$ are not unique. For gas regions we have given an explicit construction for a maximal extension. Similarly using the other two cusps with the extension $\hat{g} = \max(g_2, g_4)$ gives a minimal extension. Any function that lies within the envelope of these maximal and minimal extensions while still satisfying the gradient condition gives a possible extension for $g$. Similarly, on the frozen and quasi-frozen zones we have provided one choice of extension but there is functional freedom there just as on the gas regions.

    By construction, the graphs of the functions $g_1$ and $g_3$ on gas regions are ruled surfaces that we have shown to intersect above $G_i$. The locus of intersection is a collection of smooth curves and $\hat{g}$ is smooth everywhere except at those curves. Due to convexity of $\mathcal{A}_i$ by mollifying $\hat{g}$ locally around its non-differentiabilities we obtain a smooth extension of $g$ that satisfies the conditions of Proposition~\ref{prop:generalized_Euler_Lagrange_suff_cond}. Such a smooth extension has rotated gradients inside $\mathcal{A}_i$ and not just on the real oval $\partial \mathcal{A}_i$.
\end{rem}

We note that any admissible extension $\hat{g}$ satisfies the Euler-Lagrange equation 
\begin{equation}
\label{eq:magnetic_Euler_Lagrange}
    \Div(J\nabla\rho(J\nabla \hat{g})) = 0 
\end{equation}
in the liquid region by Corollary~\ref{cor:Euler_lagrange_h_and_g}. Furthermore \eqref{eq:magnetic_Euler_Lagrange} is also satisfied on the regions of extension because $\nabla\rho$ is constant on each of them. Thus \eqref{eq:magnetic_Euler_Lagrange} is satisfied on all of $[-1, 1]^2$ for a smooth $\rho$. We obtain

\begin{theorem}
    \label{thm:g_is_magnetic_tension_minimizer}
    Let $\hat{g}:[-1,1]^2 \to \mathbb{R}$ be an admissible extension of $g : \mathcal{F}_\mathcal{S} \to \mathbb{R}$ and $g_b = \restr{\hat{g}}{\partial [-1,1]^2}$ its boundary conditions. Then $\hat{g}$ is a minimizer of the \emph{magnetic tension functional}
    \[ I_\rho(\varphi) = \int_{[-1,1]^2} \rho(J \nabla \varphi) \]
    over the Sobolev space $W^{1,2}([-1,1]^2, \mathbb{R})$ restricted to functions with boundary values $g_b$.
\end{theorem}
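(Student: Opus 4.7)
The plan is to mirror the strategy of Proposition~\ref{prop:generalized_Euler_Lagrange_suff_cond} with the Legendre-dual pair $(\rho, \hat h)$ playing the role of $(\sigma, \hat g)$. This version turns out to be structurally simpler than the minimization of $I_\sigma$: as the Legendre transform of $\sigma$, the function $\rho$ is globally defined and convex on $\mathbb{R}^2$, it has no conical singularities, and on every connected component of $\mathbb{R}^2 \setminus \mathcal{A}_\mathcal{S}$ it is affine with a constant gradient given by the corresponding point of the Newton polygon.

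The key identity to establish is the pointwise duality
\begin{equation*}
    \nabla\rho\bigl(J\nabla \hat{g}(z)\bigr) \;=\; \nabla \hat{h}(z) \qquad \text{for a.e.\ } z \in [-1,1]^2,
\end{equation*}
which should be checked region by region. On the liquid region $\mathcal{F}_\mathcal{S}$ this is exactly the Legendre identity \eqref{eq:rho_sigma_Legendre_via_gradients} combined with \eqref{eq:grads_of_f_and_g}, since $J\nabla \hat{g} = (x_1,x_2)$ and $\nabla \hat{h} = (s_1,s_2)$. On a frozen or quasi-frozen region $F$, admissibility of $\hat{g}$ places $J\nabla \hat{g}(z)$ inside the unbounded component $\mathcal{A}_F$, on which $\rho$ is affine with constant gradient $\Delta_F$; Lemma~\ref{lem:extension_of_h} identifies this vector with $\nabla \hat{h}|_F$. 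The argument on a gas bubble $G_i$ is verbatim with $\mathcal{A}_i,\ \Delta_i$ in place of $\mathcal{A}_F,\ \Delta_F$.

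With the duality in hand the proof reduces to a single convexity plus integration-by-parts line. For any competitor $\varphi \in W^{1,2}([-1,1]^2)$ with boundary trace $g_b$, convexity of $\rho$ gives
\begin{equation*}
    \rho(J\nabla\varphi) \;\ge\; \rho(J\nabla \hat{g}) + \pobr{\nabla\rho(J\nabla\hat{g}),\, J(\nabla\varphi - \nabla\hat{g})}.
\end{equation*}
Substituting the duality and using $\pobr{a, Jb} = -\pobr{Ja, b}$ rewrites the linear term as $-\pobr{J\nabla\hat{h},\ \nabla(\varphi - \hat{g})}$. Setting $\psi := \varphi - \hat{g} \in W^{1,2}_0([-1,1]^2)$ and noting that $J\nabla\hat{h}$ is the $\pi/2$-rotation of a gradient, hence distributionally divergence free, integration by parts makes the cross term vanish, yielding $I_\rho(\varphi) \ge I_\rho(\hat{g})$.

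The main technical point will be justifying the integration by parts when $\hat{h}$ is only $C^1$ with piecewise-smooth gradient across arctic curves. This reduces to a standard density argument: approximate $\psi$ by $C^\infty_c$ functions in $W^{1,2}_0([-1,1]^2)$ and invoke Schwarz's theorem on equality of mixed partials, or equivalently observe that $\Div(J\nabla \hat h) = 0$ in $\mathcal{D}'$ for any $\hat h \in W^{1,2}$. A subsidiary check is that $I_\rho$ is finite on the admitted competitors, which follows from $\rho$ having at most linear growth at infinity (inherited from its gradient taking values in the bounded set $\Delta_\mathcal{S}$ on $\mathcal{A}^\circ_\mathcal{S}$ and being affine on the complement).
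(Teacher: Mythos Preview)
Your proof is correct and follows the same idea as the paper, but you carry it out in considerably more detail. The paper's argument is the two sentences preceding the theorem: the Euler-Lagrange equation $\Div(J\nabla\rho(J\nabla\hat g))=0$ holds on the liquid region by Corollary~\ref{cor:Euler_lagrange_h_and_g} and trivially on each frozen/gas component because $\nabla\rho$ is constant there; the paper then asserts the theorem without spelling out the convexity step. Your version makes that step explicit via the pointwise identity $\nabla\rho(J\nabla\hat g)=\nabla\hat h$ and the integration-by-parts argument with $\psi\in W^{1,2}_0$, which is the rigorous content behind ``Euler--Lagrange plus convexity implies minimality.'' The two approaches are equivalent; yours simply fills in the standard variational machinery that the paper leaves implicit.
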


The freedom of extension stems from the fact that $\rho$ is not strictly convex on the connected components of $[-1,1]^2 \setminus \mathcal{A}_\mathcal{S}$. Note that the liquid region is the same for all minimizers over the set of admissible boundary conditions. 

\begin{rem}
    Dimer height functions and their convergence to the surface tension minimizer $\hat{h}$ have been studied extensively in the literature as we have seen above. With the magnetic tension minimizer $\hat{g}$ we have introduced a natural dual function on the level of limiting objects. It is a very interesting problem to build an understanding for it on the discrete level. This is an open question which is conjecturally related to an optimal gauge transformation.
\end{rem}

\section{Explicit Formulas for Isoradial Weights}
\label{sec:06_explicit_formulas_for_aztec_diamond_G0}

Let $\mathcal{R}$ be an M-curve of genus 0. It can be represented as the extended complex plane $\mathcal{R}=\mathbb{C}\cup\{ \infty\}$ with the anti-holomorphic involution $\tau(z)=\frac{1}{\bar{z}}$. It has only one real oval $X_0=\{z\in\mathbb{C} \mid |z|=1\}$, and $\mathcal{R}_+^\circ=\{z\in\mathbb{C} \mid |z|<1\}$.  The Harnack data are given by the clustered numbers on the unit circle:
\begin{equation}
\label{eq:isoradial_alphabeta}
\alpha^\pm_i=e^{ia^\pm_i}, \ \beta^\pm_i=e^{ib^\pm_i}, \ |\alpha^\pm_i |=|\beta^\pm_j |=1.
\end{equation}

The algebro-geometric approach based on Baker-Akhiezer functions \cite{BBS} can be applied to this case as well. The weights
\begin{equation}
    \label{eq:isoradial_weights}
    K_{\alpha,\beta}=\alpha-\beta=e^{ia}-e^{ib}
\end{equation}
become the dimer weights on isoradial graphs \cite{kenyon_laplacian_2002}. 

The differentials are
\begin{equation}
\label{eq:dzeta_i_isoradial}
    d\zeta_k(z)=f_k(z)dz, \; k=1,2,3, \; d\zeta_{(u,v)}(z)=f_{(u,v)}(z)dz
\end{equation}
with
\begin{align*}
f_1(z)&=\sum_{i=1}^m \left( \frac{1}{z-\alpha^-_i} - \frac{1}{z-\alpha^+_i}\right),\\
f_2(z)&=\sum_{i=1}^m \left( \frac{1}{z-\beta^-_j} - \frac{1}{z-\beta^+_j}\right),\\
f_3(z)&=\sum_{i=1}^m \left( \frac{1}{z-\alpha^-_i} + \frac{1}{z-\alpha^+_i} - \frac{1}{z-\beta^-_j} - \frac{1}{z-\beta^+_j}\right),\\
f_{(u,v)}(z)&= -u f_2(z)+vf_1(z)+f_3(z).
\end{align*} 
The diffeomorphism $\mathcal{F}$ from the open unit disc to the liquid region  
$$
\mathcal{R}_+^\circ \ni z_0 \mapsto (u,v) \in \mathcal{F}_\mathcal{S}
$$
is given by formulas \eqref{eq:uv_throughR_i}. The point $z_0$ is the only zero of $f_{(u,v)}(z)$ in the open unit disc. This map can be continuously extended to the boundary $|z|=1$ by formulas \eqref{eq:xy_on_real_ovals}, which gives an explicit parametrization of the arctic curve.

The height function at $(u,v)\in \mathcal{F}_\mathcal{S}$ corresponding to $z_0$ is given by integration of $f_{(u,v)}$:
\begin{align*}
h(u,v) =& \frac{1}{\pi}\im \int^{z_0} f_{(u,v)}(z)dz\\
=&\frac{1}{\pi}\sum_{i=1}^m (1+v) \arg (\alpha_i^- -z_0)+ (1-v) \arg (\alpha_i^+ -z_0) \\ 
&- (1+u) \arg (\beta_i^- -z_0)-(1-u) \arg (\beta_i^+ -z_0).
\end{align*}


For completeness we also derive here explicit representations for the surface tension $\sigma$ in the genus zero case, i.e. in the case when the gas phase is absent. For simplicity we consider the case of the square grid. Computations on general isoradial graphs are similar. 

We compute the surface tension given by (\ref{eq:def_rho_sigma}). First observe that a change of the integrals by constants $\zeta_1\mapsto \zeta_1+c_1$, 
$\zeta_2\mapsto \zeta_2+c_2$ changes $\sigma$ by an affine function, which leads to the same variational description, see \cite[Remark~38]{BBS}. We use this freedom to simplify the formulas: 
\begin{equation}
\label{eq:zeta_i_isoradial}
    \zeta_1(z)=\sum_i \log\left(\frac{z-\alpha^-_i}{z-\alpha^+_i} \right),\quad
    \zeta_2(z)=\sum_j \log\left(\frac{z-\beta^-_j}{z-\beta^+_j} \right).
\end{equation}
Then the term in $\sigma$ in (\ref{eq:def_rho_sigma}) corresponding to the edge labeled by the train tracks $\alpha=\alpha^-_i$ and $\beta=\beta^-_j$ is
\begin{equation}
\label{eq:sigma_alpha-beta}
\sigma_{(\alpha^-,\beta^-)}=\frac{1}{\pi}\left(\im\int^z \log(t-\beta^-)\frac{dt}{t-\alpha^-} - \log|z-\alpha^- |\arg(z-\beta^-) \right).
\end{equation}
Here the indices $i,j$ in notations are omitted. The lower integration limit is irrelevant since we compute $\sigma$ up to an affine function. By direct computation, which we perform up to constant terms, we obtain with $s=\dfrac{\alpha -z}{\alpha -\beta} $:

\begin{align}
\label{eq:dilog_formula}
&\im\int^z \log(t-\beta)\frac{dt}{t-\alpha}- \log | z-\alpha |\arg(z-\beta)\nonumber \\
=& \im\int^s \log((\alpha-\beta)(1-w))\frac{dw}{w}- \log | z-\alpha | \arg(z-\beta) \nonumber \\
=&-\im {\rm Li}_2 (s)+\log |\alpha-\beta | \arg(\alpha -z) +\arg(\alpha-\beta)\log |\alpha -z |- \log|z-\alpha^- |\arg(z-\beta^-) \nonumber \\
=&-\im {\rm Li}_2 (s)- \arg(1-s)\log s+ \log|\alpha-\beta | \arg \frac{z-\alpha}{z-\beta}+ \rm{const}.
\end{align}
Here
\[
{\rm Li}_2 (z)=-\int_0^z\frac{\log(1-t)}{t}dt.
\]
is Euler's dilogarithm function.
For computations and geometric interpretation it is more convenient to use the closely related Bloch-Wigner function (see \cite{Zagier, Bloch})
\begin{equation}
\label{eq:Bloch-Wigner}
D(z):=\im {\rm Li}_2 (z)+\arg(1-z)\log |z|,
\end{equation}
where $\arg$ denotes the branch between $-\pi$ and $\pi$. This is a continuous function, which is real analytic on $\mathbb{C}\setminus \{0,1\}$, and positive for $\im z>0$. It vanishes on the real axis and satisfies $D(\bar{z})=-D(z)$. Moreover it possesses the following symmetry properties:
\begin{equation}
\label{eq:D-symmetry}
D(z)=D(1-\frac{1}{z})=D(\frac{1}{1-z})=-D(\frac{1}{z})=-D(1-z)=-D(-\frac{z}{1-z}).
\end{equation}
It is natural to treat this function as a function of the cross-ratio of four complex variables:
\begin{equation*}
\tilde{D}(z_1,z_2,z_3,z_4):=D(z), \quad \text{where} \ z=\{z_1,z_2,z_3,z_4 \}:=\frac{(z_1-z_2)(z_3-z_4)}{(z_2-z_3)(z_4-z_1)}.
\end{equation*}
Let us consider the upper half space model of the tree dimensional hyperbolic space
\[
H^3:=\{ (z,t)\in \mathbb{C}\cross\mathbb{R}_+\}.
\]
The Bloch-Wigner function can be interpreted as the volume $V$ of the ideal hyperbolic tetrahedron $\Delta (z_1,z_2,z_3,z_4)$ with the vertices $z_1,z_2,z_3,z_4$:
\begin{equation}
    \label{eq:volume_as_Bloch_Wigner}
    V(\Delta(z_1,z_2,z_3,z_4))=\tilde{D}(z_1,z_2,z_3,z_4).
\end{equation}

The symmetry relations (\ref{eq:D-symmetry}) are just the transformations of the cross-ratio under the permutations of the points of the tetrahedron.  The minus sign there is due to the change of the orientation. Since the cross-ratio is M\"obius invariant, three points can be normalized, and we have $\tilde{D}(z_1,z_2,z_3,z_4)=\tilde{D}(z,0,1,\infty)$. The volume is positive for positively oriented tetrahedra, which correspond to tetrahedra $\Delta (z,0,1,\infty)$ with a positively oriented triangle $(z,0,1)\subset\mathbb{C}$, i.e. $\im z> 0$. See Figure~\ref{fig:ideal_hyp_tetrahedron}. For general ideal tetrahedra $\Delta (z_1,z_2,z_3,z_4)$ this is the condition for the cross-ratio $\im \{z_1,z_2,z_3,z_4 \}>0$, which is equivalent to $\tilde{D}(z_1,z_2,z_3,z_4)>0$.

\begin{figure}[h]
    \centering
    \includegraphics[width=.55\textwidth]{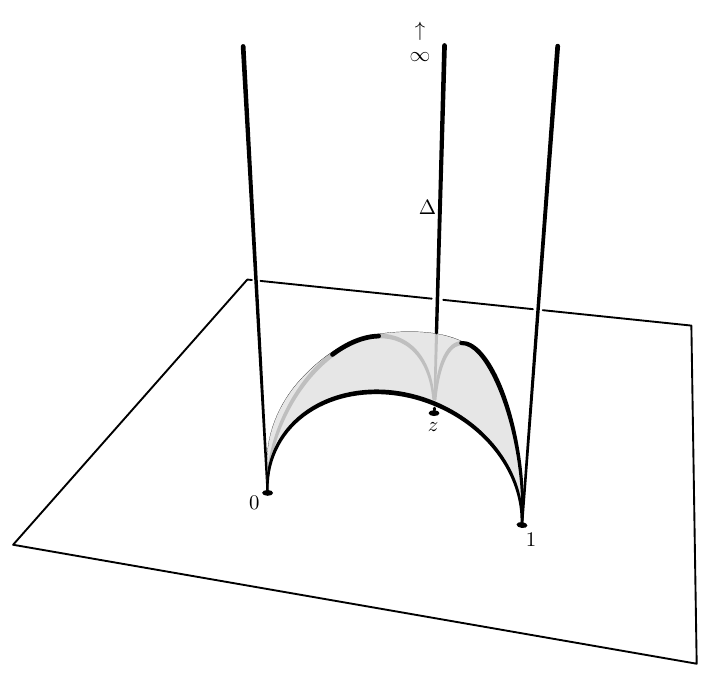}
    \caption{Ideal hyperbolic tetrahedron $\Delta(z, 0 , 1, \infty)$. Its volume is given by the Bloch-Wigner function $D(z)$, see \eqref{eq:volume_as_Bloch_Wigner}. }
    \label{fig:ideal_hyp_tetrahedron}
\end{figure}

Using the last identity from (\ref{eq:D-symmetry}) we obtain from (\ref{eq:dilog_formula})
\begin{align*}
\sigma_{(\alpha^-,\beta^-)} & =\frac{1}{\pi}\left(D(\frac{z-\alpha^-}{z-\beta^-})+ \log |\alpha^--\beta^- |\arg \frac{z-\alpha^-}{z-\beta^-} \right)\\
                                            & = -\frac{1}{\pi}\left( V(z, \alpha^-, \beta^-, \infty)+\phi_1\lambda_1 \right),
\end{align*}
where $\phi_1$ is the dihedral angle at the edge $(z,\infty)$ and $\lambda_1=\log |\alpha^--\beta^- |$ is the logarithmic length of the corresponding edge. Same formulas hold true for the other three combinations $(\alpha^+,\beta^-)$, $(\alpha^-,\beta^+)$, $(\alpha^+,\beta^+)$ in (\ref{eq:sigma_alpha-beta}). In particular,
\[
\sigma_{(\alpha^+,\beta^-)}=-\frac{1}{\pi}\left( V(z, \beta^-, \alpha^+, \infty)+\phi_2\lambda_2 \right),
\]
where $\phi_2$ is the dihedral angle of $\Delta(z, \beta^-, \alpha^+, \infty)$ at the edge $(z,\infty)$ and $\lambda_2=\log |\alpha^+-\beta^- |$.
Note that the pairs $(\alpha^-,\beta^-)$ and $(\beta^-,\alpha^+)$, see Fig.~\ref{fig:sq_lattice}, correspond to the same orientation of the train tracks introduced  in Fig.~\ref{fig:Fock_diamond_and_face} as $(\alpha,\beta)$.  We conclude with the following theorem.

\begin{theorem}
The surface tension function is given by the sum over the edges of the graph $G$
\begin{align}
\label{eq:sigma_BW}
\sigma(z)& = \frac{1}{\pi}\sum_{e\in E(G)} D(\frac{z-\alpha}{z-\beta})+\log |\alpha - \beta | \arg \frac{z-\alpha}{z-\beta}=
\\
\label{eq:sigma_hyperbolic}
    & - \frac{1}{\pi}\sum_{e\in E(G)} V(z,\alpha,\beta,\infty)+\phi(e)\lambda(e),
\end{align}
where $\alpha, \beta\in\mathbb{C}, |\alpha |= |\beta |=1$ are labels of the train tracks of the edge $e$ arranged as in Fig.~\ref{fig:Fock_diamond_and_face}, $D(z)$ is the Bloch-Wigner function (\ref{eq:Bloch-Wigner}), $V$ is the volume of the corresponding ideal hyperbolic tetrahedron $\Delta (z,\alpha,\beta,\infty)$ with the dihedral angle $\phi(e)$ at the edge $(z,\infty)$, and $\lambda(e)=\log |\alpha - \beta | $ is the logarithmic length of the corresponding edge.
\end{theorem}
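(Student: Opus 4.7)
The plan is to plug the explicit genus-zero primitives \eqref{eq:zeta_i_isoradial} into the integral definition of the surface tension from \eqref{eq:def_rho_sigma}, and to reduce the resulting scalar integrals to the Bloch-Wigner function via a M\"obius substitution. Writing $\zeta_1(z) = \sum_i \log\tfrac{z-\alpha_i^-}{z-\alpha_i^+}$ and $\zeta_2(z) = \sum_j \log\tfrac{z-\beta_j^-}{z-\beta_j^+}$, the bilinear integrand $\zeta_2\, d\zeta_1$ expands into a finite sum over pairs $(\alpha,\beta)$ with $\alpha \in \{\alpha_i^\pm\}$ and $\beta \in \{\beta_j^\pm\}$, and the boundary term $-x_1 s_1 = \tfrac{1}{\pi}\re \zeta_1 \cdot \im \zeta_2$ splits along the same indexing. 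It therefore suffices to evaluate one elementary integral of the form \eqref{eq:sigma_alpha-beta} per pair, as well as to keep track of the $\pm$ signs coming from the clustering.

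The central computation is exactly \eqref{eq:dilog_formula}: under the substitution $s = (\alpha-z)/(\alpha-\beta)$ the integrand becomes $\log((\alpha-\beta)(1-w))\tfrac{dw}{w}$, so that
\[
\im\int^z \log(t-\beta)\tfrac{dt}{t-\alpha} - \log|z-\alpha|\arg(z-\beta) = -\im \mathrm{Li}_2(s) - \arg(1-s)\log|s| + \log|\alpha-\beta|\arg\tfrac{z-\alpha}{z-\beta}
\]
modulo constants. By the definition \eqref{eq:Bloch-Wigner} of the Bloch-Wigner function the first two terms equal $-D(s)$, and applying the symmetry $D(s) = -D(-s/(1-s))$ from \eqref{eq:D-symmetry} to $s = (\alpha-z)/(\alpha-\beta)$, whose image under $w \mapsto -w/(1-w)$ is $(z-\alpha)/(z-\beta)$, converts $-D(s)$ into $D\bigl((z-\alpha)/(z-\beta)\bigr)$. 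This yields the per-pair contribution
\[
\tfrac{1}{\pi}\Bigl(D\bigl(\tfrac{z-\alpha}{z-\beta}\bigr) + \log|\alpha-\beta|\arg\tfrac{z-\alpha}{z-\beta}\Bigr).
\]

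The third step is to repackage the quadruple sum over $(\alpha_i^\pm, \beta_j^\pm)$ as a sum over edges of $G$. As illustrated in Figure~\ref{fig:Fock_diamond_and_face}, each edge $wb$ of the isoradial square lattice is crossed by exactly one ordered pair of train tracks $(\alpha(e),\beta(e))$, and the four sign choices in the expansion of $\zeta_2\, d\zeta_1$ correspond precisely to the four edges of a single fundamental face, the signs matching via the clustering \eqref{eq:clustering_sq} so that the $(i,j,\pm,\pm)$ sum collapses to $\sum_{e \in E(G)}$. Affine contributions in $(s_1, s_2)$ picked up during the reduction are absorbed by the normalization freedom of \cite[Remark~38]{BBS} and do not affect the variational problem. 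The second equality \eqref{eq:sigma_hyperbolic} is then immediate from \eqref{eq:volume_as_Bloch_Wigner} and the M\"obius invariance of the cross-ratio: sending $(0,1,\infty)$ to $(\beta,\alpha,\infty)$ identifies $D\bigl((z-\alpha)/(z-\beta)\bigr)$ with $-V(z,\alpha,\beta,\infty)$, with the sign tracking orientation, and the dihedral angle of that ideal tetrahedron at the edge $(z,\infty)$ is precisely $\phi(e) = \arg\tfrac{z-\alpha}{z-\beta}$, while $\lambda(e) = \log|\alpha-\beta|$.

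The hard part will not be the analytic identity itself but the bookkeeping of signs, branches and orientations. Concretely one needs to (i) choose branches of $\log$ and $\arg$ that make the displayed expression single-valued and continuous on $\mathcal{F}_\mathcal{S}$ and its extension to the closed unit disk, (ii) use the clustering \eqref{eq:clustering_sq} to verify that $\Delta(z,\alpha,\beta,\infty)$ is positively oriented for $z \in \mathcal{R}_+^\circ$ so that $V > 0$, and (iii) confirm that the collapse of the pair-sum into an edge-sum over $G$ introduces only an affine function in $(s_1,s_2)$, which is eliminated by the normalization freedom. Once these accounting steps are done, the theorem follows from the single scalar identity \eqref{eq:dilog_formula} applied edge by edge.
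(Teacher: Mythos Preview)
Your proposal is correct and follows essentially the same route as the paper: expand $\zeta_2\,d\zeta_1$ and the boundary term into per-pair contributions, perform the substitution $s=(\alpha-z)/(\alpha-\beta)$ to reduce each to \eqref{eq:dilog_formula}, identify the result with the Bloch-Wigner function via the last symmetry in \eqref{eq:D-symmetry}, and then reinterpret geometrically using \eqref{eq:volume_as_Bloch_Wigner}. The paper handles the bookkeeping (signs, branches, collapsing the $(\pm,\pm)$ sum to an edge sum) somewhat informally, so your explicit flagging of items (i)--(iii) as the residual work is appropriate and matches what the paper leaves implicit.
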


This theorem holds true for general isoradial graphs $G$. It can be proven by the same computation, using the multidimensional consistency of the Dirac operators on quad graphs, see \cite{BBS}.

\begin{rem}
For unitary arguments one has 
\[
D(e^{2i\theta})=\im {\rm Li}_2 (e^{2i\theta})=2 L(\theta),
\]
where $L(\theta)=-\int_0^\theta \log 2\sin t dt$ is the Lobachevsky function. So, substituting $z=0$ to (\ref{eq:sigma_BW}) we recover Kenyon's formula \cite{kenyon_laplacian_2002} for the normalized determinant of the discrete Dirac operator for isoradial embeddings
\[
\sigma (e, z=0)= \frac{2}{\pi} (L(\theta)+\theta \log 2\sin \theta ) ,
\]
where $2\theta=\beta-\alpha$.
\end{rem}

The functional given by (\ref{eq:sigma_BW},\ref{eq:sigma_hyperbolic}) is closely related to the functionals describing discrete conformal mappings \cite{BPS_discrete_conformal}.

See Figure~\ref{fig:genus_0_configurations} for some genus 0 examples of configurations and arctic curves.

\begin{figure}
    \centering
    \begin{subfigure}[p]{.49\textwidth}
        \centering
        \includegraphics[width=\linewidth]{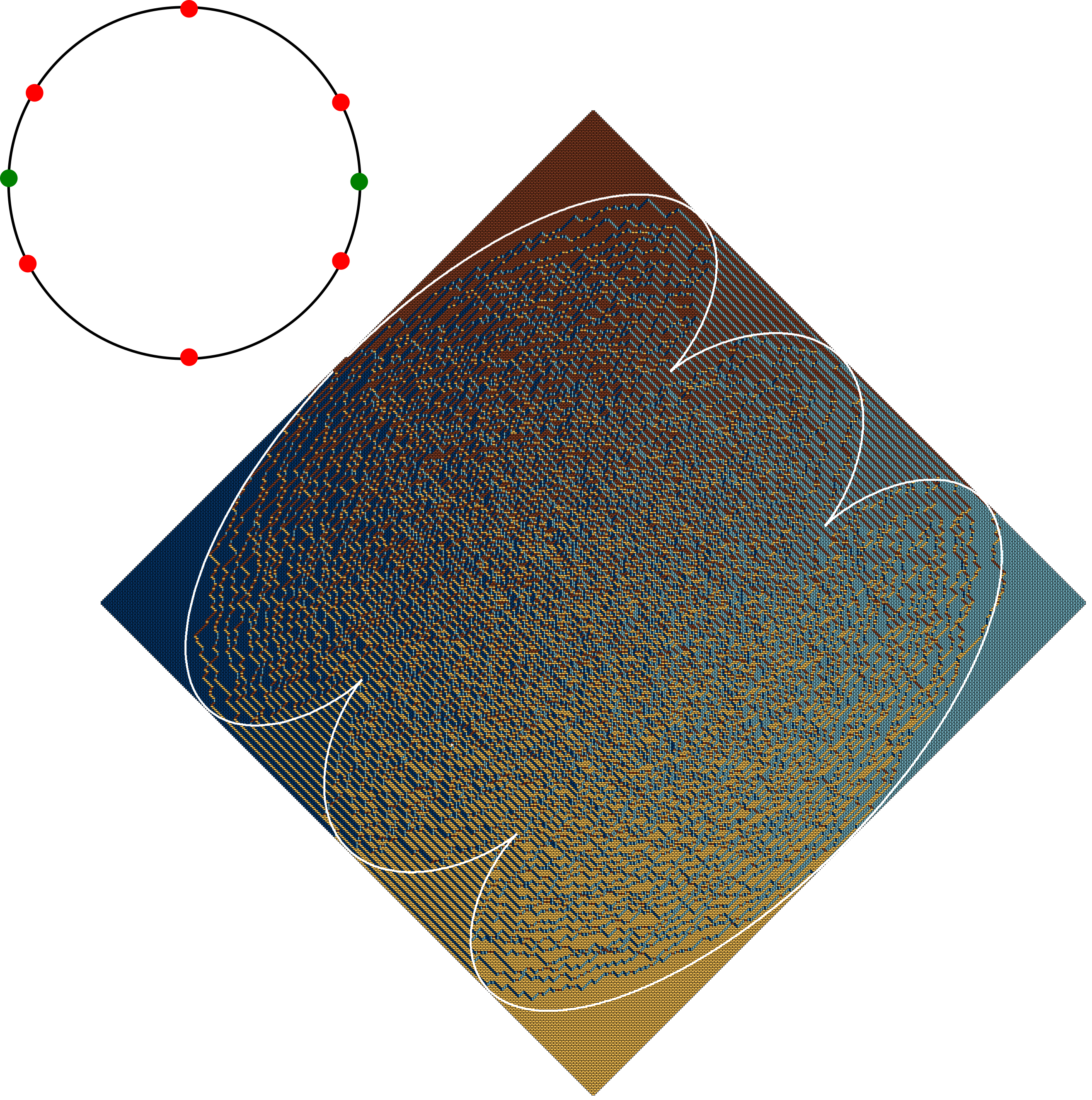}
    \end{subfigure}
    \begin{subfigure}[p]{.49\textwidth}
        \centering
        \includegraphics[width=\linewidth]{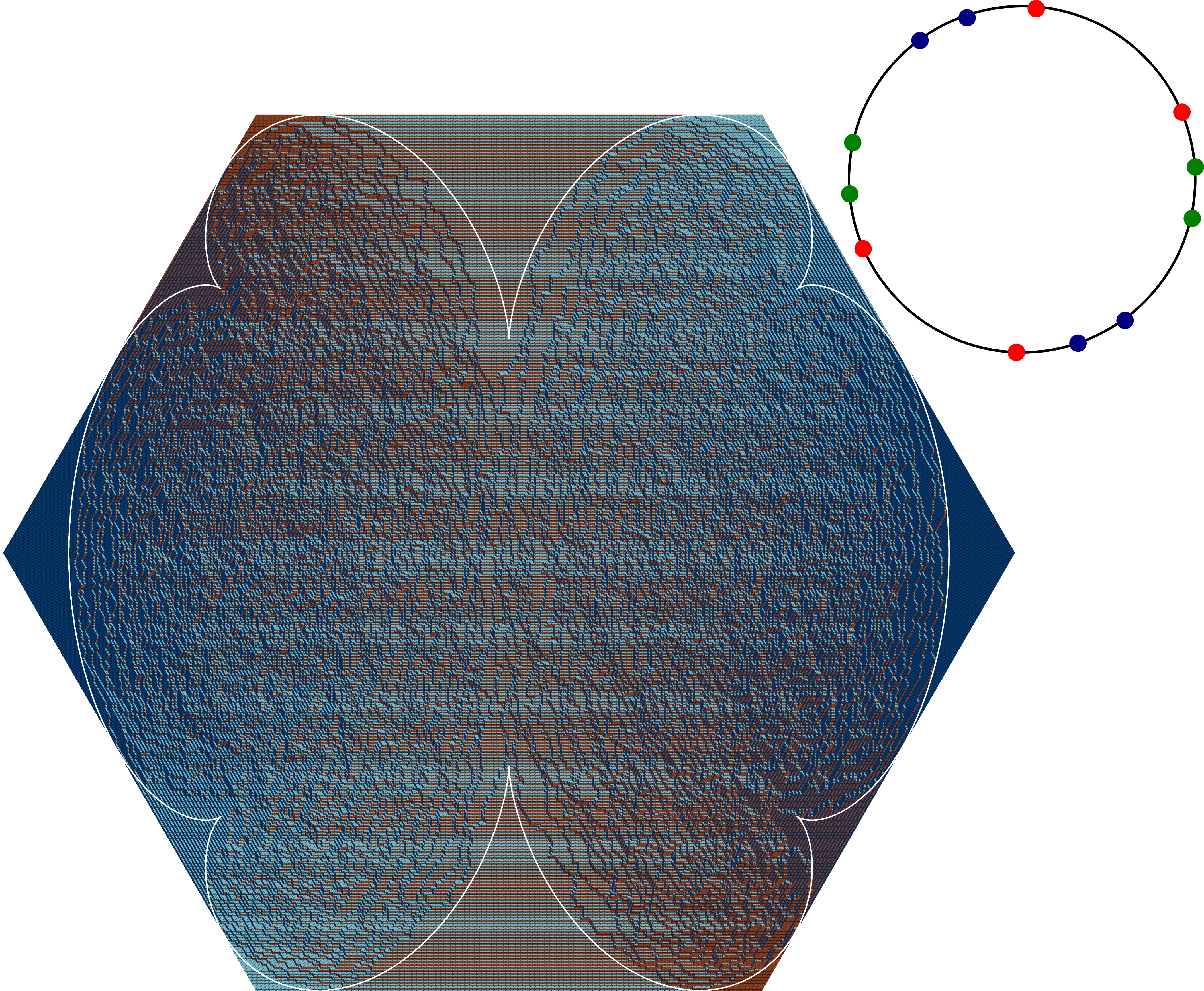}
    \end{subfigure}
    \caption{Two genus $0$ dimer configurations along with predicted arctic curves. The dimer models are populated with isoradial weights \eqref{eq:isoradial_weights} defined through the Harnack data depicted above them. All the involved differentials can be computed through rational functions. \textbf{Left}: Aztec diamond along with its spectral data $\cubr{\calR_+, \cubr{\alpha^\pm_i, \beta^\pm_j}}$ given for a $3\times 3$ domain with triple train tracks at the green points and simple ones at the red ones. \textbf{Right}: Hexagon with its spectral data $\cubr{\hat{\calR}_+, \cubr{\alpha^\pm_i, \beta^\pm_j, \gamma^\pm_l}}$ given for a $2 \times 2$ domain. We recall that this is already the double cover with branch point $0$ of the spectral curve $\calR$ that defines the weights.}
    \label{fig:genus_0_configurations}
\end{figure}

\section{Hexagonal Case}
\label{sec:07_hexagonal_case}

We now consider the case of the hexagon. For this let $G_N$ be a subset of the regular hexagonal lattice with side lengths $N,N,N$ and filled with copies of a fundamental domain of size $n \times n$ with doubly periodically repeating train track parameters as in \eqref{fig:hex_lattice} and embedded as described in Section~\ref{sec:02_TheDimerModel}.

Just as the Aztec diamond this is a classical and extensively studied problem. Dimer configurations on $G_N$ are in bijection to boxed plane partitions of size $N$ providing deep connections to combinatorics. The classical result computing the number of boxed plane partitions or equivalently the partition function for of the dimer model on $G_N$ with uniform weights was first obtained in \cite{MacMahon_1915}. In \cite{Cohn1998} a description of the limit shape under uniform weights was given. To our knowledge a (close to) complete description of limit shapes for (quasi) periodic weights on the hexagon does not exist to this date. We provide one in this section.

We begin with an informal discussion pointing out some differences between the case of the hexagon and the Aztec diamond. For simplicity let us consider uniform weights on the square lattice, i.e. $\mathcal{R}_+$ = $\mathbb{D}$ with the train track parameters being $4$th roots of unity. On the square lattice there are $4$ different frozen Gibbs measures corresponding to the $4$ different types of regular brickwork patterns or equivalently the four corners of the Newton polygon. We find one of these brickwork patterns in each of the $4$ corners of the Aztec diamond giving a one-to-one correspondence. 

On the hexagonal lattice with uniform weights there are $3$ frozen Gibbs measures that correspond to the $3$ different lozenges that exist and the $3$ corners of the Newton polygon (which is a triangle in this case). The Hexagon $G_N$ however, has $6$ corners with boundary conditions such that opposite corners should house the same kind of frozen region. Indeed as is seen in simulations and was first shown in \cite{Cohn1998} the arctic curve for this case is a circle, yielding $3$ pairs of frozen regions. Since the limit shape generically is $C^1$ as seen in \cite{DeSilva_minimizers_of_convex_functionals_2010}, in the neighborhood of these frozen regions $h$ must have a slope in the corresponding neighborhoods of the corners of the Newton polygon. Therefore one should not expect a one-to-one mapping between $\mathcal{R}_+$ and the liquid region. For uniform weights this turns out to be a two-to-one map and we will dedicate this section to constructing it for arbitrary quasi-periodic weights.

The challenge here is that starting with Harnack data $\mathcal{S} = \cubr{\mathcal{R}, \mathcal{T}}$ defining the weights one needs to find a double cover $\hat{\mathcal{R}}$ from which a one-to-one map to the liquid region can then be constructed. In full generality this is a hard problem and is further complicated for more general domains where a more complicated covering structure should be found \cite{Kenyon_Okounkov_2007}. 

Keeping in line with the philosophy of treating the reverse problem we put the double cover $\hat{\calR}$ at the center of our construction and start with it as given data. The map $\mathcal{F}$ to the dimer domain is then defined in the same way as for the Aztec diamond but now lives on $\hat{\calR}$ and maintains all the same properties. There are two separate cases to consider here. The case of ramified and unramified coverings. Both can indeed appear for some choice of weights. By exploiting the symmetries of the hexagon we present the corresponding Schottky models allowing for computation, see Section~\ref{sec:10_SchottkyUniformization}.

\subsection{Ramified Cover}

Let $\hat{\calR}$ be an M-curve of genus $\hat{g} = 2g$ with antiholomorphic involution $\tau$ fixing the real ovals $\hat{X} = \bigcup_{i=0}^{2g} \hat{X}_i$ that split $\hat{\calR}$ into the two halves $\hat{\calR}_+, \hat{\calR}_-$. Let $\pi$ be a holomorphic involution with two fixed points $P_0 \in \hat{\calR}_+, \tau P_0 \in \hat{\calR}_-$ which maps $\hat{X}_i$ onto $\hat{X}_{i+g}$ for $1 \leq i \leq g$ and maps $X_0$ onto itself. 
We have a double cover $p: \hat{\calR} \to \calR = \hat{\calR} / \pi$ with branch points $P_0, \tau P_0$.

Let now $\mathcal{S} = \cubr{\calR, \cubr{\alpha, \beta, \gamma}}$ be a Harnack data with differentials $d\zeta_1, d\zeta_2$ on  $\calR$ as in \eqref{eq:residues_of_dZeta_i_hex}. Their pullback to $ \hat{\calR}$ defines $\pi$-symmetric differentials. The preimage of every point in $\cubr{\alpha, \beta, \gamma}$ consists of two points on $\hat{X_0}$. We will denote these as $\cubr{\alpha^\pm, \beta^\pm, \gamma^\pm}$ where we order them such that on $\hat{X}_0$ we have the cyclic order
\begin{equation}
    \label{eq:cyclic_order_hex_double_cover}
    \alpha_i^- < \beta_j^- < \gamma^-_k < \alpha_i^+ < \beta_j^+ < \gamma^+_k < \alpha_i^-, \quad \forall i,j,k.
\end{equation}
The pullback of $\mathcal{S}$ under $p$ is the Harnack data $\hat{\mathcal{S}} := \cubr{\hat{\calR}, \cubr{\alpha^\pm, \beta^\pm, \gamma^\pm}}$. 

\begin{table}
    \begin{center}
        \begin{tabular}{ |c|c|c|c|c|c|c| } 
            \hline
            res & $\alpha^-_i$ & $\beta^-_i$ & $\gamma^-_i$ & $\alpha^+_i$ & $\beta^+_i$ & $\gamma^+_i$ \\
            \hline
            $d\zeta_1$ & 1 & 0 & -1 & 1 & 0 & -1 \\ 
            $d\zeta_2$ & 0 & 1 & -1 & 0 & 1 & -1 \\ 
            $d\zeta_3$ & 1 & -1 & 1 & -1 & 1 & -1 \\ 
            \hline
        \end{tabular}
    \end{center}
\caption{Residues of the differentials $d\zeta_i$ for the regular hexagon.}
\label{tab:xi_residues_hex}
\end{table}
\begin{definition}
    \label{def:hex_Harnack_data_admissible_ramified}
    We define $d\zeta_3$ to be the differential on $\hat{\calR}$ with residues as in Table \ref{tab:xi_residues_hex} and purely imaginary periods. We call the Harnack data $\hat{\mathcal{S}}$ \emph{admissible} if $d\zeta_3(P_0) = 0$.
\end{definition}

Note that on $\hat{\calR}$ we have the antisymmetry $\pi^* d\zeta_3 = -d\zeta_3$ and the symmetries $\pi^*d\zeta_i = d\zeta_i, i \leq 2$. Thus while $d\zeta_1, d\zeta_2$ are defined on $\calR$, the differential $d\zeta_3$ is well defined only on the double cover $\hat{\calR}$. 

The Fock weights that $G_N$ is populated with are determined purely from $\mathcal{S}$ and hence do not depend on any boundary condition considerations.

\begin{rem}
\label{rem:uniformization_of_hex_curve}
    As in Remark~\ref{rem:uniformization_calR} we fix a Schottky uniformization for Riemann surface $\hat{\calR}$ such that $\hat{\mathcal{R}}_+$ is mapped to the unit disk with $2g$ round disks removed so that $X_0$ is mapped to the unit circle and $P_0$ is mapped to $0$ so that the holomorphic involution is given by $\pi z = -z$. This uniformization data along with the clustered points on the unit circle is the actual Harnack data $\hat{\mathcal{S}}$.
 As in the case of the Aztec diamond all described objects can be numerically computed,  see Section~\ref{sec:10_SchottkyUniformization}.
\end{rem}

Henceforth we assume that $\hat{\mathcal{S}}$ is admissible. Analogously to \eqref{eq:def_dZeta} we consider the family of differentials
\begin{equation}
    \label{eq:def_dZeta_hex}
    d\zeta_{(u,v)} = d\zeta = -u d\zeta_2 + v d\zeta_1 + d\zeta_3.
\end{equation}

For any $(u,v) \in \mathbb{R}^2$ the differential $d\zeta$ has $6n$ poles on $\hat{\calR}$ and hence also has $6 n + 4g - 2$ zeros. Again due to being real on the real ovals we know that there are at least two zeros on each $\hat{X}_i$, $i\geq 1$, hence a total of $4g$ zeros. Furthermore there is one zero between any two train track points of the same type as they have the same residue, accounting for a total of $6(n-1)$ zeros. Note that $d\zeta_1(P_0) = d\zeta_2(P_0) = 0$ due to the symmetry and $d\zeta_3(P_0) = 0$ by admissibility and thus $d\zeta(P_0) = d\zeta(\tau P_0) = 0$. This again leaves us with one pair of free zeros. We define the liquid region $\mathcal{F}_{\hat{\mathcal{S}}} \subset \mathbb{R}^2$ to be the open set of $(u,v)$ such that the free zeros of $d\zeta_{(u,v)}$ are $P \in \hat{\calR}^\circ_+$ and $\tau P$. Then again in analogy to Proposition~\ref{prop:calF_is_diffeomorphism} we have
\begin{proposition}
    The map $\mathcal{F}: \hat{\calR}^\circ_+ \to \mathcal{F}_{\hat{\mathcal{S}}}$ is a diffeomorphism.
\end{proposition}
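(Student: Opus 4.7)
The plan is to adapt the proof of Proposition~\ref{prop:calF_is_diffeomorphism} from the Aztec setting, adjusted to account for the ramification at $P_0,\tau P_0$ and for the extra freedom in $d\zeta$ coming from $d\zeta_3$. As there, I would prove in turn injectivity, surjectivity, and smoothness of the inverse map $\mathcal{F}^{-1}\colon \mathcal{F}_{\hat{\mathcal{S}}}\to \hat{\calR}^\circ_+$ that sends $(u,v)$ to the free zero of $d\zeta_{(u,v)}$, relying on zero-counting of $\pi$-invariant differentials and on the admissibility condition $d\zeta_3(P_0)=0$.

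For injectivity, suppose $(u_1,v_1)\neq (u_2,v_2)$ both send their free zero to the same $P\in \hat{\calR}^\circ_+$. Writing $u=u_1-u_2$, $v=v_1-v_2$, the nontrivial $\pi$-invariant differential
\[ dF \;=\; -u\,d\zeta_2 + v\,d\zeta_1 \]
vanishes at $P$. Its residues on the six clusters $\alpha^\pm_i,\beta^\pm_j,\gamma^\pm_k$ repeat with period three as $v,u,-u-v,v,u,-u-v$, a pattern that can never realize the strict six-fold alternating sign structure. A hexagonal analog of Lemma~\ref{lem:zeros_of_dzeta} (proven by the same sign-change argument on $\hat{X}_0$ and the vanishing of real periods on inner ovals) then accounts for all $6n+4g-2$ zeros of $dF$ as: at least $2$ on each of the $2g$ inner ovals, $n-1$ in each of the six clusters, the two automatic zeros at $P_0,\tau P_0$ inherited from the ramified cover, and exactly two more forced between neighboring equal-sign clusters on $\hat{X}_0$. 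This leaves no zero in $\hat{\calR}_+^\circ\setminus\{P_0\}$, so $P=P_0$. A local analysis then closes this case: in a coordinate $s$ at $P_0$ with $\pi\colon s\mapsto -s$, the parities forced by $\pi$ make $f_1,f_2$ odd and $f_3$ even, while admissibility pushes $f_3$ to order $s^2$. Thus $dF/ds$ has leading term $(v a_1 - u a_2)s$ with $a_i=f_i'(0)$, and a second use of the non-alternating argument applied to $a_2\,d\zeta_1 - a_1\,d\zeta_2$ shows $\im(a_1\bar a_2)\neq 0$, forcing $(u,v)=(0,0)$.

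For surjectivity, fix $P\in\hat{\calR}_+^\circ\setminus\{P_0\}$ and solve $d\zeta_{(u,v)}(P)=0$ in a local coordinate. This is a real $2\times 2$ linear system with determinant $\im(\overline{f_1}f_2)(P)$; its nonvanishing is established exactly as in the Aztec case, since otherwise a real combination $d\zeta_2 - c\,d\zeta_1$ would vanish at $P\notin \hat X$, contradicting the zero-count above. At the branch point the choice $(u,v)=(0,0)$ gives $d\zeta=d\zeta_3$, whose free zero is $P_0$ itself: by $\pi$-antisymmetry and admissibility, $d\zeta_3$ has a double zero at $P_0$. Smoothness of $\mathcal{F}$ and of its inverse then follows from the implicit function theorem applied to the simple-zero equation $d\zeta_{(u,v)}(P)=0$, the two Jacobian computations above providing the required nondegeneracy both away from and at $P_0$.

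The main obstacle is the branch-point analysis, which has no counterpart in Proposition~\ref{prop:calF_is_diffeomorphism}: one must verify that the admissibility condition $d\zeta_3(P_0)=0$ exactly cancels the extra ramification-induced zeros, so that $P_0$ becomes a regular interior point of the diffeomorphism rather than a collapsed fiber. This is also where it becomes essential that $d\zeta_3$ is $\pi$-antisymmetric (lives only on the cover) and that the only admissible data are those of Definition~\ref{def:hex_Harnack_data_admissible_ramified}, pinning down $(0,0)$ as the unique preimage of $P_0$.
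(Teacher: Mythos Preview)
Your proposal is correct and is exactly the argument the paper has in mind: the paper gives no proof beyond ``in analogy to Proposition~\ref{prop:calF_is_diffeomorphism}'', and you have supplied precisely that analog, together with the one genuinely new ingredient --- the branch-point analysis at $P_0$ --- which you handle correctly via the $\pi$-parities of $f_1,f_2,f_3$ and the admissibility condition.

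One small correction of phrasing: in the step showing $\im(a_1\bar a_2)\neq 0$, the non-alternating zero count must be applied to a \emph{real} linear combination, say $v_0\,d\zeta_1 - u_0\,d\zeta_2$ with $(u_0,v_0)\in\mathbb{R}^2\setminus\{0\}$ a hypothetical solution of $v_0 a_1=u_0 a_2$, rather than to the complex combination $a_2\,d\zeta_1 - a_1\,d\zeta_2$ (which is not real on the ovals, so the sign-change argument does not directly apply). Since $f_1,f_2$ are odd at $P_0$, this real combination would then vanish to order $\geq 3$ at both $P_0$ and $\tau P_0$, and together with the forced zeros on $\hat X$ this exceeds the total $6n+4g-2$; equivalently, one may push down to $\calR$ and observe that $d\tilde\zeta_1 - c\,d\tilde\zeta_2$ (with $c\in\mathbb{R}$) cannot vanish at $p(P_0)\in\calR^\circ_+$. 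Your conclusion and the rest of the argument stand.
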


Note that we have an alternating sign structure of residues for $d\zeta_3$ and thus also for $d\zeta_{(u,v)}$ with
\[ (u,v) \in \mathcal{H} = \cubr{(u,v) \in [-1, 1]^2 \;|\; -1 \leq u + v \leq 1}. \]
For any $(u,v) \not\in \mathcal{H}$ the alternating sign structure is not preserved and we hence have that the two free zeros of $d\zeta$ are on $X_0$. Thus $\mathcal{F}_{\hat{\mathcal{S}}} \subset \mathcal{H}$.

The arguments from Sections~\ref{sec:03_CoordinateMap} and \ref{sec:04_GeometricProperties} can be directly applied to this setup with $\hat{\calR}$ now being the Riemann surface where $d\zeta$ is defined. The following results are immediate.
\begin{itemize}
    \item All marked train track points get mapped to the boundary of $\mathcal{H}$, see \eqref{eq:uv_at_train_tracks}.
    \item The formulas \eqref{eq:xy_on_real_ovals} for $(u,v)$ hold.
    \item The parallelity property from Proposition~\ref{prop:parallelity_prop} is preserved.
    \item We have a decomposition of $\mathcal{H}$ into $\mathcal{F}_{\hat{\mathcal{S}}}$ and frozen, quasi-frozen as well as gas regions like in \eqref{eq:decomposition_into_frozen_and_gas_regions}. The latter are now in two-to-one correspondence to the marked points on the Newton polygon $\Delta_\mathcal{S}$.
    \item Each gas, quasi-frozen and frozen region has $4, 1$ and $0$ cusps respectively.
    \item All zeros of $d\zeta$ on $\hat{\calR}$ are given by the tangential zeros and for $(u,v) \in \mathcal{F}_{\hat{\mathcal{S}}}$ a pair of conjugated points, see Corollary~\ref{cor:classification_of_all_dZeta_zeros}.
    \item The maps $\mathcal{A} \circ \mathcal{F}^{-1}$ and $J\Delta \circ \mathcal{F}^{-1}$ again define divergence-free fields on $\mathcal{F}_{\hat{\mathcal{S}}}$, see Proposition~\ref{prop:x_y_divergence_free}. Furthermore $\mathcal{A} \circ \mathcal{F}^{-1}$ defines a divergence-free field for the tangential zeros, see Proposition~\ref{prop:tangential_zeros_divergence_free}.
\end{itemize}

Like in \eqref{eq:def_of_complex_height_function} we again define the complex height function
\[ H(P) = H(u,v) := \frac{1}{\pi} \int_{\ell} d\zeta_{(u,v)} =: \frac{1}{\pi} g + ih \]
where $\ell$ is a path on $\hat{\calR}_+$ going from a fixed $Q\in \hat{X}_0$ to $P = \mathcal{F}^{-1}(u,v)$.

It then again follows immediately from the same arguments as in Sections~\ref{sec:04_GeometricProperties} and \ref{sec:05_SurfaceTensionMinimization} that
\begin{itemize}
    \item On $\mathcal{F}_{\hat{\mathcal{S}}}$ we have $\nabla h(u,v) = (s_1, s_2)$, $J\nabla g(u,v) = (x_1, x_2)$, see \eqref{eq:grads_of_f_and_g}.
    \item $h$ can be affinely extended to each component of $\mathcal{H} \setminus \mathcal{F}_{\hat{\mathcal{S}}}$ resulting in the extension $\hat{h} \in C^1(\mathcal{H})$ which satisfies the hexagon boundary conditions \eqref{eq:hex_boundary_height}, see Lemma~\ref{lem:extension_of_h}.
    \item $g$ can be extended to an admissible extension $\hat{g}: \mathcal{H} \to \mathbb{R}$, see Proposition~\ref{prop:generalized_Euler_Lagrange_suff_cond} and Definition~\ref{def:admissible_extension_and_boundary_condition}. This construction is the exact same as presented in Section~\ref{sec:05_SurfaceTensionMinimization} and still allows the same functional freedom, see Remark~\ref{rem:g_extension_non_unique}.
\end{itemize}

\begin{figure}[h]
\centering
\fontsize{10pt}{12pt}\selectfont
\def\svgwidth{\linewidth}
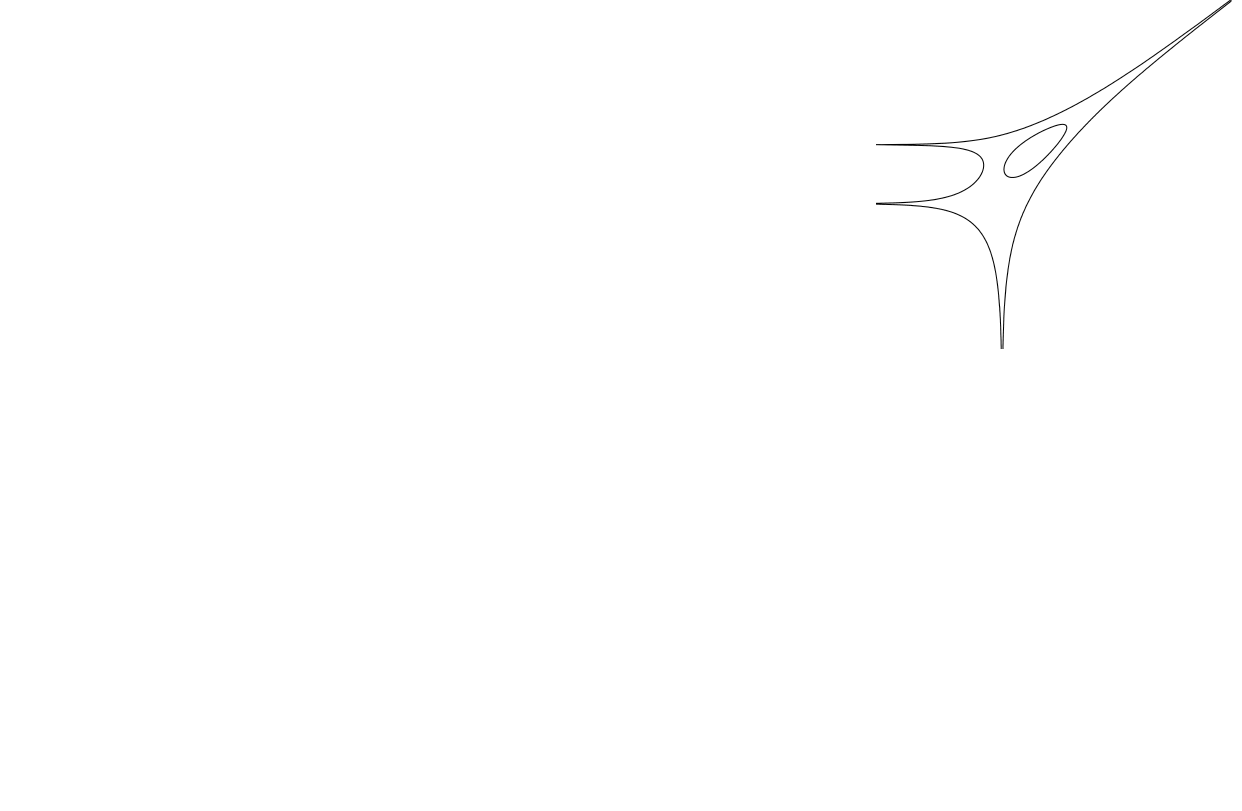
\caption{Double cover $\hat{\calR}_+$ with branch point at $z=0$ along with $\calR_+$, its factorization by $\pi z = -z$. The diffeomorphisms $\mathcal{F}$ is now defined on $\hat{\calR}_+$, while $\mathcal{A}, \Delta$ are maps from $\calR_+$. Their compositions given by gradients of the functions $g, h, \rho, \sigma$. Admissible angles $\alpha^\pm$ in green and $\beta^\pm$ in red and $\gamma^\pm$ in blue. This is an example with $n = 2$ and some angles being repeated thus leading to poles of order 2. Note that $\calR$ that defines the weights is of genus $1$ but due to the double cover there are two gas bubbles that appear in $\mathcal{F}_{\hat{\mathcal{S}}}$. We have applied a linear transformation to the dimer domain $\mathcal{H}$ to obtain the regular hexagon shape. All pictures here are real results of computation and not just abstract representations. In particular the $z$ coordinate in the picture of $\hat{\mathcal{R}}_+$ defines the conformal structure of the Riemann surface.}
\label{fig:all_maps_and_zeros_hex}
\end{figure}

Finally we obtain that the constructed extended functions $h, \hat{g}$ are indeed the minimizers of the surface tension and magnetic tension functionals. In particular $h$ is the limiting dimer height function on the hexagon.
\begin{theorem}
    \label{thm:h_is_height_fct_hex}
    Let $\mathcal{S} = \cubr{\calR, \mathcal{T}}$ be Harnack data with point $P_0 \in \calR_+$ such that $d\zeta_3$ on the double cover $\hat{\calR}$ which is two copies of $\calR$ glued along $[P_0, \tau P_0]$ satisfies $d\zeta_3(P_0) = 0$. We consider the dimer model on the hexagon with weights defined by $\mathcal{S}$.

    Let $h_b$ be the hexagon boundary conditions. Then the function $\hat{h}$ is the minimizer of
    \[ I_\sigma(\varphi) = \int_{\mathcal{H}} \sigma(\nabla \varphi) \]
    over the space $\Lip_{\Delta_\mathcal{S}}(\mathcal{H}, h_b)$ and thus the limiting dimer height function on the hexagon.

    Furtermore if $\hat{g}:\mathcal{H} \to \mathbb{R}$ is an admissible extension of $g : \mathcal{F}_{\hat{\mathcal{S}}} \to \mathbb{R}$ with boundary conditions $g_b$, then $\hat{g}$ is a minimizer of the \emph{magnetic tension functional}
    \[ I_\rho(\varphi) = \int_{\mathcal{H}} \rho(J \nabla \varphi) \]
    over the Sobolev space $W^{1,2}(\mathcal{H}, \mathbb{R})$ restricted to functions with boundary conditions $g_b$.
\end{theorem}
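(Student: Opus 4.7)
The plan is to follow the same variational strategy as in the Aztec case (Theorem~\ref{thm:h_is_height_function}), exploiting the fact that, once we have the diffeomorphism $\mathcal{F}: \hat{\mathcal{R}}_+^\circ \to \mathcal{F}_{\hat{\mathcal{S}}}$ and the admissibility condition $d\zeta_3(P_0)=0$, every geometric input we used on the Aztec diamond is available on the hexagon as well. In particular, the bullet points preceding the theorem statement establish: the identities $\nabla h = (s_1,s_2)$ and $J\nabla g = (x_1,x_2)$ on $\mathcal{F}_{\hat{\mathcal{S}}}$, the decomposition of $\mathcal{H}$ into liquid, frozen, quasi-frozen and gas regions analogous to \eqref{eq:decomposition_into_frozen_and_gas_regions}, the full classification of tangential zeros of $d\zeta_{(u,v)}$, the parallelity property, and divergence-freeness of both the amoeba field $\mathcal{A}\circ\mathcal{F}^{-1}$ and its tangential extension. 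Once these are in hand, we can invoke Proposition~\ref{prop:generalized_Euler_Lagrange_suff_cond} verbatim: it is a purely variational statement about $\sigma, \Delta_\mathcal{S}$ and a Lipschitz domain, so it applies to $\mathcal{H}$ with the hexagon boundary conditions \eqref{eq:hex_boundary_height}.

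The second step is to construct the admissible extension $\hat g$ region by region, mirroring Section~\ref{sec:05_SurfaceTensionMinimization}. On each frozen or quasi-frozen region we set $\hat g$ to be the tangential linear extension of $g$ along the oriented tangent rays from the arctic curve, exactly as in \eqref{eq:g_extension_frozen_zone_F}--\eqref{eq:g_extension_frozen_zone_Q}. Corollary~\ref{cor:tangential_extension_gradient} guarantees $J\nabla\hat g \in \mathcal{A}_F \subset \partial\sigma(\nabla h)$. The $L^2$ bound must be re-checked because the local residue structure at the hexagonal train-track points $\alpha_i^\pm,\beta_j^\pm,\gamma_k^\pm$ differs from the Aztec case; however, the argument in Lemma~\ref{lem:g_extension_on_frozen} is essentially local and depends only on the logarithmic blow-up $\re\zeta_\ell = \log r + \mathcal{O}(1)$ at a pole combined with the estimate $d\alpha = dr \cdot \mathcal{O}(1)$ on the angle of tangent rays, which is supplied by Proposition~\ref{prop:parallelity_prop} and therefore survives unchanged.

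On each gas region $G_i$ bounded by a closed curve with four cusps $c_1,\dots,c_4$, the argument from Section~\ref{sec:05_SurfaceTensionMinimization} carries over. We distinguish positive and negative arcs of $\hat X_i$ by the sign of $f''_q(Q)$, which by \eqref{eq:second_der_as_Wronskians} is $W(f_1,f_2,f_3)/W(f_1,f_2)$ and hence still changes sign exactly at cusps (Lemma~\ref{lemma:signs_of_fq(Q)} used no Aztec-specific information). Define $g_1,g_3$ by linear extension along tangent rays emanating from two opposite arcs; the positivity statements of Lemma~\ref{lemma:signs_of_dZetaQ_Integral} and the computation \eqref{eq:g-g_1_as_integral} go through identically, yielding $g_j \geq g$ on the complementary boundary arcs and equality on the arcs of origin. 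Taking $\hat g = \min(g_1,g_3)$ on $G_i$ produces a Lipschitz function with $J\nabla\hat g\in\mathcal{A}_i\subset\partial\sigma(\nabla h)$ almost everywhere and matching boundary values, hence a global admissible extension across all of $\mathcal{H}$. Applying Proposition~\ref{prop:generalized_Euler_Lagrange_suff_cond} then yields that $\hat h$ minimizes $I_\sigma$ in $\Lip_{\Delta_\mathcal{S}}(\mathcal{H},h_b)$, and combining with the convergence theorem of \cite{BBS} identifies $\hat h$ as the limit shape.

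The magnetic tension claim is then a direct consequence as in Theorem~\ref{thm:g_is_magnetic_tension_minimizer}: the Euler--Lagrange equation $\Div(J\nabla\rho(J\nabla\hat g))=0$ holds on $\mathcal{F}_{\hat{\mathcal{S}}}$ by Corollary~\ref{cor:Euler_lagrange_h_and_g} (divergence-freeness of $(x_1,x_2)$), and trivially on every complementary region since $\nabla\rho$ is locally constant there; convexity of $\rho$ then upgrades this to global minimality. The main obstacle in executing the plan is the verification that the double-cover construction does not introduce new positivity failures in the gas-region argument: concretely, one must check that the involution $\pi$ does not create ``phantom'' cusps or invalidate the sign pattern of $d\zeta_{(u,v)}$ on a gas bubble $\hat X_i$ (in particular for bubbles that are not $\pi$-invariant but come in $\pi$-paired dyads projecting to a single bubble on $\mathcal{R}$). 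This is where admissibility $d\zeta_3(P_0)=0$ plays its essential role: it ensures that the branch point does not introduce an extra free zero, so the zero count leading to the four-cusp classification of gas regions is preserved.
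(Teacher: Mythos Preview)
Your proposal is correct and takes essentially the same approach as the paper: the paper's own argument for this theorem consists entirely of the bullet points you cite, together with the assertion that the construction of the admissible extension $\hat g$ is ``the exact same as presented in Section~\ref{sec:05_SurfaceTensionMinimization}.'' Your write-up is in fact a more careful unpacking of what the paper leaves implicit, including the local $L^2$ check near the hexagonal train-track points and the observation that admissibility $d\zeta_3(P_0)=0$ is what keeps the free-zero count at two so that every gas bubble still has exactly four cusps.
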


\begin{rem}
\label{rem:skewed_hexagon}
    For simplicity we provided an exposition for the regular hexagon here. This approach, however, extends to all hexagons of the form
    \[ \mathcal{H} = \cubr{(u,v) \;|\; |u| \leq a, |v| \leq b, |u + v| \leq c }. \]
    The differential $d\zeta_3$ then has the residues
    \[ \res_{\alpha^-_i} d\zeta_3 = -\res_{\alpha^+_i} d\zeta_3 = a, \; \res_{\beta^+_i} d\zeta_3 = -\res_{\beta^-_j} d\zeta_3 = b, \; \res_{\gamma^-_k} d\zeta_3 = -\res_{\gamma^+_k} d\zeta_3 = c. \]
    See Figure~\ref{fig:configurations_on_skewed_hex} for an example.
\end{rem}

\begin{figure}
\centering
\begin{subfigure}[p]{.49\textwidth}
    \centering
    \includegraphics[width=\linewidth]{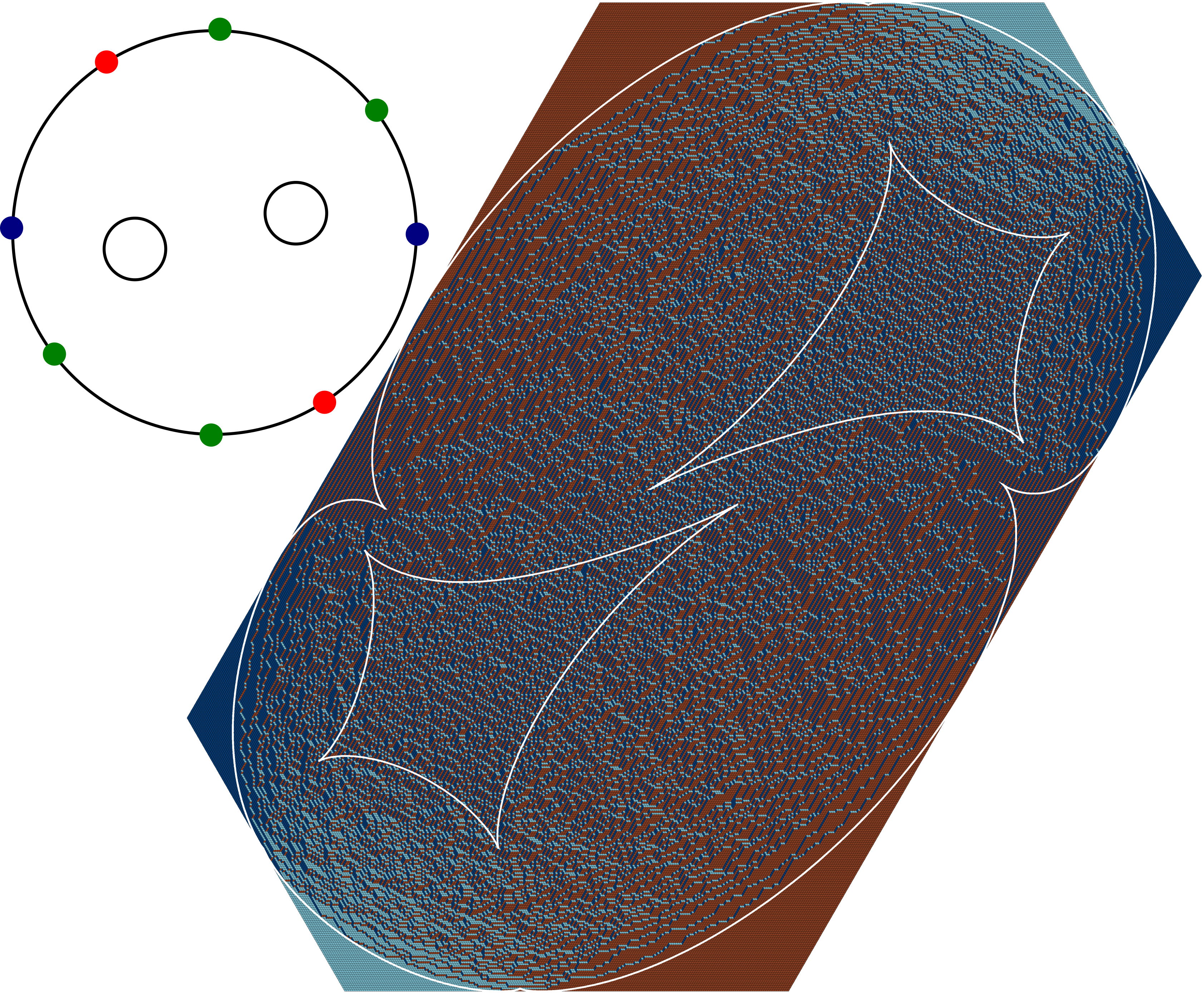}
\end{subfigure}
\begin{subfigure}[p]{.49\textwidth}
    \centering
    \includegraphics[width=\linewidth]{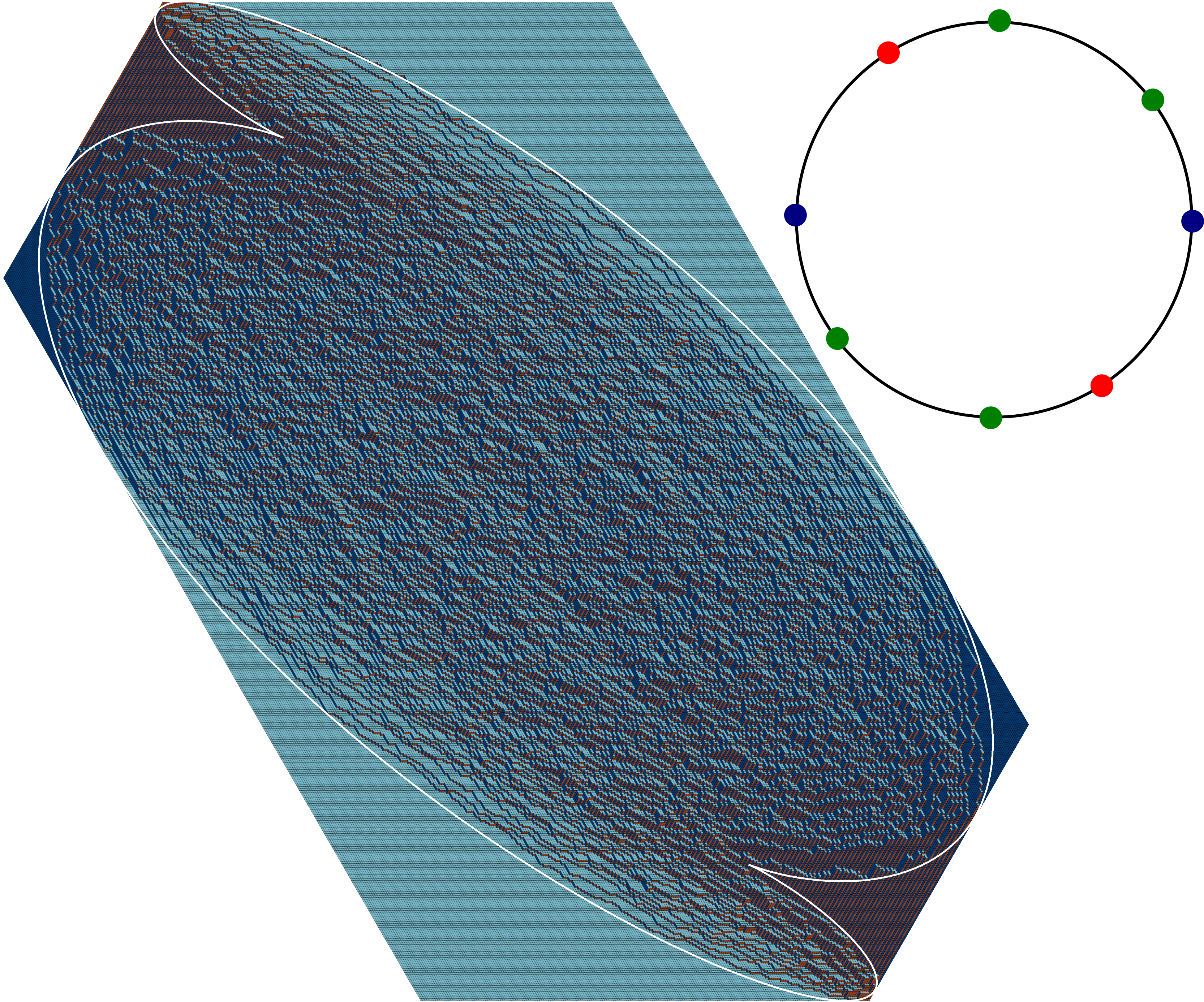}
\end{subfigure}
\caption{Two samples on different skewed hexagon domains along with calculated arctic curves. \textbf{Left}: $\hat{\calR}$ has genus $\hat{g} = 2$ and there are double poles at $\alpha^\pm$, $\beta^\pm$. \textbf{Right}: $\hat{\calR}$ has genus $\hat{g} = 0$ and there are double poles at $\alpha^\pm$, $\beta^\pm$.}
\label{fig:configurations_on_skewed_hex}
\end{figure}

\subsection{Unramified Cover}

Let $\hat{\calR}$ be an M-curve of genus $\hat{g} = 2g - 1$, $g \geq 1$ with anti-holomorphic involution $\tau$ fixing the set of real ovals $\hat{X} = \hat{X}_0 \cup \hat{X}_1 \cup \ldots \cup \hat{X}_{\hat{g}}$ which separates $\hat{\calR}$ into connected components $\hat{\calR}_+, \hat{\calR}_-$. Further let $\pi$ be a holomorphic involution that maps $\hat{X}_0$ to $\hat{X}_0$, $\hat{X}_1$ to $\hat{X}_1$ and $\hat{X}_n$ to $\hat{X}_{n+g}, \; 2\leq n\leq g$. 
We have an unramified double cover $p: \hat{\calR} \to \calR = \hat{\calR} / \pi$ of the M-curve $\calR$ of genus $g$ with real ovals $X_k=p(\hat{X}_k), k=0, \ldots, g$ and ramification cycles $X_0, X_1$.

We denote the real ovals of $\calR $ such that $p^{-1}X_0 = \hat{X}_0$, $p^{-1}X_1 = \hat{X}_1$ and $p^{-1}X_n = \hat{X}_n \cup \hat{X}_{n+g}$, $2\leq n\leq g$. $\hat{\calR}$ is an unramified covering of $\calR$ with ramified cycles $X_0, X_1$.

We assume that $\mathcal{S} = \cubr{\calR, \cubr{\alpha_i, \beta_j, \gamma_l}}$ with $i,j,l \leq n$ is a Harnack data, that is that the chosen train track angles satisfy the clustering condition \eqref{eq:clustering_hex} and denote their preimages as $\cubr{\alpha_i^\pm, \beta_j^\pm, \gamma_l^\pm} \subset \hat{X}_0$ such that they again satisfy \eqref{eq:cyclic_order_hex_double_cover}.

\begin{definition}
    \label{def:admissible_harnack_data_unramified}
    We call the Harnack data $\hat{\mathcal{S}} = \cubr{\hat{\calR}, \cubr{\alpha_i^\pm, \beta_j^\pm, \gamma_l^\pm}}$ admissible if there exist three distinct pairs of points $(R_i, \pi R_i), i=1,2,3$ on $\hat{X}_1$ such that $d\zeta_3$ defined as the meromorphic differential with imaginary periods and residues as in Table~\ref{tab:xi_residues_hex} satisfies
    \[ \int_{R_i}^{\pi R_i} d\zeta_3 = 0. \]
\end{definition}

For admissible $\hat{\mathcal{S}}$ we again consider the family of differentials 
\[ d\zeta = d\zeta_{(u,v)} = -u d\zeta_2 + v d\zeta_1 + d\zeta_3. \]

\begin{rem}
\label{rem:uniformization_calR_unramified}
    We choose a convenient uniformization by mapping $\hat{\calR}_+$ to $\mathbb{D}$ with a removed central round disk bounded by $\hat{X}_1$ and centrally symmetric pairs of round circles $\hat{X}_n = -\hat{X}_{n+g}$, $2\leq n\leq g$. See Remarks~\ref{rem:uniformization_calR} and \ref{rem:uniformization_of_hex_curve}.

    Under this uniformization we have $\pi z = -z$ and thus centrally symmetric train track parameters $\hat{\alpha}_i^- = -\hat{\alpha}_i^+$. We fix the mathematically positive orientation on $\hat{\calR}_+$, thus $\hat{X}_1$ is oriented clockwise and $\partial G_1$ counterclockwise. For considerations on $\hat{X}_1 = r S^1$ we fix the polar coordinate $z = r e^{-i\theta}$, $\theta \in [0, 2\pi)$. We then have
    \[ d\zeta(z) = f(z) dz = -i z f(z) d\theta =: \tilde{f}(\theta) d\theta \]
with real-valued $\tilde{f}$ on $\hat{X}_1$. Recall that $d\zeta_3 = d\zeta_{(0,0)}$ and we denote $\tilde{f}_3$ as the corresponding function on $\hat{X}_1$.
\end{rem}

The critical points of the integral $\int_Q^{\pi Q} d\zeta_3$ are exactly the zeros of $d\zeta_3$:

\begin{equation}
\label{eq:dZeta_3_zeros_are_critical_points_of_ell_G1}
    \frac{d}{d \theta} \br{\int_\theta^{\theta + \pi} \tilde{f}_3(\theta) d\theta} = \tilde{f}_3(\theta + \pi) - \tilde{f}_3(\theta) = -2\tilde{f}_3(\theta).
\end{equation}
Hence we obtain
\begin{corollary}
    \label{cor:admissible_dZeta3_has_6_zeros}
    For admissible data, $d\zeta_3$ has $6$ zeros on $\hat{X}_1$.
\end{corollary}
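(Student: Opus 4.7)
The plan is to obtain matching lower and upper bounds on the number of zeros of $d\zeta_3$ on $\hat{X}_1$ (counted with multiplicity). The lower bound will come from admissibility via a Rolle-type argument applied to the function $G(\theta):=\int_\theta^{\theta+\pi}\tilde f_3(\theta')\,d\theta'$ from \eqref{eq:dZeta_3_zeros_are_critical_points_of_ell_G1}, while the upper bound will come from Riemann--Roch combined with forced zeros on the remaining real ovals.

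For the lower bound, I would first note that the period $\oint_{\hat X_1}d\zeta_3$ vanishes: it is purely imaginary by the normalization of $d\zeta_3$, and purely real because $\tau^* d\zeta_3=\overline{d\zeta_3}$ with $\hat X_1\subset\hat X$. Consequently $G(\theta+\pi)=-G(\theta)$, so any zero of $G$ on the circle $\hat X_1$ is paired with its $\pi$-translate. Definition~\ref{def:admissible_harnack_data_unramified} supplies three distinct pairs $(R_i,\pi R_i)$ at which $G$ vanishes; the distinctness of the pairs forces the six associated points $R_1,\pi R_1,R_2,\pi R_2,R_3,\pi R_3$ to be pairwise different (otherwise two of the pairs would coincide). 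Rolle's theorem on the circle, applied cluster-by-cluster and accounting for multiplicities, then produces at least six critical points of $G$, and by \eqref{eq:dZeta_3_zeros_are_critical_points_of_ell_G1} these are precisely the zeros of $d\zeta_3$ on $\hat X_1$.

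For the upper bound, I would use that $d\zeta_3$ is meromorphic on $\hat\calR$ of genus $\hat g=2g-1$ with $6n$ simple poles at the lifts of the train-track parameters, hence it has exactly $6n+2\hat g-2=6n+4g-4$ zeros. The same vanishing-period argument applied to each inner real oval $\hat X_k$ with $k\ge 2$ supplies at least two zeros on each, contributing $2(\hat g-1)=4g-4$ zeros. On $\hat X_0$, the residues of $d\zeta_3$ recorded in Table~\ref{tab:xi_residues_hex} are constant within each of the six clusters of size $n$ (and alternate in sign between clusters following the cyclic order \eqref{eq:cyclic_order_hex_double_cover}); between two consecutive simple poles inside a cluster the real function $\tilde f_3$ goes from $\pm\infty$ to $\mp\infty$, so the intermediate value theorem forces a zero, giving at least $6(n-1)$ zeros on $\hat X_0$. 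Subtracting, at most $6n+4g-4-(4g-4)-(6n-6)=6$ zeros can remain on $\hat X_1$, and combined with the lower bound this forces exactly six. The only genuinely delicate step is verifying that the six admissibility-zeros of $G$ on $\hat X_1$ are distinct and not hiding multiplicity coincidences that would weaken Rolle; the rest is straightforward bookkeeping.
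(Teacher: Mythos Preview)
Your proposal is correct and follows essentially the same approach as the paper. The paper's own proof is a two-line sketch (``at least $6$ zeros by Definition~\ref{def:admissible_harnack_data_unramified}'' and ``no more than $6$ due to a simple counting of poles''), and you have simply unpacked both halves: the lower bound via Rolle applied to the function in \eqref{eq:dZeta_3_zeros_are_critical_points_of_ell_G1}, and the upper bound via the global zero count $6n+2\hat g-2$ minus the forced zeros on $\hat X_2,\ldots,\hat X_{\hat g}$ and on $\hat X_0$. Your worry at the end about ``multiplicity coincidences'' is unnecessary: once the six points $R_1,\pi R_1,\ldots,R_3,\pi R_3$ are pairwise distinct (which you correctly deduce from the distinctness of the unordered pairs and the fixed-point-freeness of $\pi$), Rolle on each of the six resulting arcs gives six critical points regardless of the orders of vanishing of $G$ at the $R_i$.
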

\begin{proof}
    It must have at least $6$ zeros by Definition~\ref{def:admissible_harnack_data_unramified} but also no more than $6$ due to a simple counting of poles.
\end{proof}

\begin{lemma}
    \label{lem:admissible_unramified_exists}
    The set of admissible Harnack data is not empty for any genus $g$.
\end{lemma}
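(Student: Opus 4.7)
The plan is to recast admissibility as the existence of six transverse zeros of a certain antiperiodic function on $\hat{X}_1$, observe that this is an open condition in the moduli space of double covers, and then produce one admissible datum for each genus $g$, proceeding by an explicit base case $g = 1$ and a symmetric plumbing deformation for $g \ge 2$.

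The first step is a reformulation. Since $\pi^* d\zeta_3 = -d\zeta_3$, the pullback $\tilde{f}_3$ on $\hat{X}_1$ is $\pi$-antiperiodic, and so is the integrated function $F(\theta) = \int_\theta^{\theta+\pi} \tilde{f}_3\, d\theta'$ appearing in \eqref{eq:dZeta_3_zeros_are_critical_points_of_ell_G1}. Admissibility of $\hat{\mathcal{S}}$ is exactly the statement that $F$ has three antipodal pairs of zeros on $[0,2\pi)$, which combined with Corollary~\ref{cor:admissible_dZeta3_has_6_zeros} forces all three pairs to be transverse. Six transverse zeros of a smooth function on a circle is preserved under $C^0$-small perturbations, and $d\zeta_3$ depends continuously on the Harnack data, so admissibility is an open condition and it suffices to construct a single admissible example for each $g$.

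For the base case $g=1$ I would take $\hat{\calR} = \mathbb{C}/(2\omega_1\mathbb{Z} + 2\omega_3\mathbb{Z})$ with $\omega_1 \in \mathbb{R}$, $\omega_3 \in i\mathbb{R}$, involution $\pi z = z + \omega_1$, and real ovals $\hat{X}_0 = \{\im z = 0\}$, $\hat{X}_1 = \{\im z = |\omega_3|\}$. Place the six train track labels at positions $\{\pm a_j\}_{j=1,2,3}$ on $\hat{X}_0$ compatible with the clustering \eqref{eq:cyclic_order_hex_double_cover}, so that the data are also invariant under the extra reflection $z \mapsto -z$. The differential $d\zeta_3$, written explicitly as a linear combination of shifted Weierstrass $\zeta$ functions with the residue pattern of Table~\ref{tab:xi_residues_hex}, is then both $\pi$-antisymmetric and odd under $z \mapsto -z$; its restriction $\tilde{f}_3$ to $\hat{X}_1$ inherits both symmetries, and an explicit computation of its leading Fourier modes in $\theta$ shows that it has six simple zeros whose critical values of $F$ interlace in sign, giving $F$ exactly six zeros. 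For $g \ge 2$ I would then perform a plumbing deformation: glue in $g-1$ pairs of small handles on $\hat{\calR}$ located away from $\hat{X}_0 \cup \hat{X}_1$ and interchanged by $\pi$. Standard plumbing estimates for differentials of the third kind imply that $d\zeta_3$, and hence $F|_{\hat{X}_1}$, converges uniformly to the genus-one model as the necks pinch, so by openness admissibility persists in a whole neighborhood of the degenerate locus.

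The principal obstacle is the base-case verification: $\pi$-antiperiodicity alone guarantees only two zeros of $F$, so one must use the extra reflection symmetry to pin down the sign pattern of all six critical values. This reduces to evaluating six real half-period integrals of Weierstrass $\zeta$ functions and checking their signs, which is computationally straightforward (and can independently be confirmed numerically in the Schottky uniformization of Section~\ref{sec:10_SchottkyUniformization}) but is the main analytical input required by the argument.
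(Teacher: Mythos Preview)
Your overall architecture matches the paper's: establish an explicit admissible datum for $g=1$, then use openness of the condition to perturb into higher genus. The difference, and the reason your base case remains incomplete, is the amount of symmetry you impose.

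You equip the genus-one curve with only the reflection $z\mapsto -z$ on top of the covering involution $\pi$. As you correctly observe, $\pi$-antiperiodicity of $F$ alone gives just two zeros, and the extra reflection does not by itself force six; you are then left with a genuine computation of signs of half-period integrals that you do not carry out. The paper sidesteps this entirely by working in the Schottky model of Remark~\ref{rem:uniformization_calR_unramified} and placing the six train-track labels at the sixth roots of unity. This produces the stronger symmetry $d\zeta_3(e^{i\pi/3}z)=-d\zeta_3(z)$, so the restriction $\tilde f_3$ to the centered circle $\hat X_1$ is antiperiodic with period $\pi/3$, and a one-line telescoping shows that $F(\theta)=\int_\theta^{\theta+\pi}\tilde f_3$ inherits the same $\pi/3$-antiperiodicity. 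Hence $F$ has six zeros on $\hat X_1$ by symmetry alone, with no sign computation whatsoever.

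So the gap in your argument is precisely the unexecuted sign check, and the fix is not to compute it but to impose a $\mathbb{Z}/6$-symmetric configuration rather than a $\mathbb{Z}/2$-symmetric one. Your perturbation step to reach $g\ge 2$ is the same idea as the paper's (``add small symmetric handles''), though the paper phrases it directly in the Schottky picture rather than via plumbing.
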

\begin{proof}
    Consider the case of $\hat{g} = 1$ with the uniformization from Remark~\ref{rem:uniformization_calR_unramified}. Pick symmetric angles 
    \[ \hat{\alpha}^\pm_i = \pm 1, \; \hat{\beta}^\pm_j = \pm e^{i\frac{\pi}{3}}, \hat{\gamma}^\pm_l = \pm e^{i\frac{2\pi}{3}}. \]
    Then $d\zeta_3$ has the symmetry
    \[ -d\zeta_3(z) = d\zeta_3(e^{i \frac{\pi}{3}} z). \]
    We thus have $3$ points $R_i \in \hat{X}_1$ with
    \[ \int_{R_1}^{\pi R_1} d\zeta_3 = \int_{R_2}^{\pi R_2} d\zeta_3 = \int_{R_3}^{\pi R_3} d\zeta_3 = 0.\]
    Small perturbations of the data preserve this property. This includes perturbations adding small holes in $\hat{\calR}_+$ thus extending existence to $g>1$.
\end{proof}
Definition~\ref{def:admissible_harnack_data_unramified} is an open condition and thus minor perturbations of admissible data remain admissible. See subsection~\ref{subsec:parameter_count_hex} for a discussion on parameter counts.

We now work towards a definition of the KO-map in the unramified case.
\begin{definition}
    \label{def:regions_of_hex_unramified}
    We say that $(u,v) \in \mathcal{H}$ is in
    \begin{itemize}
        \item the liquid region $\mathcal{F}_{\hat{\mathcal{S}}}$ if $d\zeta$ has $4$ distinct zeros on $\hat{X}_1$ and a complex zero $P \in \hat{\calR}_+^\circ$.
        \item a frozen region $F_\ell$ if $d\zeta$ has $4$ distinct zeros on $\hat{X}_1$ and two zeros on the arc $\ell \subset \hat{X}_0$ bounded by two neighboring clusters of train track angles. 
        \item a quasi-frozen region $Q_\ell$ if $d\zeta$ has $4$ distinct zeros on $\hat{X}_1$ and $3$ zeros on the arc $\ell \subset \hat{X}_0$ that is bounded by two angles of the same type. 
        \item the gas region $G_n,\; n \geq 2$ if $d\zeta$ has $4$ zeros on $\hat{X}_1$ and $4$ distinct zeros on $\hat{X}_n$. 
        \item the gas region $G_1$ if $d\zeta$ has $6$ distinct zeros on $\hat{X}_1$. 
    \end{itemize}
    We denote by $\mathscr{F}, \mathscr{Q}, \mathscr{G}$ the sets of all frozen, quasi-frozen and gas regions respectively.
\end{definition}
Because we require all zeros to be distinct in these definitions, all these regions are open sets.

In general $d\zeta$ has $6 n + 2\hat{g} - 2$ zeros of which $2\hat{g} - 2$ are located in distinct pairs on $\hat{X}_2, \ldots, \hat{X}_{\hat{g}}$. Furthermore the residue of $d\zeta$ at angles within each cluster is equal and thus there is a zero between any two points of the same cluster on $\hat{X}_0$ for a total of $6(n-1)$ distinct zeros. The remaining $6$ zeros determine the region we are in.

The arctic curves are then defined as follows:
\begin{itemize}
    \item Let $F \in \mathscr{F} \cup \mathscr{Q}$ be a frozen or quasi-frozen region corresponding to the arc $\ell$. The curve $\overline{\mathcal{F}_{\hat{\mathcal{S}}}} \cap \overline{F}$ is characterized by the fact that $d\zeta$ has $4$ distinct zeros on $\hat{X}_1$ and a \emph{double zero} on $\ell$.
    \item For $(u,v) \in \overline{\mathcal{F}_{\hat{\mathcal{S}}}} \cap \overline{G_n}$, $n\geq 2$ the differential $d\zeta$ has $4$ distinct zeros on $\hat{X}_1$ and a \emph{double zero} on $\hat{X}_n$.
    \item For $(u,v) \in \overline{\mathcal{F}_{\hat{\mathcal{S}}}} \cap \overline{G_1}$ the differential $d\zeta$ has a \emph{double zero} on $\hat{X}_1$ as $4$ more distinct zeros on $\hat{X}_1$.
\end{itemize}
\begin{proposition}
    \label{prop:KO_map_injective_unramified}
    The map $\Omega: \mathcal{F}_{\hat{\mathcal{S}}} \to \hat{\calR}_+^\circ$ that maps $(u,v)$ to the unique zero $P\in \hat{\calR}_+^\circ$ of $d\zeta_{(u,v)}$ is a diffeomorphism.
\end{proposition}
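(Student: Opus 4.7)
The plan is to mirror the proof of Proposition~\ref{prop:calF_is_diffeomorphism} for the Aztec case, with one essential adaptation: since $d\zeta_1$ and $d\zeta_2$ are pullbacks via the unramified double cover $p:\hat{\calR}\to\calR$ (only $d\zeta_3$ is genuinely defined on $\hat{\calR}$), every residue/zero-count for a real combination $a\,d\zeta_1+b\,d\zeta_2$ must be executed on the base curve $\calR$ of genus $g$, where such a differential has $3n$ simple poles on $X_0$ with residues $(a,b,-a-b)$ at the three clusters $(\alpha,\beta,\gamma)$.

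First I would establish injectivity. If $\Omega(u_1,v_1)=\Omega(u_2,v_2)=P\in\hat{\calR}_+^\circ$, subtracting the defining equations yields that $d\xi := -(u_1-u_2)\,d\zeta_2+(v_1-v_2)\,d\zeta_1$ vanishes at $P$ and, being $\pi$-invariant, descends to a differential $d\eta$ on $\calR$ with a zero at $p(P)\in\calR_+^\circ$. I would then argue that any non-trivial such $d\eta$ has all of its zeros on the real ovals $X$: of its $3n+2g-2$ total zeros, $2g$ are forced on $X_1,\ldots,X_g$ and $3(n-1)$ between same-type angles on $X_0$, leaving exactly one free zero; a short case analysis on the signs of the residues $(v_1-v_2,\,-(u_1-u_2),\,(u_1-u_2)-(v_1-v_2))$, which sum to zero and are not all zero, shows that some pair of adjacent clusters on $X_0$ shares a sign (or has a pole adjacent to a regular cluster point), forcing the last zero onto the connecting arc of $X_0$. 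This contradicts $p(P)\in\calR_+^\circ$.

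Next I would establish the local diffeomorphism property. Writing $d\zeta_i = f_i\,dz$ in a local coordinate at $P$, the equation $d\zeta_{(u,v)}(P)=0$ decouples into the real $2\times 2$ linear system
\[
\begin{pmatrix}\re f_2 & -\re f_1\\ \im f_2 & -\im f_1\end{pmatrix}\begin{pmatrix}u\\ v\end{pmatrix}=\begin{pmatrix}\re f_3\\ \im f_3\end{pmatrix},
\]
with Jacobian determinant $\im(\overline{f_1}f_2)(P)$, exactly as in Proposition~\ref{prop:KO-map}. If this determinant vanished, some non-trivial real combination $a\,d\zeta_1+b\,d\zeta_2$ would have a zero at the interior point $P$, contradicting the preceding step. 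The implicit function theorem then gives that $\Omega$ is smooth with invertible Jacobian.

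The hard part will be surjectivity, which I would handle by showing that $\Omega(\mathcal{F}_{\hat{\mathcal{S}}})$ is both open and closed in the connected set $\hat{\calR}_+^\circ$. Openness is immediate from the local diffeomorphism property. For closedness, given $P_n=\Omega(u_n,v_n)\to P\in\hat{\calR}_+^\circ$, compactness of $\overline{\mathcal{H}}$ furnishes a subsequential limit $(u,v)\in\overline{\mathcal{F}_{\hat{\mathcal{S}}}}$ with $d\zeta_{(u,v)}(P)=0$ by continuity. Since $P$ is strictly interior, the conjugate pair $(P,\tau P)$ persists in $\hat{\calR}_+^\circ$, which rules out the arctic strata along which the complex pair collides on a real oval. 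The delicate point, and the main obstacle, is to exclude the remaining boundary strata where two of the four zeros of $d\zeta_{(u_n,v_n)}$ on $\hat{X}_1$ collide while $(P_n,\tau P_n)$ stays in $\hat{\calR}_+^\circ$; I would address this by combining the alternating residue structure of $d\zeta_{(u,v)}$ on $\hat{X}_0$ for $(u,v)\in\mathcal{H}$ (which rigidly constrains how the six free zeros may redistribute according to Definition~\ref{def:regions_of_hex_unramified}) with continuity of the zero set in Hausdorff distance, forcing the four zeros on $\hat{X}_1$ to remain distinct in the limit. Clopenness of the image in the connected surface $\hat{\calR}_+^\circ$ then yields $\Omega(\mathcal{F}_{\hat{\mathcal{S}}})=\hat{\calR}_+^\circ$.
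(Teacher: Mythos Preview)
Your injectivity argument is the same idea as the paper's, carried out on the base curve $\calR$ instead of on $\hat{\calR}$. The paper stays upstairs: since $d\xi=d\zeta_{(u_1,v_1)}-d\zeta_{(u_2,v_2)}$ is $\pi$-invariant it vanishes at the four interior points $P,\tau P,\pi P,\tau\pi P$; it also has at least four zeros on $\hat{X}_1$ (two on $X_1$ downstairs, doubled by the cover), two on each $\hat{X}_n$ with $n\ge 2$, the $6(n-1)$ within-cluster zeros on $\hat{X}_0$, and at least two between clusters since the six residues $(a,b,-a-b,a,b,-a-b)$ can never strictly alternate. This already overcounts the available $6n+2\hat g-2$ zeros. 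Either route is valid; the upstairs count is marginally cleaner because the extra slack from $\hat{X}_1$ makes the case split you flag (when one of $a,b,a+b$ vanishes and a cluster loses its poles) unnecessary.

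The paper writes out only injectivity and leaves the rest to the Aztec template. Your clopen approach to surjectivity has a genuine gap. Closedness requires excluding limits $(u,v)$ at which two of the four zeros on $\hat{X}_1$ coalesce while the complex pair $(P,\tau P)$ survives in $\hat{\calR}_+^\circ$, and your appeal to the alternating residue structure on $\hat{X}_0$ together with Hausdorff continuity of the zero set says nothing about this, since the feared degeneration takes place on $\hat{X}_1$, not on $\hat{X}_0$. What actually controls the $\hat{X}_1$ zeros is the admissibility hypothesis you have not used: for every $(u,v)$ one has $\int_{R_i}^{\pi R_i}d\zeta_{(u,v)}=0$ for $i=1,2,3$ (true for $d\zeta_3$ by Definition~\ref{def:admissible_harnack_data_unramified}, and for $d\zeta_1,d\zeta_2$ because the path from $R_i$ to $\pi R_i$ projects to a closed loop on $\calR$ and the period is both real and imaginary). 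A Rolle-type argument on the primitive of $d\zeta$ along $\hat{X}_1$ then shows that three antipodal coincidences force at least four critical points, and feeding this lower bound into the global zero count with $P,\tau P$ present pins exactly four zeros on $\hat{X}_1$, hence $(u,v)\in\mathcal{F}_{\hat{\mathcal{S}}}$. That, not a continuity argument on $\hat{X}_0$, is the missing ingredient.
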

\begin{proof}
To show is that $\Omega$ is injective.
    Let $d\zeta_{(u_1, v_1)}$ and $d\zeta_{(u_2, v_2)}$ have the same complex zero $P \in \hat{\calR}_+^\circ$. Then $d\xi := d\zeta_{(u_1, v_1)} - d\zeta_{(u_2, v_2)}$ is an even differential vanishing at P, i.e. 
    \[ d\xi(P) = 0, \; \pi^*d\xi = d\xi.\]
    Therefore it also vanishes at $\tau P, \pi P$ and $\tau \pi P$ and thus has $4$ complex zeros. Note that $d\xi$ is defined on $\calR$. It must have at least $2$ zeros on $X_1$ and thus has at least $4$ zeros on $\hat{X}_1$ due to the covering structure. It also has $2$ zeros on each $\hat{X}_n$, $n\geq 2$ and since the signs of residues of the angle clusters are not alternating there at least $2$ zeros on $\hat{X}_0$ between clusters. Further we have $6(n-1)$ zeros coming from within the clusters. This yields a total of 
    \[2 (\hat{g} - 1) + 6(n - 1) + 4 + 4 = 2 \hat{g} + 6n + 2 \] 
    zeros which is $2$ too many as $d\xi$ has $6n$ poles giving us a contradiction.
\end{proof}

Repeating the same arguments as in Section~\ref{sec:04_GeometricProperties} we obtain the parallelity property like Proposition~\ref{prop:parallelity_prop}.
\begin{corollary}
    \label{cor:smooth+parallelity_unramified}
    The arctic curves are smooth everywhere but at finitely many points. The tangent line at $(u,v)$ on an arctic curve is parallel to the tangent at the point $\mathcal{A}(Q)$ of the amoeba where $Q$ is the double zero of $d\zeta_{(u,v)}$. 
\end{corollary}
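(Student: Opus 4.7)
The plan is to prove Corollary~\ref{cor:smooth+parallelity_unramified} as the unramified hexagonal analogue of Proposition~\ref{prop:parallelity_prop}: once the arctic curve is parametrized by the moving double zero on a real oval, the same Wronskian identity as in the Aztec case gives both smoothness away from a finite set and the parallelity to the amoeba tangent direction.

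First I would set up a local parametrization. By Definition~\ref{def:regions_of_hex_unramified}, each arctic curve consists of $(u,v)$ for which $d\zeta_{(u,v)}$ has a double zero $Q$ on some real oval $\hat{X}_i$ (possibly $\hat{X}_1$, $\hat{X}_0$, or an inner oval), with all other zeros simple and distinct. Fixing a real local coordinate $z$ around $Q$ and writing $d\zeta_k = f_k(z)\,dz$, the double-zero conditions $d\zeta_{(u,v)}(Q)=0$ and $d\,d\zeta_{(u,v)}(Q)=0$ are the linear system
\[
\begin{pmatrix}-f_2(Q)&f_1(Q)\\ -f_2'(Q)&f_1'(Q)\end{pmatrix}\!\!\begin{pmatrix}u\\v\end{pmatrix}=-\begin{pmatrix}f_3(Q)\\ f_3'(Q)\end{pmatrix},
\]
which, whenever $W(f_1,f_2)(Q)\neq 0$, is solved by the formulas of Proposition~\ref{prop:arctic}, namely $u=W(f_1,f_3)/W(f_1,f_2)(Q)$ and $v=W(f_2,f_3)/W(f_1,f_2)(Q)$. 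Differentiating exactly as in the proof of Proposition~\ref{prop:parallelity_prop} gives
\[
\bigl(u'(Q),v'(Q)\bigr)=\frac{W(f_1,f_2,f_3)(Q)}{W(f_1,f_2)(Q)^{2}}\,\bigl(f_1(Q),f_2(Q)\bigr).
\]
Since $d\zeta_1$ and $d\zeta_2$ are $\pi$-invariant, the pair $(f_1(Q),f_2(Q))$ descends to $\calR$ and is precisely the tangent direction of $\partial\mathcal{A}_\mathcal{S}$ at $\mathcal{A}(p(Q))$, proving the parallelity statement.

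The prerequisite is the unramified analogue of Lemma~\ref{lem:W(f_1,f_2)_not_zero_on_real_ovals}: $W(f_1,f_2)$ does not vanish on $\hat{X}$. I would prove this by a zero count. A vanishing at $P\in\hat{X}$ would yield a real, $\pi$-symmetric differential $d\xi=a\,d\zeta_1+b\,d\zeta_2$ with $6n$ simple poles and hence exactly $6n+2\hat{g}-2$ zeros, but forced to carry double zeros at $P$ and $\pi P$, at least two zeros on every other real oval, $6(n-1)$ zeros inside same-type clusters on $\hat{X}_0$, and, because no real choice of $(a,b)\neq 0$ makes the six cluster-residue signs $a,b,-(a+b),a,b,-(a+b)$ cyclically alternate, at least two further zeros on $\hat{X}_0$ between clusters. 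In the cases where $P$ lies on $\hat{X}_1$ or on an inner oval, the integral-zero constraint $\int_{\hat{X}_i}d\xi=0$ forces two extra simple sign-change zeros on that oval (a double zero does not flip sign), and these are distinct from their $\pi$-images by unramifiedness. A case-by-case tally of these contributions always exceeds the budget $6n+2\hat{g}-2$ by exactly two, giving the contradiction.

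Finally, smoothness follows because the parametrizing vector field $(u'(Q),v'(Q))=A(Q)(f_1(Q),f_2(Q))$ is real-analytic on $\hat{X}$, and its vanishing loci are the zeros of the meromorphic Wronskian $W(f_1,f_2,f_3)$ on the compact surface $\hat{\calR}$, hence finite; those isolated points are the cusps where the arctic curve fails to be smooth, while parallelity persists as a limiting statement. I expect the main obstacle to be the zero-count for $W(f_1,f_2)$: in the unramified case one must carefully combine $\pi$-symmetry with the integral-zero constraint on each oval, distinguishing whether $P$ lies on $\hat{X}_0$, on the ramification cycle $\hat{X}_1$, or on one of the paired ovals $\hat{X}_n,\hat{X}_{n+g}$, and in each case track down the two extra forced zeros needed to contradict the Riemann–Roch budget.
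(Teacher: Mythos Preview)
Your proposal is correct and follows exactly the approach the paper intends: the paper itself does not give a separate proof of this corollary but simply states that ``repeating the same arguments as in Section~\ref{sec:04_GeometricProperties}'' yields the result, and your write-up is precisely that repetition, carried out in detail with the appropriate unramified adaptations of Proposition~\ref{prop:arctic}, Proposition~\ref{prop:parallelity_prop}, and Lemma~\ref{lem:W(f_1,f_2)_not_zero_on_real_ovals}. If anything, your zero-count for the nonvanishing of $W(f_1,f_2)$ on $\hat{X}$ is more explicit than what the paper provides.
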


\begin{corollary}
    \label{cor:6cusps_unramified}
    The arctic curve $\partial G_1$ has $6$ cusps. For every other gas region the arctic curve $\partial G_n$, $n \geq 2$ has $4$ cusps.
\end{corollary}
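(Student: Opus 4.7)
The plan is to imitate the strategy of Section~\ref{sec:regions}: establish a topological lower bound on the cusp count using parallelity (Corollary~\ref{cor:smooth+parallelity_unramified}) together with the orientation reversal of $\mathcal F$, and match it to the upper bound supplied by the prescribed number of zeros of $d\zeta_{(u,v)}$ in Definition~\ref{def:regions_of_hex_unramified}.

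For a gas region $G_n$ with $n\geq 2$ the setup is exactly as in the Aztec case. The oval $\hat X_n$ is not fixed setwise by $\pi$, so $p$ restricted to $\hat X_n$ is a diffeomorphism onto $X_n$, and the amoeba bubble $\partial\mathcal A_n$ is a smooth convex Jordan curve traversed once as $P$ traces $\hat X_n$ once. Corollary~\ref{cor:smooth+parallelity_unramified} then produces synchronous rotation of the unoriented tangent lines of $\partial G_n$ and $\partial\mathcal A_n$, of total winding $+2$ in $\mathbb{RP}^1$. Combined with the orientation reversal of $\mathcal F$, which makes $\partial G_n$ a Jordan curve with outward cusps (each contributing a unit correction to the $\mathbb{RP}^1$ tangent winding), this forces exactly $4$ cusps. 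The matching upper bound also equals $4$, since Definition~\ref{def:regions_of_hex_unramified} forces $d\zeta_{(u,v)}$ to have exactly $4$ distinct zeros on $\hat X_n$ for $(u,v)\in G_n$ and each cusp yields one such tangential zero via the unramified analogue of Proposition~\ref{prop:zeros_of_dZeta_tangential}.

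For $G_1$ the decisive new ingredient is that $\pi$ fixes $\hat X_1$ setwise and acts on it as a fixed-point-free involution (rotation by $\pi$ in the uniformization of Remark~\ref{rem:uniformization_calR_unramified}), so $p\colon\hat X_1\to X_1$ is a non-trivial unramified double cover of circles. As $P$ makes one full turn around $\hat X_1$, the amoeba point $\mathcal A(p(P))$ therefore encircles the convex region $\mathcal A_1$ \emph{twice}, and the tangent line of $\partial\mathcal A_1$ winds by $+4$ in $\mathbb{RP}^1$. Corollary~\ref{cor:smooth+parallelity_unramified} transfers this doubled winding to the tangent line of $\partial G_1$, and the same topological accounting as above now forces at least $6$ outward cusps. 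The matching upper bound is supplied by Definition~\ref{def:regions_of_hex_unramified}, which requires $6$ distinct zeros of $d\zeta_{(u,v)}$ on $\hat X_1$ for $(u,v)\in G_1$, each contributing one cusp. Hence $\partial G_1$ has exactly $6$ cusps.

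The main subtlety is the winding-number bookkeeping that upgrades the cusp lower bound from $4$ to $6$; once one correctly identifies the sole new effect as the doubling of amoeba winding coming from the non-trivial cover $p\colon\hat X_1\to X_1$, the rest of the proof reuses the ingredients (tangential characterization of zeros, parallelity, orientation reversal) already present in the Aztec argument of Section~\ref{sec:regions}.
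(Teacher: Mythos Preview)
Your approach is essentially the same as the paper's: both identify the key point that $p\colon\hat X_1\to X_1$ is a two-to-one cover, so traversing $\partial G_1$ once corresponds to the amoeba tangent winding by $4\pi$ instead of $2\pi$, and then parallelity plus orientation reversal converts this into a cusp count. The only difference is in bookkeeping. The paper writes the winding relation as an exact equation, $-2\pi = 4\pi - \pi C$, and reads off $C=6$ (respectively $-2\pi = 2\pi - \pi C$ giving $C=4$ for $n\geq 2$) directly. You instead phrase the topological step as a lower bound and then invoke Definition~\ref{def:regions_of_hex_unramified} for a matching upper bound via tangential zeros. Your version is closer in spirit to the Aztec argument in Section~\ref{sec:regions}, and it is correct, but the upper-bound step is redundant once you observe (as you implicitly do) that all cusps are outward: the winding identity is then an equality, not just an inequality, so the exact count follows without appealing to the zero count.
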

\begin{proof}
    We use an argument similar to \cite{berggren_geometry_2023}. Due to $\hat{X}_1$ covering $X_1$ twice and the parallelity property Corollary~\ref{cor:smooth+parallelity_unramified}, going around $\partial G_1$ once corresponds to going around the oval $\mathcal{A}_1$ of the amoeba twice with opposite orientations. 

    Along this path the tangential vector to the amoeba rotates by $4\pi$. The tangential vector to $\partial G_1$ rotates by 
    \[ -2\pi = 4\pi - \pi C \]
    where $C$ is the number of cusps along the way. Therefore there are $6$ cusps. Similarly we only wind once around all other ovals $X_n$, $n \geq 2$ and therefore have $4$ cusps.
\end{proof}

\begin{corollary}
    \label{cor:outside_G1_4_zeros_on_X1_ramified}
    For $(u,v) \in \mathbb{R}^2\setminus \overline{G}_1$ the differential $d\zeta$ has exactly $4$ zeros on $\hat{X}_1$.
\end{corollary}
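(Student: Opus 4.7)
The plan is to show that $N(u,v) := \#\{z \in \hat{X}_1 : d\zeta_{(u,v)}(z) = 0\}$ is locally constant on $\mathbb{R}^2$ except at points where $d\zeta_{(u,v)}$ acquires a double zero on $\hat{X}_1$, and that such double zeros occur only on $\partial G_1$. Combined with the connectedness of $\mathbb{R}^2 \setminus \overline{G}_1$ and the known value $N=4$ in the liquid region, this yields the claim.

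First I would set up the counting. The differential $d\zeta_{(u,v)}$ has $6n+2\hat{g}-2$ zeros on $\hat{\calR}$, of which at least $2(\hat{g}-1)$ lie on the inner ovals $\hat{X}_k$ with $k\geq 2$, at least $6(n-1)$ lie between same-cluster angle points on $\hat{X}_0$, and at least $2$ lie on $\hat{X}_1$. This leaves at most $4$ ``free'' zeros beyond the forced baseline. An application of the implicit function theorem to $d\zeta_{(u,v)}|_{\hat{X}_1}=0$ gives that $N(u,v)$ is locally constant on the open set where the zeros of $d\zeta_{(u,v)}$ on $\hat{X}_1$ are all simple.

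Next I would analyze the transitions. A change of $N$ forces either two real zeros on $\hat{X}_1$ to merge and leave as a complex conjugate pair into $\hat{\calR}^\circ_+ \cup \hat{\calR}^\circ_-$, or vice versa; in either case $d\zeta_{(u,v)}$ must acquire a double zero on $\hat{X}_1$. Comparing against the classification of arctic curves following Definition~\ref{def:regions_of_hex_unramified}---where the double zero sits on the arc $\ell \subset \hat{X}_0$ for frozen and quasi-frozen regions, and on $\hat{X}_n$ with $n\geq 2$ for the gas regions $G_n$---only the arctic curve $\partial G_1$ corresponds to a double zero on $\hat{X}_1$. Hence $N$ is constant on every connected component of $\mathbb{R}^2 \setminus \partial G_1$.

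Finally, $\partial G_1 = \mathcal{F}(\hat{X}_1)$ is a single closed curve (with finitely many cusps, by Corollary~\ref{cor:6cusps_unramified}), bounding the topological disk $\overline{G_1}$, so $\mathbb{R}^2 \setminus \overline{G}_1$ is connected. The liquid region $\mathcal{F}_{\hat{\mathcal{S}}}$ is nonempty by Proposition~\ref{prop:KO_map_injective_unramified} and contained in this complement, and Definition~\ref{def:regions_of_hex_unramified} assigns $N=4$ there, so $N \equiv 4$ throughout. The main obstacle will be rigorously excluding double zeros on $\hat{X}_1$ away from $\partial G_1$: one must rule out the possibility that two zeros from the baseline forced set (e.g.\ the two mandatory zeros on $\hat{X}_1$ could coalesce with additional zeros migrating from $\hat{X}_0$ or from an inner oval) yield a double zero on $\hat{X}_1$ without any complex pair emerging or disappearing. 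This will follow from the rigidity of the pole/residue structure of $d\zeta$ together with the degree-$4$ counting above, but making this argument uniform as $(u,v)$ leaves $\mathcal{H}$ (where the residue signs cease to alternate and extra forced zeros appear on $\hat{X}_0$) is the delicate step.
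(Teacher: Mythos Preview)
Your continuity approach is correct, and the obstacle you flag has a cleaner resolution than residue bookkeeping: if $d\zeta_{(u,v)}$ has a double zero at some $Q \in \hat{X}_1$, then both $-u f_2 + v f_1 + f_3$ and its $z$-derivative vanish at $Q$, a $2\times 2$ linear system for $(u,v)$ with determinant $W(f_1,f_2)(Q) \neq 0$ by (the analogue of) Lemma~\ref{lem:W(f_1,f_2)_not_zero_on_real_ovals}. Its unique solution is $\mathcal{F}(Q)$ by the arctic-curve formula Proposition~\ref{prop:arctic}, so the double-zero locus on $\hat{X}_1$ is exactly $\mathcal{F}(\hat{X}_1)=\partial G_1$, period --- there is no need to track what happens outside $\mathcal{H}$ or whether forced zeros migrate between ovals.

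The paper's route is different and more geometric: it invokes Proposition~\ref{prop:zeros_of_dZeta_tangential} to identify zeros of $d\zeta_{(u,v)}$ on $\hat{X}_1$ with tangent lines from $(u,v)$ to $\partial G_1$, and then uses the six-cusp count of Corollary~\ref{cor:6cusps_unramified} to read off four tangent lines from any exterior point. Unwound, that tangent-line count is itself a crossing argument (six tangents from inside $G_1$ by definition, dropping by two on crossing $\partial G_1$), so the two proofs are close relatives. The paper's framing buys a one-line proof once the tangential-zeros machinery is in place; yours is more self-contained analytically and makes the topological input --- connectedness of $\mathbb{R}^2 \setminus \overline{G}_1$ --- explicit rather than hiding it inside the cusp combinatorics.
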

\begin{proof}
    This follows immediately because Proposition~\ref{prop:zeros_of_dZeta_tangential} still applies and Corollary~\ref{cor:6cusps_unramified}. 
\end{proof}

After these considerations it is clear that we have the same geometric picture as in the ramified case except now there is the central gas bubble $G_1$ that has $6$ cusps. Finally we obtain
\begin{theorem}
    \label{thm:regions_disjoint_unramified}
    The frozen, quasi-frozen, liquid and gas regions are disjoint. Furthermore their closures cover the entire hexagon
    \begin{equation}
        \mathcal{H} = \overline{\mathcal{F}_{\hat{\mathcal{S}}}} \cup \bigcup_{F \in \mathscr{F}} \overline{F} \cup \bigcup_{Q \in \mathscr{Q}} \overline{Q} \cup \bigcup_{i=1}^{\hat{g}} \overline{G}_i.
    \end{equation}
\end{theorem}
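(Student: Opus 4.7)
The plan is to prove disjointness directly from the characterizations in Definition~\ref{def:regions_of_hex_unramified}, and to prove covering by a sharp zero count of $d\zeta_{(u,v)}$ combined with a parity argument on the real ovals. Disjointness is essentially immediate: each of the five region types is specified by a distinct configuration of simple zeros of the single meromorphic differential $d\zeta_{(u,v)}$ on $\hat{\calR}$, and a meromorphic differential has a uniquely determined zero divisor, so no $(u,v) \in \mathcal{H}$ can simultaneously satisfy two such configurations.

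For covering, first I would count zeros globally. The differential $d\zeta_{(u,v)}$ has $6n$ poles and hence $6n + 2\hat{g} - 2$ zeros. For every $(u,v) \in \mathcal{H}$ two families are forced: between any two consecutive angles of the same type on $\hat{X}_0$, the residues share a sign, giving at least $6(n-1)$ zeros; and on each compact real oval $\hat{X}_i$, $i \geq 1$, a real differential must change sign an even number of times, giving at least $2\hat{g}$ more zeros. Altogether this accounts for $6n + 2\hat{g} - 6$ zeros, leaving exactly four free zeros. I would then split on whether $(u,v) \in \overline{G_1}$: if so, the statement is by Definition~\ref{def:regions_of_hex_unramified} and Corollary~\ref{cor:6cusps_unramified}; otherwise Corollary~\ref{cor:outside_G1_4_zeros_on_X1_ramified} fixes exactly four zeros on $\hat{X}_1$, absorbing two of the four free zeros and leaving two remaining zeros to place. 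Since complex zeros come in $\tau$-conjugate pairs, the two remaining extras either form a single conjugate pair $(P, \tau P)$ with $P \in \hat{\calR}_+^\circ$, placing $(u,v) \in \mathcal{F}_{\hat{\mathcal{S}}}$ by Proposition~\ref{prop:KO_map_injective_unramified}, or consist of two real zeros on $\hat{X}$.

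In the real subcase, a parity check arc-by-arc on $\hat{X}_0$ and oval-by-oval on $\hat{X}_n$, $n \geq 2$, rules out splitting the two real extras across distinct arcs or ovals: any such splitting would leave at least one arc or oval with a zero count of the wrong parity relative to its bounding residue signs. Hence the two real extras must cohabit a single arc between different-type clusters on $\hat{X}_0$ (yielding a frozen region), a single same-type arc (three zeros there, yielding a quasi-frozen region), or a single oval $\hat{X}_n$, $n \geq 2$ (four zeros there, yielding $G_n$). The only remaining $(u,v)$ correspond to the collision of two zeros into a double zero; by Corollary~\ref{cor:smooth+parallelity_unramified} these lie on the arctic curves, and by continuity of $\Omega$ they belong simultaneously to $\overline{\mathcal{F}_{\hat{\mathcal{S}}}}$ and to the closure of the adjacent frozen, quasi-frozen or gas region. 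I expect the main obstacle to be the careful parity bookkeeping on $\hat{X}_0$: the parity of zeros in each arc depends on the signs of the bounding residues, which themselves depend on $(u,v)$, and one must verify that throughout $\mathcal{H}^\circ$ these signs follow the uniform pattern of \eqref{eq:cyclic_order_hex_double_cover}, with the boundary $\partial \mathcal{H}$ corresponding to residue-sign degenerations and handled by a limiting argument consistent with the train-track boundary identifications \eqref{eq:uv_at_train_tracks}.
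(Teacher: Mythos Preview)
Your proposal is correct and follows essentially the same approach as the paper, which leaves the proof implicit (``After these considerations it is clear that we have the same geometric picture as in the ramified case\ldots''). You have spelled out the zero-counting and parity bookkeeping that the paper defers to the analogous Aztec diamond discussion in Section~\ref{sec:regions} together with Corollaries~\ref{cor:smooth+parallelity_unramified}--\ref{cor:outside_G1_4_zeros_on_X1_ramified}. Two minor points: the bound ``at least two zeros on each $\hat{X}_i$, $i\geq 1$'' comes from the vanishing of the real period $\int_{\hat{X}_i} d\zeta = 0$, not merely from sign-change parity (which alone would allow zero); and ``leaving exactly four free zeros'' should read ``at most four,'' though this does not affect the subsequent case split.
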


After having established that the diffeomorphism $\mathcal{F}$ is well defined and decomposes $\mathcal{H}$ into the same regions as before we can again define the complex height function 
\[ H(P) = H(u,v) := \int^P d\zeta_{(u,v)} =: \frac{1}{\pi} g + ih. \]
The function $h$ can again be extended affinely to each component to a function $\hat{h} \in C^1(\mathcal{H})$ and satisfies the hexagon boundary conditions. All the same arguments yield that $g$ can be extended to all of $\mathcal{H}\setminus G_1$ in an admissible way. The central gas bubble $G_1$ has $6$ cusps and an admissible extension of $g$ to it is not covered by the arguments in Section~\ref{sec:05_SurfaceTensionMinimization}. We construct such an extension $\hat{g}$ in subsection~\ref{subsec:extension_to_G1} thus again obtaining our main theorem.
\begin{theorem}
    \label{thm:h_is_height_fct_hex_unramified}
    Let $\hat{S} = \cubr{\hat{\mathcal{R}}, \cubr{\alpha_i^\pm, \beta_j^\pm, \gamma_l^\pm}}$ be admissible Harnack data with $\mathcal{S}$ its image under the unramified double covering $p$. Consider the dimer model on the hexagon with weights defined through $\mathcal{S}$.

    Let $h_b$ be the hexagon boundary conditions. Then the function $h$ is the minimizer of
    \[ I_\sigma(\varphi) = \int_{\mathcal{H}} \sigma(\nabla \varphi) \]
    over the space $\Lip_{\Delta_\mathcal{S}}(\mathcal{H}, h_b)$ and thus the limiting dimer height function on the hexagon.

    Furtermore if $\hat{g}:\mathcal{H} \to \mathbb{R}$ is an admissible extension of $g : \mathcal{F}_{\hat{\mathcal{S}}} \to \mathbb{R}$ with boundary conditions $g_b$, then $\hat{g}$ is a minimizer of the \emph{magnetic tension functional}
    \[ I_\rho(\varphi) = \int_{\mathcal{H}} \rho(J \nabla \varphi) \]
    over the Sobolev space $W^{1,2}(\mathcal{H}, \mathbb{R})$ restricted to functions with boundary conditions $g_b$.
\end{theorem}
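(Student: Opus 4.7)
The plan is to follow the same variational strategy as in the proofs of Theorems~\ref{thm:h_is_height_function} and \ref{thm:h_is_height_fct_hex}, namely to apply Proposition~\ref{prop:generalized_Euler_Lagrange_suff_cond} after constructing an admissible extension $\hat{g}$ of $g$ to the entire hexagon $\mathcal{H}$. By Theorem~\ref{thm:regions_disjoint_unramified} the hexagon decomposes as $\overline{\mathcal{F}_{\hat{\mathcal{S}}}} \cup \bigcup_{F\in \mathscr{F}} \overline{F} \cup \bigcup_{Q \in \mathscr{Q}} \overline{Q} \cup \bigcup_n \overline{G_n}$, so the problem reduces to constructing $\hat{g}$ region by region. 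On $\mathcal{F}_{\hat{\mathcal{S}}}$ the gradient condition \eqref{eq:grads_of_f_and_g} already holds. On frozen and quasi-frozen regions Lemma~\ref{lem:g_extension_on_frozen} applies verbatim: the local behavior of the $d\zeta_k$ at the train-track poles gives the required $L^2$ bound. On each gas bubble $G_n$ with $n \geq 2$, which has $4$ cusps by Corollary~\ref{cor:6cusps_unramified}, the minimum-of-two tangential extensions construction from Section~\ref{sec:05_SurfaceTensionMinimization} applies without change.

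The new ingredient is the extension to the central gas bubble $G_1$, whose boundary carries $6$ cusps $c_1,\ldots,c_6$ by Corollary~\ref{cor:6cusps_unramified}. These partition $\partial G_1$ into six arcs on which the sign of $f_q''(Q)$ alternates (Lemma~\ref{lemma:signs_of_fq(Q)}), giving three positive arcs $A_0, A_1, A_2$ and three negative arcs $B_0, B_1, B_2$. From each positive arc $A_k$ we build a tangential-ray extension $g^{(k)}$ exactly as in \eqref{eq:def_g1}: for $p \in A_k$ the tangential ray oriented away from the nearer endpoint sweeps out a region that together with the other endpoint's rays covers all of $G_1$ (this is where the counterclockwise orientation of $\partial G_1$ and Corollary~\ref{cor:smooth+parallelity_unramified} are used). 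By parallelity, $J\nabla g^{(k)} \in \mathcal{A}_1$ almost everywhere on $G_1$.

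The central technical step is the analogue of Lemma~\ref{lemma:signs_of_dZetaQ_Integral} in the six-cusp setting: for the relevant pairs $(P,Q)$ on $\hat{X}_1$, one must show $\int_P^Q d\zeta_q \geq 0$. By Corollary~\ref{cor:outside_G1_4_zeros_on_X1_ramified} (applied as $z$ approaches $\partial G_1$ from outside $G_1$) the function $f_q$ has four simple real zeros on $\hat{X}_1$ plus the double zero at $Q$, and its sign on each subarc between consecutive zeros is determined by the sign of $f_q''(Q)$ together with the cyclic order. Tracking these signs arc by arc—using the $\pi$-symmetry to pair opposite arcs and check consistency—yields the inequalities $g^{(k)} = g$ on $A_k$ and $g^{(k)} \geq g$ on the remaining arcs, exactly as in the four-cusp case but with one additional pair of zeros to book-keep. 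Setting $\hat{g} := \min(g^{(0)}, g^{(1)}, g^{(2)})$ on $G_1$ then produces a Lipschitz function agreeing with $g$ on $\partial G_1$, with bounded gradient and $J\nabla \hat{g} \in \mathcal{A}_1$ a.e.; the $L^2$ bound on all of $\mathcal{H}$ follows exactly as in the proof of Lemma~\ref{lem:g_extension_on_frozen}.

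The hard part is the sign analysis of $f_q$ on $\hat{X}_1$ in the six-cusp case, because a ray emanating from a positive arc can hit several different negative arcs depending on the position of $p$, and one must verify that the three extensions $g^{(k)}$ remain mutually compatible so that the minimum is continuous and boundary-matching. Once the extension on $G_1$ is in hand, the surface tension minimality of $\hat{h}$ follows directly from Proposition~\ref{prop:generalized_Euler_Lagrange_suff_cond}, and uniqueness from strict convexity of $\sigma$ on $\Delta_\mathcal{S}^\circ$. For the magnetic tension statement, the argument of Theorem~\ref{thm:g_is_magnetic_tension_minimizer} transfers verbatim: the Euler-Lagrange equation \eqref{eq:magnetic_Euler_Lagrange} holds on $\mathcal{F}_{\hat{\mathcal{S}}}$ by Corollary~\ref{cor:Euler_lagrange_h_and_g} and on each complementary region because $\nabla \rho$ is constant there, and convexity of $\rho$ yields minimality over $W^{1,2}(\mathcal{H},\mathbb{R})$ with the prescribed boundary values.
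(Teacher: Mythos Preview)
Your overall strategy matches the paper's: reduce to Proposition~\ref{prop:generalized_Euler_Lagrange_suff_cond} by constructing an admissible extension $\hat g$ region by region, with the only new work being the six-cusp bubble $G_1$, and take $\hat g=\min(g^{(0)},g^{(1)},g^{(2)})$ of three tangential-ray extensions. That is exactly what the paper does (its $g_1,g_3,g_5$, each built from a \emph{pair} of adjacent arcs $[c_{2k},c_{2k+2}]$ oriented away from $c_{2k+1}$, not from a single positive arc as you write---a small but relevant correction, since otherwise the rays do not sweep all of $G_1$).

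There is, however, a real gap in your sign analysis. In the four-cusp case the inequality $\int_P^Q d\zeta_q\ge 0$ follows from pure sign-tracking of $f_q$ on the oval, because $f_q$ has no zeros on the arc $(P,Q)$ one integrates over. In the six-cusp case this fails: when the tangent ray from $p\in[c_6,c_1]$ hits $q$ on $[c_4,c_5]$ or $[c_3,c_4]$ (the paper's Cases~2 and~3), $f_q$ changes sign on the arc from $P$ to $Q$, and the integral cannot be controlled by signs alone. The paper's key device is the decomposition
\[
\int_P^Q d\zeta_q=\int_P^{C_1}d\zeta_q+\int_{C_1}^{\pi C_1}d\zeta_q+\int_{\pi C_1}^{Q}d\zeta_q,
\]
where the outer terms are handled by sign-tracking and the middle term equals $\ell(C_1):=\int_{C_1}^{\pi C_1}d\zeta_3$ because $d\zeta_1,d\zeta_2$ are $\pi$-symmetric. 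The crucial point is that $\ell(C_1)\ge 0$, and this is \emph{exactly} where the admissibility condition (Definition~\ref{def:admissible_harnack_data_unramified}) enters: Proposition~\ref{prop:points_R_i_between_cusps} shows that the points $r_i=\mathcal F(R_i)$ interlace with the cusps, placing $c_1$ in the arc $[r_1,r_2]$ on which $\ell>0$. Your phrase ``using the $\pi$-symmetry to pair opposite arcs'' gestures in the right direction but does not identify this decomposition, nor the role of the $R_i$'s in giving the middle term the correct sign. Without that ingredient the six-cusp argument does not go through.
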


\subsection{Extension to the Central Gas Bubble}
\label{subsec:extension_to_G1}
We now construct an admissible extension $\hat{g}$ of $g$ to $G_1$ thus completing the proof of Theorem~\ref{thm:h_is_height_fct_hex_unramified}.

Without loss of generality we impose an order on the $R_j = r e^{-i\theta_j}$:
\[ \theta_1 < \theta_2 < \theta_3 < \theta_1 + \pi < \theta_2 + \pi < \theta_3 + \pi. \]
Let us introduce the function 
\begin{equation}
\label{eq:ell}
    \ell(\alpha) = \int_\alpha^{\alpha + \pi} \tilde{f}_3 d\theta.
\end{equation} 
The admissibility implies $\ell(\theta_i) = \ell(\theta_i + \pi) = 0$. Furthermore by \eqref{eq:dZeta_3_zeros_are_critical_points_of_ell_G1} and Corollary~\ref{cor:admissible_dZeta3_has_6_zeros} all of these are simple zeros and thus $\ell$ has alternating signs on neighboring segments $(\theta_{i-1}, \theta_i), (\theta_i, \theta_{i+1})$. Let us assume that $\ell$ is positive on $(\theta_1, \theta_2)$.

\begin{proposition}
\label{prop:points_R_i_between_cusps}
    We let $r_i := \mathcal{F}(R_i)$ denote the points corresponding to $R_i$ in dimer coordinates.
    The $6$ points $r_1, r_2, r_3, - r_1, - r_2, - r_3$ each lie on separate components bounded by the $6$ cusps on $\partial G_1$. There is exactly one such point on each component. Furthermore $r_1$ lies on a positive component, that is $\tilde{f}_{r_1}(R_1) \geq 0$.
\end{proposition}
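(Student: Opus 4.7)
The proposition claims two things: that the six points $\pm r_i$ lie on six distinct components of $\partial G_1\setminus\{\text{cusps}\}$, and a positive-sign condition on the component containing $r_1$. Since there are exactly six components (Corollary~\ref{cor:6cusps_unramified}) and six points, the first statement is equivalent to ``exactly one point per component,'' and this is what I would aim at directly.

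My approach begins with a symmetry reduction. In the uniformization of Remark~\ref{rem:uniformization_calR_unramified}, $\tilde{f}_1,\tilde{f}_2$ are $\pi$-symmetric and $\tilde{f}_3$ is $\pi$-antisymmetric on $\hat{X}_1$, so the Wronskian $W(\tilde{f}_1,\tilde{f}_2,\tilde{f}_3)$ is $\pi$-antisymmetric; by \eqref{eq:second_der_as_Wronskians} together with Lemma~\ref{lem:W(f_1,f_2)_not_zero_on_real_ovals}, its six simple zeros on $\hat{X}_1$ are precisely the cusp preimages under $\mathcal{F}$. The six arcs of $\hat{X}_1$ between consecutive zeros therefore inherit alternating signs of $W$, arranged as $+,-,+,-,+,-$ with the three positive arcs exchanged with the three negative arcs by $\pi$. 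Under $\mathcal{F}$ this transports to a partition of $\partial G_1$ into three positive and three negative components. Because $-r_i=\mathcal{F}(\pi R_i)$ and $W$ flips sign under $\pi$, the component containing $-r_i$ is always of opposite sign to the component containing $r_i$. Hence the first part of the proposition reduces to the single claim that the three values $W(\tilde{f}_1,\tilde{f}_2,\tilde{f}_3)(R_i)$, $i=1,2,3$, share a common sign; this will force $r_1,r_2,r_3$ onto three distinct same-signed arcs and $-r_1,-r_2,-r_3$ onto the remaining three, giving exactly one point per arc.

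To establish the common-sign statement I plan a deformation argument. First I verify it directly in the explicit symmetric reference constructed in the proof of Lemma~\ref{lem:admissible_unramified_exists}: for those symmetric angles the additional rotational symmetry $d\zeta_3(e^{i\pi/3}z)=-d\zeta_3(z)$ puts the three $R_i$ at $\pi/3$-rotated positions that clearly interlace symmetrically with the three cusps, and the three values of $W(\tilde{f}_1,\tilde{f}_2,\tilde{f}_3)(R_i)$ coincide by the symmetry. Then the common-sign property propagates over the moduli space of admissible data by continuity: the locus where some $W(\tilde{f}_1,\tilde{f}_2,\tilde{f}_3)(R_i)=0$ (equivalently, some $R_i$ merges with a cusp on $\hat{X}_1$) is cut out by a single real equation and is transverse generically, so any admissible datum can be joined to the symmetric reference by a path avoiding this locus, along which the pattern of signs is locally constant. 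This transversality/connectedness verification is the main obstacle, and it draws on the parameter count discussed in subsection~\ref{subsec:parameter_count_hex}.

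Finally, the sign condition for $r_1$ (the second assertion of the proposition) is pinned down by the labeling convention fixed just before the statement: $\theta_1<\theta_2<\theta_3$ with $\ell>0$ on $(\theta_1,\theta_2)$. Matching this orientation choice in the symmetric reference example selects which of the two possible common signs is realized in the continuous family, which yields positivity of $\tilde{f}_{r_1}''(R_1)$ and hence of $\tilde{f}_{r_1}$ on both sides of its double zero at $R_1$; this is the intended content of the stated inequality and places $r_1$ on a positive component of $\partial G_1$.
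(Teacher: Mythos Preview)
Your reduction contains a genuine error: the claim that the three values $W(\tilde f_1,\tilde f_2,\tilde f_3)(R_i)$ share a common sign is false --- they alternate. You can see this already in the symmetric reference example you invoke. There the $\pi/3$-rotation $\phi(z)=e^{i\pi/3}z$ acts on the differentials by $\phi^*d\zeta_1=-d\zeta_2$, $\phi^*d\zeta_2=d\zeta_1-d\zeta_2$, $\phi^*d\zeta_3=-d\zeta_3$, a linear substitution with determinant $-1$; hence $W(\tilde f_1,\tilde f_2,\tilde f_3)(\theta+\pi/3)=-W(\tilde f_1,\tilde f_2,\tilde f_3)(\theta)$. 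Since $\theta_2=\theta_1+\pi/3$ in that example, you get $W(R_2)=-W(R_1)$, not equality. The correct pattern is that the arc-signs of $r_1,r_2,r_3,-r_1,-r_2,-r_3$ alternate $+,-,+,-,+,-$; combined with the alternation of arc-signs along $\partial G_1$ (Lemma~\ref{lemma:signs_of_fq(Q)}) and the cyclic order of the six points, this forces exactly one cusp between each consecutive pair and hence one point per arc.

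The paper obtains this alternation by a much more elementary route than a moduli-space continuation: for \emph{fixed} Harnack data it deforms the parameter $q$ from $(0,0)\in G_1$ to $r_i\in\partial G_1$. At $q=(0,0)$ the sign of $\tilde f_3(R_i)$ is read off from $\ell'(\theta_i)=-2\tilde f_3(\theta_i)$ together with the convention that $\ell>0$ on $(\theta_1,\theta_2)$; the $\tilde f_3(R_i)$ therefore alternate because the $R_i$ interlace with the six simple zeros of $\tilde f_3$. As $q\to r_i$ inside $G_1$ the two zeros of $\tilde f_q$ adjacent to $R_i$ collapse to the double zero at $R_i$, so the sign of $\tilde f_{r_i}$ near $R_i$ is \emph{opposite} to $\operatorname{sign}\tilde f_3(R_i)$. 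This gives the alternating arc-signs directly, and in particular $\tilde f_3(R_1)<0\Rightarrow r_1$ on a positive arc. No connectedness of the moduli space, no transversality argument, and no appeal to a special symmetric datum are needed; your proposed continuation also leaves those points unverified.
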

\begin{proof}
    Note that for small $\epsilon>0$ we have
    \[ \ell(\theta_1) = 0, \ell(\theta_1 + \epsilon) > 0 \]
    and therefore $\tilde{f}_{(0,0)}(R_1) < 0$. Consider now the values of $d\zeta_q$ as $q$ gets deformed from $(0,0)$ to $r_1$. In $d\zeta_{r_1}$ the two zeros adjacent to $r_1$ get collapsed resulting in a double zero. This means that the interval of negative values around $R_1$ gets collapsed and we have $\tilde{f}_{r_1}(R_1) > 0$, see Figure~\ref{fig:central_gas_bubble_tangential_zeros_and_signs}. But we also know from Lemma~\ref{lemma:signs_of_fq(Q)} that $d\zeta_q(Q)$ switches its sign at every cusp. The claim follows.
\end{proof}

\begin{figure}[h]
\centering
\begin{subfigure}{.33\textwidth}
    \centering
    \fontsize{10pt}{12pt}\selectfont
    \def\svgwidth{\linewidth}
    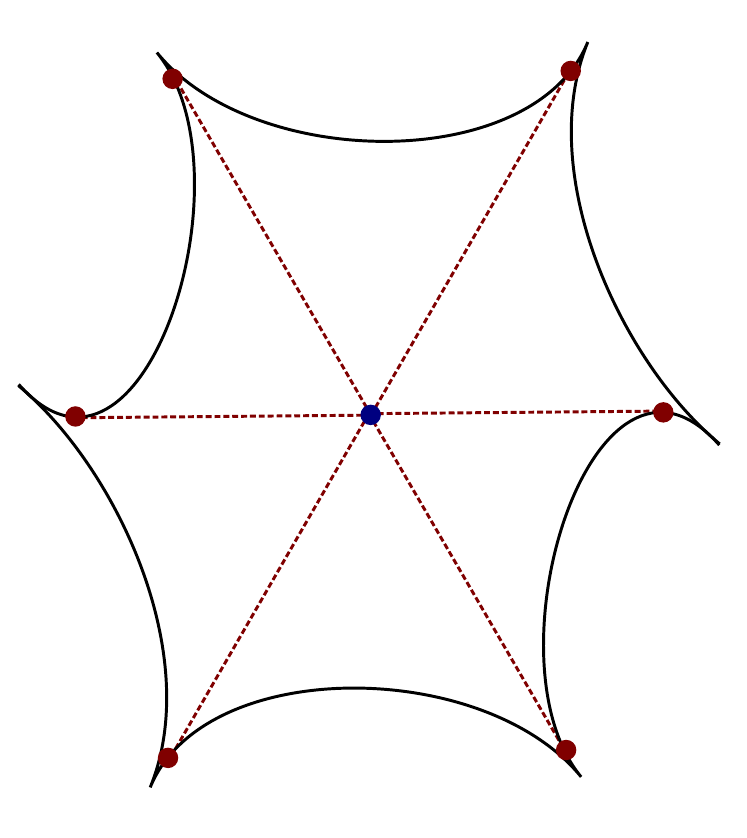
\end{subfigure}
\begin{subfigure}{.15\textwidth}
    
\end{subfigure}
\begin{subfigure}{.33\textwidth}
    \centering
    \fontsize{10pt}{12pt}\selectfont
    \def\svgwidth{\linewidth}
    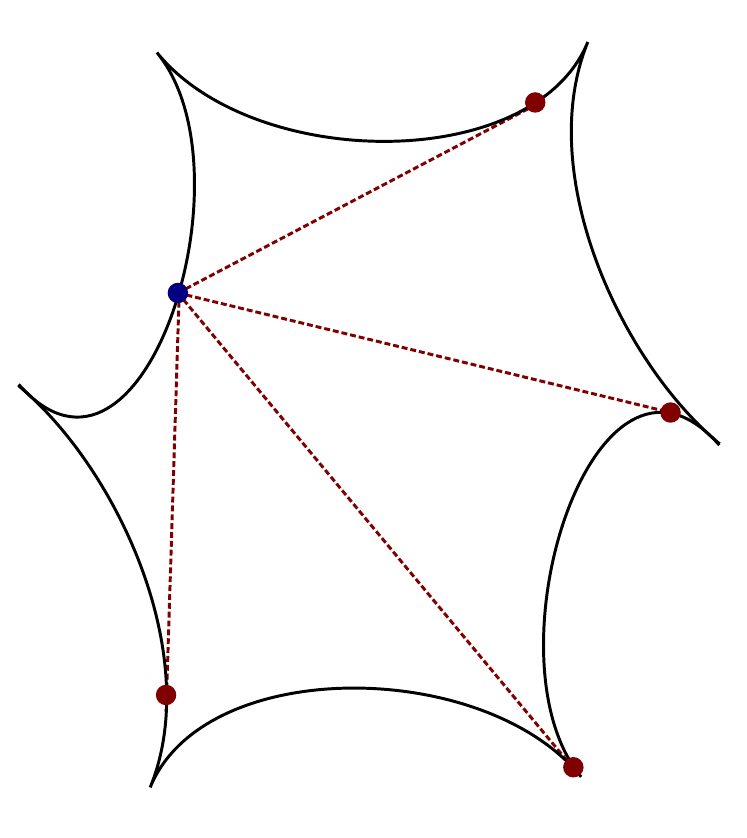
\end{subfigure}
\caption{The central symmetric gas bubble along with signs of $\tilde{f}_q$ in blue and zeros in red. \textbf{Left}: The case of $q = (0,0)$. The zeros are symmetric. \textbf{Right}: As we deform $q$ to $r_1$, the negative region on $[c_6, c_1]$ collapses resulting in a positive arc of $\tilde{f}_{r_1}$ around $R_1$. In particular $\tilde{f}_{r_1}(R_1) > 0$.}
\label{fig:central_gas_bubble_tangential_zeros_and_signs}
\end{figure}

Let the cusps $c_1, \ldots, c_6$ be ordered in mathematically positive orientation around $\partial G_1$ and $C_1, \ldots, C_6$ be the corresponding points on $\hat{X}_1$. Recall the notation $[P,Q]$ from Section~\ref{sec:05_SurfaceTensionMinimization} denoting either the negatively oriented arc on $\hat{X}_1$ between $P,Q \in \hat{X}_1$ or $[p,q]$ the corresponding positively oriented arc on $\partial G_1$ between $p,q \in \partial G_1$. We choose the indices of $C_i$ such that $r_1$ lies on $[c_6, c_1]$. See Figure~\ref{fig:central_gas_bubble_tangential_zeros_and_signs}. For every point $p = \mathcal{F}(P)$ on $[c_6, c_1]$ we consider a ray $r$ tangential to $\partial G_1$ at $p$ oriented away from $c_1$. Let it intersect $\partial G_1$ at another point $q = \mathcal{F}(Q)$. Again we define the extension as
\[ g_1(z) = g(P) + \pobr{\nabla g(P), z-p}, \]
and according to \eqref{eq:g-g_1_as_integral} we have
\[ g_1(Q) - g(Q) = \int_P^Q d\zeta_q. \]
There are 4 cases to consider and we denote the corresponding points in those cases as $p_i, q_i \in \partial G_1$ and $P_i, Q_i \in \hat{X}_1$, see Figure~\ref{fig:central_hex_bubble_tangents_4_cases}.

\textbf{Case 1}: $q_1$ lies on $[c_5, c_6]$. Then $\tilde{f}_{q_1}$ has no zeros on $(P_1, Q_1)$ and is negative on all of $[P_1,Q_1]$ and thus $\int_P^Q d\zeta_q \geq 0$.

\textbf{Case 2}: $q_2$ lies on $[c_4, c_5]$. We decompose our integral
\begin{equation}
\label{eq:decomposition_of_int_P^QdZeta_q}
    \int_{P_2}^{Q_2} d\zeta_{q_2} = \int_{P_2}^{C_1} d\zeta_{q_2} + \int_{C_1}^{\pi C_1} d\zeta_{q_2} + \int_{\pi C_1}^{Q_2} d\zeta_{q_2} =
                                 \int_{P_2}^{C_1} d\zeta_{q_2} + \ell(C_1) + \int_{\pi C_1}^{Q_2} d\zeta_{q_2},
\end{equation}
with $\ell$ given by \eqref{eq:ell}. All the terms in \eqref{eq:decomposition_of_int_P^QdZeta_q} are non-negative. Indeed, $\ell(C_1) \geq 0$ since $c_1\in [r_1, r_2]$. Furthermore $\tilde{f}_{q_2}$ is positive around $Q_2$ and has one zero on $(Q_2, P_2)$. Thus it is positive on $[P_2, C_1]$, and thus $\int_{P_2}^{C_1} d\zeta_{q_2} \geq 0$. Similarly $\int_{\pi C_1}^{Q_2} d\zeta_{q_2} \geq 0$ as $\tilde{f}_{q_2}$ has no zeros on $[\pi C_1, Q_2]$ and is positive there.

\textbf{Case 3}: $q_3$ lies on $[c_3, c_4]$. We use the same decomposition
$$
    \int_{P_3}^{Q_3} d\zeta_{q_3} = \int_{P_3}^{C_1} d\zeta_{q_3} + \ell(C_1) + \int_{\pi C_1}^{Q_3} d\zeta_{q_3}.
                                  $$
Now $\tilde{f}_{q_3}$ is positive on $[P_3, C_1]$ and negative on $[Q_3, \pi C_1]$. The orientation of the path $[Q_3, \pi C_1]$ is negative and thus all summands are again non-negative.

\textbf{Case 4}: $q_4$ lies on $[c_2, c_3]$. In this case $d\zeta_{q_4}$ has no zeros on $(P_4, Q_4)$. Furthermore $d\zeta_{q_4}(Q_4) \geq 0$ and hence
\[ \int_{P_4}^{Q_4} d\zeta_{q_4} \geq 0. \]

\begin{figure}[h]
\centering
\begin{subfigure}{.24\textwidth}
    \centering
    \fontsize{10pt}{12pt}\selectfont
    \def\svgwidth{\linewidth}
\begingroup%
  \makeatletter%
  \providecommand\color[2][]{%
    \errmessage{(Inkscape) Color is used for the text in Inkscape, but the package 'color.sty' is not loaded}%
    \renewcommand\color[2][]{}%
  }%
  \providecommand\transparent[1]{%
    \errmessage{(Inkscape) Transparency is used (non-zero) for the text in Inkscape, but the package 'transparent.sty' is not loaded}%
    \renewcommand\transparent[1]{}%
  }%
  \providecommand\rotatebox[2]{#2}%
  \newcommand*\fsize{\dimexpr\f@size pt\relax}%
  \newcommand*\lineheight[1]{\fontsize{\fsize}{#1\fsize}\selectfont}%
  \ifx\svgwidth\undefined%
    \setlength{\unitlength}{337.51127243bp}%
    \ifx\svgscale\undefined%
      \relax%
    \else%
      \setlength{\unitlength}{\unitlength * \real{\svgscale}}%
    \fi%
  \else%
    \setlength{\unitlength}{\svgwidth}%
  \fi%
  \global\let\svgwidth\undefined%
  \global\let\svgscale\undefined%
  \makeatother%
  \begin{picture}(1,1.06163133)%
    \lineheight{1}%
    \setlength\tabcolsep{0pt}%
    \put(0,0){\includegraphics[width=\unitlength,page=1]{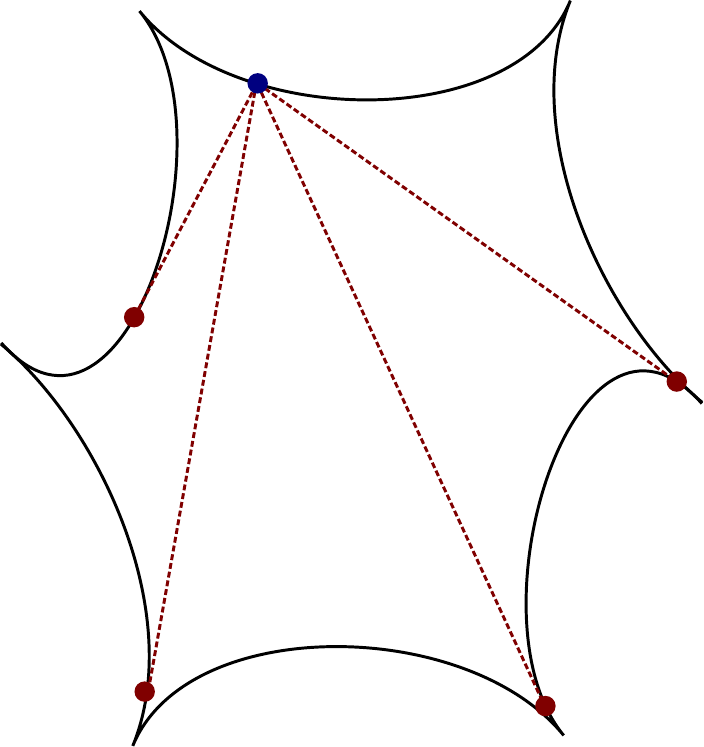}}%
    \put(0.11424387,0.64043762){\color[rgb]{0.50196078,0,0}\makebox(0,0)[lt]{\lineheight{1.25}\smash{\begin{tabular}[t]{l}$p_1$\end{tabular}}}}%
    \put(0.35527304,0.99917882){\color[rgb]{0,0,0.50196078}\makebox(0,0)[lt]{\lineheight{1.25}\smash{\begin{tabular}[t]{l}$q_1$\end{tabular}}}}%
    \put(0.15812531,0.85717708){\color[rgb]{0,0,0.50196078}\makebox(0,0)[lt]{\lineheight{1.25}\smash{\begin{tabular}[t]{l}$-$\end{tabular}}}}%
  \end{picture}%
\endgroup%

\end{subfigure}
\begin{subfigure}{.24\textwidth}
    \centering
    \fontsize{10pt}{12pt}\selectfont
    \def\svgwidth{\linewidth}
\begingroup%
  \makeatletter%
  \providecommand\color[2][]{%
    \errmessage{(Inkscape) Color is used for the text in Inkscape, but the package 'color.sty' is not loaded}%
    \renewcommand\color[2][]{}%
  }%
  \providecommand\transparent[1]{%
    \errmessage{(Inkscape) Transparency is used (non-zero) for the text in Inkscape, but the package 'transparent.sty' is not loaded}%
    \renewcommand\transparent[1]{}%
  }%
  \providecommand\rotatebox[2]{#2}%
  \newcommand*\fsize{\dimexpr\f@size pt\relax}%
  \newcommand*\lineheight[1]{\fontsize{\fsize}{#1\fsize}\selectfont}%
  \ifx\svgwidth\undefined%
    \setlength{\unitlength}{337.51127243bp}%
    \ifx\svgscale\undefined%
      \relax%
    \else%
      \setlength{\unitlength}{\unitlength * \real{\svgscale}}%
    \fi%
  \else%
    \setlength{\unitlength}{\svgwidth}%
  \fi%
  \global\let\svgwidth\undefined%
  \global\let\svgscale\undefined%
  \makeatother%
  \begin{picture}(1,1.06163133)%
    \lineheight{1}%
    \setlength\tabcolsep{0pt}%
    \put(0,0){\includegraphics[width=\unitlength,page=1]{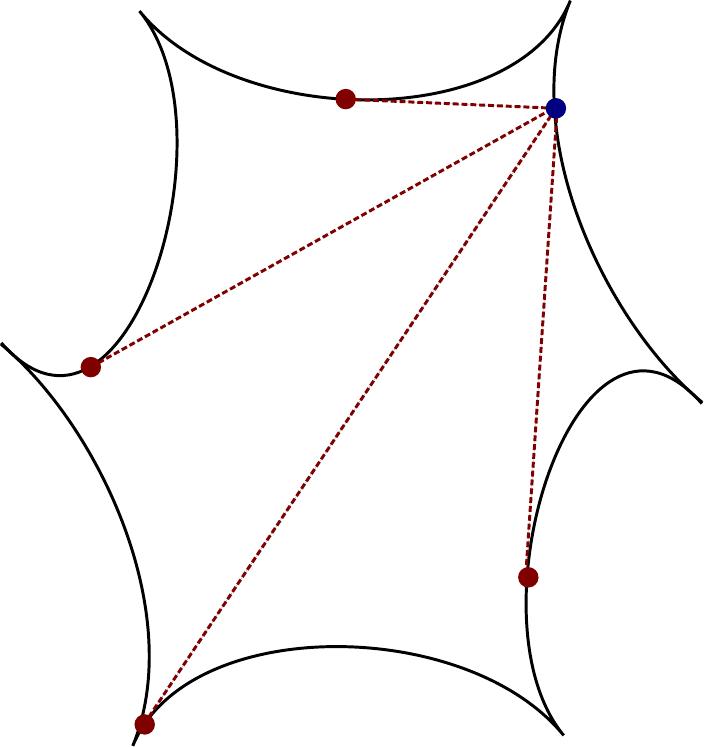}}%
    \put(0.07894328,0.59398004){\color[rgb]{0.50196078,0,0}\makebox(0,0)[lt]{\lineheight{1.25}\smash{\begin{tabular}[t]{l}$p_2$\end{tabular}}}}%
    \put(0.84667371,0.90201973){\color[rgb]{0,0,0.50196078}\makebox(0,0)[lt]{\lineheight{1.25}\smash{\begin{tabular}[t]{l}$q_2$\end{tabular}}}}%
    \put(0.88217416,0.74507043){\color[rgb]{0,0,0.50196078}\makebox(0,0)[lt]{\lineheight{1.25}\smash{\begin{tabular}[t]{l}$+$\end{tabular}}}}%
    \put(0.00638176,0.42551326){\color[rgb]{0,0,0.50196078}\makebox(0,0)[lt]{\lineheight{1.25}\smash{\begin{tabular}[t]{l}$+$\end{tabular}}}}%
  \end{picture}%
\endgroup%

\end{subfigure}
\begin{subfigure}{.24\textwidth}
    \centering
    \fontsize{10pt}{12pt}\selectfont
    \def\svgwidth{\linewidth}
\begingroup%
  \makeatletter%
  \providecommand\color[2][]{%
    \errmessage{(Inkscape) Color is used for the text in Inkscape, but the package 'color.sty' is not loaded}%
    \renewcommand\color[2][]{}%
  }%
  \providecommand\transparent[1]{%
    \errmessage{(Inkscape) Transparency is used (non-zero) for the text in Inkscape, but the package 'transparent.sty' is not loaded}%
    \renewcommand\transparent[1]{}%
  }%
  \providecommand\rotatebox[2]{#2}%
  \newcommand*\fsize{\dimexpr\f@size pt\relax}%
  \newcommand*\lineheight[1]{\fontsize{\fsize}{#1\fsize}\selectfont}%
  \ifx\svgwidth\undefined%
    \setlength{\unitlength}{337.91497607bp}%
    \ifx\svgscale\undefined%
      \relax%
    \else%
      \setlength{\unitlength}{\unitlength * \real{\svgscale}}%
    \fi%
  \else%
    \setlength{\unitlength}{\svgwidth}%
  \fi%
  \global\let\svgwidth\undefined%
  \global\let\svgscale\undefined%
  \makeatother%
  \begin{picture}(1,1.06036301)%
    \lineheight{1}%
    \setlength\tabcolsep{0pt}%
    \put(0,0){\includegraphics[width=\unitlength,page=1]{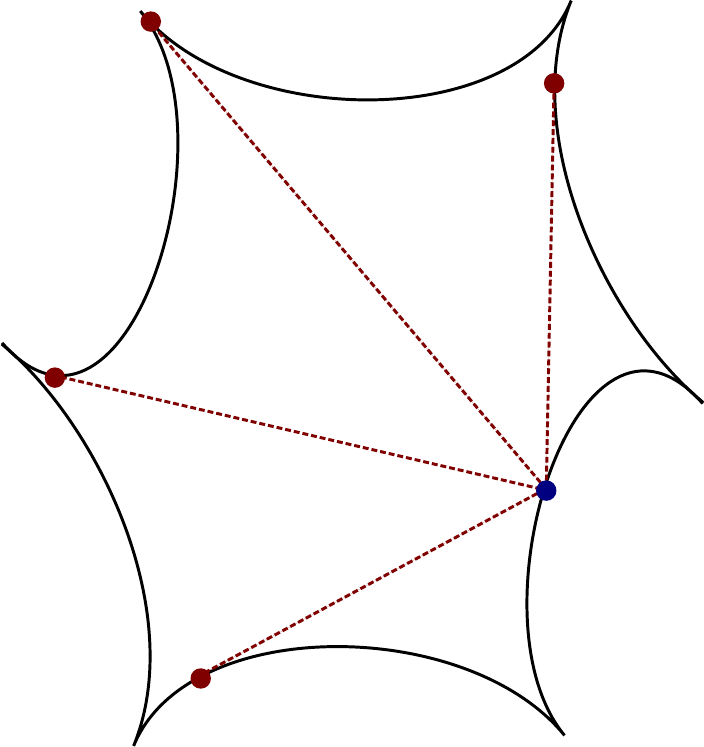}}%
    \put(0.82446232,0.32801514){\color[rgb]{0,0,0.50196078}\makebox(0,0)[lt]{\lineheight{1.25}\smash{\begin{tabular}[t]{l}$q_3$\end{tabular}}}}%
    \put(0.06118196,0.57995364){\color[rgb]{0.50196078,0,0}\makebox(0,0)[lt]{\lineheight{1.25}\smash{\begin{tabular}[t]{l}$p_3$\end{tabular}}}}%
    \put(0.86925146,0.4362555){\color[rgb]{0,0,0.50196078}\makebox(0,0)[lt]{\lineheight{1.25}\smash{\begin{tabular}[t]{l}$-$\end{tabular}}}}%
    \put(-0.00201575,0.44785247){\color[rgb]{0,0,0.50196078}\makebox(0,0)[lt]{\lineheight{1.25}\smash{\begin{tabular}[t]{l}$+$\end{tabular}}}}%
  \end{picture}%
\endgroup%

\end{subfigure}
\begin{subfigure}{.24\textwidth}
    \centering
    \fontsize{10pt}{12pt}\selectfont
    \def\svgwidth{\linewidth}
\begingroup%
  \makeatletter%
  \providecommand\color[2][]{%
    \errmessage{(Inkscape) Color is used for the text in Inkscape, but the package 'color.sty' is not loaded}%
    \renewcommand\color[2][]{}%
  }%
  \providecommand\transparent[1]{%
    \errmessage{(Inkscape) Transparency is used (non-zero) for the text in Inkscape, but the package 'transparent.sty' is not loaded}%
    \renewcommand\transparent[1]{}%
  }%
  \providecommand\rotatebox[2]{#2}%
  \newcommand*\fsize{\dimexpr\f@size pt\relax}%
  \newcommand*\lineheight[1]{\fontsize{\fsize}{#1\fsize}\selectfont}%
  \ifx\svgwidth\undefined%
    \setlength{\unitlength}{337.51127243bp}%
    \ifx\svgscale\undefined%
      \relax%
    \else%
      \setlength{\unitlength}{\unitlength * \real{\svgscale}}%
    \fi%
  \else%
    \setlength{\unitlength}{\svgwidth}%
  \fi%
  \global\let\svgwidth\undefined%
  \global\let\svgscale\undefined%
  \makeatother%
  \begin{picture}(1,1.06163133)%
    \lineheight{1}%
    \setlength\tabcolsep{0pt}%
    \put(0,0){\includegraphics[width=\unitlength,page=1]{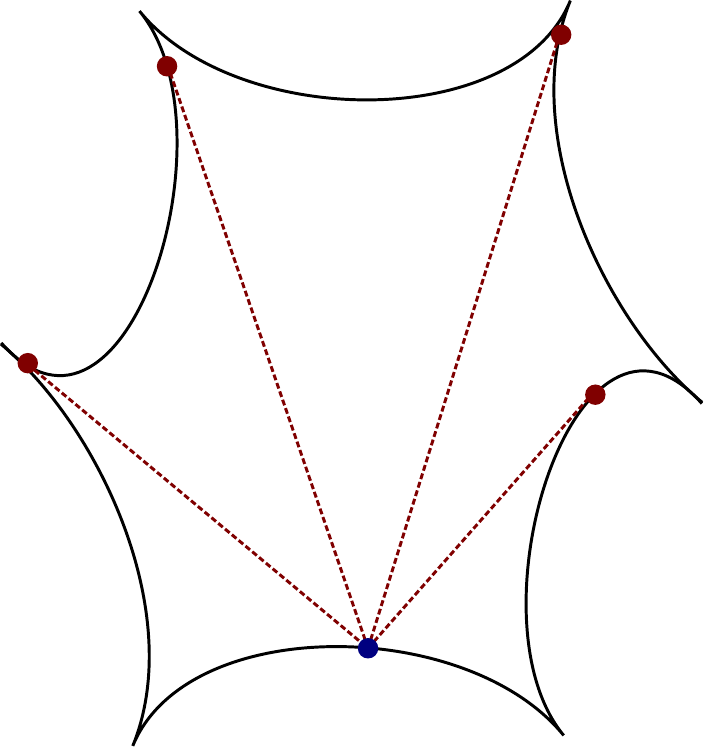}}%
    \put(0.50288009,0.07056246){\color[rgb]{0,0,0.50196078}\makebox(0,0)[lt]{\lineheight{1.25}\smash{\begin{tabular}[t]{l}$q_4$\end{tabular}}}}%
    \put(0.00961104,0.61054248){\color[rgb]{0.50196078,0,0}\makebox(0,0)[lt]{\lineheight{1.25}\smash{\begin{tabular}[t]{l}$p_4$\end{tabular}}}}%
    \put(0.04511145,0.32840749){\color[rgb]{0,0,0.50196078}\makebox(0,0)[lt]{\lineheight{1.25}\smash{\begin{tabular}[t]{l}$+$\end{tabular}}}}%
    \put(0.30295667,0.03879884){\color[rgb]{0,0,0.50196078}\makebox(0,0)[lt]{\lineheight{1.25}\smash{\begin{tabular}[t]{l}$+$\end{tabular}}}}%
  \end{picture}%
\endgroup%

\end{subfigure}
\caption{The same central bubble as in Figure~\ref{fig:central_gas_bubble_tangential_zeros_and_signs} with the four possible cases of tangential rays $r$ starting at $p_i$ along which the extension $g_1$ is defined. The sign of $\tilde{f}_{q_i}$ on the relevant arcs is indicated. 
In all $4$ cases $\int_{P_i}^{Q_i} d\zeta_{q_i} \geq 0$.}
\label{fig:central_hex_bubble_tangents_4_cases}
\end{figure}

Similarly we define the extension $g_1$ along rays tangential to points on $[c_1, c_2]$ oriented away from $c_1$ and conclude that 
\[ g_1(q) \geq g(q), \forall q \in \partial G_1. \]

Analogously, we construct the functions $g_3, g_5$ as extensions on rays on $[c_2, c_4]$ and $[c_4, c_6]$ oriented away from $c_3, c_5$ respectively and again obtain
\[ g_i(q) \geq g(q), \forall q \in \partial G_1, \; J\nabla g_i(q) = J\nabla g(P) \in \mathcal{A}_1. \]

Therefore the function $\hat{g} = \min(g_1, g_3, g_5)$ satisfies the boundary conditions $\restr{\hat{g}}{\partial G_1} = g$ and has rotated gradients living in $\mathcal{A}_1$ almost everywhere. Thus $\hat{g}$ is an admissible extension of $g$ and Theorem~\ref{thm:h_is_height_fct_hex_unramified} holds.

Note that again this extension is not unique. We have constructed a maximal extension. A minimal extension is given by $\hat{g} = \max(g_2, g_4, g_6)$. See also Remark~\ref{rem:g_extension_non_unique}.

\subsection{Parameter Count}
\label{subsec:parameter_count_hex}

We conclude this section with a parameter count.
Let us first consider the ramified case.

To that end let $\calR_+$ be uniformized by $\mathbb{D}$ with $g$ round disks removed and train track parameters $\cubr{\alpha_i, \beta_j, \gamma_l}$ chosen on $S^1$. A double cover $\hat{\calR}$ with branch point $P_0$ is then given by glueing two copies of $\calR$ along $[P_0, \frac{1}{\overline{P}_0}]$. We define $d\zeta_3$ on $\hat{\calR}$ and this data is admissible under the condition $d\zeta_3(P_0) = 0$. The real dimension of the moduli space $\mathcal{M}$ is then
$$
\dim_\mathbb{R} \mathcal{M} = \underbrace{3g}_{X_1, \ldots, X_g} + \underbrace{2}_{P_0} + \underbrace{3n}_{\cubr{\alpha_i, \beta_j, \gamma_l}} - \underbrace{2}_{d\zeta_3(P_0) = 0} - \underbrace{3}_{\text{disk automorphisms}}
                                = 3(g - 1) + 3n.
$$

On the other hand, the admissible Harnack data $\hat{\mathcal{S}}$ in the ramified case can be uniformized as in Remark~\ref{rem:uniformization_of_hex_curve}. That is we have centrally symmetric round holes and train track choices. The condition $d\zeta_3(P_0) = 0$ is now a restriction on the choice of train track parameters $\cubr{\alpha^\pm_i, \beta^\pm_j, \gamma^\pm_l}$. The parameter count is now

$$
    \dim_\mathbb{R} \mathcal{M} = \underbrace{3g}_{\hat{X}_1, \ldots, \hat{X}_g} + \underbrace{3n}_{\cubr{\alpha^\pm_i, \beta^\pm_j, \gamma^\pm_l}} - \underbrace{2}_{d\zeta_3(P_0) = 0} - \underbrace{1}_{\text{rotations}}
                                = 3(g - 1) + 3n
$$

and we have obtained the same moduli space in another realization.

Similarly for the unramified case we can consider the Harnack data $\mathcal{S} = \cubr{\calR, \cubr{\alpha_i, \beta_j, \gamma_l}}$ and take its double cover $\hat{\mathcal{S}}$ winding around some chosen oval $X_i$. The admissibility condition Definition~\ref{def:admissible_harnack_data_unramified} is an open condition, and 
we obtain 
$$
\dim_\mathbb{R} \tilde{\mathcal{M}} = \underbrace{3g}_{X_1, \ldots, X_g} + \underbrace{3n}_{\cubr{\alpha_i, \beta_j, \gamma_l}} - \underbrace{3}_{\text{disk automorphisms}} 
                                = 3(g - 1) + 3n.
$$
On the other hand, the admissible $\hat{\mathcal{S}}$ are uniformized it centrally symmetrically with $\hat{X}_1$ being a circle centered at $0$ and thus only contributing its radius parameter. The parameter count is now:
$$
    \dim_\mathbb{R} \tilde{\mathcal{M}} = \underbrace{3(g - 1)}_{\hat{X}_2, \ldots, \hat{X}_g} + \underbrace{1}_{\text{radius of } \hat{X}_1} + \underbrace{3n}_{\cubr{\alpha_i, \beta_j, \gamma_l}} - \underbrace{1}_{\text{rotations}} 
                                = 3(g - 1) + 3n.
$$
Summarizing, in the ramified and unramified cases the moduli spaces have the same dimension $3(g-1) + 3n$, and they are glued together continuously at singular $\hat{\calR}$. See Figure~\ref{fig:ramified_touching_unramified_hex} for an illustration.

\begin{figure}
\centering
\begin{subfigure}[p]{.3\textwidth}
    \centering
    \includegraphics[width=\linewidth]{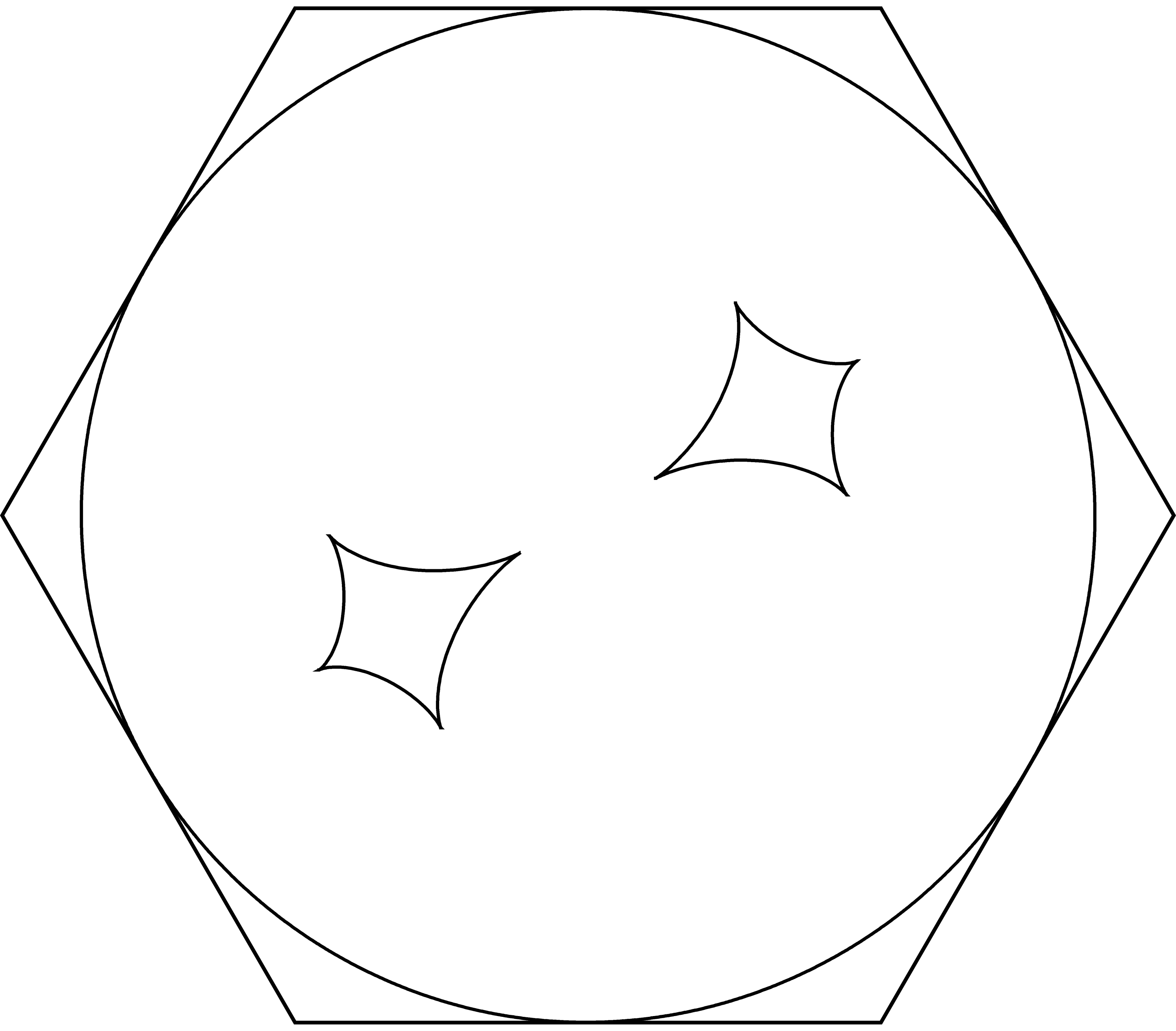}
\end{subfigure}
\begin{subfigure}[p]{.3\textwidth}
    \centering
    \includegraphics[width=\linewidth]{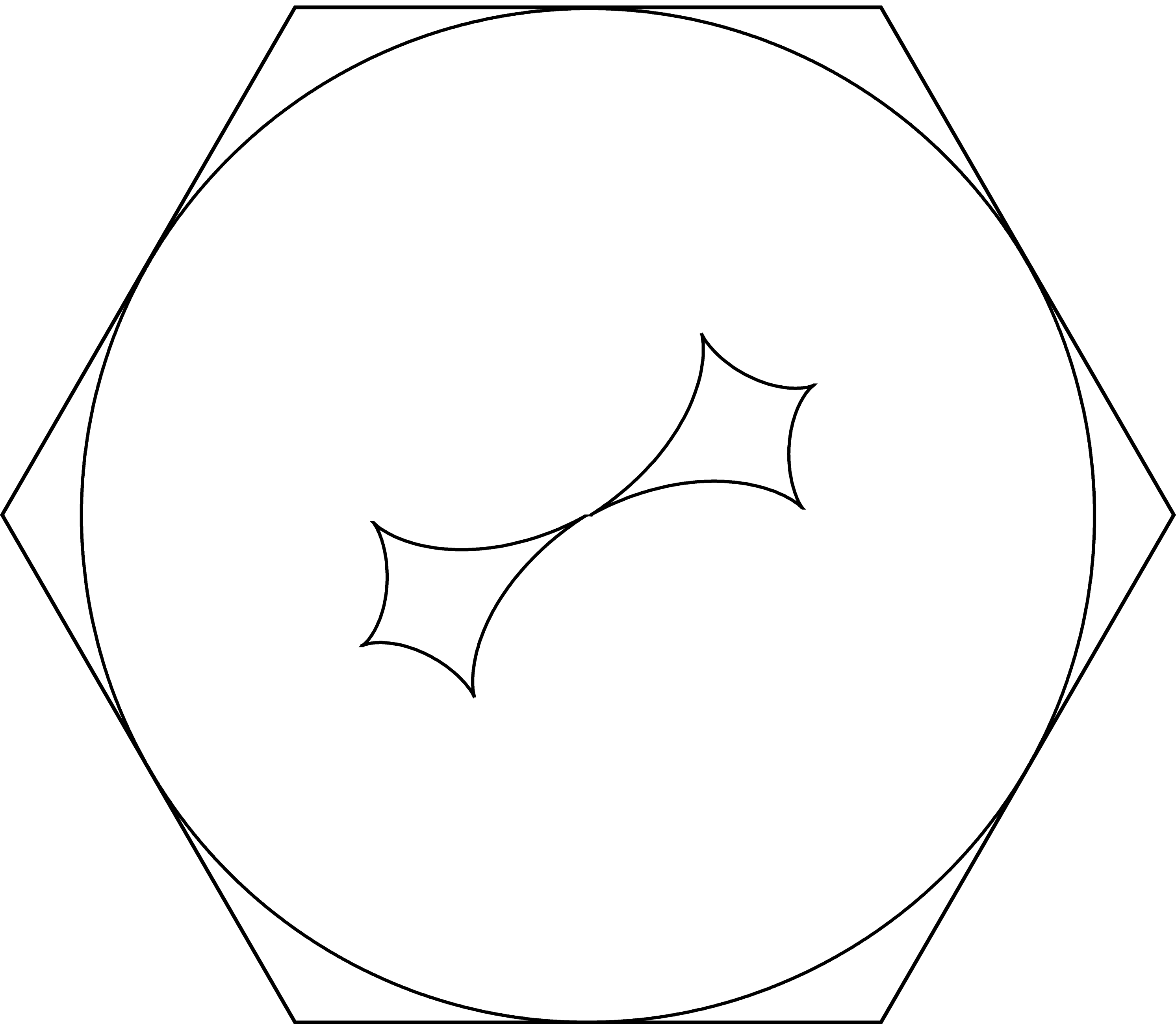}
\end{subfigure}
\begin{subfigure}[p]{.3\textwidth}
    \centering
    \includegraphics[width=\linewidth]{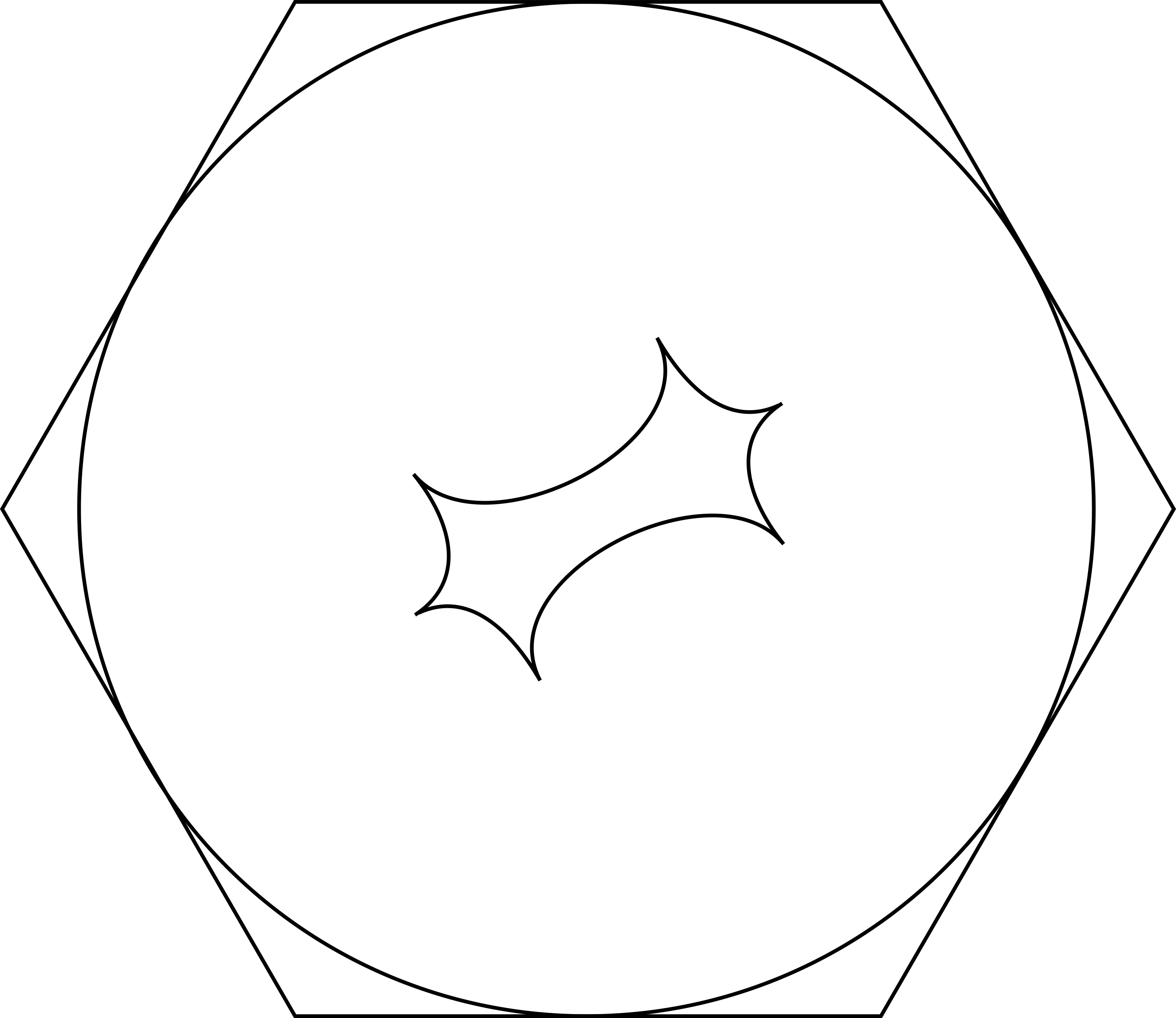}
\end{subfigure}
\caption{Three examples of arctic curves for the hexagon. These examples all correspond to genus $1$ weights with  the covering structure changing. From left to right the ramified, degenerate and unramified case are shown. We see the continuous glueing of the moduli spaces along the degenerate case where two gas bubbles touch at a cusp. The fact that glueing occurs at a cusp follows from the parallelity property along with local concavity of gas bubbles.}
\label{fig:ramified_touching_unramified_hex}
\end{figure}

\section{Schottky Uniformization and Computation}
\label{sec:10_SchottkyUniformization}

\addtocontents{toc}{\protect\setcounter{tocdepth}{1}}

The seemingly complicated and abstract parametrization of the dimer model and many of its fundamental objects via Harnack data is made concrete and computationally feasible via Schottky uniformizations. This approach is particularly well suited for M-curves that appear naturally in the dimer context as in that case convergence of the Poincaré series describing our differentials and weights can be guaranteed. The idea of using Schottky uniformizations for computational aspects of Riemann surfaces was introduced in \cite{Bobenko_1987} in the context of nonlinear integrable equations. In the interest of brevity we only exhibit the most relevant aspects and give expressions for computation of the objects discussed in earlier sections. For more background and details see \cite{belokolos_algebro-geometric_1994, BBS}.

\subsection*{Schottky uniformization of M-curves}

Let $\cubr{C_n}$, $n\leq g$ be a collection of circles bounding disjoint discs that all lie in $\mathbb{D}$. We denote by $\sigma_n$ the inversion in $C_n$ and by $\sigma_0 z = \frac{1}{\overline{z}}$ the inversion in $S^1 =: C_0$. Then the group 
\[ G = \cubr{\sigma_{i_1}^{l_1} \ldots \sigma_{i_k}^{l_k} \; | \; l_i \in \mathbb{N}, \sum l_i \in 2\mathbb{Z} } \]
generated by even compositions of these inversions is a classical Schottky group and defines an M-curve $\calR = \Omega / G$ where $\Omega$ is the discontinuity set of $G$ and the antiholomorphic involution $\tau$ is given by $\sigma_0$. The real ovals of $\calR$ fixed by $\tau$ are then just $C_0, C_1, \ldots, C_g$, and $\calR_+$ is a disk with $g$ removed disks. This brings the uniformization in line with the above exposition and the figures presented in this paper. 

Note that we have $3$ real parameters per circle and Möbius transformations of the defining circles $\cubr{C_n}$ do not change $\calR$ and so we have a moduli space of dimension $3(g-1)$ for $g\geq 2$. In fact this parametrization yields a complete description of the moduli space of M-curves, see Remark~\ref{rem:uniformization_calR}.
In \cite{BBS} we used a M\"obius equivalent uniformization by normalizing $X_0 = \mathbb{R}$, $\tau z = \bar{z}$.


The meromorphic differentials $d\zeta_i$ have a simple representation in terms of Schottky groups as introduced above. Note that this has the same form as \eqref{eq:dzeta_i_isoradial}, \eqref{eq:zeta_i_isoradial} averaged over $G$.
\begin{proposition}
\label{prop:zeta_Schottky}
    Let $G$ be a Schottky group defined by $\cubr{C_n}$ as described above. Then the for the square grid the differentials $d\zeta_k$ and integrals $\zeta_k$ are given by the following series (provided they converge):
    \begin{eqnarray}
    & d\zeta^{\alpha_i}(z) = \sum_{\sigma \in G} \br{\frac{1}{ z - \sigma \alpha_i^-} - \frac{1}{ z - \sigma \alpha_i^+} } dz,\quad
    d\zeta^{\beta_j}(z) = \sum_{\sigma \in G} \br{\frac{1}{ z - \sigma \beta_j^-} - \frac{1}{ z - \sigma \beta_j^+} } dz,
    \label{eq:dzeta_Schottky}\\
    & \zeta^{\alpha_i}(z) = \sum_{\sigma \in G} \log \{  z, \sigma\alpha_i^-, z_0, \sigma \alpha_i^+\}, \quad
    \zeta^{\beta_j}(z) = \sum_{\sigma \in G} \log \{  z, \sigma\beta_j^-, z_0, \sigma \beta_j^+\}.
    \label{eq:zeta_Schottky}
    \end{eqnarray}
    Here $z_0 \in S^1$ is the starting integration point and the Harnack data are cyclicaly ordered $\beta_j^- < \alpha_i^+ < \beta_j^+ < \alpha_l^-  \in S^1$.
\end{proposition}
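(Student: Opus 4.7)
The plan is to verify, under the convergence hypothesis, that the right-hand sides of \eqref{eq:dzeta_Schottky} and \eqref{eq:zeta_Schottky} descend to well-defined meromorphic objects on $\calR = \Omega/G$ with the prescribed residues and purely imaginary periods.

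I would begin with the $G$-equivariance identity
\begin{equation*}
\sigma^*\!\left[\left(\frac{1}{z - \sigma w_1} - \frac{1}{z - \sigma w_2}\right) dz\right] = \left(\frac{1}{z - w_1} - \frac{1}{z - w_2}\right) dz, \quad \sigma \in \mathrm{PSL}(2,\mathbb{C}),
\end{equation*}
which follows from a short Möbius computation using $\sigma z - \sigma w = \frac{\det\sigma\,(z-w)}{(cz+d)(cw+d)}$ and $d(\sigma z) = \frac{\det\sigma}{(cz+d)^2}\,dz$: after combining the two terms over a common denominator, a factor of $(cz+d)$ cancels and leaves the claimed expression. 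Applying this identity and relabeling $\sigma \mapsto (\sigma')^{-1}\sigma$ would show that the Poincaré series in \eqref{eq:dzeta_Schottky} is invariant under $(\sigma')^*$ for every $\sigma' \in G$, hence descends to a meromorphic $1$-form on $\calR$.

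Next I would pin down the residues and periods. Only the $\sigma = \mathrm{id}$ summand produces poles at $\alpha_i^\pm$ within a fundamental domain; these are simple poles with residues $\pm 1$ as required by \eqref{eq:residues_of_dZeta_i_sq}. Since $\alpha_i^\pm \in C_0$ are fixed by $\tau = \sigma_0$ and the $\tau$-conjugation $\sigma \mapsto \tau\sigma\tau$ permutes $G$, an analogous Möbius computation gives $\tau^* d\zeta^{\alpha_i} = \overline{d\zeta^{\alpha_i}}$, making $d\zeta^{\alpha_i}$ real-valued along each real oval $X_k$. The $a$-periods are then simultaneously real (by this reality along the $a$-cycles $X_k$) and purely imaginary (as $2\pi i$ times an integer-valued residue sum), so they vanish; the standard reality-symmetry of M-curves then forces the $b$-periods to be purely imaginary as well.

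Finally, \eqref{eq:zeta_Schottky} follows by termwise antidifferentiation using the identity
\begin{equation*}
\frac{d}{dz}\log\{z, \sigma\alpha_i^-, z_0, \sigma\alpha_i^+\} = \frac{1}{z - \sigma\alpha_i^-} - \frac{1}{z - \sigma\alpha_i^+},
\end{equation*}
together with the fact that the cross-ratio evaluates to $1$ at $z = z_0$, which automatically enforces the normalization $\zeta^{\alpha_i}(z_0) = 0$. The expressions for $d\zeta^{\beta_j}$ and $\zeta^{\beta_j}$ are handled identically by swapping $\alpha$ for $\beta$. The main technical subtlety, which the proposition explicitly sidesteps by assuming convergence, is absolute convergence of the Poincaré series; for classical Schottky groups with well-separated defining circles this is standard, since each summand in \eqref{eq:zeta_Schottky} is of order $|\sigma\alpha_i^+ - \sigma\alpha_i^-|$ and thus comparable to a convergent weight-$2$ Poincaré series, see \cite{belokolos_algebro-geometric_1994}.
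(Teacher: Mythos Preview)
The paper does not actually prove this proposition: it is stated as a fact, preceded only by the remark that the expressions have ``the same form as \eqref{eq:dzeta_i_isoradial}, \eqref{eq:zeta_i_isoradial} averaged over $G$'', with the reader referred to \cite{belokolos_algebro-geometric_1994, BBS} for background on Poincar\'e series over Schottky groups. Your proposal therefore supplies precisely the verification that the paper omits, and the outline is correct: the M\"obius invariance of the two-pole kernel $\bigl(\tfrac{1}{z-w_1}-\tfrac{1}{z-w_2}\bigr)dz$ yields $G$-invariance of the series, the $\mathrm{id}$ term fixes the residues, reality under $\tau$ combined with the fact that each planar $a$-period is $2\pi i$ times an integer forces the $a$-periods to vanish, and termwise integration via the cross-ratio logarithm gives \eqref{eq:zeta_Schottky}. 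One small point worth tightening is the $\tau$-symmetry step: for $\sigma\neq\mathrm{id}$ the poles $\sigma\alpha_i^\pm$ do not lie on $S^1$, so the ``analogous M\"obius computation'' must also invoke $\tau G\tau=G$ to relabel the sum, not just the single-term identity you used for the holomorphic case; once that is said, $\tau^* d\zeta^{\alpha_i}=\overline{d\zeta^{\alpha_i}}$ follows and the rest of your period argument goes through.
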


Similar formulas are valid for $d\zeta_3$, and also in the context of hexagonal grids, as well as for the period matrices needed for the calculation of the weights \eqref{eq:Fock_face_weight} using theta functions.

Similarly the maps mentioned in Remarks~\ref{rem:uniformization_of_hex_curve} and $\ref{rem:uniformization_calR_unramified}$ provide symmetric Schottky uniformizations of $\hat{\calR}$. We summarize the chosen uniformization types for the ramified, degenerate and unramified cases of the hexagon in Figure~\ref{fig:three_types_of_uniformizations}.

\begin{figure}[h]
\centering
\begin{subfigure}{.3\textwidth}
    \centering
    \fontsize{10pt}{12pt}\selectfont
    \def\svgwidth{\linewidth}
\begingroup%
  \makeatletter%
  \providecommand\color[2][]{%
    \errmessage{(Inkscape) Color is used for the text in Inkscape, but the package 'color.sty' is not loaded}%
    \renewcommand\color[2][]{}%
  }%
  \providecommand\transparent[1]{%
    \errmessage{(Inkscape) Transparency is used (non-zero) for the text in Inkscape, but the package 'transparent.sty' is not loaded}%
    \renewcommand\transparent[1]{}%
  }%
  \providecommand\rotatebox[2]{#2}%
  \newcommand*\fsize{\dimexpr\f@size pt\relax}%
  \newcommand*\lineheight[1]{\fontsize{\fsize}{#1\fsize}\selectfont}%
  \ifx\svgwidth\undefined%
    \setlength{\unitlength}{122.23113966bp}%
    \ifx\svgscale\undefined%
      \relax%
    \else%
      \setlength{\unitlength}{\unitlength * \real{\svgscale}}%
    \fi%
  \else%
    \setlength{\unitlength}{\svgwidth}%
  \fi%
  \global\let\svgwidth\undefined%
  \global\let\svgscale\undefined%
  \makeatother%
  \begin{picture}(1,1.01499454)%
    \lineheight{1}%
    \setlength\tabcolsep{0pt}%
    \put(0,0){\includegraphics[width=\unitlength,page=1]{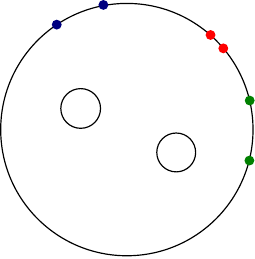}}%
    \put(0.24757933,0.28001951){\color[rgb]{0,0,0}\makebox(0,0)[lt]{\lineheight{1.25}\smash{\begin{tabular}[t]{l}$\hat{\mathcal{R}}^+$\end{tabular}}}}%
    \put(0.66048258,0.53618715){\color[rgb]{0,0,0}\makebox(0,0)[lt]{\lineheight{1.25}\smash{\begin{tabular}[t]{l}$\hat{X}_1$\end{tabular}}}}%
    \put(0.25323409,0.69900285){\color[rgb]{0,0,0}\makebox(0,0)[lt]{\lineheight{1.25}\smash{\begin{tabular}[t]{l}$\hat{X}_2 = -\hat{X}_1$\end{tabular}}}}%
    \put(0,0){\includegraphics[width=\unitlength,page=2]{Figures/Uniformization2_Ramified.pdf}}%
    \put(0.54778169,0.09762342){\color[rgb]{0,0,0.50196078}\makebox(0,0)[lt]{\lineheight{1.25}\smash{\begin{tabular}[t]{l}$\gamma_1^-$\end{tabular}}}}%
    \put(0.40712197,0.89193553){\color[rgb]{0,0,0.50196078}\makebox(0,0)[lt]{\lineheight{1.25}\smash{\begin{tabular}[t]{l}$\gamma_1^+ = -\gamma_1^-$\end{tabular}}}}%
  \end{picture}%
\endgroup%

\end{subfigure}
\begin{subfigure}{.3\textwidth}
    \centering
    \fontsize{10pt}{12pt}\selectfont
    \def\svgwidth{\linewidth}
\begingroup%
  \makeatletter%
  \providecommand\color[2][]{%
    \errmessage{(Inkscape) Color is used for the text in Inkscape, but the package 'color.sty' is not loaded}%
    \renewcommand\color[2][]{}%
  }%
  \providecommand\transparent[1]{%
    \errmessage{(Inkscape) Transparency is used (non-zero) for the text in Inkscape, but the package 'transparent.sty' is not loaded}%
    \renewcommand\transparent[1]{}%
  }%
  \providecommand\rotatebox[2]{#2}%
  \newcommand*\fsize{\dimexpr\f@size pt\relax}%
  \newcommand*\lineheight[1]{\fontsize{\fsize}{#1\fsize}\selectfont}%
  \ifx\svgwidth\undefined%
    \setlength{\unitlength}{122.22529791bp}%
    \ifx\svgscale\undefined%
      \relax%
    \else%
      \setlength{\unitlength}{\unitlength * \real{\svgscale}}%
    \fi%
  \else%
    \setlength{\unitlength}{\svgwidth}%
  \fi%
  \global\let\svgwidth\undefined%
  \global\let\svgscale\undefined%
  \makeatother%
  \begin{picture}(1,1.01504306)%
    \lineheight{1}%
    \setlength\tabcolsep{0pt}%
    \put(0,0){\includegraphics[width=\unitlength,page=1]{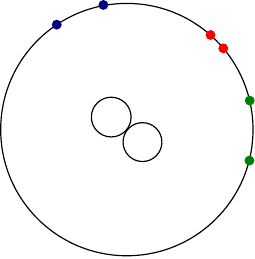}}%
    \put(0.24759117,0.28003289){\color[rgb]{0,0,0}\makebox(0,0)[lt]{\lineheight{1.25}\smash{\begin{tabular}[t]{l}$\hat{\mathcal{R}}^+$\end{tabular}}}}%
    \put(0.58115182,0.5656281){\color[rgb]{0,0,0}\makebox(0,0)[lt]{\lineheight{1.25}\smash{\begin{tabular}[t]{l}$\hat{X}_1$\end{tabular}}}}%
    \put(0.39954077,0.67119393){\color[rgb]{0,0,0}\makebox(0,0)[lt]{\lineheight{1.25}\smash{\begin{tabular}[t]{l}$\hat{X}_2 = -\hat{X}_1$\end{tabular}}}}%
    \put(0,0){\includegraphics[width=\unitlength,page=2]{Figures/Uniformization4_Degenerate.pdf}}%
    \put(0.54780787,0.09762808){\color[rgb]{0,0,0.50196078}\makebox(0,0)[lt]{\lineheight{1.25}\smash{\begin{tabular}[t]{l}$\gamma_1^-$\end{tabular}}}}%
    \put(0.40714143,0.89197816){\color[rgb]{0,0,0.50196078}\makebox(0,0)[lt]{\lineheight{1.25}\smash{\begin{tabular}[t]{l}$\gamma_1^+ = -\gamma_1^-$\end{tabular}}}}%
  \end{picture}%
\endgroup%

\end{subfigure}
\begin{subfigure}{.3\textwidth}
    \centering
    \fontsize{10pt}{12pt}\selectfont
    \def\svgwidth{\linewidth}
\begingroup%
  \makeatletter%
  \providecommand\color[2][]{%
    \errmessage{(Inkscape) Color is used for the text in Inkscape, but the package 'color.sty' is not loaded}%
    \renewcommand\color[2][]{}%
  }%
  \providecommand\transparent[1]{%
    \errmessage{(Inkscape) Transparency is used (non-zero) for the text in Inkscape, but the package 'transparent.sty' is not loaded}%
    \renewcommand\transparent[1]{}%
  }%
  \providecommand\rotatebox[2]{#2}%
  \newcommand*\fsize{\dimexpr\f@size pt\relax}%
  \newcommand*\lineheight[1]{\fontsize{\fsize}{#1\fsize}\selectfont}%
  \ifx\svgwidth\undefined%
    \setlength{\unitlength}{135.94494181bp}%
    \ifx\svgscale\undefined%
      \relax%
    \else%
      \setlength{\unitlength}{\unitlength * \real{\svgscale}}%
    \fi%
  \else%
    \setlength{\unitlength}{\svgwidth}%
  \fi%
  \global\let\svgwidth\undefined%
  \global\let\svgscale\undefined%
  \makeatother%
  \begin{picture}(1,1.01504304)%
    \lineheight{1}%
    \setlength\tabcolsep{0pt}%
    \put(0,0){\includegraphics[width=\unitlength,page=1]{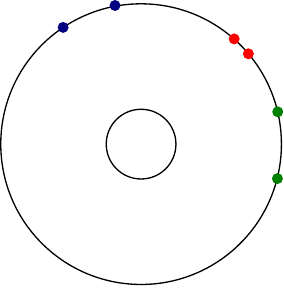}}%
    \put(0.28736708,0.23515007){\color[rgb]{0,0,0}\makebox(0,0)[lt]{\lineheight{1.25}\smash{\begin{tabular}[t]{l}$\hat{\mathcal{R}}^+$\end{tabular}}}}%
    \put(0,0){\includegraphics[width=\unitlength,page=2]{Figures/Uniformization3_Unramified.pdf}}%
    \put(0.56760566,0.10364715){\color[rgb]{0,0,0.50196078}\makebox(0,0)[lt]{\lineheight{1.25}\smash{\begin{tabular}[t]{l}$\gamma_1^-$\end{tabular}}}}%
    \put(0.39080651,0.89448385){\color[rgb]{0,0,0.50196078}\makebox(0,0)[lt]{\lineheight{1.25}\smash{\begin{tabular}[t]{l}$\gamma_1^+ = -\gamma_1^-$\end{tabular}}}}%
    \put(0.47244208,0.49782373){\color[rgb]{0,0,0}\makebox(0,0)[lt]{\lineheight{1.25}\smash{\begin{tabular}[t]{l}$\hat{X}_1$\end{tabular}}}}%
  \end{picture}%
\endgroup%

\end{subfigure}
\caption{Uniformizations of $\hat{\calR}$ for three different cases of the hexagon. Central symmetry $\pi z = -z$ satisfied by the real ovals and train track parameters in all cases. \textbf{Left}: Ramified case with branch point at $0$ with condition $d\zeta_3(0) = 0$. \textbf{Middle}: Degenerate case with branch point on real oval. \textbf{Right}: Unramified case. We have that $d\zeta_3$ has $6$ zeros on $\hat{X}_1$. See Remarks~\ref{rem:uniformization_of_hex_curve}, \ref{rem:uniformization_calR_unramified} as well as Figure~\ref{fig:ramified_touching_unramified_hex} for arctic curves for these $3$ types. }
\label{fig:three_types_of_uniformizations}
\end{figure}

\subsection*{Computation}

One of the benefits of working with Schottky groups is that the expressions for the Fock weights $K_{wb}$, the differentials $d\zeta_k$ and their integrals $\zeta_k$ in terms of (converging) Poincaré series allow for practical numerical computation. 

We utilize this approach to obtain samples of dimer configurations with weights coming from some given Harnack data $\cubr{\mathcal{S}, \mathcal{T}}$ overlaid with predicted arctic curves on both the Aztec diamond and the hexagon. 

We use the jtem library \cite{schmies_computational_2005} for summations over Schottky groups. For computation of theta functions see \cite{deconinck_2004}. A simple Metropolis-Hastings sampling algorithm with those weights that we have implemented on the GPU for better performance, see \cite{keating_random_2018,BBS} yields approximate samples. In the examples of the Aztec diamond and hexagon there are shuffling algorithms available that obtain efficient exact samples, see \cite{Gorin_2021} for an overview. We have chosen not to use those as we hope to be able extend our methods to other setups and regions in the future.

Proposition~\ref{prop:zeta_Schottky} allows for computation of $d\zeta_k$ as well as $\zeta_k$. Evaluating $(\zeta_1, \zeta_2)$ on $X = \cubr{C_n}$ gives the real ovals of the amoeba and computing \eqref{eq:xy_on_real_ovals} on $X$ yields the predicted arctic curves. We see that the predictions and sampled configurations match well, see e.g. Figures~\ref{fig:main_aztec_config}, \ref{fig:main_hex_config}.

For the hexagon in the ramified case we impose the condition that the Harnack data $\hat{\mathcal{S}}$ must define an admissible $d\zeta_3 = f_3(z) dz$, i.e. $f_3(0) = 0$, see Definition~\ref{def:hex_Harnack_data_admissible_ramified}. This is generically not true for symmetric $\hat{\mathcal{S}}$. In order to find an admissible $\hat{S}$ we run an optimization on the train track parameters $\cubr{\alpha^\pm_i, \beta^\pm_j, \gamma^\pm_l}$ utilizing derivatives of $f_3$ at the point $0$ with respect to the train track parameters. This does not change the Riemann surface and typically only moves the train track parameters a little bit to obtain admissible data. Since the train track angles do not move much this approach yields good control of the final picture when chosing the initial parameters $\hat{\mathcal{S}}$.

\begin{figure}[h]
\centering
\begin{subfigure}{.16\textwidth}
    \centering
    \fontsize{10pt}{12pt}\selectfont
    \def\svgwidth{\linewidth}
    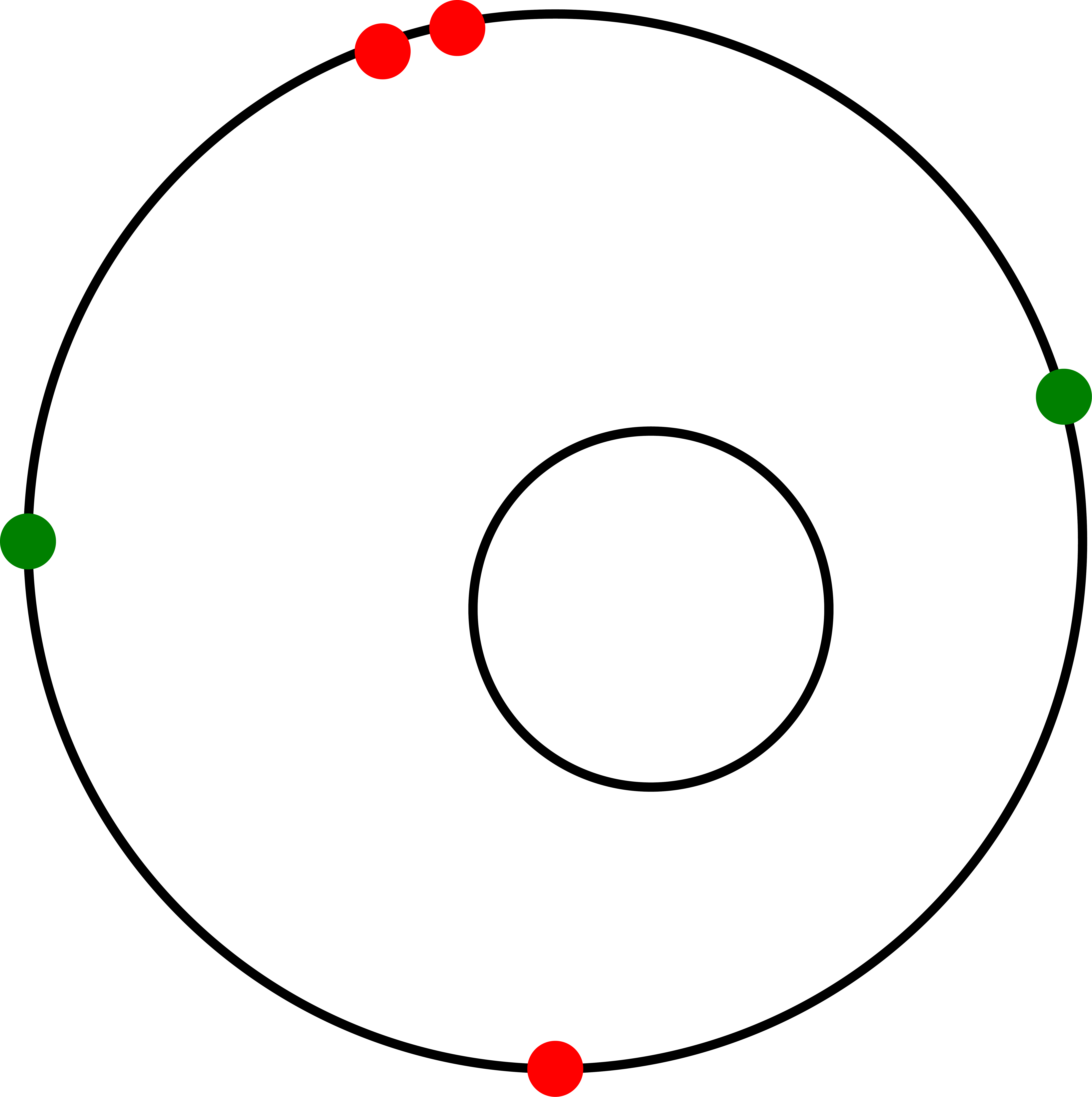
\end{subfigure}
\begin{subfigure}{.16\textwidth}
    \centering
    \fontsize{10pt}{12pt}\selectfont
    \def\svgwidth{\linewidth}
    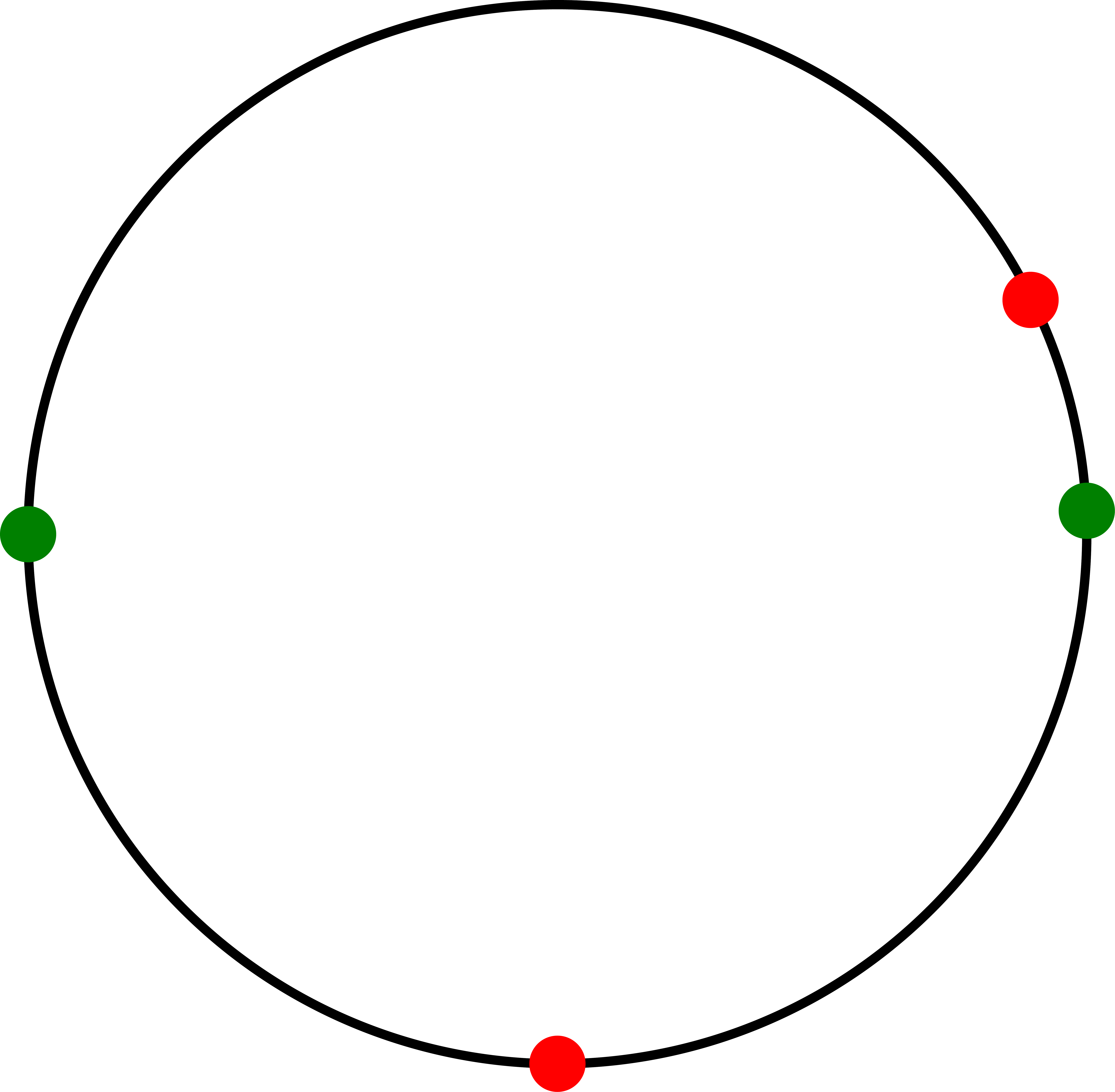
\end{subfigure}
\begin{subfigure}{.16\textwidth}
    \centering
    \fontsize{10pt}{12pt}\selectfont
    \def\svgwidth{\linewidth}
    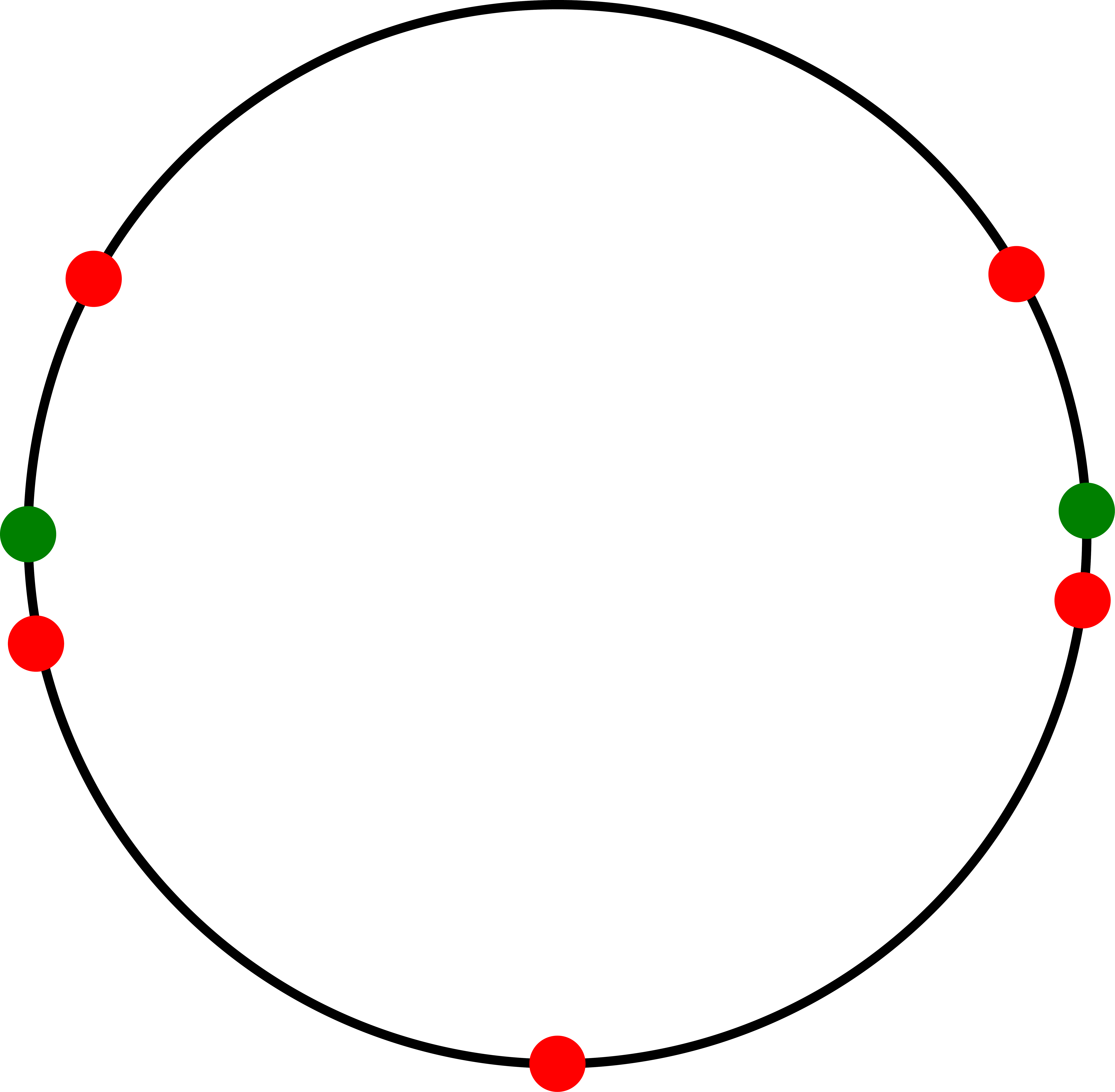
\end{subfigure}
\begin{subfigure}{.16\textwidth}
    \centering
    \fontsize{10pt}{12pt}\selectfont
    \def\svgwidth{\linewidth}
    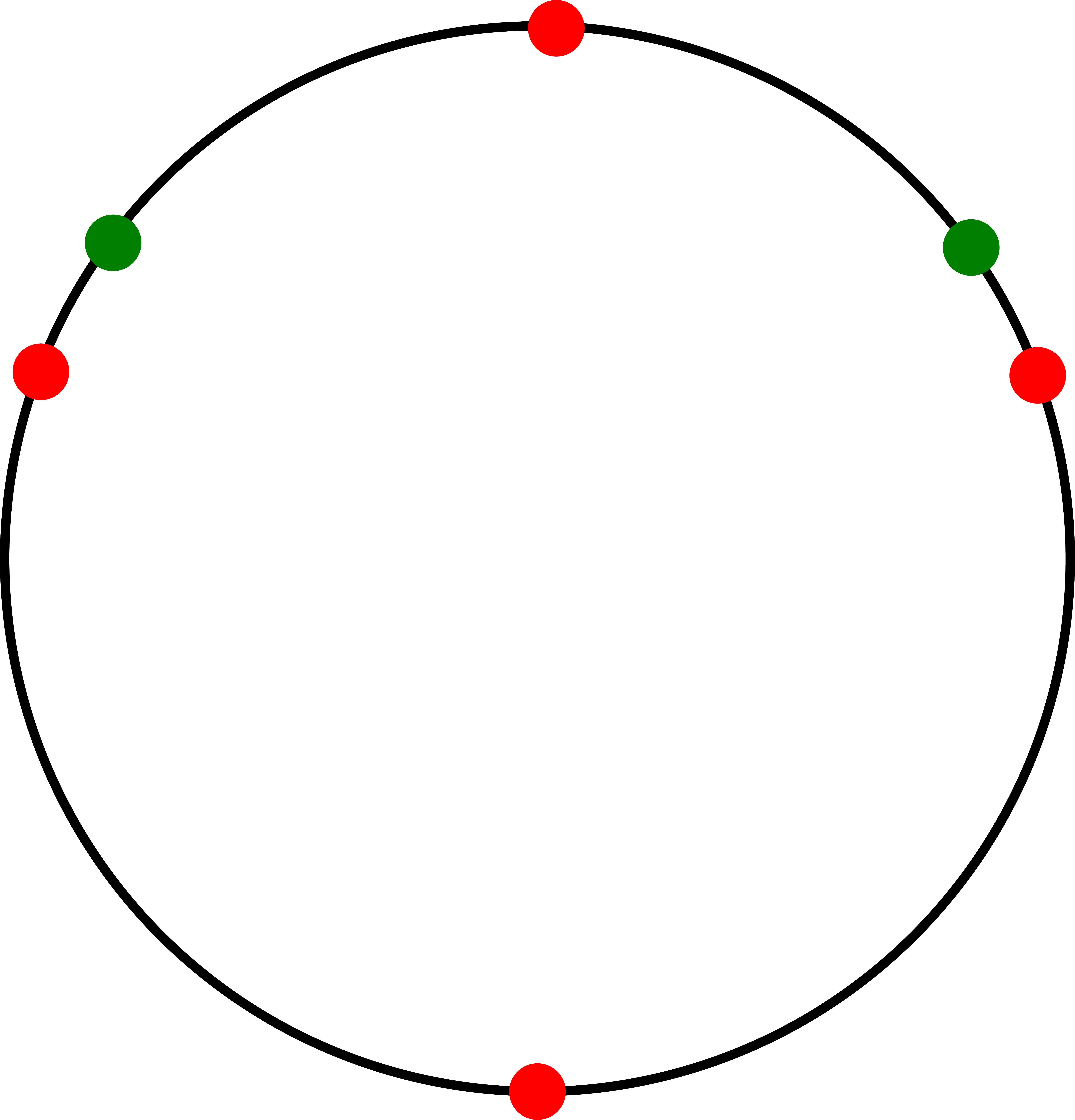
\end{subfigure}
\begin{subfigure}{.16\textwidth}
    \centering
    \fontsize{10pt}{12pt}\selectfont
    \def\svgwidth{\linewidth}
    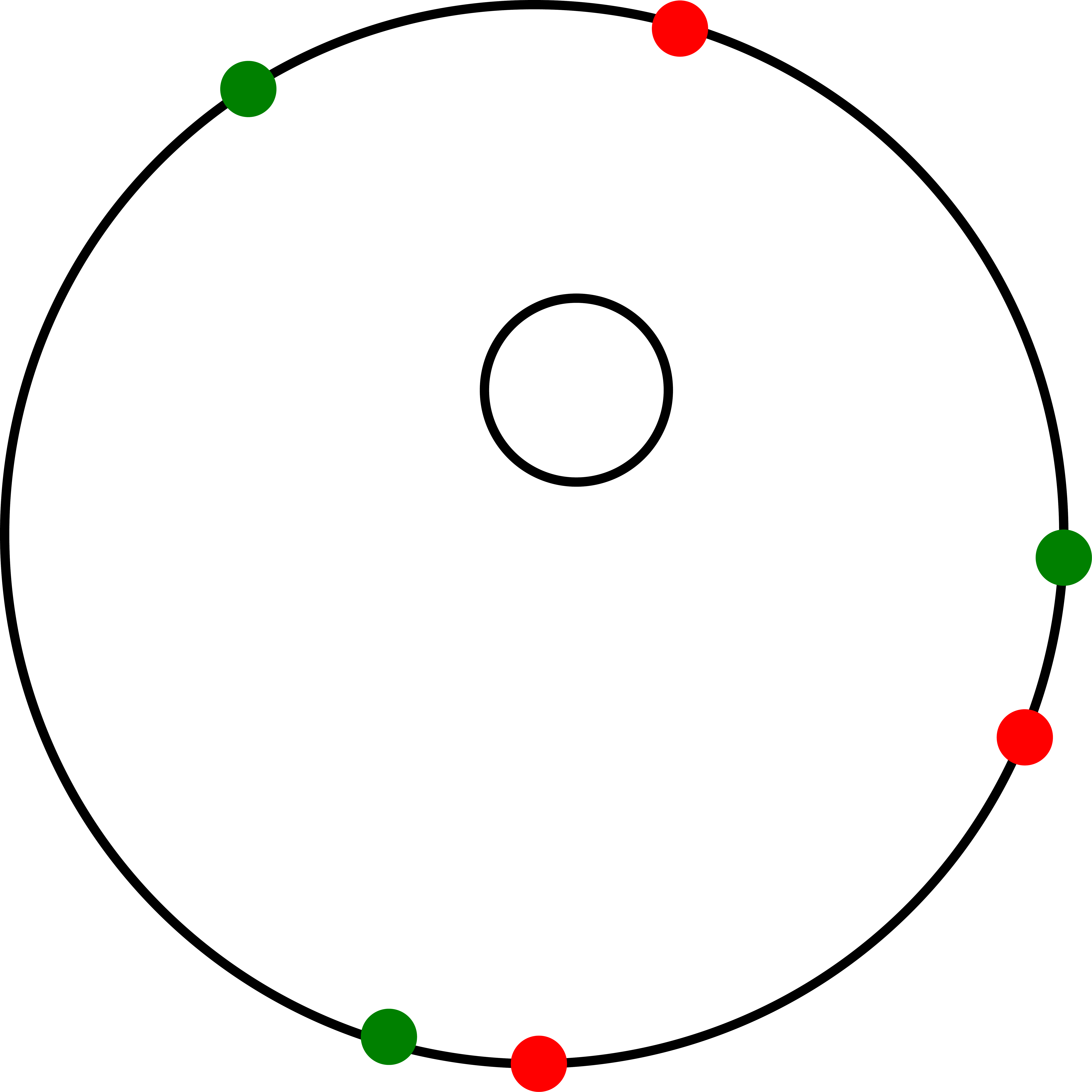
\end{subfigure}
\begin{subfigure}{.16\textwidth}
    \centering
    \fontsize{10pt}{12pt}\selectfont
    \def\svgwidth{\linewidth}
    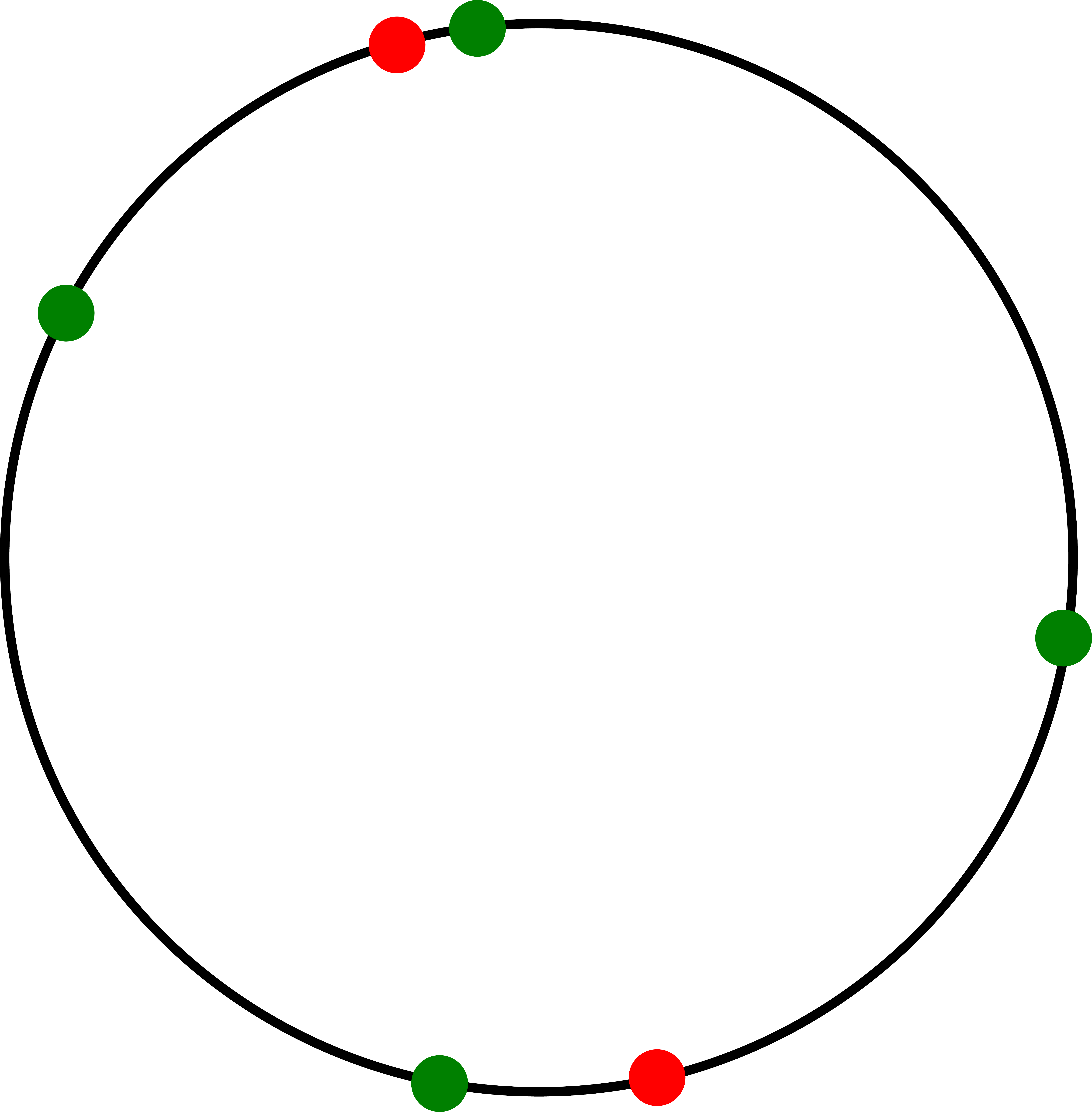
\end{subfigure}
\caption{The Arctic Font Spectrum: Harnack data corresponding to the limit shapes in Figure~\ref{fig:Arctic_font}.}
\label{fig:arctic_spectrum}
\end{figure}
\bibliographystyle{acm}

\bibliography{refs}

\begin{thebibliography}{10}

\bibitem{astala_dimer_2023}
{\sc Astala, K., Duse, E., Prause, I., and Zhong, X.}
\newblock Dimer {Models} and {Conformal} {Structures}, 2023.
\newblock arXiv:2004.02599.

\bibitem{belokolos_algebro-geometric_1994}
{\sc Belokolos, E.~D., Bobenko, A.~I., Enolskii, V., Its, A., and Matveev, V.},
  Eds.
\newblock {\em Algebro-geometric approach to nonlinear integrable equations}.
\newblock Springer series in nonlinear dynamics. Springer-Verlag, Berlin ; New
  York, 1994.

\bibitem{berggren_aztec_2021}
{\sc Berggren, T.}
\newblock Domino tilings of the aztec diamond with doubly periodic weightings.
\newblock {\em The Annals of Probability 49\/} (2021).

\bibitem{berggren_geometry_2023}
{\sc Berggren, T., and Borodin, A.}
\newblock Geometry of the doubly periodic {Aztec} dimer model, 2023.
\newblock arXiv:2306.07482.

\bibitem{Bloch}
{\sc Blocj, S.}
\newblock Applications of the dilogarithm function in algebraic k-theory and
  algebraic geometry.
\newblock {\em Journal of Mathematical and Physical Sciences 22\/} (1988).

\bibitem{Bobenko_1987}
{\sc Bobenko, A.~I.}
\newblock Schottky uniformization and finite-gap integration.
\newblock {\em Dokl. Akad. Nauk SSSR 295}, 2 (1987), 268--272.

\bibitem{BBS}
{\sc Bobenko, A.~I., Bobenko, N., and Suris, Y.~B.}
\newblock {Dimers and M-Curves}, 2024.
\newblock arXiv:2402.08798.

\bibitem{BPS_discrete_conformal}
{\sc Bobenko, A.~I., Pinkall, U., and Springborn, B.}
\newblock Discrete conformal maps and ideal hyperbolic polyhedra.
\newblock {\em Geometry and Topology 19\/} (2015), 2155--2215.

\bibitem{Borodin_2x2_2023}
{\sc Borodin, A., and Duits, M.}
\newblock Biased {$2 \times 2$} periodic {A}ztec diamond and an elliptic curve.
\newblock {\em Probab. Theory Related Fields 187}, 1-2 (2023), 259--315.

\bibitem{boutillier_elliptic_2020}
{\sc Boutillier, C., Cimasoni, D., and de~Tili\`ere, B.}
\newblock Elliptic dimers on minimal graphs and genus 1 {H}arnack curves.
\newblock {\em Comm. Math. Phys. 400}, 2 (2023), 1071--1136.

\bibitem{boutillier_minimal_2023}
{\sc Boutillier, C., Cimasoni, D., and de~Tili\`ere, B.}
\newblock Minimal bipartite dimers and higher genus {H}arnack curves.
\newblock {\em Probab. Math. Phys. 4}, 1 (2023), 151--208.

\bibitem{boutillier_electrical_networks_2024}
{\sc Boutillier, C., Cimasoni, D., and de~Tilière, B.}
\newblock {Integrable Laplacians on isoradial graphs, Part 1: the even case},
  to appear 2024+.

\bibitem{boutillier_focks_2024}
{\sc Boutillier, C., and de~Tilière, B.}
\newblock {Fock's dimer model on the Aztec diamond}, 2024.
\newblock arXiv:2405.20284.

\bibitem{chelkak_perfect_t_embeddings_2022}
{\sc Chelkak, D., Laslier, B., and Russkikh, M.}
\newblock {Bipartite dimer model: perfect t-embeddings and Lorentz-minimal
  surfaces}.
\newblock {\em {Proceedings of the London Mathematical Society}\/} (2022).

\bibitem{Cohn_1996}
{\sc Cohn, H., Elkies, N., and Propp, J.}
\newblock {Local statistics for random domino tilings of the Aztec diamond}.
\newblock {\em Duke Mathematical Journal 85}, 1 (1996).

\bibitem{cohn_variational_2000}
{\sc Cohn, H., Kenyon, R., and Propp, J.}
\newblock A variational principle for domino tilings.
\newblock {\em J. Amer. Math. Soc. 14}, 2 (2001), 297--346.

\bibitem{Cohn1998}
{\sc Cohn, Henry, L. M. P.~J.}
\newblock The shape of a typical boxed plane partition.
\newblock {\em The New York Journal of Mathematics [electronic only] 4\/}
  (1998), 137--165.

\bibitem{deconinck_2004}
{\sc Deconinck, B., Heil, M., Bobenko, A., van Hoeij, M., and Schmies, M.}
\newblock Computing riemann theta functions.
\newblock {\em Mathematics of Computation 73\/} (2004), 1417--1442.

\bibitem{fock_inverse_2015}
{\sc Fock, V.~V.}
\newblock Inverse spectral problem for {GK} integrable system, 2015.
\newblock arXiv:1503.00289.

\bibitem{george_electrical_networks_2024}
{\sc George, T.}
\newblock {Spectra of Biperiodic Planar Networks}.
\newblock {\em Communications in Mathematical Physics 405\/} (2024).

\bibitem{goncharov_dimers_2013}
{\sc Goncharov, A.~B., and Kenyon, R.}
\newblock Dimers and cluster integrable systems.
\newblock {\em Annales scientifiques de l'École Normale Supérieure 46}, 5
  (2013), 747--813.

\bibitem{Gorin_2021}
{\sc Gorin, V.}
\newblock {\em {Lectures on Random Lozenge Tilings}}.
\newblock Cambridge Studies in Advanced Mathematics. Cambridge University
  Press, 2021.

\bibitem{jockusch1998randomdominotilingsarctic}
{\sc Jockusch, W., Propp, J., and Shor, P.}
\newblock {Random Domino Tilings and the Arctic Circle Theorem}, 1998.
\newblock arXiv:math/9801068.

\bibitem{Kasteleyn1967}
{\sc Kasteleyn, P.~W.}
\newblock Graph theory and crystal physics.
\newblock In {\em Graph {T}heory and {T}heoretical {P}hysics}. Academic Press,
  London-New York, 1967, pp.~43--110.

\bibitem{keating_random_2018}
{\sc Keating, D., and Sridhar, A.}
\newblock Random tilings with the {GPU}.
\newblock {\em J. Math. Phys. 59}, 9 (2018), 091420, 17.

\bibitem{kenyon_conformal_invariance_2000}
{\sc Kenyon, R.}
\newblock {Conformal invariance of domino tiling}.
\newblock {\em The Annals of Probability 28}, 2 (2000), 759 -- 795.

\bibitem{kenyon_laplacian_2002}
{\sc Kenyon, R.}
\newblock The {L}aplacian and {D}irac operators on critical planar graphs.
\newblock {\em Invent. Math. 150}, 2 (2002), 409--439.

\bibitem{Kenyon_honeycomb_fluctuations_2004}
{\sc Kenyon, R.}
\newblock {Height Fluctuations in the Honeycomb Dimer Model}.
\newblock {\em Comm. Math. Phys. 281\/} (2004).

\bibitem{Kenyon_Okounkov_2006}
{\sc Kenyon, R., and Okounkov, A.}
\newblock Planar dimers and {H}arnack curves.
\newblock {\em Duke Math. J. 131}, 3 (2006), 499--524.

\bibitem{Kenyon_Okounkov_2007}
{\sc Kenyon, R., and Okounkov, A.}
\newblock Limit shapes and the complex {B}urgers equation.
\newblock {\em Acta Math. 199}, 2 (2007), 263--302.

\bibitem{Kenyon_Okounkov_Sheffield_2006}
{\sc Kenyon, R., Okounkov, A., and Sheffield, S.}
\newblock Dimers and amoebae.
\newblock {\em Ann. of Math. (2) 163}, 3 (2006), 1019--1056.

\bibitem{kenyon_gradient_2021}
{\sc Kenyon, R., and Prause, I.}
\newblock {Gradient variational problems in $\mathbb{R}^2$}, 2021.
\newblock arXiv:2006.01219.

\bibitem{krichever_amoebas_2014}
{\sc Krichever, I.}
\newblock Amoebas, {R}onkin function, and {M}onge-{A}mp\`ere measures of
  algebraic curves with marked points.
\newblock In {\em Topology, geometry, integrable systems, and mathematical
  physics}, vol.~234 of {\em Amer. Math. Soc. Transl. Ser. 2}. Amer. Math.
  Soc., Providence, RI, 2014, pp.~265--278.

\bibitem{kuchumov_limit_2017}
{\sc Kuchumov, N.}
\newblock Limit shapes for the dimer model, 2017.
\newblock arXiv:1712.08396.

\bibitem{Kuperberg_1998}
{\sc Kuperberg, G.}
\newblock An exploration of the permanent-determinant method.
\newblock {\em Electron. J. Combin. 5\/} (1998), Research Paper 46, 34.

\bibitem{MacMahon_1915}
{\sc MacMahon, P.~A.}
\newblock {Combinatory Analysis}.
\newblock {\em Cambridge University Press\/} (1915).

\bibitem{Mikhalkin_2000}
{\sc Mikhalkin, G.}
\newblock Real algebraic curves, the moment map and amoebas.
\newblock {\em Ann. of Math. (2) 151}, 1 (2000), 309--326.

\bibitem{schmies_computational_2005}
{\sc Schmies, M.}
\newblock {\em Computational {Methods} for {Riemann} {Surfaces} and {Helicoids}
  with {Handles}}.
\newblock PhD thesis, Technischen Universität Berlin, 2005.

\bibitem{DeSilva_minimizers_of_convex_functionals_2010}
{\sc Silva, D.~D., and Savin, O.}
\newblock {Minimizers of convex functionals arising in random surfaces}.
\newblock {\em Duke Mathematical Journal 151}, 3 (2010), 487 -- 532.

\bibitem{thurston1989groups}
{\sc Thurston, W.}
\newblock {\em Groups, Tilings and Finite State Automata: Summer 1989 AMS
  Colloquium Lectures}.
\newblock Research report GCG. Geometry Computing Group, 1989.

\bibitem{Zagier}
{\sc Zagier, D.}
\newblock The remarkable dilogarithm.
\newblock {\em Journal of Mathematical and Physical Sciences 22\/} (1988).

\end{thebibliography}

\end{document}